%
\documentclass[11pt,twoside]{article}
\usepackage{microtype}
\usepackage{algpseudocode,algorithm,epstopdf,multirow}
\usepackage{amssymb,amsmath,latexsym,graphicx,hyperref,epstopdf,color}
\usepackage{epstopdf,caption,subcaption}
\usepackage{chngcntr}
\usepackage[capitalise,noabbrev]{cleveref}
\usepackage{tikz}
\usepackage{pgfgantt}
\usetikzlibrary{shapes,arrows,fit,calc,positioning}
\usetikzlibrary{decorations.pathreplacing}


%
 \setlength{\evensidemargin}{+0.00in}   
 \setlength{\oddsidemargin} {+0.00in}
 \setlength{\textwidth}     {+6.50in}
 \setlength{\topmargin}     {-0.50in}   
 \setlength{\textheight}    {+9.00in}
\normalsize
\bibliographystyle{abbrv}
%
%
%
%
\newtheorem{theorem}{Theorem}
\newtheorem{lemma}{Lemma}
\newtheorem{corollary}{Corollary}
\newtheorem{proposition}{Proposition}[subsubsection]

\newtheorem{conjecture}{Conjecture}

\newenvironment{proof}{{\sc Proof. }}{\hfill$\Box$\vspace{0.2in}}
%
%
\algnewcommand{\LineComment}[1]{\Statex \(\triangleright\) #1}

\pagestyle{myheadings}
\thispagestyle{plain}
%
%
\markboth
{\sc Shu {et al.} /v:\today}
{\sc Acyclic edge coloring conjecture on planar graphs}
\title{Acyclic edge coloring conjecture is true on planar graphs without intersecting triangles%
\footnote{An extended abstract appears in {\em Proceedings of TAMC 2020}.}}
\author{Qiaojun~Shu\thanks{Department of Mathematics, Hangzhou Dianzi University.  Hangzhou, China.
	Email: {\tt qjshu@hdu.edu.cn}}
	\thanks{Department of Computing Science, University of Alberta.  Edmonton, Alberta T6G 2E8, Canada.
	Emails: {\tt \{qiaojun,guohui\}@ualberta.ca}}
\and
	Guohui~Lin$^{\ddagger}$\thanks{Correspondence author. ORCID: 0000-0003-4283-3396.}
\and
	Eiji~Miyano\thanks{Department of Artificial Intelligence, Kyushu Institute of Technology. Iizuka, Japan.
	Email: {\tt miyano@ces.kyutech.ac.jp}}}
\date{\today}
\begin{document}
\maketitle
\begin{abstract}
An acyclic edge coloring of a graph $G$ is a proper edge coloring such that no bichromatic cycles are produced.
The acyclic edge coloring conjecture by Fiam{\v{c}}ik (1978) and Alon, Sudakov and Zaks (2001) states that
every simple graph with maximum degree $\Delta$ is acyclically edge $(\Delta + 2)$-colorable.
Despite many milestones, the conjecture remains open even for planar graphs.
In this paper, we confirm affirmatively the conjecture on planar graphs without intersecting triangles.
We do so by first showing, by discharging methods,
that every planar graph without intersecting triangles must have at least one of the six specified groups of local structures,
and then proving the conjecture by recoloring certain edges in each such local structure and by induction on the number of edges in the graph.

\paragraph{Keywords:}
Acyclic edge coloring; planar graph; intersecting triangles; discharging; induction 
\end{abstract}

\section{Introduction}
Let $G$ be a simple graph with vertex set $V(G)$ and edge set $E(G)$.
For an integer $k \ge 2$, a {\em (proper) edge $k$-coloring} is a mapping $c: E(G) \to \{1, 2, \ldots, k\}$ such that any two adjacent edges receive different colors.
(We drop ``proper'' in the sequel.)
$G$ is {\em edge $k$-colorable} if $G$ has an edge $k$-coloring.
The {\em chromatic index} $\chi'(G)$ of $G$ is the smallest integer $k$ such that $G$ is edge $k$-colorable.
An edge $k$-coloring $c$ of $G$ is called {\em acyclic} if there are no bichromatic cycles in $G$, i.e.,
the subgraph of $G$ induced by any two colors is a forest.
The {\em acyclic chromatic index} of $G$, denoted by $a'(G)$, is the smallest integer $k$ such that $G$ is acyclically edge $k$-colorable.

Recall that a connected graph has a block-cut tree representation
in which each block is a $2$-connected component and two blocks overlap at at most one cut-vertex.
Since we may swap any two colors inside a block, if necessary,
in the context of edge coloring or acyclic edge coloring we assume w.l.o.g. that $G$ is $2$-connected.
In the sequel, only simple and $2$-connected graphs are considered in this paper.

Let $\Delta(G)$ ($\Delta$ for short and reserved) denote the maximum degree of the graph $G$.
One sees that $\Delta \le \chi'(G) \le a'(G)$.
Note that $\chi'(G) \le \Delta + 1$ by Vizing's theorem \cite{Vizing64} and that $a'(K_4) = \Delta + 2$.
Fiam{\v{c}}ik \cite{F78} and Alon, Sudakov and Zaks \cite{ABZ01} independently made the following {\em acyclic edge coloring conjecture} (AECC):

\begin{conjecture}\label{conj01} {\rm (AECC)}
For any graph $G$, $a'(G) \le \Delta + 2$.
\end{conjecture}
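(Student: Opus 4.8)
The plan is to attack this via a minimal counterexample, inducting on the number of edges: suppose $G$ is a graph with the fewest edges for which $a'(G) > \Delta + 2$, delete a carefully chosen edge $uv$, acyclically edge-color $G - uv$ with $\Delta + 2$ colors by the induction hypothesis, and then argue that this partial coloring can always be extended---after some local recoloring if necessary---to include $uv$. The core technical device is to track, for the edge $uv$ we wish to color, the set of colors already appearing on edges incident to $u$ or $v$ (which properness forbids) together with the smaller set of colors whose reuse on $uv$ would close a bichromatic cycle; the latter constraints arise exactly from maximal $(\alpha,\beta)$-colored paths joining $u$ and $v$ in $G - uv$. Because $\Delta + 2$ exceeds the degree constraint by a comfortable margin, the aim is to show that whenever too few colors are free, a short bichromatic path forced through $u$ and $v$ can be broken by a Vizing-type swap along it, freeing a color for $uv$ without introducing a new bichromatic cycle.

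To make the induction close, one must guarantee that the minimal counterexample contains some \emph{reducible configuration}: a local structure on which the recoloring argument provably succeeds, contradicting minimality. The standard way to certify that such a configuration is unavoidable is a discharging argument: assign charges to vertices and faces summing to a negative total (via Euler's formula in the planar case, or a density bound in the sparse case), redistribute charge by local rules, and conclude that if no reducible configuration occurred then every vertex and face would end with nonnegative charge, a contradiction. The bulk of the work is then the case analysis of the recoloring step, where for each candidate configuration one enumerates how the colors of the neighboring edges can interact and exhibits an explicit sequence of swaps that installs a legal color on the deleted edge.

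The hard part---and the reason the statement as worded remains a conjecture rather than a theorem---is that there is no known finite family of reducible configurations that is unavoidable in \emph{every} simple graph. For unrestricted graphs the density is arbitrary, so Euler's formula offers no leverage and discharging cannot force a sparse local structure; the recoloring arguments that succeed when $u$ and $v$ have few neighbors or lie on few triangles have no foothold when the graph is dense. Consequently the only unconditional bounds currently attainable for all $G$ are of the form $a'(G) = O(\Delta)$, obtained by the probabilistic method or entropy-compression arguments that bound the number of partial colorings failing to extend; these lose a constant factor and do not reach $\Delta + 2$. The realistic route is therefore to impose structural hypotheses that restore a usable sparsity or planarity bound---as this paper does by forbidding intersecting triangles---so that discharging yields an unavoidable reducible family and the induction goes through on that subclass, leaving the fully general conjecture open.
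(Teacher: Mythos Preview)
Your proposal is correct in its assessment: the statement is a conjecture that the paper does \emph{not} prove in full generality, and you accurately identify both the standard inductive-plus-discharging strategy and the obstruction to carrying it out for arbitrary graphs. Your outline of the method---delete an edge, color $G-uv$ by induction, track the sets $B_i$ of colors blocked by bichromatic $(i,j)$-paths, and recolor locally to free a color for $uv$---matches exactly the machinery the paper deploys in Section~3 for its restricted class, and your explanation of why discharging requires a planarity or sparsity hypothesis is on point.

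The only refinement worth noting is that the paper's contribution is not merely ``impose a structural hypothesis so discharging works,'' but the specific design of the six configuration groups $(A_1)$--$(A_6)$ in Theorem~\ref{thm02} and the rather intricate case analysis verifying reducibility of each; the discharging rules and the recoloring arguments are tailored to the absence of intersecting triangles and do not follow from a generic template. Your high-level description is accurate, but the substance of the paper lies entirely in that case analysis, which your proposal (reasonably) does not attempt to reproduce.
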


For an arbitrary graph $G$, the following milestones have been achieved:
Alon, McDiarmid and Reed \cite{AMR91} proved that $a'(G) \le 64 \Delta$ by a probabilistic argument.
The upper bound was improved to $16 \Delta$ \cite{MR98}, to $\lceil 9.62 (\Delta - 1) \rceil$ \cite{NPS12}, to $4 \Delta - 4$ \cite{EP12},
and most recently in 2017 to $\lceil 3.74 (\Delta - 1) \rceil + 1$ by Giotis et al.~\cite{GKP17} using the Lov\'{a}sz local lemma.
On the other hand, the AECC has been confirmed true for graphs with $\Delta \in \{3, 4\}$ \cite{Sku04,AMM12,BC09,SWMW19,WMSW19}.

When $G$ is planar, i.e., $G$ can be drawn in the two-dimensional plane so that its edges intersect only at their ending vertices,
Basavaraju et al.~\cite{BLSNHT11} showed that $a'(G) \le \Delta + 12$.
The upper bound was improved to $\Delta + 7$ by Wang, Shu and Wang~\cite{WSW127} and to $\Delta + 6$ by Wang and Zhang~\cite{WZ16}.
The AECC has been confirmed true for planar graphs without $i$-cycles for each $i \in \{3, 4, 5, 6\}$ in \cite{SW11,WSW4,SWW12,WSWZ14}, respectively.

A {\em triangle} is synonymous with a $3$-cycle.
We say that two triangles are {\em adjacent} if they share a common edge, and are {\em intersecting} if they share at least a common vertex.
Recall that the truth of the AECC for planar graphs without triangles has been verified in \cite{SW11}.
When a planar graph $G$ contains triangles but no intersecting triangles, Hou, Roussel and Wu \cite{HRW11} proved the upper bound $a'(G) \le \Delta + 5$,
and Wang and Zhang \cite{WZ14} improved it to $a'(G) \le \Delta + 3$.
This paper focus on planar graphs without intersecting triangles too,
and we completely resolve the AECC by showing the following main theorem.

\begin{theorem}\label{thm01}
The AECC is true for planar graphs without intersecting triangles.
\end{theorem}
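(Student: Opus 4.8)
The plan is to argue by contradiction using a minimum counterexample together with the discharging method, exactly along the lines sketched in the abstract. Suppose $G$ is a planar graph without intersecting triangles that is not acyclically edge $(\Delta+2)$-colorable, and among all such graphs choose one with $|E(G)|$ minimum. As noted above we may assume $G$ is $2$-connected, so $\delta(G)\ge 2$; a few further easy reductions (for instance, no two vertices of degree $2$ are adjacent, and a vertex of degree $2$ cannot lie on a triangle whose other two vertices are too small) follow from minimality by a direct recoloring argument and would be recorded first. Every proper subgraph $G'$ of $G$ satisfies $a'(G')\le\Delta(G')+2\le\Delta+2$ by the choice of $G$, which is the induction hypothesis that drives all the reducibility arguments.

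Next I would run the discharging. Since $G$ is planar and $2$-connected, Euler's formula gives $\sum_{v\in V(G)}(d(v)-4)+\sum_{f\in F(G)}(\ell(f)-4)=-8$, where $\ell(f)$ denotes the length of the face $f$. Assign the initial charge $\mathrm{ch}(x)=d(x)-4$ to each vertex $x$ and $\mathrm{ch}(f)=\ell(f)-4$ to each face $f$, so the total charge is $-8$. One then designs discharging rules that move charge from vertices of large degree and from faces of large length to $3$-faces and to vertices of degree $2$ and $3$ (essentially the only objects with negative initial charge); the hypothesis that no two triangles intersect is what prevents a single triangle from being drained by several neighbours at once and makes the accounting possible. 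The combinatorial heart of this step is the claim, proved by a face-by-face and vertex-by-vertex check, that \emph{if} $G$ avoids all six groups of local configurations then after discharging every vertex and every face has nonnegative charge, contradicting the total $-8$. Hence $G$ must contain one of the six configurations.

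It then remains to show that none of the six configurations can occur in the minimum counterexample, i.e.\ each is \emph{reducible}. For a fixed configuration I would delete or uncolor a small set $S$ of carefully chosen edges (typically the edge(s) incident to the low-degree vertices of the configuration, or one edge of a triangle), apply the induction hypothesis to obtain an acyclic edge $(\Delta+2)$-coloring of the smaller graph, and then reinsert the edges of $S$ one at a time. The basic tool is the standard count of \emph{forbidden} colors for an edge $uv$: a color is unusable only if it already appears at $u$ or at $v$, or if it is one of the (at most $\min\{d(u),d(v)\}-1$) colors appearing at both ends for which a bichromatic alternating $u$--$v$ path already exists; whenever this total is smaller than $\Delta+2$ a valid color exists, and otherwise one first breaks an offending alternating path by a local color swap and then colors $uv$. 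Carrying this out needs, for several of the configurations, a nontrivial case analysis on which colors coincide at neighbouring vertices and on the possible bichromatic paths.

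I expect the main obstacle to be precisely this reducibility analysis for the configurations sitting on a triangle. In a triangle $xyz$, recoloring the edge $xy$ can create a bichromatic cycle with either $xz$ or $yz$, so the usual ``count the forbidden colors and pick a free one'' step must be supplemented by swaps that are themselves constrained by the third edge; moreover the configurations have to be chosen in a sweet spot---rich enough in low-degree vertices to be forced by the discharging, yet constrained enough that the recoloring always goes through. Balancing these two requirements across all six groups, and in particular handling the degree-$2$ and degree-$3$ vertices adjacent to a triangle, is where the bulk of the work lies; the triangle-free case of \cite{SW11} avoids exactly this difficulty.
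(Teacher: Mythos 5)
Your outline follows the same global strategy as the paper (an unavoidable set of local configurations obtained by discharging, plus reducibility of each configuration by induction/recoloring), but as it stands it is a plan rather than a proof: both of its essential components are asserted, not established. On the discharging side, you never exhibit the rules nor verify nonnegativity, and your specific setup differs from what actually works in the paper in a way that matters: you charge $G$ itself with $d(x)-4$ and $\ell(f)-4$ (total $-8$), so every $2$-vertex starts at $-2$ and every $3$-vertex at $-1$, and faces through many $2$-vertices are hard to rescue. The paper instead deletes all $2$-vertices, works in the resulting plane graph $H$ with charges $2d_H(u)-6$ and $d(f)-6$ (total $-12$), and first proves structural lemmas (e.g.\ $\delta(H)\ge 3$, constraints on $4$-, $5$-, $6$-vertices of $H$) that are themselves consequences of the absence of configurations $(A_1)$, $(A_2)$; the six configurations and the rules (R1)--(R4) were designed in tandem with this reduction, and it is not clear your $-8$ scheme on $G$ with these same configurations closes.

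The larger gap is reducibility. Your generic argument --- count colors forbidden at $u$ and $v$ plus blocked by existing bichromatic $u$--$v$ paths, and ``otherwise break an offending alternating path by a local color swap'' --- is exactly the point where the conjecture is hard: for the configurations in question all candidate colors in $C\setminus(C(u)\cup C(v))$ can simultaneously be blocked (this is Proposition~\ref{prop3001} in the paper), and showing that some bounded sequence of recolorings always succeeds is the content of Section~3 and the Appendix, including auxiliary tools such as Lemma~\ref{lemma09} (no second $(a,b)$-path off a maximal one), Lemma~\ref{lemma10} (suppressing a $2$-vertex by merging its two edges), and Lemma~\ref{lemma11}, followed by a long case analysis for $(A_2)$, $(A_3)$, $(A_5)$, $(A_6)$. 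Note also that your induction hypothesis is stated only for \emph{proper subgraphs} of the minimum counterexample, whereas some of the needed reductions (e.g.\ $G'=G-u_i+ux_i$ in Lemma~\ref{lemma10}, or $G'=G-\{v,u_5\}+uw$ in case $(A_{2.3})$) produce smaller planar graphs without intersecting triangles that are not subgraphs of $G$; the hypothesis must be phrased over all such graphs with fewer edges. Until the discharging verification and the configuration-by-configuration recoloring are actually carried out, the proposal does not prove the theorem.
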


The rest of the paper is organized as follows.
In Section 2, we characterize six groups of local structures (also called {\em configurations}),
and by discharging methods we prove that any planar graph without intersecting triangles must contain at least one of these local structures.
Incorporating a known property of edge colorings and bichromatic cycles (Lemma~\ref{lemma09}),
in Section 3 we prove by induction on the number of edges that the graph admits an acyclic edge $(\Delta + 2)$-coloring.
We conclude the paper in Section 4.

\section{The six groups of local structures}
Recall that we consider only simple $2$-connected planar graphs.

Given a graph $G$, let $d(v)$ denote the degree of the vertex $v$ in $G$.
A vertex of degree $k$ (at least $k$, at most $k$, respectively) is called a $k$-{\em vertex} ({$k^+$-{\em vertex}, $k^-$-{\em vertex}, respectively).
Let $n_k(v)$ ($n_{k^{+}}(v)$, $n_{k^{-}}(v)$, respectively) denote
the number of $k$-vertices ($k^{+}$-vertices, $k^{-}$-vertices, respectively) adjacent to $v$ in $G$.

\begin{figure}[]
\begin{center}
\includegraphics[width=6.0in]{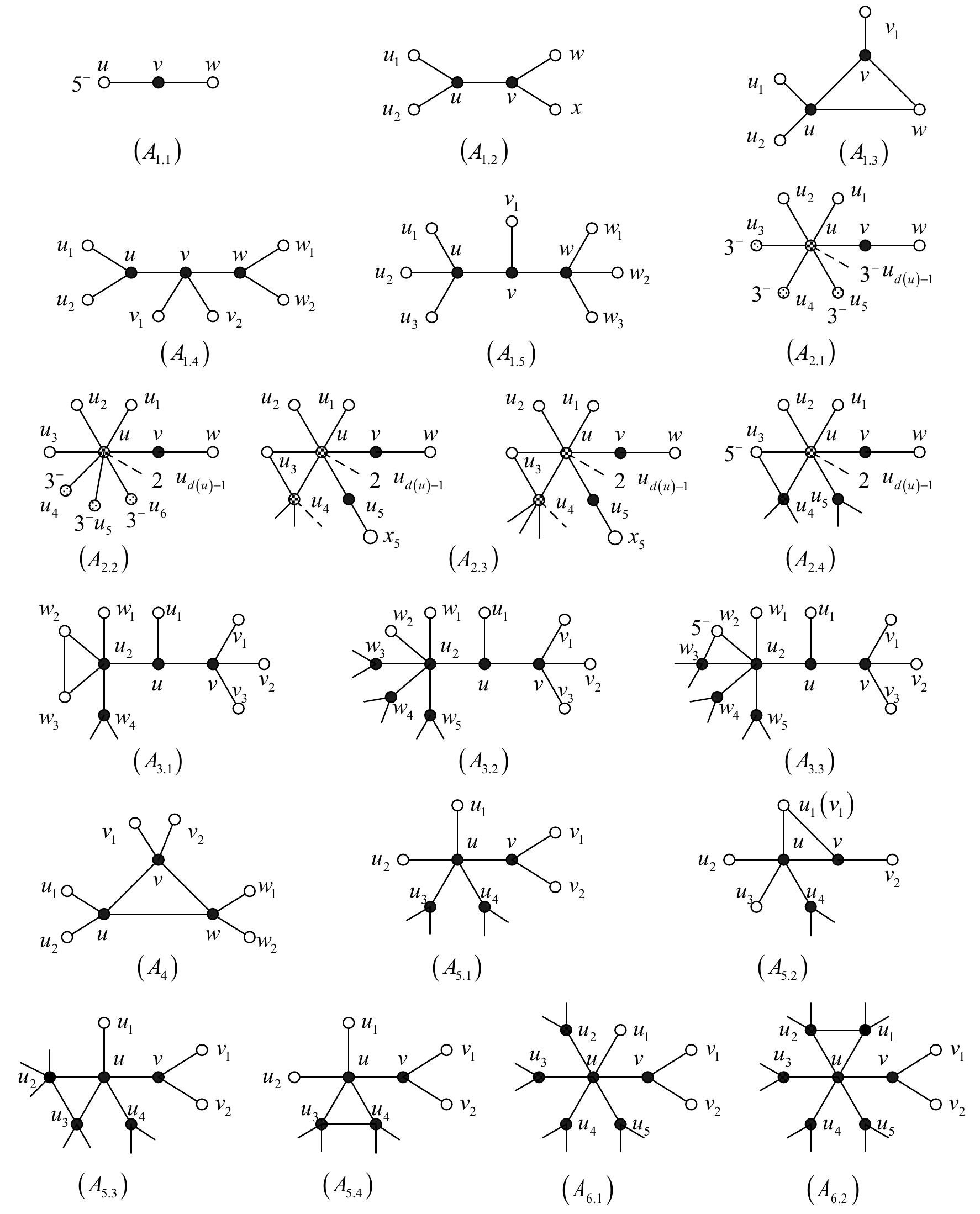}
\medskip
\caption{The six groups of local structures ($A_1$)--($A_6$) specified in Theorem~\ref{thm02}.
	In each local structure, filled vertices are distinct from each other, each has its degree determined, and some or all its neighbors determined;
	a non-filled vertex could collide into another vertex, has an undetermined degree, but an upper-bound on its degree, if known, would be shown.
	Vertex labels start with a character.
	A dashed edge indicates existence of zero, one, or more edges incident at the vertex.\label{fig01}}
\end{center}
\end{figure}

\begin{theorem}\label{thm02}
Let $G$ be a simple $2$-connected planar graph with $\Delta \ge 5$ and without intersecting triangles.
Then $G$ contains at least one of the following local structures (or {\em configurations}, used interchangeably) {\rm ($A_1$)--($A_6$)},
as depicted in Figure~\ref{fig01}:
\begin{description}
\parskip=0pt
\item[{\rm ($A_1$)}]
	A path $u v w$ such that at least one of the following holds:
	\begin{description}
	\parskip=0pt
	\item[{\rm ($A_{1.1}$)}]
		$d(v)= 2$ and $d(u)\le 5$;
	\item[{\rm ($A_{1.2}$)}]
		$d(u)= d(v)= 3$;
	\item[{\rm ($A_{1.3}$)}]
		$d(v)= 3$, $d(u)= 4$, and $uw\in E(G)$;
	\item[{\rm ($A_{1.4}$)}]
		$d(v)= 4$ and $d(u)= d(w)= 3$;
	\item[{\rm ($A_{1.5}$)}]
		$d(v)= 3$ and $d(u)= d(w)= 4$.
	\end{description}
\item[{\rm ($A_2$)}]
	A $6^+$-vertex $u$ adjacent to a $2$-vertex $v$.
	Let $u_1, u_2, \ldots, u_{d(u)-1}$ be the other neighbors of $u$.
	Then at least one of the following cases holds:
	\begin{description}
	\parskip=0pt
	\item[{\rm ($A_{2.1}$)}]
		$n_2(u) + n_3(u) \ge d(u)-2$;
	\item[{\rm ($A_{2.2}$)}]
		$n_2(u) + n_3(u) = d(u)-3$ and $n_3(u) \le 3$;
	\item[{\rm ($A_{2.3}$)}]
		$n_2(u) = d(u)- 4$, $d(u_j) = 2$ for $5 \le j \le d(u)-1$, $n_2(u_4) \in \{d(u_4)- 4, d(u_4)- 5\}$, and $u_3u_4 \in E(G)$;
	\item[{\rm ($A_{2.4}$)}]
		$n_2(u) = d(u)- 5$, $d(u_j) = 2$ for $6 \le j \le d(u)-1$, $d(u_5)= 3$, $d(u_4)= 4$, $d(u_3) \le 5$, and $u_3u_4\in E(G)$.
	\end{description}
\item[{\rm ($A_3$)}]
	A path $u_2 u v$ with $d(u) = 3$, $d(v) = 4$.
	Let $w_1, w_2, \ldots, w_{d(u_2)-1}$ be the neighbors of $u_2$ other than $u$, sorted by their degrees.
	Then at least one of the following cases holds:
	\begin{description}
	\parskip=0pt
	\item[{\rm ($A_{3.1}$)}]
		$d(u_2) = 5$, $d(w_4)= 3$, and $w_2 w_3\in E(G)$;
	\item[{\rm ($A_{3.2}$)}]
		$d(u_2) = 6$, $d(w_3)= d(w_4)= d(w_5)= 3$;
	\item[{\rm ($A_{3.3}$)}]
		$d(u_2) = 6$, $d(w_2) \le 5$, $d(w_3)= 4$, $d(w_4) = d(w_5) = 3$, and $w_2 w_3\in E(G)$.
	\end{description}
\item[{\rm ($A_4$)}]
	A $3$-cycle $uvwu$ with $d(u)= d(v)= d(w)= 4$.
\item[{\rm ($A_5$)}]
	A $5$-vertex $u$ adjacent to $u_1, u_2, u_3, u_4$ and a $3$-vertex $v$, sorted by their degrees.
	Then at least one of the following cases holds:
	\begin{description}
	\parskip=0pt
	\item[{\rm ($A_{5.1}$)}]
		$d(u_3)= d(u_4)= 3$;
	\item[{\rm ($A_{5.2}$)}]
		$d(u_4)= 3$ and $u_1 v\in E(G)$;
	\item[{\rm ($A_{5.3}$)}]
		$d(u_4)= 3$, $d(u_3)= 4$, $d(u_2)= 5$, and $u_2 u_3\in E(G)$.
	\item[{\rm ($A_{5.4}$)}]
		$d(u_3)= d(u_4)= 4$ and $u_3 u_4\in E(G)$.
	\end{description}
\item[{\rm ($A_6$)}]
	A $6$-vertex $u$ adjacent to $u_1$, $u_2$ and four $3$-vertices $u_3$, $u_4$, $u_5$, $v$, sorted by their degrees.
    Then at least one of the following cases holds:
    \begin{description}
	\parskip=0pt
    \item[{\rm ($A_{6.1}$)}]
		$d(u_2)= 3$;
	\item[{\rm ($A_{6.2}$)}]
		$d(u_1)= d(u_2)= 4$ and $u_1 u_2\in E(G)$.
	\end{description}
\end{description}
\end{theorem}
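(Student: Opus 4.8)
The plan is to argue by contradiction via the \emph{discharging method}. Suppose $G=(V,E)$ is a counterexample: a simple $2$-connected planar graph with $\Delta\ge5$, no intersecting triangles, and containing none of $(A_1)$--$(A_6)$. First I would record the structural consequences of this avoidance: by $(A_{1.1})$ both neighbors of any $2$-vertex are $6^+$-vertices; by $(A_{1.2})$ no two $3$-vertices are adjacent; by $(A_{1.5})$ every $3$-vertex has at most one $4$-neighbor, hence at least two neighbors of degree $\ge5$; by $(A_{1.4})$ every $4$-vertex has at most one $3$-neighbor; by $(A_{1.3})$ together with $(A_4)$ no triangle is incident with both a $3$-vertex and a $4$-vertex and no triangle consists of three $4$-vertices; and since $G$ has no intersecting triangles, its triangles are pairwise vertex-disjoint and each vertex lies in at most one triangle (so a triangle carrying a $2$- or $3$-vertex carries two $6^+$- or two $5^+$-vertices, respectively, and in particular every triangle has an incident $5^+$-vertex). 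Configurations $(A_2)$, $(A_3)$, $(A_5)$, $(A_6)$ supply finer forbidden patterns around a $6^+$-vertex with a $2$-neighbor, around a $3$-vertex with a $4$-neighbor, and around $5$- and $6$-vertices; these will be invoked only in the matching cases below. Fix a planar embedding with face set $F$. Euler's formula and $\sum_{v\in V}d(v)=\sum_{f\in F}\deg(f)=2|E|$ give $\sum_{v\in V}(d(v)-4)+\sum_{f\in F}(\deg(f)-4)=-8$.

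Assign initial charges $\mu(v)=d(v)-4$ and $\mu(f)=\deg(f)-4$, so the total is $-8$. I would then design discharging rules moving charge from the only elements with positive initial charge ($5^+$-vertices and $5^+$-faces) to the only deficient ones ($2$-vertices, $3$-vertices, and $3$-faces), with exact fractions tuned during verification: (R1) each $2$-vertex receives $1$ from each of its two neighbors; (R2) each $3$-vertex receives an amount between $1/3$ and $1$ from each incident $5^+$-vertex, the split depending on whether it has a $4$-neighbor and on the degrees of its remaining neighbors --- this is where $(A_3)$, $(A_5)$, $(A_6)$ cap how much a given $5$- or $6$-vertex must surrender; (R3) each $3$-face receives $1$ in total from its incident $5^+$-vertices (which exist by the structural facts above, while $4$-vertices on a triangle give nothing); (R4) any surplus on a $5^+$-face is pushed to its incident deficient vertices. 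Since discharging preserves the total, after discharging $\sum_v\mu^{*}(v)+\sum_f\mu^{*}(f)=-8$, so it suffices to prove $\mu^{*}(x)\ge0$ for every vertex and face $x$, a contradiction.

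The verification splits by degree and face size. Faces are routine: a $3$-face reaches $0$ by (R3); a $4$-face keeps $0$; a $k$-face with $k\ge5$ starts with $k-4>0$ and loses only its surplus under (R4). Among vertices, a $2$-vertex reaches $(2-4)+2=0$ by (R1); a $3$-vertex reaches $0$ by (R2) because it has at least two (indeed three, absent a $4$-neighbor) $5^+$-neighbors; and a $4$-vertex stays $\ge0$ since it has at most one $3$-neighbor and lies in at most one triangle and in neither case is asked to give. The substance of the proof --- and the main obstacle --- is the case analysis for $5$- and $6$-vertices (and, more mechanically, $7^+$-vertices): for such a vertex $u$ one must show the charge sent out under (R1)--(R3) never exceeds $d(u)-4$. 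This is exactly why $(A_2)$--$(A_6)$ carry so many subcases: each subcase is the extremal neighborhood that would otherwise drive $u$ below $0$, and forbidding it leaves precisely enough slack. I expect the bookkeeping to be most delicate when $u$ is simultaneously on a triangle and adjacent to several $2$- or $3$-vertices, since then the demands of (R1)/(R3) and (R2) compete for the same $d(u)-4$ units; I would control this by the tie-breaking convention that each $3$-vertex draws primarily from its highest-degree neighbor, so that the sub-configurations in $(A_3)$, $(A_5)$, $(A_6)$ directly bound the worst case. With $\mu^{*}(x)\ge0$ established everywhere, $\sum_x\mu^{*}(x)\ge0$ contradicts $\sum_x\mu^{*}(x)=-8$; hence no counterexample exists, proving Theorem~\ref{thm02}.
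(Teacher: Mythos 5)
Your overall strategy (discharging against a minimal counterexample, after recording the local consequences of forbidding $(A_1)$--$(A_6)$) is the same as the paper's, and your preliminary structural observations are essentially correct. But the proposal has a genuine gap: the discharging rules are never actually fixed (``exact fractions tuned during verification'', a $3$-vertex receives ``an amount between $1/3$ and $1$'', surplus on $5^+$-faces is ``pushed'' without saying how much to whom), and the verification that every $5$-, $6$- and $7^+$-vertex retains nonnegative charge --- which you yourself identify as ``the substance of the proof and the main obstacle'' --- is deferred rather than carried out. That verification is precisely where configurations $(A_{2.2})$--$(A_{2.4})$, $(A_3)$, $(A_5)$, $(A_6)$ must be invoked in a long, delicate case analysis (in the paper this occupies the analogues of your final step, Lemmas~\ref{lemma06} and~\ref{lemma07}, with rules (R2.2.2), (R2.2.3), (R3), (R4) split into several weight levels exactly to survive the extremal neighborhoods). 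Without pinning the rules down and doing those cases, nothing guarantees that a consistent set of fractions exists; a vague tie-breaking convention such as ``each $3$-vertex draws primarily from its highest-degree neighbor'' does not by itself rule out, say, a $5$-vertex that simultaneously supports two $3$-neighbors and is the unique $5^+$-vertex of an incident triangle, which is the kind of tension the subcases of $(A_5)$ are designed to kill.

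Two design choices also differ from the paper in ways that make your version harder, not easier. The paper first deletes all $2$-vertices (justified by the absence of $(A_{1.1})$, $(A_{2.1})$, $(A_{2.2})$, giving $\delta(H)\ge 3$) and discharges in $H$ with charges $2d_H(u)-6$ and $d(f)-6$, so $2$-vertices never enter the face/vertex bookkeeping; you keep them in $G$ with charges $d-4$ and $\deg-4$, so $6^+$-vertices must feed both their $2$-neighbors and incident triangles, and all of the $(A_{2.3})$/$(A_{2.4})$-type exclusions have to be checked inside the vertex cases rather than being absorbed into the reduction to $H$. Moreover, in the paper every $4$-vertex pays $\frac12$ to each incident $5^-$-face (rule (R1)), spreading the cost of triangles and quadrilaterals; your convention that $4$-vertices ``give nothing'' concentrates the entire demand of a $3$-face of type $\{4,4,5\}$ or $\{4,5,5\}$ on its $5^+$-vertices, which tightens the already critical $5$-vertex budget of $d-4=1$. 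So as written the argument is a plausible plan rather than a proof: to complete it you must state explicit rules and then perform the full case analysis for $5$-, $6$- and $7^+$-vertices (and for $3$-, $4$-, $5$-faces), showing in each case which forbidden configuration supplies the needed slack.
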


The rest of the section is devoted to the proof of Theorem~\ref{thm02}, by contradiction built on discharging methods.
To this purpose, we assume to the contrary that $G$ contains none of the local structures ($A_1$)--($A_6$).
In the proof we employ several known local structural properties for planar graphs from \cite{SWW12}, which are summarized in Lemma~\ref{lemma02}.

\subsection{Definitions and notations}
Since $G$ is $2$-connected, $d(v)\ge 2$ for any vertex $v \in V(G)$.
Let $G'$ be the graph obtained by deleting all the $2$-vertices of $G$;
let $H$ be a connected component of $G'$.
Clearly, every vertex $v \in V(H)$ has its (called {\em original}) degree $d(v) \ge 3$ in $G$, and $H$ is also a planar graph without intersecting triangles.
In what follows, we assume that $H$ is embedded in the two dimensional plane such that its edges intersect only at their ending vertices,
and we refer $H$ to as a {\em plane} graph.

For objects in the plane graph $H$, we use the following notations:
\begin{itemize}
\parskip=0pt
\item
	$N_H(v) = \{u \mid uv\in E(H)\}$ and $d_H(v) = |N_H(v)|$ --- the degree of the vertex $v\in V(H)$;
\item
	similarly define what a $k$-{\em vertex}, a $k^+$-{\em vertex} and a $k^-$-{\em vertex} are;
\item
	$n'_k(v)$ ($n'_{k^+}(v)$, $n'_{k^-}(v)$, respectively) --- the number of $k$-vertices ($k^+$-vertices, $k^{-}$-vertices, respectively) adjacent to $v$.
\end{itemize}

Note that we do not discuss the embedding of the graph $G$.
Every face in the proof of Theorem~\ref{thm02} is in the embedding of the graph $H$, or simplified as in the graph $H$.
We reserve the character $f$ to denote a face.

\begin{itemize}
\parskip=0pt
\item
	$F(H)$ --- the face set of $H$;
\item
	$V(f)$ --- the set of vertices on (the boundary of) the face $f$;
\item
	a vertex $v$ and a face $f$ are {\em incident} if $v\in V(f)$;
\item
	$n_k(f)$ ($n_{k^+}(f)$, $n_{k^-}(f)$, respectively) --- the number of $k$-vertices ($k^+$-vertices, $k^{-}$-vertices, respectively) in $V(f)$;
\item
	$\delta(f)$ --- the minimum degree of the vertices in $V(f)$;
\item
	$d(f)$ --- the degree of the face $f$, which is the number of edges on the face $f$ with cut edges counted twice,
	and similarly define what a $k$-{\em face}, a $k^+$-{\em face} and a $k^-$-{\em face} are;
\item
	$F(v) = \{f \in F(H) \mid v\in V(f)\}$ --- the set of faces incident at $v$;
\item
	$m_k(v)$ ($m_{k^+}(v)$, $m_{k^-}(v)$, respectively) --- the number of $k$-faces ($k^+$-faces, $k^-$-faces, respectively) in $F(v)$;
\item
	for a vertex $v\in V(H)$ with $d_H(v)= k$, let $u_1, u_2, \ldots, u_{k} $ be all the neighbors of $v$ in clockwise order,
	and denote the face containing $u_1 v u_2$ ($u_2 v u_3$, $\ldots$, $u_{k-1} v u_k$, $u_k v u_1$, respectively) as $f_1$ ($f_2, \ldots, f_{k-1}, f_k$, respectively).%
\footnote{Note that $|F(v)| \le d_H(v)$, and some of these faces $f_1, f_2, \ldots, f_{d_H(v)}$ would refer to the same face when $|F(v)| < d_H(v)$.\label{fn01}}
\end{itemize}

\subsection{Structural properties}
Note from the construction of $H$ that $d_H(u) = d(u)- n_2(u)$ and thus $d_H(u) = d(u)$ if $n_2(u) = 0$.
In this subsection, we characterize structural properties for $6^-$-vertices in $H$, which will be used in the analysis of the discharging.

\begin{lemma}
\label{lemma01}
$\delta(H)\ge 3$.
More specifically, for any vertex $u \in V(H)$,
if $3 \le d(u) \le 5$, then $d_H(u) = d(u)$;
if $d(u) \ge 6$, then $d_H(u) \ge 4$.
\end{lemma}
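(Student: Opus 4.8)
The plan is to reduce the statement to an upper bound on $n_2(u)$, the number of $2$-vertex neighbors of $u$ in $G$, via the identity $d_H(u) = d(u) - n_2(u)$ recorded just above, and then to invoke the standing assumption (in the proof of Theorem~\ref{thm02}) that $G$ contains none of the configurations $(A_1)$--$(A_6)$. Since every $u \in V(H)$ has original degree $d(u) \ge 3$ in $G$, it suffices to bound $n_2(u)$ in the two regimes $3 \le d(u) \le 5$ and $d(u) \ge 6$.

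For $3 \le d(u) \le 5$ I would show $n_2(u) = 0$. If some $2$-vertex $v$ were adjacent to $u$, then, as $G$ is simple and $2$-connected, $v$ has a second neighbor $w \ne u$, so $uvw$ is a path with $d(v) = 2$ and $d(u) \le 5$ --- this is precisely configuration $(A_{1.1})$, a contradiction. Hence $d_H(u) = d(u) - n_2(u) = d(u) \ge 3$.

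For $d(u) \ge 6$ I would show $n_2(u) \le d(u) - 4$. Suppose not, so $n_2(u) \ge d(u) - 3 \ge 3 > 0$; then $u$ is a $6^+$-vertex adjacent to some $2$-vertex, so configuration $(A_2)$ is centered at $u$. If $n_2(u) + n_3(u) \ge d(u) - 2$, then $(A_{2.1})$ holds. Otherwise $n_2(u) + n_3(u) \le d(u) - 3$, which together with $n_2(u) \ge d(u) - 3$ forces $n_2(u) = d(u) - 3$ and $n_3(u) = 0 \le 3$, so $(A_{2.2})$ holds. Either way we contradict the assumption, whence $n_2(u) \le d(u) - 4$ and $d_H(u) \ge 4$. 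Combining the two regimes yields $\delta(H) \ge 3$. The argument is short; its only delicate point is bookkeeping --- one must keep in mind that $n_2(u)$ and $n_3(u)$ are counted over all of $N_G(u)$, not just over the $d(u)-1$ ``other'' neighbors named in $(A_2)$, and that a single $2$-neighbor already instantiates the hub vertex $u$ in configurations $(A_1)$ and $(A_2)$. I do not expect a deeper obstacle, and no external structural property (e.g.\ from Lemma~\ref{lemma02}) appears to be needed here.
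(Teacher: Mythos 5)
Your proof is correct and follows essentially the same route as the paper: the small-degree case is exactly the inexistence of $(A_{1.1})$, and the bound $n_2(u)\le d(u)-4$ for $6^+$-vertices is derived, as in the paper, from the inexistence of $(A_{2.1})$ and $(A_{2.2})$ via $d_H(u)=d(u)-n_2(u)$. Your version merely spells out the case split ($n_2(u)+n_3(u)\ge d(u)-2$ versus $=d(u)-3$ with $n_3(u)=0$) that the paper states in one line.
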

\begin{proof}
The inexistence of ($A_{1.1}$) in $G$ states that no $5^-$-vertex is adjacent to a $2$-vertex in $G$;
thus $3 \le d(u) \le 5$ implies $d_H(u) = d(u)$.

The inexistences of $(A_{2.1})$ and $(A_{2.2})$ in $G$ together imply that every $6^+$-vertex $u$ is adjacent to at most ($d(u) - 4$) $2$-vertices in $G$;
thus if $d(u) \ge 6$, then $d_H(u) = d(u)- n_2(u) \ge d(u) - (d(u) - 4) = 4$.

The lemma follows from $\delta(G) \ge 2$ and that all $2$-vertices are deleted from $G$.
\end{proof}

\begin{corollary}
\label{coro01}
For any vertex $u \in V(H)$, $n'_2(u) = 0$, $d_H(u) = 3$ if and only if $d(u) = 3$, and $n'_3(u) = n_3(u)$.
\end{corollary}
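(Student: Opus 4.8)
The plan is to read off all three assertions directly from Lemma~\ref{lemma01} together with the defining property of $H$: since $H$ is obtained from $G$ by deleting precisely the $2$-vertices (and their incident edges), we have $V(H)\subseteq V(G)$, $E(H)\subseteq E(G)$, and every vertex of $H$ has original degree at least $3$ in $G$. No new combinatorial idea is needed; everything is bookkeeping on top of Lemma~\ref{lemma01}.

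First I would dispose of $n'_2(u)=0$. By Lemma~\ref{lemma01} we have $\delta(H)\ge 3$, so $H$ contains no $2$-vertex whatsoever; in particular $u$ has no $2$-vertex neighbor in $H$. Next I would prove the equivalence $d_H(u)=3 \iff d(u)=3$. For the forward direction, if $d(u)=3$ then $3\le d(u)\le 5$, so the second clause of Lemma~\ref{lemma01} gives $d_H(u)=d(u)=3$. For the converse, suppose $d_H(u)=3$ but $d(u)\ne 3$; since $u\in V(H)$ we have $d(u)\ge 3$, hence $d(u)\ge 4$. If $4\le d(u)\le 5$ then $d_H(u)=d(u)\ge 4$ by Lemma~\ref{lemma01}, and if $d(u)\ge 6$ then $d_H(u)\ge 4$ by Lemma~\ref{lemma01}; either way this contradicts $d_H(u)=3$. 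Hence $d(u)=3$.

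Finally I would establish $n'_3(u)=n_3(u)$ by a two-way inclusion on the sets of $3$-vertex neighbors of $u$. If $w$ is a $3$-vertex adjacent to $u$ in $G$, then $w$ is not a $2$-vertex, so the edge $uw$ survives in $G'$; since $u\in V(H)$, both $w$ and the edge $uw$ lie in $H$, and by the equivalence just proved $d_H(w)=3$, so $w$ contributes to $n'_3(u)$. Conversely, if $w$ is a $3$-vertex of $H$ adjacent to $u$ in $H$, then $uw\in E(H)\subseteq E(G)$ and, by the equivalence, $d(w)=3$, so $w$ contributes to $n_3(u)$. Therefore the two counts coincide.

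I do not anticipate a genuine obstacle here; the only point requiring care will be the correspondence between adjacency in $G$ and adjacency in $H$ — specifically, noting that an edge joining $u$ to a $3$-vertex is never destroyed when the $2$-vertices of $G$ are deleted. This is exactly what makes $n_3(u)$ and $n'_3(u)$ agree and, more broadly, what lets us read the degree of a $3$-vertex interchangeably in $G$ and in $H$ throughout the discharging argument.
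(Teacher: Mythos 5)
Your proof is correct and follows essentially the same route as the paper: $n'_2(u)=0$ from $\delta(H)\ge 3$, the equivalence $d_H(u)=3\iff d(u)=3$ from the two degree clauses of Lemma~\ref{lemma01}, and $n'_3(u)=n_3(u)$ by observing that a $3$-neighbor of $u$ in $G$ survives (with the edge) into $H$ and conversely. The only difference is that you spell out the two-way inclusion for the last item, which the paper states in one sentence; no gap either way.
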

\begin{proof}
Since $\delta(H) \ge 3$, there is no $2$-neighbor for $u$, that is, $n'_2(u) = 0$.

We have already showed that if $d(u) = 3$ then $d_H(u) = 3$, and if $d(u) \ge 4$ then $d_H(u) \ge 4$.
Thus, $d_H(u) = 3$ if and only if $d(u) = 3$.

The last item can also be interpreted as a $3$-neighbor of $u$ in $G$ remains as a $3$-neighbor of $u$ in $H$;
therefore, $n'_3(u) = n_3(u)$.
\end{proof}

By Corollary~\ref{coro01}, for any $3$-vertex we do not distinguish which graph, $G$ or $H$, it is in;
we use $n_3(u)$, instead of $n'_3(u)$, to denote the number of $3$-neighbors for the vertex $u$ in the graph $H$.
The two items in the next lemma are proven for planar graphs without $5$-cycles in \cite{SWW12}, by assuming the inexistences of a subset of $(A_1)$--$(A_6)$.
They in fact hold for all planar graphs.

\begin{lemma}{\rm \cite{SWW12}}
\label{lemma02}
%
{\rm (1)} 
Let $u$ be a vertex with $d_H(u) = 4$ and $n_3(u) \ge 1$, then $d(u) = 4$.

{\rm (2)} 
Let $f = [uvw]$ be a $3$-face with $\delta(f) = 3$, then $n_{5^+}(f) = 2$.
\end{lemma}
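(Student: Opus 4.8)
The plan is to deduce both items from the hypothesis that $G$ contains none of the configurations $(A_1)$--$(A_6)$, using only the absence of a handful of these configurations together with the elementary structural facts already established in Lemma~\ref{lemma01} and Corollary~\ref{coro01}. For item~(1), let $u$ be a vertex with $d_H(u)=4$ and with a $3$-neighbor $w$. By Corollary~\ref{coro01} a $3$-neighbor of $u$ in $H$ is a $3$-vertex of $G$ and $n'_3(u)=n_3(u)$, so $u$ has a genuine $3$-neighbor $w$ in $G$. Suppose for contradiction that $d(u)\ge 5$. Since $d_H(u)=4$ we have $n_2(u)=d(u)-4\ge 1$, so $u$ is adjacent to a $2$-vertex $v$; combined with the $3$-neighbor $w$ this makes $u$ a $6^+$-vertex adjacent to a $2$-vertex only when $d(u)\ge 6$, while $d(u)=5$ is handled directly. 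The idea is: if $d(u)=5$ then $u$ is a $5$-vertex adjacent to a $2$-vertex, contradicting Lemma~\ref{lemma01} (which forbids $5^-$-vertices next to $2$-vertices, i.e. rules out $(A_{1.1})$); and if $d(u)\ge 6$ then $u$ is a $6^+$-vertex adjacent to the $2$-vertex $v$, and I would show that having $n_2(u)=d(u)-4$ together with the extra $3$-neighbor $w$ forces one of the configurations $(A_{2.1})$ or $(A_{2.2})$ — indeed $n_2(u)+n_3(u)\ge(d(u)-4)+1=d(u)-3$, which is exactly the threshold appearing in $(A_{2.1})$/$(A_{2.2})$ — a contradiction. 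Hence $d(u)=4$.

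For item~(2), let $f=[uvw]$ be a $3$-face with $\delta(f)=3$, say $d_H(u)=3$, so $d(u)=3$ by Corollary~\ref{coro01}. The path $v\,u\,w$ (together with the edge $vw\in E(G)$ coming from the $3$-face) realizes a path through the $3$-vertex $u$. I would now argue that none of $d(v),d(w)$ can be small: if $\min\{d(v),d(w)\}=3$, configuration $(A_{1.2})$ appears; if one of them is $4$, then since $uw$ (or $uv$) is an edge of the triangle, $(A_{1.3})$ appears; if one of them is $4$ and the other is $3$, that is already $(A_{1.2})$ or $(A_{1.4})$ depending on which vertex has degree $4$. The only surviving case is $d(v),d(w)\ge 5$, i.e. $n_{5^+}(f)=2$. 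The point is that every degree assignment to the two non-$u$ vertices of the triangle that leaves one of them a $4^-$-vertex triggers one of $(A_{1.1})$--$(A_{1.5})$, given that $u$ sits on the triangle and is a $3$-vertex.

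I expect the bookkeeping in item~(1) to be the main obstacle: one must be careful to convert between $G$-degrees and $H$-degrees via $d_H(u)=d(u)-n_2(u)$ (Lemma~\ref{lemma01}), and to verify that the inequality $n_2(u)+n_3(u)\ge d(u)-3$ really does land inside the union of $(A_{2.1})$ and $(A_{2.2})$ rather than slipping through a gap (in particular that the side condition $n_3(u)\le 3$ in $(A_{2.2})$ is met, which should follow because $n_3(u)=1$ here). A secondary subtlety in item~(2) is the possibility that $H$ collapses the triangle's structure — e.g. that $v$ and $w$ are not distinct in $H$ — but since $G$ has no intersecting triangles and $f$ is a $3$-face, $u,v,w$ are genuinely distinct vertices of $G$ with pairwise edges, so the configuration lookup is legitimate. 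The remaining steps are a short case check and carry no real difficulty.
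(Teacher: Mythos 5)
Your derivation is essentially the intended one: the paper itself gives no proof of this lemma (it is quoted from \cite{SWW12}, with the remark that the proof there only assumes the inexistence of a subset of $(A_1)$--$(A_6)$), and your argument from the absence of $(A_{1.1})$--$(A_{1.4})$, $(A_{2.1})$, $(A_{2.2})$ is exactly that kind of self-contained argument. Part (1) is fine as you set it up: $d(u)=5$ is already excluded by Lemma~\ref{lemma01} (equivalently by $(A_{1.1})$), and for $d(u)\ge 6$ with $n_2(u)=d(u)-4$ you correctly land in $(A_{2.1})$ when $n_3(u)\ge 2$ and in $(A_{2.2})$ (with $n_3(u)=1\le 3$) when $n_3(u)=1$, a contradiction either way.

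The one step you must tighten is the conclusion of part (2). In this paper $n_{5^+}(f)$, like $\delta(f)$, is measured in the plane graph $H$ (this is how the lemma is used in the proof of Lemma~\ref{lemma06}, where rule (R2.1) charges only vertices with $d_H\ge 5$), whereas your case analysis via $(A_{1.2})$ and $(A_{1.3})$ yields only $d(v),d(w)\ge 5$ in $G$; since $d_H\le d$, that is not yet $d_H(v),d_H(w)\ge 5$. The gap closes in one line with what you already have: $d_H(v)=3$ would force $d(v)=3$ by Corollary~\ref{coro01}, and $d_H(v)=4$ would force $d(v)=4$ by part (1), because $v$ has the $3$-neighbor $u$ in $H$ — both contradict $d(v)\ge 5$; hence $d_H(v)\ge 5$, likewise for $w$, and since the third vertex has $d_H=3$ this gives $n_{5^+}(f)=2$ exactly. (Also note $d(v),d(w)\ge 3$ holds automatically because $v,w\in V(H)$, so no degree-$2$ case arises, and $(A_{1.4})$/$(A_{1.5})$ are not actually needed once the triangle edge makes $(A_{1.3})$ available.) With that addition your proof is correct.
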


\begin{lemma}
\label{lemma03}
Let $f = [uvw]$ be a $3$-face with $\delta(f)= 4$.

{\rm (1)}
If $d(v) > d_H(v) = 4$, then $\min\{d_H(u)$, $d_H(w)\}\ge 6$;

{\rm (2)}
if $d_H(v) = 4$ and $4 \le d_H(u) \le 5$, then $d(v) = 4$;

{\rm (3)}
$n_{5^+}(f)\ge 1$.
\end{lemma}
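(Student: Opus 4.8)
The plan is to exploit the known facts from Lemma~\ref{lemma02}, the absence of the configurations $(A_1)$–$(A_6)$ in $G$, and the fact that $H$ has no two intersecting triangles, so the $3$-face $f = [uvw]$ is the unique triangle through each of its vertices. First I would record the basic setup: since $\delta(f) = 4$, all three of $u, v, w$ are $4^+$-vertices in $H$; in particular none of them is a $3$-vertex, so by Corollary~\ref{coro01} each has original degree $\ge 4$ as well. I would also note that an edge of $f$, say $uv$, appears on $f$ and on exactly one other face, and the two neighbors of $u$ (resp. $v$) along that other face are distinct from $w$ because $H$ has no intersecting triangles.

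\textit{Part (3).} This is the easiest and I would do it first. Suppose for contradiction that $n_{5^+}(f) = 0$, i.e. $d_H(u) = d_H(v) = d_H(w) = 4$. By Lemma~\ref{lemma02}(1), each vertex on $f$ having $d_H = 4$ and a $3$-neighbor would force its original degree to be $4$; but more directly, combining $d_H = 4$ with Lemma~\ref{lemma03}(2) (proved independently, see below) gives $d(u) = d(v) = d(w) = 4$, so $uvwu$ is a $3$-cycle with all three vertices of degree $4$ — this is exactly configuration $(A_4)$, a contradiction. Hence $n_{5^+}(f) \ge 1$.

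\textit{Part (2).} Assume $d_H(v) = 4$ and $4 \le d_H(u) \le 5$, and suppose $d(v) > 4$. Since $d_H(v) = 4$, we have $d(v) = d_H(v) + n_2(v) = 4 + n_2(v)$, so $d(v) > 4$ means $v$ is adjacent (in $G$) to at least one $2$-vertex, and since $d(v) \ne d_H(v)$ and $d_H(v) = 4$, Lemma~\ref{lemma01} forces $d(v) \ge 6$. Now I would invoke part (1): with $d(v) > d_H(v) = 4$, part (1) gives $\min\{d_H(u), d_H(w)\} \ge 6$, contradicting $d_H(u) \le 5$. So $d(v) = 4$. (To avoid circularity I would order the proof so that part (1) is established first, then part (2) follows from it, and part (3) follows from part (2).)

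\textit{Part (1) — the main obstacle.} Assume $d(v) > d_H(v) = 4$; as above this forces $d(v) \ge 6$ and $n_2(v) = d(v) - 4 \ge 2$. I want to show $d_H(u), d_H(w) \ge 6$; suppose instead $d_H(u) \le 5$, hence (by Lemma~\ref{lemma01}, since $u$ is a $4^+$-vertex on $f$) $d_H(u) = d(u) \in \{4,5\}$. The idea is to build a forbidden configuration, most plausibly an instance of $(A_2)$ centered at the $6^+$-vertex $v$: $v$ is a $6^+$-vertex adjacent to at least $d(v)-4$ many $2$-vertices, and its remaining neighbors include $u$ and $w$, which lie on the triangle $f$ with $u w \in E(H) \subseteq E(G)$ and $uw$ a genuine edge of $G$ (the triangle edge). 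I would check which of the subcases $(A_{2.1})$–$(A_{2.4})$ this matches: with $n_2(v) = d(v)-4$ exactly and the two non-$2$-neighbors $u, w$ adjacent to each other via the triangle edge, this should line up with $(A_{2.3})$ or $(A_{2.4})$ once the degree conditions on $u$ (and on $w$, the analogue of $u_4$) are matched — here I would use $d(u) \in \{4,5\}$ together with information about $w$'s degree or $w$'s $2$-neighbors, possibly splitting into the cases $d_H(w) \le 5$ and $d_H(w) \ge 6$. If $n_2(v) > d(v)-4$ is impossible by $(A_{2.1})$/$(A_{2.2})$, the count is pinned exactly, which is what makes $(A_{2.3})$/$(A_{2.4})$ applicable. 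The delicate part — and where I expect the real work — is verifying the side conditions of $(A_{2.3})$/$(A_{2.4})$ (the exact degrees of $u_3, u_4, u_5$ and the "$n_2(u_4)$" condition) hold in every sub-case; handling the possibility $d(u) = 4$ versus $d(u) = 5$, and whether $w$ itself hosts $2$-vertices, will require a short case analysis using $(A_1)$ (to control $3$-neighbors of $u$ and $w$) and Lemma~\ref{lemma02}. Throughout I rely on the no-intersecting-triangles hypothesis so that $u$, $v$, $w$ have no other triangle and the local picture around $v$ is a clean star with the single triangle $f$ attached.
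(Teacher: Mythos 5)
Your overall route is the paper's: parts (1)--(2) come from the inexistence of $(A_{2.3})$, and part (3) from part (2) plus $(A_4)$; your derivation of (2) from (1) and of (3) from (2) is correct. The problem is that part (1), which is the substance of the lemma, is never actually proved: you stop at ``check which of $(A_{2.1})$--$(A_{2.4})$ this matches'' and defer verifying the side conditions as expected ``real work,'' so the key step remains a plan rather than an argument. Moreover, the one concrete step you do take is wrong: from $d_H(u)\le 5$ you conclude via Lemma~\ref{lemma01} that $d(u)=d_H(u)\in\{4,5\}$, but Lemma~\ref{lemma01} only gives $d(u)=d_H(u)$ when $d(u)\le 5$; a $6^+$-vertex with $2$-neighbors can have $d_H(u)\in\{4,5\}$, so that case is silently dropped.

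The fix is short and shows why none of the anticipated case analysis (on $d(u)=4$ versus $5$, on whether $w$ hosts $2$-vertices, on $(A_{2.4})$, or via $(A_1)$ and Lemma~\ref{lemma02}) is needed. The condition $n_2(u_4)\in\{d(u_4)-4,\,d(u_4)-5\}$ in $(A_{2.3})$ is literally $d_H(u_4)\in\{4,5\}$, and $(A_{2.3})$ imposes no degree condition on $u_3$ beyond $u_3u_4\in E(G)$. So assume $d(v)>d_H(v)=4$ and, for contradiction, that one of $u,w$ --- call it $u_4$ --- has $d_H(u_4)\le 5$, hence $d_H(u_4)\in\{4,5\}$ because $\delta(f)=4$. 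Then $v$ has a $2$-neighbor, so $d(v)\ge 6$ by the inexistence of $(A_{1.1})$, and $n_2(v)=d(v)-d_H(v)=d(v)-4$. Taking the distinguished $2$-vertex to be any $2$-neighbor of $v$, labeling the remaining $2$-neighbors $u_5,\dots,u_{d(v)-1}$, letting the four non-$2$-neighbors (the $H$-neighbors of $v$, which include $u$ and $w$) be $u_1,\dots,u_4$, and taking $u_3$ to be the other triangle vertex, every requirement of $(A_{2.3})$ centered at $v$ is met, a contradiction. This also covers the case $d(u_4)\ge 6$ that your Lemma~\ref{lemma01} step missed, since the condition concerns only $d_H(u_4)$.
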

\begin{proof}
(1) and (2) hold due to the inexistence of ($A_{2.3}$).

For (3), if $n_{5^+}(f) = 0$, then $d(u) = d(v) = d(w) = 4$ by (2), implying a configuration ($A_4$) in $G$, a contradiction.
\end{proof}

\begin{lemma}
\label{lemma04}
Let $u$ be a vertex with $d_H(u) = 5$.
Then $n_3(u) \le 2$, and if $n_3(u) = 2$ then $d(u) = 5$.
\end{lemma}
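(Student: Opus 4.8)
The plan is to split on the value of $d(u)$, using Lemma~\ref{lemma01} to reduce to exactly two cases: either $d(u) = 5$, in which case $u$ has no $2$-neighbor and $n_2(u) = 0$; or $d(u) \ge 6$, in which case $u$ has $n_2(u) = d(u) - d_H(u) = d(u) - 5 \ge 1$ neighbors of degree $2$ in $G$. Throughout I would use Corollary~\ref{coro01} to the effect that $n_3$ counts the same $3$-neighbors in $G$ and in $H$, and that every neighbor of $u$ inside $H$ has degree at least $3$. I would also record the convention that the sister-vertices in configurations $(A_5)$ and $(A_6)$ are listed by non-increasing degree (this is the only reading consistent with the statement of $(A_{5.3})$, where $d(u_2)=5 \ge d(u_3)=4 \ge d(u_4)=3$).

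To prove $n_3(u) \le 2$, suppose for contradiction that $n_3(u) \ge 3$. If $d(u) = 5$, designate one $3$-neighbor to be the vertex $v$ of configuration $(A_5)$; among the remaining four neighbors $u_1, u_2, u_3, u_4$ — each of degree $\ge 3$ because $G$ has no $(A_{1.1})$, and listed by non-increasing degree — at least $n_3(u) - 1 \ge 2$ have degree exactly $3$, so the two smallest satisfy $d(u_3) = d(u_4) = 3$. That is precisely configuration $(A_{5.1})$, contradicting our standing assumption. If instead $d(u) \ge 6$, then $u$ is a $6^+$-vertex adjacent to a $2$-vertex (since $n_2(u) = d(u) - 5 \ge 1$) and $n_2(u) + n_3(u) \ge (d(u) - 5) + 3 = d(u) - 2$, which is exactly configuration $(A_{2.1})$ — again a contradiction. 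Hence $n_3(u) \le 2$.

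For the second statement, assume $n_3(u) = 2$ and suppose toward a contradiction that $d(u) \ne 5$; by the case split this means $d(u) \ge 6$, so $u$ is a $6^+$-vertex adjacent to a $2$-vertex with $n_2(u) = d(u) - 5$. Then $n_2(u) + n_3(u) = (d(u) - 5) + 2 = d(u) - 3$. The inexistence of $(A_{2.1})$ is consistent with this equality, but the inexistence of $(A_{2.2})$ then forces $n_3(u) \ge 4$, contradicting $n_3(u) = 2$. Therefore $d(u) = 5$, completing the proof.

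None of the steps is genuinely hard; the only two places needing care are (i) fixing the sorting convention in $(A_5)$ so that ``at least two $3$-vertices among $u_1,\dots,u_4$'' translates into exactly ``$d(u_3) = d(u_4) = 3$'', and (ii) converting ``$G$ avoids $(A_{2.1})$ and $(A_{2.2})$'' into the sharp numerical statement that $n_2(u) + n_3(u) \le d(u) - 3$, with equality only when $n_3(u) \ge 4$. Both are bookkeeping rather than real obstacles, so I expect the proof to be short.
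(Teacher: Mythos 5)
Your proof is correct and follows essentially the same route as the paper: both arguments hinge on $n_2(u)=d(u)-d_H(u)=d(u)-5$ and rule out $n_3(u)\ge 3$ via $(A_{2.1})$/$(A_{5.1})$ and $n_3(u)=2$ with $d(u)\ge 6$ via $(A_{2.2})$. The only difference is cosmetic (you case-split on $d(u)$ first, the paper derives the inequality first and then forces $n_2(u)=0$), so there is nothing further to add.
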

\begin{proof}
If $n_3(u) \ge 3$ then $n_2(u) + n_3(u) \ge d(u) - 5 + 3 \ge d(u)- 2$;
since ($A_{2.1}$) does not exist, we have $n_2(u) = 0$ and subsequently $d(u) = 5$, which contradicts to the inexistence of ($A_{5.1}$).
We thus prove that $n_3(u) \le 2$.

From $n_3(u) = 2$, we have $n_2(u) + n_3(u) = d(u) - 5 + 2 = d(u) - 3$;
since ($A_{2.2}$) does not exist, we have $n_2(u) = 0$ and subsequently $d(u) = 5$.
\end{proof}

\begin{lemma}
\label{lemma05}
Let $u$ be a vertex with $d_H(u) = 6$.
Then $n_3(u)\le 4$, and

{\rm (1)}
if $n_3(u)\ge 3$, then $d(u)= 6$;

{\rm (2)}
if $n_3(u) = 4$, then $n'_{5^+}(v)\ge 2$ for any $3$-vertex $v \in N_H(u)$.
\end{lemma}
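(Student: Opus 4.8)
The plan is to argue exactly as in the proofs of Lemmas~\ref{lemma04} and \ref{lemma05}, by exploiting the inexistence of the configurations $(A_2)$ and $(A_6)$ in $G$. First I would bound $n_3(u)$ from above. Since $d_H(u)=6$, we have $d(u)= d_H(u)+n_2(u) = 6+n_2(u)$, so $n_2(u) = d(u)-6$. If $n_3(u)\ge 5$, then $n_2(u)+n_3(u) \ge (d(u)-6)+5 = d(u)-1 \ge d(u)-2$, which is configuration $(A_{2.1})$ (note $u$ is a $6^+$-vertex; if $n_2(u)=0$ and $d(u)=6$ one checks $u$ is still adjacent to a $2$-vertex only if $n_2(u)\ge 1$, so for the $(A_2)$-configuration to apply we need $n_2(u)\ge 1$ — if instead $n_2(u)=0$, i.e.\ $d(u)=6$, then $n_3(u)\ge 5$ together with $d(u)=6$ gives configuration $(A_6)$ with $d(u_2)=3$, i.e.\ $(A_{6.1})$). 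Either way we reach a contradiction, so $n_3(u)\le 4$.

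For part~(1), suppose $n_3(u)\ge 3$ but $d(u)\ge 7$, so $n_2(u)=d(u)-6\ge 1$. Then $n_2(u)+n_3(u) \ge (d(u)-6)+3 = d(u)-3$, and moreover $n_3(u)\le 4$ by the bound just proved. If $n_2(u)+n_3(u)\ge d(u)-2$ we get $(A_{2.1})$; otherwise $n_2(u)+n_3(u)=d(u)-3$ with $n_3(u)\le 3$, which is $(A_{2.2})$. Both are excluded, so $d(u)=6$.

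For part~(2), assume $n_3(u)=4$; by~(1) we have $d(u)=6$, hence $n_2(u)=0$ and $d_H(u)=d(u)=6$. Let $v\in N_H(u)$ be a $3$-vertex and let $u_1,u_2$ be the two neighbors of $u$ of largest degree (the other four neighbors, including $v$, being $3$-vertices); this is precisely the vertex configuration heading $(A_6)$. Since $(A_{6.1})$ does not occur, $d(u_2)\ge 4$. Now consider $v$: write $v=u_3$ in the cyclic labelling and let the neighbors of $v$ other than $u$ be $x,y$. I claim $\min\{d(x),d(y)\}\ge 5$. Suppose not, say $d(x)\le 4$. The key case analysis uses $(A_1)$: the path $x v u$ has $d(v)=3$; if $d(x)=3$ this is $(A_{1.2})$, and if $d(x)=4$ then either $xu\in E(G)$, giving $(A_{1.3})$, or $xu\notin E(G)$ — in the latter subcase I would need $d(u)=4$ for $(A_{1.3})$ to fail on a different reading, so instead I look at the path $x v u$ with $d(v)=3$, $d(x)=4$ and examine whether $d(u)=4$; since $d(u)=6\ne 4$, configuration $(A_{1.5})$ (which needs $d(u)=d(x)=4$) does not directly apply, so the relevant forbidden configuration is $(A_{1.3})$, forcing $xu\in E(G)$, i.e.\ a triangle $xvu$. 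But $u$ also has three other $3$-neighbours among $u_4,u_5,v$ and two triangles sharing vertex $u$ would violate the ``no intersecting triangles'' hypothesis unless there is only one such triangle; a short count of the $3$-neighbours of $u$ and the no-intersecting-triangles condition rules this out. Hence $\min\{d(x),d(y)\}\ge 5$, i.e.\ $n'_{5^+}(v)\ge 2$.

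The step I expect to be the main obstacle is the last one: pinning down part~(2) cleanly. The delicate point is the interaction between the $(A_1)$-configurations (which constrain paths through a $3$-vertex) and the no-intersecting-triangles hypothesis (which limits how many of $u$'s triangles can be incident to its $3$-neighbours). I would handle it by first using $(A_{1.2})$ to kill the case $d(x)=3$ outright, then for $d(x)=4$ splitting on whether $xu$ is an edge, using $(A_{1.3})$ to force the triangle when it is not, and finally deriving the contradiction from the fact that $u$, being incident to four $3$-neighbours, cannot be the apex of two vertex-disjoint-except-at-$u$ triangles. Everything else is a direct transcription of the degree-counting arguments already used for Lemmas~\ref{lemma04} and~\ref{lemma05}.
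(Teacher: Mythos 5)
Your bound $n_3(u)\le 4$ and part (1) are correct and follow the paper's route: when $n_2(u)\ge 1$ you invoke $(A_{2.1})$/$(A_{2.2})$, and in the residual case $d(u)=6$ you invoke $(A_{6.1})$, exactly as the paper does by analogy with Lemma~\ref{lemma04}.

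Part (2), however, has a genuine gap. After correctly eliminating a $3$-neighbour of $v$ via $(A_{1.2})$, you handle a degree-$4$ neighbour $x$ of $v$ by arguing that the forbidden configuration $(A_{1.3})$ ``forces $xu\in E(G)$''; this is backwards. $(A_{1.3})$ is one of the configurations assumed not to occur, so its inexistence forces $xu\notin E(G)$ — it can never be used to manufacture a triangle. Consequently your final step collapses: a single triangle $uvx$ would be perfectly compatible with the no-intersecting-triangles hypothesis, there is no second triangle for it to intersect, and counting the $3$-neighbours of $u$ yields no contradiction. The configuration that actually kills this case is $(A_{3.2})$, which you never invoke: with $n_3(u)=4$ and $d(u)=6$ (from part (1)), the three smallest-degree neighbours of $u$ other than $v$ are $3$-vertices, so a path $u\,v\,x$ with $d(v)=3$ and $d(x)=4$ is precisely configuration $(A_{3.2})$, which is excluded; hence no neighbour of $v$ is a $4$-vertex. (One also needs Lemma~\ref{lemma02}(1) to pass from $d_H(x)=4$ to $d(x)=4$ before applying $(A_{3.2})$, since $n'_{5^+}$ is measured in $H$.) With $(A_{1.2})$ and $(A_{3.2})$ both excluded, both neighbours of $v$ other than $u$ are $5^+$-vertices, so $n'_{5^+}(v)\ge 2$ — this is exactly the paper's one-line argument, and your proposal is missing that key ingredient.
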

\begin{proof}
Similar to the proof of Lemma~\ref{lemma04}, since $G$ contains no ($A_{6.1}$), we have $n_3(u) \le 4$, and if additionally $n_3(u) \ge 3$ then $d(u) = 6$,
that is, (1) holds.

If $n_3(u) = 4$, then by (1) $d(u) = 6$.
Since $G$ contains no ($A_{3.2}$), $n'_{5^+}(v)\ge 2$ for any $3$-vertex $v \in N_H(u)$, that is, (2) holds.
\end{proof}

\subsection{Discharging to show contradictions}
To derive a contradiction, we make use of the discharging methods, which are very similar to an amortized analysis.
First, by Euler's formula $|V(H)|- |E(H)|+ |F(H)|= 2$ and the relation $\sum\limits_{v\in V(H)}d_H(v)= \sum\limits_{f\in F(H)}d(f)= 2|E(H)|$,
we have the following equality:

\begin{equation}\label{eq01}
\sum_{u\in V(H)}(2d_H(u) - 6) + \sum_{f\in F(H)}(d(f) - 6) = -12.
\end{equation}

Next, we define an initial weight function $\omega(\cdot)$ on $V(H)$ by setting $\omega(u) = 2d_H(u) -6$ for each vertex $u\in V(H)$ and
setting $\omega(f) = d(f) - 6$ for each face $f\in F(H)$.
It follows from Eq.~(\ref{eq01}) that the total weight is equal to $-12$.
In what follows, we will define a set of discharging rules (R1)--(R4) to move portions of weights from vertices to faces (the function $\tau(u \rightarrow f)$).
At the end of the discharging, a new weight function $\omega'(\cdot)$ is achieved and we are able to show that $\omega'(x) \ge 0$ for every $x \in V(H)\cup F(H)$
(see Lemmas~\ref{lemma06}--\ref{lemma07}).
This contradicts the negative total weight. 

Lemma~\ref{lemma01} states that $\delta(H) \ge 3$ and thus $\omega(u) \ge 0$ for every vertex $u\in V(H)$.
During the discharging process, a $3$-vertex in $H$ is untouched;
likewise, a $6^+$-face does not receive anything;
for a $k$-vertex $u \in V(H)$ where $k \ge 4$ and a $5^-$-face $f \in F(u)$, let $\tau(u \to f)$ denote the amount of weight transferred from $u$ to $f$.
Recall that there are $k$ faces in $F(u)$, though they might not all be distinct from each other (see Footnote~\ref{fn01}).

\begin{description}
\parskip=0pt
\item[{\rm (R1)}]
	When $k = 4$, $\tau(u\to f_i) = \frac 12$ for each $i = 1, 2, 3, 4$.
\item[{\rm (R2)}]
	When $k \ge 5$ and $f = [vuw]$ is a $3$-face,
	\begin{description}
	\parskip=0pt
	\item[{\rm (R2.1)}]
		if $\delta(f) = 3$, then $\tau(u\to f)= \frac 32$;
	\item[{\rm (R2.2)}]
		if $\delta(f) = 4$, $d_H(v)= 4$ and
		\begin{description}
		\parskip=0pt
		\item[{\rm (R2.2.1)}]
			$k\ge 7$, then $\tau(u\to f)= 2$;
		\item[{\rm (R2.2.2)}]
			$k= 6$, then

			$\tau(u\to f)= \begin{cases}
				\frac 32, \mbox{ if } n_3(u)\ge 4;\\
				\frac 54, \mbox{ if } n_3(u)= 3 \mbox{ and } n_{6^+}(f)= 2;\\
				\frac 54, \mbox{ if } n_3(u)= 3, d_H(w)= 5 \mbox{ and } n_3(w)\le 1;\\
				2, \mbox{ otherwise};
				\end{cases}$
		\item[{\rm (R2.2.3)}]
			$k= 5$, then $\tau(u\to f)= \begin{cases}
				2, 		  \mbox{ if } n_3(u)= 0;\\
				\frac 54, \mbox{ if } n_3(u)= 1;\\
				1, 	 	  \mbox{ if } n_3(u)\ge 2;
				\end{cases}$
		\end{description}
	\item[{\rm (R2.3)}]
		if $\delta(f)\ge5 $, then $\tau(u\to f)= 1$.
	\end{description}
\item[{\rm (R3)}]
	When $k\ge 5$ and $f= [vuyx]$ is a $4$-face,

	$\tau(u\to f)=\begin{cases}
		1, 		  \mbox{ if } d(v) = 3 \mbox{ and, either } d(y) = 3 \mbox{ or } d_H(x)= 4;\\
		\frac 34, \mbox{ if } d(v)= 3, d_H(y)\ge 4 \mbox{ and } d_H(x)\ge 5;\\
		\frac 12, \mbox{ if } d_H(v)\ge 4 \mbox{ and } d_H(y)\ge 4.
		\end{cases}$
\item[{\rm (R4)}]
	When $k\ge 5$ and $f= [\cdot xuy\cdot]$ is a $5$-face,

	$\tau(u\to f)=\begin{cases}
		\frac 12, \mbox{ if } d(x) = d(y) = 3;\\
		\frac 14, \mbox{ if } \max\{d_H(x), d_H(y)\}\ge 4.
		\end{cases}$
\end{description}

It remains to validate that $\omega'(x)\ge 0$ for every $x\in V(H)\cup F(H)$.

\begin{lemma}
\label{lemma06}
For every face $f\in F(H)$, $\omega'(f)\ge 0$.
\end{lemma}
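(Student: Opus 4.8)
The plan is to verify that $\omega'(f) \ge 0$ for every face $f$ by case analysis on $d(f)$, recalling that faces only receive weight (never give any away) under rules (R1)--(R4). The easy regime is $d(f) \ge 6$: such faces start with $\omega(f) = d(f) - 6 \ge 0$ and receive nothing, so $\omega'(f) = \omega(f) \ge 0$ immediately. So the work is concentrated on $3$-faces, $4$-faces, and $5$-faces, each of which starts with a deficit of $-3$, $-2$, and $-1$ respectively, and must collect at least that much from its incident $4^+$-vertices. Since $H$ has no intersecting triangles, a $3$-face is not adjacent to another $3$-face; more importantly for counting, all the structural lemmas (Lemmas~\ref{lemma01}--\ref{lemma05}, \ref{lemma02}, \ref{lemma03}) constrain how many low-degree vertices a face can have.

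For a $3$-face $f = [uvw]$ I would split on $\delta(f)$. If $\delta(f) = 3$, then by Lemma~\ref{lemma02}(2) the other two vertices are $5^+$-vertices, so by (R2.1) each sends $\tfrac32$, giving $2 \cdot \tfrac32 = 3 \ge 3$. If $\delta(f) = 4$, I would further split on how many vertices have $d_H = 4$: if all three are $4$-vertices we contradict the inexistence of $(A_4)$ via Lemma~\ref{lemma03}(3), so $n_{5^+}(f) \ge 1$; the delicate subcase is exactly two $4$-vertices and one $5^+$-vertex, where the $5^+$-vertex uses (R2.2) and each $4$-vertex uses (R1) contributing $\tfrac12$ each — I must check that the (R2.2) amount (which ranges over $\{1, \tfrac54, \tfrac32, 2\}$ depending on $k$, $n_3$, and $n_{6^+}(f)$) plus $1$ from the two $4$-vertices reaches $3$; this is where Lemmas~\ref{lemma03}(1), \ref{lemma04}, \ref{lemma05} pin down the degree configuration tightly enough (e.g.\ a $4$-vertex on a $3$-face with a true-degree-$>4$ partner forces the third vertex to be $6^+$). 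If $\delta(f) \ge 5$, all three vertices are $5^+$ and each sends $1$ by (R2.3), giving $3$.

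For a $4$-face $f = [vuyx]$, the target is $2$. The cleanest subcase is $\delta_H(f) \ge 4$: each of the four vertices sends at least $\tfrac12$ (a $4$-vertex by (R1), a $5^+$-vertex by the third line of (R3)), totaling $2$. When $f$ has a $3$-vertex, I walk around the $4$-cycle: no two $3$-vertices are adjacent on $f$ (else we'd have configuration $(A_{1.2})$ or worse from $(A_1)$), so there are at most two $3$-vertices and they are opposite; the $5^+$-neighbors of a $3$-vertex on $f$ pay $1$ or $\tfrac34$ by the first two lines of (R3), and I check that the accounting around the $4$-cycle — e.g.\ two opposite $3$-vertices each flanked by two payers, or one $3$-vertex opposite a $4^+$-vertex — always reaches $2$, using Lemma~\ref{lemma02}(1) to know that a $4$-vertex with a $3$-neighbor has true degree $4$ and hence actually participates via (R1). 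For a $5$-face the target is only $1$: walking around, $(A_{1.2})$-type exclusions again prevent too many adjacent $3$-vertices, and the $\tfrac12$-per-adjacent-pair and $\tfrac14$-otherwise rates in (R4) are easily summed to at least $1$.

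The main obstacle I expect is the two-$4$-vertices subcase of the $3$-face with $\delta(f) = 4$: the (R2.2) schedule is deliberately fine-grained (six separate cases across $k \in \{5,6,7^+\}$ and various $n_3(u)$, $n_{6^+}(f)$, $d_H(w)$, $n_3(w)$ conditions), and matching each branch to the precise structural guarantee that makes $\tfrac{(\text{R2.2 amount})} + 1 \ge 3$ requires carefully invoking the right part of Lemmas~\ref{lemma03}--\ref{lemma05} and the nonexistence of the relevant $(A_2)$, $(A_3)$, $(A_5)$, $(A_6)$ configurations. Everything else is bookkeeping around short cycles.
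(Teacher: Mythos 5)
Your proposal is correct and follows essentially the same route as the paper: the identical split on $d(f)$, Lemma~\ref{lemma02}(2) for $\delta(f)=3$ triangles, Lemma~\ref{lemma03}(3) (inexistence of ($A_4$)) for $\delta(f)=4$, contradictions with the excluded configurations for the undersupplied $3$-face branches, and the ($A_1$)-based counting for $4$- and $5$-faces. One small caveat: the delicate $\delta(f)=4$ subcase is not only the two-$4$-vertex situation you single out, but also the one with a single $4$-vertex and two $5$-vertices each carrying $3$-neighbors (the face then receives at most $1+\frac 54+\frac 12<3$), which the paper resolves via Lemmas~\ref{lemma03}--\ref{lemma04} and the inexistence of ($A_{2.4}$) and ($A_{5.3}$) --- an instance of the branch-by-branch matching you already anticipate.
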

\begin{proof}
We distinguish the following four cases for $d(f)$.

{\bf Case 1.} $d(f)\ge 6$.
In this case $f$ does not receive anything and thus $\omega'(f) = \omega(f) = d(f) - 6 \ge 0$.

{\bf Case 2.} $d(f)= 3$ with $f= [uvw]$.
In this case, $\omega(f)= d(f)- 6 = -3$.

If $\delta(f)= 3$, then $n_{5^+}(f) = 2$ by Lemma~\ref{lemma02}(2);
	by (R2.1) each $5^+$-vertex in $V(f)$ gives $\frac 32$ to $f$ and thus $\omega'(f) \ge -3 + 2\times \frac 32 = 0$.

If $\delta(f)\ge 5$, then by (R2.3) $\tau(u\to f)= 1$ for each $u\in V(f)$, and thus $\omega'(f)\ge -3 + 3\times 1 = 0$.

If $\delta(f)= 4$ with $d_H(v)= 4$, then $n_{4^+}(f)= 3$, $n_{5^+}(f)\ge 1$  by Lemma~\ref{lemma03}, and $n_3(x)\le d_H(x)- 2$ for each $x\in \{u, w\}$.
By (R1) $\tau(v\to f)= \frac 12$.
When $\min\{\tau(u\to f), \tau(w\to f)\}\ge \frac 54$, $\omega'(f)\ge - 3+ 2\times\frac 54+ \frac 12 = 0$;
when $\max\{\tau(u\to f), \tau(w\to f)\}= 2$, $\omega'(f)\ge - 3+ 2 + 2\times\frac 12 = 0$.
In the other cases:

{Case 2.1.} $d_H(u)= 4$ and, either $d_H(w)= 6$ with $n_3(w)= 4$ or $d_H(w)= 5$ with $n_3(w)\ge 1$.
In this subcase, $d(u)= d(v)= 4$ by Lemma~\ref{lemma03} and thus one of ($A_{2.4}$), ($A_{5.4}$) and ($A_{6.2}$) exists, a contradiction.

{Case 2.2.} $d_H(u)=d_H(w)= 5$, with $n_3(u)\ge 2$ and $n_3(w)\ge 1$.
In this subcase, $n_3(u)= 2$ and $d(u)= 5$ by Lemma~\ref{lemma04}, and $d(v)= 4$ by Lemma~\ref{lemma03}.
Since $G$ contains no ($A_{2.4}$), we have $d(w)= 5$ and thus ($A_{5.3}$) exists, a contradiction.

{\bf Case 3.} $d(f)= 4$ with $f= [uvxy]$.
In this case, $\omega(f) = d(f)- 6 = -2$.

If $\delta(f)\ge 4$, then by (R1) and (R3) $\omega'(f)\ge -2 + 4\times \frac 12 = 0$.

Consider next $\delta(f)= 3$, that is, $n_3(f)\ge 1$.
Since $G$ contains neither $(A_{1.2})$ nor $(A_{1.4})$, $n_3(f)\le 2$.
Therefore, we have $1 \le n_3(f) \le 2$.

If $n_3(f)= 2$, then $n_{5^+}(f)= 2$ and by (R3) $\omega'(f) = -2 + 2\times 1= 0$.

If $n_3(f)= 1$, then $d(v)= 3$ and, since $G$ contains no $(A_{1.5})$, we have $\max\{d_H(x), d_H(u)\}\ge 5$.
Assuming w.l.o.g. $d_H(u)\ge 5$, by (R1) and (R3),
if $d(x)= 4$, then we have $\tau(y\to f)\ge \frac 12$, $\tau(u\to f)= 1$, and $\tau(x\to f)= \frac 12$, leading to $\omega'(f) \ge -2 + 1+ 2\times\frac 12 = 0$;
or if $d_H(x)\ge 5$, then we have $\tau(u\to f)= \tau(x\to f)= \frac 34$, leading to $\omega'(f) \ge -2 + 2\times \frac 34 + \frac 12 = 0$.

{\bf Case 4.} $d(f)= 5$ with $f= [uvwxy]$.
In this case, $\omega(f) = d(f)- 6 = -1$.

By (R1) and (R4), each $4^+$-vertex in $V(f)$ gives at least $\frac 14$ to $f$.
If $n_{4^+}(f) \ge 4$, then $\omega'(f) \ge -1 + 4\times \frac 14 = 0$.

Otherwise, since $G$ contains neither ($A_{1.2}$) nor ($A_{1.4}$) and by Corollary~\ref{coro01},
we may assume w.l.o.g. $d(u) = d(w) = 3$, $n_{4^+}(f) = 3$, and $d_H(v)\ge 5$.
Then by (R1) and (R4), we have $\tau(v\to f) \ge\frac 12$, $\tau(x\to f)\ge \frac 14$, and $\tau(y\to f)\ge \frac 14$,
leading to $\omega'(f)\ge -1 + 1\times\frac 12+ 2\times \frac 14 = 0$.

This finishes the proof of the lemma.
\end{proof}

\begin{lemma}
\label{lemma07}
For every vertex $u\in V(H)$, $\omega'(u)\ge 0$.
\end{lemma}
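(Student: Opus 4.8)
## Proof Plan for Lemma~\ref{lemma07}

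The plan is to verify $\omega'(u) \ge 0$ for every vertex $u \in V(H)$ by case analysis on $d_H(u)$, using Lemmas~\ref{lemma01}--\ref{lemma05}, Corollary~\ref{coro01}, and the nonexistence of the configurations $(A_1)$--$(A_6)$ throughout. Since $\delta(H) \ge 3$ by Lemma~\ref{lemma01}, the only vertices that give away weight are $4^+$-vertices, so I partition into the cases $d_H(u) = 3$, $d_H(u) = 4$, $d_H(u) = 5$, $d_H(u) = 6$, and $d_H(u) \ge 7$. For $d_H(u) = 3$ nothing is transferred and $\omega'(u) = \omega(u) = 0$ trivially.

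For $d_H(u) = 4$: here $\omega(u) = 2$, and by (R1) the vertex sends $\frac12$ to each of its four incident faces, for a total outflow of at most $2$; hence $\omega'(u) \ge 0$ immediately, with no finer analysis needed. For $d_H(u) = 5$: $\omega(u) = 4$, and each incident $5^-$-face receives at most $2$ (achieved only by (R2.2.3) with $n_3(u) = 0$, or by (R2.1)); I would argue that $u$ cannot be incident to three $3$-faces (intersecting-triangle freeness, since two $3$-faces through $u$ that are not adjacent would force intersecting triangles, and consecutive $3$-faces share an edge), so at most two faces can drain $2$ each, giving $\omega'(u) \ge 4 - 2 \cdot 2 = 0$; the delicate subcase is when $u$ lies on exactly two $3$-faces and I must check the combination of (R2) values using Lemma~\ref{lemma04} to bound $n_3(u) \le 2$ and the degrees of the third vertices on those faces.

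For $d_H(u) = 6$: $\omega(u) = 6$. Using Lemma~\ref{lemma05}, $n_3(u) \le 4$, and if $n_3(u) \ge 3$ then $d(u) = 6$ so $u$ has no $2$-neighbors in $G$. The strategy is to bound the number of incident $3$-faces (at most $3$, by intersecting-triangle freeness applied around $u$, since a $6$-vertex has $6$ faces and three pairwise non-adjacent $3$-faces would create intersecting triangles — in fact at most $\lfloor 6/2 \rfloor = 3$ alternating, but adjacency constraints tighten this), then sum the worst-case transfers. The refined (R2.2.2) rule, which lowers the transfer to $\frac32$ or $\frac54$ when $n_3(u)$ is large or when the neighboring vertex $w$ has small degree, is precisely calibrated for this case, and the main obstacle here is a careful bookkeeping: I would organize by the number $m_3(u)$ of incident $3$-faces and, within each, by $n_3(u)$, invoking Lemma~\ref{lemma05}(2) (which says every $3$-neighbor of $u$ sees two $5^+$-vertices when $n_3(u) = 4$) to guarantee that the reduced transfer values of (R2.2.2) apply.

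Finally, for $d_H(u) \ge 7$: $\omega(u) = 2d_H(u) - 6 \ge 8$, and the maximum any face extracts is $2$ (via (R2.1) or (R2.2.1)). I would show that among the $d_H(u)$ incident faces at most $\lfloor d_H(u)/2 \rfloor$ can be $3$-faces (again by intersecting-triangle freeness: two incident $3$-faces sharing no edge force two triangles meeting at $u$), and each non-$3$-face extracts at most $1$; a counting argument then yields $\omega'(u) \ge 2d_H(u) - 6 - (2 \lfloor d_H(u)/2 \rfloor + \lceil d_H(u)/2 \rceil) \ge 0$ for $d_H(u) \ge 7$, though I expect the genuinely tight instances occur only at $d_H(u) = 6$ and below, so the $d_H(u) \ge 7$ case should be the easiest modulo the face-counting lemma. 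The single hardest step overall will be the $d_H(u) = 6$ analysis with $m_3(u) = 3$ or $n_3(u) \in \{3,4\}$, where I must simultaneously track the degrees $d_H(w)$ and the counts $n_3(w)$ of the third vertices on each incident $3$-face to be sure the reduced-transfer branches of (R2.2.2) are the ones that fire, and to rule out, via $(A_3)$, $(A_5)$, $(A_6)$, the configurations that would otherwise force a larger outflow.
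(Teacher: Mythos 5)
There is a genuine gap, and it starts with the definition of ``intersecting triangles.'' In this paper two triangles are \emph{intersecting} as soon as they share a \emph{vertex} (adjacent, i.e.\ edge-sharing, triangles are a special case), so in $H$ every vertex satisfies $m_3(u)\le 1$: no vertex lies on two $3$-faces at all. Your plan instead permits two $3$-faces at a $5$-vertex, three at a $6$-vertex, and $\lfloor d_H(u)/2\rfloor$ at a $7^+$-vertex, which is exactly the situation the hypothesis excludes. This matters because the correct bound $m_3(u)\le 1$ is what makes the $d_H(u)\ge 7$ case work: the outflow is at most $2$ to the unique possible $3$-face plus at most $1$ to each of the remaining $d_H(u)-1$ faces, giving $\omega'(u)\ge 2d_H(u)-6-2-(d_H(u)-1)=d_H(u)-7\ge 0$. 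Your own counting, $2d_H(u)-6-\bigl(2\lfloor d_H(u)/2\rfloor+\lceil d_H(u)/2\rceil\bigr)$, is negative at $d_H(u)=7$ (it equals $-2$), so the ``easiest'' case in your outline does not close as written.

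The second, independent problem is that your arithmetic for $d_H(u)=5$ ignores the charges sent to $4$- and $5$-faces. Writing $\omega'(u)\ge 4-2\cdot 2=0$ treats only the $3$-faces as sinks, but (R3) can take $1$ from $u$ for a $4$-face containing $3$-vertices and (R4) up to $\tfrac12$ for a $5$-face; even with the correct $m_3(u)\le 1$, a crude bound gives only $4-2-4\cdot 1=-2$. The real work in this case (and in $d_H(u)=6$) is not about multiple incident triangles but about showing, via Lemma~\ref{lemma04} (resp.\ Lemma~\ref{lemma05}), Lemma~\ref{lemma02}, and the inexistence of ($A_{1.1}$), ($A_{1.2}$), ($A_{1.5}$), ($A_{3.1}$)--($A_{3.3}$), ($A_{5.x}$), ($A_{6.x}$), that the $3$-neighbors of $u$ have $5^+$-neighbors elsewhere, so that the reduced branches of (R2.2.2), (R2.2.3), (R3) and (R4) (charges $\tfrac34$, $\tfrac12$, $\tfrac14$) apply to enough of the remaining faces; the paper organizes this by the number of $3$-neighbors of $u$ outside the unique possible $3$-face. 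Your outline does gesture at this bookkeeping for $d_H(u)=6$, but it is framed around the impossible configurations $m_3(u)\in\{2,3\}$ and omits the $4$-/$5$-face charges that are the actual tight constraints, so the plan as stated would not yield $\omega'(u)\ge 0$.
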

\begin{proof}
Recall from Lemma~\ref{lemma01} that $d_H(u)\ge 3$ for every vertex $u \in V(H)$.

If $d_H(u) = 3$, then $u$ is untouched during the discharging process and thus $\omega'(u) = \omega(u) = 2d_H(u)- 6 = 0$.

If $d_H(u) = 4$, then $\omega'(u)= \omega(u)- 4\times \frac 12 = 2d_H(u)- 6- 2= 0$ by (R1).

If $d_H(u)\ge 7$, then using the heaviest weights in (R2.1), (R2.2.1), (R2.3), (R3) and (R4),
we have $\omega'(u)\ge 2d_H(u) - 6 - 2 - 1\times(d_H(u) - 1) = d_H(u) - 7 \ge 0$.

Below we distinguish two cases where $d_H(u) = 5$ and $6$, respectively.
Note that $m_3(u)\le 1$ since $G$ contains no intersecting triangles.

{\bf Case 1.} $d_H(u)= 5$.
In this case, $\omega(u)= 2d_H(u)- 6= 4$.

By Lemma~\ref{lemma04}, $n_3(u)\le 2$, and if $n_3(u)= 2$ then $d(u)= 5$.
When $m_3(u)= 0$, if the number of faces with $\tau(u\to f)\le \frac 12$ is at least $2$,
then by (R3) and (R4) $\omega'(u)\ge 4- 2\times\frac 12- 3\times 1= 0$;
otherwise, there is at most one face $f\in F(u)$ such that $\tau(u\to f)\le \frac 12$, and thus $n_3(u) = 2$.
One sees again by (R3) and (R4) that these two $3$-neighbors of $u$ should not be on the same face,
and we assume w.l.o.g. they are $u_2$ and $u_4$, i.e., $d(u_2) = d(u_4) = 3$.
For each of $u_2$ and $u_4$, at least one of its neighbors besides $u$ is a $5^+$-vertex by Lemma~\ref{lemma02} and the inexistence of ($A_{1.1}$), ($A_{1.2}$) and ($A_{1.5}$).
It follows from again (R3) and (R4) that $\tau(u\to f_1) + \tau(u\to f_2)\le 1+ \frac 34$,
$\tau(u\to f_3) + \tau(u\to f_4)\le 1+ \frac 34$,
$\tau(u\to f_5)\le \frac 12$,
and thus $\omega'(u)\ge 4- 2\times (1+ \frac 34)- \frac 12= 0$.

When $m_3(u)= 1$ and assuming w.l.o.g. $d(f_1)= 3$ with $f_1= [u_1uu_2]$,
we let $s_3(u)$ be the number of $3$-vertices in $N_H(u)\setminus \{u_1, u_2\}$ and we have $s_3(u) \le n_3(u) \le 2$ by Lemma~\ref{lemma04}.
Using (R2.1), (R2.2.3), (R2.3), (R3) and (R4), we discuss the following three subcases for three possible values of $s_3(u)$.

{Case 1.1.} $s_3(u)= 0$.
If $\delta(f_1)= 3$ with $d(u_1)= 3$ (with $d(u_2) = 3$, can be symmetrically argued),
then $\tau(u\to f_1)= \frac 32$, $\tau(u\to f_i)\le \frac 12$, $i\in \{2, 3, 4\}$, $\tau(u\to f_5)\le 1$,
leading to $\omega'(u)\ge 4- \frac 32- 3\times\frac 12- 1= 0$.
If $\delta(f_1)\ge 4$, then $\tau(u\to f_1)\le 2$, $\tau(u\to f_i)\le \frac 12$, $i\in \{2, 3, 4, 5\}$,
leading to $\omega'(u)\ge 4- 2- 4\times\frac 12= 0$.

{Case 1.2.} $s_3(u)= 1$.
By Lemma~\ref{lemma04}, if $n_3(u) = 2$ then $d(u) = 5$ and one of $u_1$ and $u_2$ is a $3$-vertex, which contradicts the inexistence of ($A_{5.2}$).
Therefore, $n_3(u) = 1$, which implies $\delta(f_1)\ge 4$ and further by (R2.2.3) and (R2.3) $\tau(u\to f_1)\le \frac 54$.
If $d(u_3) = 3$ ($d(u_5) = 3$ can be symmetrically argued), then $\tau(u\to f_2) + \tau(u\to f_3)\le 1+ \frac 34$, $\tau(u\to f_i)\le \frac 12$, $i\in \{4, 5\}$,
leading to $\omega'(u)\ge 4- \frac 54- (1+ \frac 34)- 2\times\frac 12= 0$;
if $d(u_4) = 3$, then $\tau(u\to f_3) + \tau(u\to f_4)\le 1+ \frac 34$, $\tau(u\to f_i)\le \frac 12$, $i\in \{1, 5\}$,
leading to $\omega'(u)\ge 4- \frac 54- (1+ \frac 34)- 2\times\frac 12= 0$.

{Case 1.3.} $s_3(u)= 2$.
We have $d(u) = 5$, $n_3(u) = 2$ and $\delta(f_1) \ge 4$, and further by (R2.2.3) and (R2.3) $\tau(u\to f_1)\le 1$.
By Lemma~\ref{lemma02} and the inexistence of ($A_{1.1}$), ($A_{1.2}$) and ($A_{3.1}$), all neighbors of the two $3$-vertices in $N_H(u)$ are $5^+$-vertices.
If $d(u_3) = d(u_4) = 3$ ($d(u_5) = d(u_4) = 3$ can be symmetrically argued),
then $\tau(u\to f_i)\le \frac 34$, $i\in \{2, 4\}$, $\tau(u\to f_3)\le 1$, $\tau(u\to f_5)\le \frac 12$,
leading to $\omega'(u)\ge 4- 2\times1- 2\times\frac 34- \frac 12= 0$;
if $d(u_3) = d(u_5)= 3$, then $\tau(u\to f_j)\le \frac 34$, $j\in \{2, 3, 4, 5\}$, leading to $\omega'(u)\ge 4- 1- 4\times\frac 34 = 0$.

{\bf Case 2.} $d_H(u)= 6$.
In this case, $\omega(u)= 2d_H(u) - 6 = 6$.

When $m_3(u) = 0$, $\omega'(u)\ge 6- 1\times 6= 0$.

When $m_3(u)= 1$ and assuming w.l.o.g. $d(f_1)= 3$ with $f_1= [u_1uu_2]$, we let $t_3(u)$ be the number of $3$-vertices in $N_H(u)\setminus \{u_1, u_2\}$.
By Lemma~\ref{lemma05}, we have $t_3(u) \le n_3(u) \le 4$, and furthermore if $n_3(u) \ge 3$ then $d(u)= 6$.
By Lemma~\ref{lemma02} and the inexistence of ($A_{1.1}$), ($A_{1.2}$), ($A_{1.5}$) and ($A_{3.2}$),
for each $3$-vertex $x\in N_H(u)$, at least one of its neighbors besides $u$ is a $5^+$-vertex;
furthermore, if $n_3(u) \ge 4$, then all its neighbors are $5^+$-vertices.
Using (R2.1), (R2.2.2), (R2.3), (R3) and (R4),
if $\delta(f_1)\ge 4$ then $\tau(u\to f_1)\le 2$,
and if $\delta(f_1)= 3$ then $\tau(u\to f_1)\le \frac 32$.
We discuss the following five subcases for five possible values of $t_3(u)$, respectively.

{Case 2.1.} $t_3(u)= 4$.
Then $n_3(u) = 4$, $d(u)= 6$, and $\delta(f_1)\ge 4$ by Lemma~\ref{lemma05}.
Thus, $\tau(u\to f_1)\le \frac 32$, $\tau(u\to f_i)\le \frac 34$, $i\in \{2, 6\}$, $\tau(u\to f_i)\le 1$, $i\in \{3, 4, 5\}$,
leading to $\omega'(u)\ge 6- \frac 32- 2\times \frac 34- 3\times 1= 0$.

{Case 2.2.} $t_3(u)= 3$.
Then $n_3(u) \ge 3$ and $d(u)= 6$.
If $\delta(f_1)\ge 5$, then $\tau(u\to f_i)\le 1$, $1\le i\le 6$, leading to $\omega'(u)\ge 6- 6\times 1= 0$.
Otherwise, $\delta(f_1)\in \{3, 4\}$ and we assume w.l.o.g. $d(u_3)= d(u_4)= 3$.
Then $\tau(u\to f_3)\le 1$. We can have the following two scenarios for two possible values of $\delta(f_1)$:

(2.2.1.) $\delta(f_1)= 3$.
	 In this scenario, $\tau(u\to f_1)= \frac 32$, and all neighbors of any $3$-vertex in $N_H(u)$ are $5^+$-vertices.
     If $d(u_2)= d(u_5)= 3$, then $\tau(u\to f_i)\le 1$, $i\in \{2, 4, 5\}$, $\tau(u\to f_6)\le \frac 12$,
     leading to $\omega'(u)\ge 6- \frac 32- 4\times 1- \frac 12= 0$.
     If $d(u_1)= d(u_5)= 3$, then $\tau(u\to f_4)\le 1$, $\tau(u\to f_i)\le \frac 34$, $i\in \{2, 5, 6\}$,
     leading to $\omega'(u)\ge 6- \frac 32- 2\times 1- 3\times\frac 34= \frac 14> 0$.
     If $d(u_6)= 3$, then $\tau(u\to f_i)\le \frac 34$, $i\in \{4, 5\}$, and $\tau(u\to f_1) + \tau(u\to f_6)\le 1 + \frac 34$,
     leading to $\omega'(u)\ge 6- \frac 32- 2\times 1- 3\times\frac 34= \frac 14> 0$.

(2.2.2.) $\delta(f_1)= 4$, and assuming that $d_H(x)= 4$ where $\{x, y\}= \{u_1, u_2\}$.
     Note that $\tau(u\to f_i)\le 1$, $i\in \{2, 4\}$.
	 If $d(u_5) = 3$, then $\tau(u\to f_5)\le 1$, $\tau(u\to f_6)\le \frac 12$,
     or if $d(u_6) = 3$, then $\tau(u\to f_5) + \tau(u\to f_6)\le 1+ \frac 34$.
     Thus, if $\tau(u\to f_1)\le \frac 54$, then $\omega'(u)\ge 6- \frac 54- 4\times 1- \frac 34 = 0$.

     Otherwise we have $\tau(u\to f_1)= 2$, and it suffices to assume that $d_H(y)= 4$, or $d_H(y)= 5$ with $n_3(y)\ge 2$.
     By Lemmas~\ref{lemma03} and~\ref{lemma04}, we have $d(x)= 4$, and either $d(y)= 4$, or $d(y)= 5$ with $n_3(y)\ge 2$.
     Lemma~\ref{lemma02} and the inexistence of ($A_{1.1}$), ($A_{1.2}$) and ($A_{3.3}$) state that all neighbors of any $3$-vertex $x\in N_H(u)$ are $5^+$-vertices.
     If $d(u_5) = 3$, then $\tau(u\to f_i)\le \frac 34$, $i\in \{2, 5\}$, $\tau(u\to f_6)\le \frac 12$,
     leading to $\omega'(u)\ge 6- 2- 2\times 1- 2\times \frac 34- \frac 12= 0$;
     if $d(u_6) = 3$, then $\tau(u\to f_i)\le \frac 34$, $i\in \{2, 4, 5, 6\}$,
     leading to $\omega'(u)\ge 6- 2- 1\times 1- 4\times\frac 34= 0$.

{Case 2.3.} $t_3(u)= 2$.
By symmetry ($u_3$ and $u_6$ have symmetrical roles, and $u_4$ and $u_5$ have symmetrical roles), we can have the following four scenarios:

(2.3.1.) $d(u_3)= d(u_4)= 3$.
    In this scenario, $\tau(u\to f_i)\le 1$, $i\in \{3, 4\}$, and $\tau(u\to f_5)\le \frac 12$.
    If $\delta(f_1)\ge 4$, then $\tau(u\to f_2)\le 1$, $\tau(u\to f_6)\le \frac 12$,
    leading to $\omega'(u)\ge 6- 2- 3\times 1- 2\times \frac 12= 0$.
    If $\delta(f_1)= 3$, then $\tau(u\to f_i)\le 1$, $i\in \{2, 6\}$,
    leading to $\omega'(u)\ge 6- \frac 32- 4\times 1- \frac 12= 0$.

(2.3.2.) $d(u_3)= d(u_5)= 3$.
    In this scenario, $\tau(u\to f_4) + \tau(u\to f_5)\le 1+ \frac 34$.
    If $\delta(f_1)\ge 4$, then $\tau(u\to f_2) + \tau(u\to f_3)\le 1+ \frac 34$, $\tau(u\to f_6)\le \frac 12$,
    leading to $\omega'(u)\ge 6- 2- 2\times (1+ \frac 34)- \frac 12= 0$.
    If $d(u_2)= 3$, then $\tau(u\to f_i)\le 1$, $i\in \{2, 3\}$, $\tau(u\to f_6)\le \frac 12$,
    leading to $\omega'(u)\ge 6- \frac 32- 2\times 1- (1+ \frac 34)- \frac 12= \frac 14$.
    If $d(u_1)= 3$, $\tau(u\to f_2) + \tau(u\to f_3)\le 1+ \frac 34$, $\tau(u\to f_6)\le 1$,
    leading to $\omega'(u)\ge 6- \frac 32- 2\times (1+ \frac 34)- 1= 0$.

(2.3.3.) $d(u_3)= d(u_6)= 3$.
    In this scenario, $\tau(u\to f_4)\le \frac 12$.
    If $\delta(f_1)\ge 4$, then $\tau(u\to f_2) + \tau(u\to f_3)\le 1+ \frac 34$,
    $\tau(u\to f_5) + \tau(u\to f_6)\le 1+ \frac 34$, leading to $\omega'(u)\ge 6- 2- 2\times (1+ \frac 34)- \frac 12= 0$.
    If $\delta(f_1)= 3$ and assuming w.l.o.g. $d(u_1)= 3$,
    then $\tau(u\to f_i)\le 1$, $i\in \{2, 3\}$, $\tau(u\to f_5) + \tau(u\to f_6)\le 1+ \frac 34$,
	leading to $\omega'(u)\ge 6- \frac 32- 2\times 1- (1+ \frac 34)- \frac 12= \frac 14$.

(2.3.4.) $d(u_4)= d(u_5)= 3$.
    In this scenario, $\tau(u\to f_i)\le 1$, $i\in \{3, 4, 5\}$.
    If $\delta(f_1)\ge 4$, then $\tau(u\to f_i) \le \frac 12$, $i\in \{2, 6\}$,
    leading to $\omega'(u)\ge 6- 2- 3\times 1- 2\times\frac 12= 0$.
    If $\delta(f_1)= 3$, then $\tau(u\to f_2) + \tau(u\to f_6)\le 1+ \frac 12$,
    leading to $\omega'(u)\ge 6- \frac 32- 4\times 1- \frac 12= 0$.

{Case 2.4.} $t_3(u)= 1$ and assuming w.l.o.g. $d(u_3)= 3$ or $d(u_4)= 3$.

(2.4.1.) $d(u_3)= 3$.
    In this scenario, $\tau(u\to f_i)\le \frac 12$, $i\in \{4, 5\}$.
    If $\delta(f_1)\ge 4$, then $\tau(u\to f_6)\le \frac 12$,
    $\tau(u\to f_2)+ \tau(u\to f_3) \le 1 + \frac 34$,
    leading to $\omega'(u)\ge 6- 2- (1+ \frac 34)- 3\times\frac 12= \frac 34> 0$.
    If $d(u_2)= 3$, then $\tau(u\to f_i)\le 1$, $i\in \{2, 3\}$,
    $\tau(u\to f_6)\le \frac 12$, leading to $\omega'(u)\ge 6- \frac 32- 2\times 1- 3\times \frac 12= 1$.
    If $d(u_1)= 3$, then $\tau(u\to f_2)+ \tau(u\to f_3) \le 1 + \frac 34$, $\tau(u\to f_6)\le 1$,
	leading to $\omega'(u)\ge 6- \frac 32- 1\times 1- (1+ \frac 34)- 2\times \frac 12= \frac 34$.

(2.4.2.) $d(u_4)= 3$.
    In this scenario, $\tau(u\to f_3)+ \tau(u\to f_4) \le 1 + \frac 34$,
     $\tau(u\to f_5)\le \frac 12$.
     If $\delta(f_1)\ge 4$, then $\tau(u\to f_i)\le \frac 12$, $i\in \{2, 6\}$,
     leading to $\omega'(u)\ge 6- 2- (1+ \frac 34)- 3\times\frac 12= \frac 34> 0$.
     If $\delta(f_1)= 3$, then $\tau(u\to f_2)+ \tau(u\to f_6) \le 1 + \frac 12$,
     leading to $\omega'(u)\ge 6- \frac 32- \frac 12- (1+ \frac 34)- (1+ \frac 12)= \frac 34$.

{Case 2.5.} $t_3(u)= 0$.
We have $\tau(u\to f_1) \le 2$, $\tau(u\to f_i)\le \frac 12$, $i\in \{3, 4, 5\}$, and $\tau(u\to f_2)+ \tau(u\to f_6)\le 1+ \frac 12$.
Thus, $\omega'(u)\ge 6- 2- 3\times\frac 12- (1 + \frac 12) = 1$.

This finishes the proof of the lemma that for every vertex $u \in V(H)$, $\omega'(u) \ge 0$.
\end{proof}

Lemmas~\ref{lemma06} and~\ref{lemma07} together contradict the negative total weight of $-12$ stated in Eq.~(\ref{eq01}),
and thus prove Theorem~\ref{thm02}.

\section{Acyclic edge coloring}
In this section, we show how to derive an acyclic edge coloring, by an induction on $|E(G)|$ and by recoloring certain edges in each specified local structure.
Recall that $G$ is a simple $2$-connected planar graph without intersecting triangles.
The following lemma gives the starting point.

\begin{lemma}{\rm (\cite{Sku04,AMM12,BC09,SWMW19,WMSW19})}
\label{lemma08}
If $\Delta \in \{3, 4\}$, then $a'(G) \le \Delta + 2$, and an acyclic edge $(\Delta + 2)$-coloring can be obtained in polynomial time.
\end{lemma}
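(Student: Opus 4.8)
The plan is to prove the statement by induction on $|E(G)|$, handling $\Delta = 3$ and $\Delta = 4$ in parallel and writing $k = \Delta + 2$ for the number of available colours; as explained in the introduction we may assume $G$ is $2$-connected, so $\delta(G)\ge 2$, and the base cases (graphs with few edges) are checked directly. In the inductive step one deletes a carefully chosen edge $e = uv$: since $G - e$ is again simple with maximum degree at most $\Delta$, the induction hypothesis supplies an acyclic edge $k$-colouring $c$ of $G - e$, and the task is to choose a colour for $e$ --- after possibly recolouring a few edges near $u$ and $v$ --- so that the result is a proper edge colouring of $G$ with no bichromatic cycle. Two things can go wrong when we set $c(e) = a$: \emph{properness} rules out the at most $(d(u)-1)+(d(v)-1)\le 2\Delta-2$ colours already present at $u$ or $v$; and \emph{acyclicity} fails if there is a colour $b$ such that $u$ and $v$ each carry a $b$-coloured edge and those two edges lie on a common $(a,b)$-alternating path of $G - e$, for then that path together with $e$ is a bichromatic cycle.

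The first, easy, family of reductions disposes of low-degree endpoints. For instance, if $G$ contains a $2$-vertex $v$ with neighbours $x, y$, delete $e = vx$; at most $d(x)\le\Delta\le 4 < k$ colours are forbidden for $e$ by properness, and a bichromatic cycle through $e$ would have to be bichromatic in $\{c(e), c(vy)\}$ and hence pass through the unique edge of colour $c(vy)$ at $x$, which lets one rule out the at most one offending colour by following the relevant alternating path; so a valid colour for $e$ remains. Iterating such reductions, a minimal counterexample is $2$-connected with $\delta\ge 3$, so for $\Delta = 3$ it is cubic and for $\Delta = 4$ all its vertices are $3$- or $4$-vertices.

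The heart of the proof is then a finite case analysis ruling out a minimal counterexample $G$. One first argues --- by a short counting argument on the degrees and on the short cycles of $G$ --- that $G$ must contain one of a bounded list of \emph{reducible configurations} (a $3$-vertex with a prescribed local neighbourhood, a triangle or short cycle carrying prescribed degrees, two adjacent low-degree vertices, and so on). For each configuration one selects the edge $e$ to delete, applies the induction hypothesis to $G - e$, and then --- using that for $\Delta\le 4$ only a handful of colours occur around the configuration --- either finds a free colour for $e$ outright, or performs one or two Kempe-type swaps on pairs of colours to destroy every potential bichromatic cycle before recolouring $e$. The recurring bookkeeping fact is that for colours $i\ne j$ and a vertex $w$ there is at most one maximal $(i,j)$-alternating path (or cycle) through $w$, so such swaps are well-defined and their effect on which pairs of edges are ``linked'' can be tracked; one chooses the swap so that afterwards $e$ admits a colour $a$ with $u$ and $v$ not $(a,b)$-linked for any $b$.

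I expect the main obstacle to be exactly this last step when $\Delta = 4$: there $k = 6$ and two endpoints of degree $4$ can already account for all $6$ colours, so in the worst configurations there is \emph{no} properly available colour for $e$ and recolouring is forced; showing that some bounded sequence of recolourings always succeeds without itself introducing a bichromatic cycle is the delicate and lengthy part, and is why the $\Delta = 4$ case of \cite{SWMW19,WMSW19} required much more work than the $\Delta = 3$ case of \cite{BC09}. Polynomial time then comes for free: each reduction removes at least one edge, and at every level identifying a reducible configuration, testing candidate colours, and carrying out the alternating-path swaps are all linear-time operations, so the whole recursion runs in polynomial time in $|E(G)|$; alternatively, one simply invokes the algorithmic versions of these results already established in \cite{Sku04,AMM12,BC09,SWMW19,WMSW19}.
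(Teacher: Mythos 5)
The paper never proves this lemma: it is imported verbatim as a known result, with the citations \cite{Sku04,AMM12,BC09,SWMW19,WMSW19} serving as the proof, and it is only used as the base case of the induction in Section~3. Your closing sentence (``invoke the algorithmic versions of these results'') therefore coincides exactly with what the paper does, and as a citation your statement is fine. But the body of your proposal, read as an actual proof, has a genuine gap: the entire mathematical content of $a'(G)\le\Delta+2$ for $\Delta\in\{3,4\}$ is the explicit unavoidable set of reducible configurations together with the case-by-case verification that each one is reducible, and you leave precisely that part unexecuted (``a finite case analysis\ldots the delicate and lengthy part''). This is not a routine omission one can wave through: for $\Delta=4$ that analysis is the whole substance of \cite{BC09,SWMW19,WMSW19}, and nothing in the generalities you state (induction on $|E(G)|$, uniqueness of maximal $(i,j)$-alternating paths, Kempe-type swaps) pins down which configurations are unavoidable or why each admits a successful recolouring, so the sketch cannot be completed as written.

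One concrete step you do spell out is also shakier than you claim. In the $2$-vertex reduction with $e=vx$ and $b=c(vy)$, properness forbids at most $d(x)+1\le\Delta+1$ colours, so for $\Delta=4$ only one candidate colour $a$ may survive; acyclicity then fails exactly when there is an $(a,b)$-path from $x$ to $y$ in $G-e$, and since $y$ carries up to $\Delta-1$ colours other than $b$, up to three distinct choices of $a$ can be blocked this way, not ``at most one.'' So even this easy case may force a recolouring of $vy$ (or a further swap), which is the kind of bookkeeping the cited proofs actually carry out. In short: as a pointer to the literature your lemma is exactly the paper's treatment; as a standalone proof it is a plan, not a proof.
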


Given a partial acyclic edge $k$-coloring $c(\cdot)$ of the graph $G$ using the color set $C = \{1, 2, \ldots, k\}$,
for a vertex $v\in V(G)$, let $C(v)$ denote the set of colors assigned to the edges incident at $v$ under $c$.
If the edges of a path $P= ux\ldots v$ are alternatively colored $i$ and $j$, we call it an {\em $(i, j)_{(u, v)}$-path}.
Furthermore, if $uv\in E(G)$ is also colored $i$ or $j$, we call $ux\ldots vu$ an {\em $(i, j)_{(u, v)}$-cycle}.

For simplicity, we use {\em $\{e_1, e_2, \ldots, e_m\} \to a$} to state that all the edges $e_1$, $e_2$, $\ldots$, $e_m$ are colored $a$,
use simply $e_1 \to a$ to state that $e_1$ is colored $a$,
use $e_1 \to S$ ($S \ne\emptyset$) to state that $e_1$ is colored with a color in $S$,
and use $(e_1, e_2, \ldots, e_m) \to (a_1, a_2, \ldots, a_m)$ to state that $e_j$ is colored $a_j$, for $j = 1, 2, \ldots, m$.
We also use $(e_1, e_2, \ldots, e_m)_c = (a_1, a_2, \ldots, a_m)$ to denote that $c(e_j) = a_j$, for $j = 1, 2, \ldots, m$.
This way, when $(uv, xy)_c = (a, b)$, switching their colors is represented as $(uv, xy) \to (b, a)$.

The following lemma is obvious (via a simple contradiction):

\begin{lemma}{\rm \cite{SWMW19}}
\label{lemma09}
Suppose $G$ has an acyclic edge coloring $c(\cdot)$,
and $P = uv_1v_2$-$\ldots$-$v_kv_{k+1}$ is a maximal $(a, b)_{(u, v_{k+1})}$-path with $c(uv_1) = a$ and $b \not\in C(u)$.
Then there is no $(a, b)_{(u, w)}$-path for any vertex $w \not\in V(P)$.
\end{lemma}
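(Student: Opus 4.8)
The plan is to argue by contradiction, exploiting the elementary fact that in a proper edge coloring a bichromatic alternating path emanating from a fixed vertex, with a prescribed colour on its first edge, is uniquely determined. Suppose, for contradiction, that there is an $(a,b)_{(u,w)}$-path $Q$ with $w \notin V(P)$. Since $Q$ starts at $u$ and its edges are coloured alternately with $a$ and $b$, its unique edge incident at $u$ is coloured either $a$ or $b$; because $b \notin C(u)$, that edge must be coloured $a$. As $c$ is proper, $uv_1$ is the only edge at $u$ coloured $a$, so $Q$ begins with the edge $uv_1$, exactly as $P$ does.

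Next I would show, by induction on the length of the common prefix, that $Q$ traverses the vertices $u, v_1, v_2, \ldots$ in the same order as $P$, for as long as $P$ lasts. Assume $Q$ and $P$ agree on $u, v_1, \ldots, v_i$ with $i \le k$; the last edge of this common prefix has some colour in $\{a,b\}$, so the next edge of $Q$ at $v_i$ must carry the other colour, and by properness there is at most one such edge at $v_i$. Since $P$ already uses that edge to reach $v_{i+1}$, the path $Q$ is forced to continue to $v_{i+1}$ as well. Hence $Q$ coincides with $P$ at least up to $v_{k+1}$.

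Finally I would invoke the maximality of $P$. At $u$, the path $P$ cannot be extended, since extending it past $u$ would require an edge at $u$ coloured $b$ (the alternation partner of $c(uv_1)=a$), contradicting $b \notin C(u)$; at $v_{k+1}$, $P$ cannot be extended by hypothesis. Consequently any $(a,b)$-alternating path that begins with the edge $uv_1$ terminates no later than $v_{k+1}$, so $Q$ is a subpath of $P$, and therefore every vertex of $Q$ — in particular $w$ — lies in $V(P)$. This contradicts $w \notin V(P)$, completing the proof.

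There is no genuinely hard step here: the only two points needing care are that the edge of $Q$ at $u$ is forced to be coloured $a$ rather than $b$ (which is where $b \notin C(u)$ is used) and that the continuation at each interior vertex is unique (which is just properness of $c$); I note in passing that acyclicity of $c$ is not actually invoked — only properness and the maximality of $P$ are needed.
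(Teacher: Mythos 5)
Your proof is correct: the forced start at the unique $a$-coloured edge $uv_1$ (using $b \notin C(u)$), the uniqueness of the alternating continuation at each vertex by properness, and the maximality of $P$ together give exactly the ``simple contradiction'' the paper alludes to, since it cites Lemma~\ref{lemma09} from \cite{SWMW19} without writing out a proof. Your remark that only properness and maximality (not acyclicity) are needed is also accurate.
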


The rest of the section is devoted to the proof of Theorem~\ref{thm01}, by induction on the number $|E(G)|$ of edges.
The flow of the proof is depicted in Figure~\ref{fig02}.

For the base cases in the induction,
one sees that when $|E(G)| \le \Delta + 2$, coloring each edge by a distinct color gives a valid acyclic edge $(\Delta + 2)$-coloring;
when $\Delta \le 4$, an acyclic edge $(\Delta + 2)$-coloring is guaranteed by Lemma~\ref{lemma08}.

\begin{figure}[]
\begin{center}
\vskip -2pt
\includegraphics[width=5.7in]{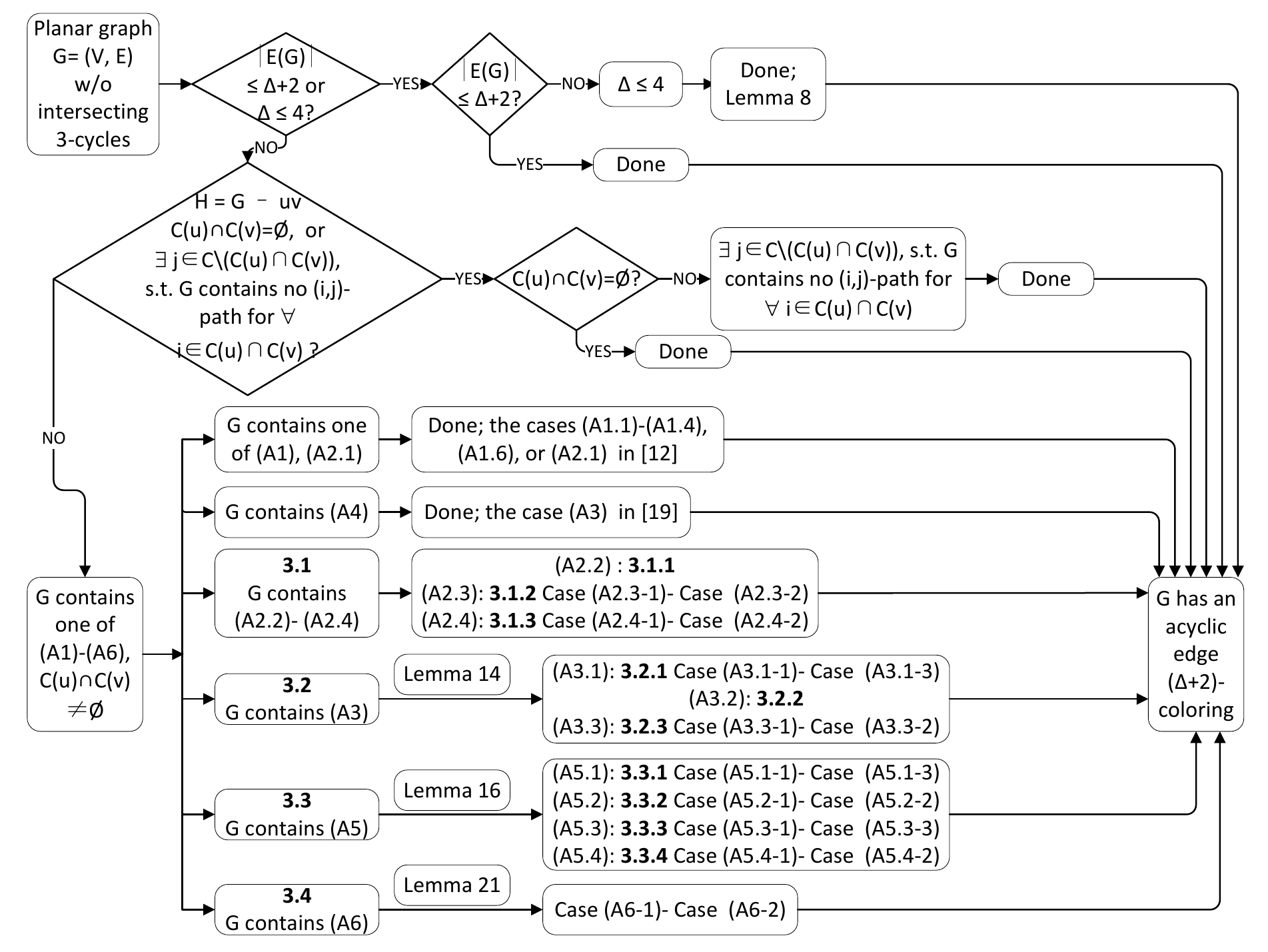}
\vskip 2pt
\caption{The flow of the proof of Theorem~\ref{thm01} by induction on $|E(G)|$.\label{fig02}}
\end{center}
\end{figure}

In the sequel we consider a simple $2$-connected planar graph $G$ without intersecting triangles such that $\Delta \ge 5$ and $|E(G)| \ge \Delta + 3$.
Theorem~\ref{thm02} states that $G$ contains at least one of the configurations ($A_1$)--($A_6$).
One sees that for the edge $uv$ in each of the configurations (see Figure~\ref{fig01}),
we have the degree $d(v) \le 3$ (in ($A_{1.1}$)--($A_{1.3}$), ($A_{1.5}$), ($A_2$), ($A_5$), ($A_6$), respectively),
or the degree $d(u) = 3$ (in ($A_{1.2}$), ($A_{1.4}$), ($A_3$), respectively),
or the sum of their degrees $d(u) + d(v) \le 8$ (in ($A_1$), ($A_3$)--($A_5$), respectively).
We pick the edge $uv$ and let $H = G - uv$ (that is, $H$ is obtained by removing the edge $uv$ from $G$, or $H = (V(G), E(G) \setminus\{uv\})$).
The graph $H$ is also planar, contains no intersecting triangles, and $4 \le \Delta - 1 \le \Delta(H) \le \Delta$.%
\footnote{Note that $H$ might not be $2$-connected, but that is OK since we do not need its $2$-connectivity.}
By the {\em inductive hypothesis}, $H$ has an acyclic edge $(\Delta(H) + 2)$-coloring $c(\cdot)$, and thus a $(\Delta + 2)$-coloring,
using the color set $C = \{1, 2, \ldots, \Delta+2\}$.
If $d(u) + d(v) \le 8$, then $d_H(u) + d_H(v) \le 6 \le \Delta + 1$;
or otherwise $d_H(u) + d_H(v) \le (\Delta - 1) + 2 = \Delta + 1$.
That is, either way we have $d_H(u) + d_H(v) \le \Delta + 1 < |C|$, or equivalently $C \setminus (C(u)\cup C(v)) \ne \emptyset$.

One clearly sees that,
if $C(u)\cap C(v) = \emptyset$, then $uv\to C\setminus (C(u)\cup C(v))$ gives an acyclic edge $(\Delta + 2)$-coloring for $G$;
or if there exists a $j\in C\setminus (C(u)\cup C(v))$ such that, for all $i\in C(u)\cap C(v)$, $H$ contains no $(i, j)_{(u, v)}$-path,
then $uv\to j$ gives an acyclic edge $(\Delta + 2)$-coloring for $G$ too;
and we are done.%

In the remaining case, for each $i\in C(u)\cap C(v)$, assume that $c(u u_s) = c(v v_t) = i$,
and we define the color set
\begin{equation}
\label{eq02}
B_i = \{j \mid \mbox{there is an } (i, j)_{(u_s, v_t)}\mbox{-path in } H\}.
\end{equation}
Therefore, for each $j \in B_i$, there is a neighbor $x$ of $u_s$ and a neighbor $y$ of $v_t$, respectively,
such that $c(u_s x) = c(v_t y) = j$, and each of $x$ and $y$ is incident with an edge colored $i$.
One sees that $B_i \subseteq C(u_s)\cap C(v_t)$.
In the sequel we continue the proof with the following Proposition~\ref{prop3001},%
\footnote{This and many succeeding propositions are numbered in subsubsections, and they are assumed true for the following-up proof, respectively;
	that is, if any of them is false, then an acyclic edge $(\Delta + 2)$-coloring for $G$ has already been obtained.}
or otherwise we are done:

\begin{proposition}
\label{prop3001}
	$C(u)\cap C(v) \ne \emptyset$, and
	$C\setminus (C(u)\cup C(v))\subseteq \bigcup _{i\in C(u)\cap C(v)}B_i$.
\end{proposition}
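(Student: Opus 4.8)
The plan is to verify that if Proposition~\ref{prop3001} is false then an acyclic edge $(\Delta+2)$-coloring of $G$ has already been produced, so that from this point on we may assume the proposition holds. There are exactly two ways it can fail, and I would handle them in turn, in both cases reducing to one of the two ``or otherwise we are done'' observations stated just above the proposition.

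Suppose first that $C(u)\cap C(v)=\emptyset$. We have already shown $C\setminus(C(u)\cup C(v))\neq\emptyset$, so fix $j$ in this set and set $c(uv)=j$. Properness is immediate since $j\notin C(u)\cup C(v)$. For acyclicity, any bichromatic cycle created must pass through $uv$ and therefore be an $(a,j)$-cycle for some $a$; its two edges at $u$ and at $v$ other than $uv$ are then both colored $a$, so $a\in C(u)\cap C(v)=\emptyset$, which is impossible. Hence $G$ is already acyclically $(\Delta+2)$-colored and we are done.

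Suppose instead that $C(u)\cap C(v)\neq\emptyset$ but $C\setminus(C(u)\cup C(v))\not\subseteq\bigcup_{i\in C(u)\cap C(v)}B_i$, and fix $j\in C\setminus(C(u)\cup C(v))$ lying outside every $B_i$, $i\in C(u)\cap C(v)$. I claim that for each such $i$ there is no $(i,j)_{(u,v)}$-path in $H$; granting this, the second observation above lets us set $c(uv)=j$ and finish. To prove the claim, suppose $P$ is an $(i,j)_{(u,v)}$-path in $H$. Because $j\notin C(u)\cup C(v)$, the edge of $P$ at $u$ and the edge of $P$ at $v$ cannot be colored $j$, so both are colored $i$; since $c$ is proper, these are precisely the unique edges $uu_s$ and $vv_t$ with $c(uu_s)=c(vv_t)=i$ fixed in the definition of $B_i$ (and $u_s\neq v_t$, as $c$ is proper at $u_s$). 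Deleting $uu_s$ and $vv_t$ from $P$ leaves a subpath of $P$, hence again an $\{i,j\}$-alternating path, now joining $u_s$ and $v_t$ inside $H$; by Eq.~(\ref{eq02}) this says $j\in B_i$, contradicting the choice of $j$.

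Thus whenever Proposition~\ref{prop3001} fails, an acyclic edge $(\Delta+2)$-coloring of $G$ is already in hand. The only step requiring any care is the identification in the last paragraph --- that an $\{i,j\}$-alternating $u$--$v$ path in $H$ must have its two extreme edges colored $i$, so that it corresponds (by deletion, and conversely by extension with $uu_s$ and $vv_t$) to an $(i,j)_{(u_s,v_t)}$-path --- which rests only on $j\notin C(u)\cup C(v)$ together with properness at $u$ and $v$. I do not expect any genuine difficulty: the proposition essentially repackages the two escape routes already noted into a hypothesis for the case analysis that follows.
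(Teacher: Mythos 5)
Your proposal is correct and follows essentially the same route as the paper, which states the two recoloring observations just before Proposition~\ref{prop3001} and treats the proposition as the residual case (``or otherwise we are done''), per its footnote convention. Your only addition is to make explicit the easy bridge that an $(i,j)_{(u,v)}$-path with $j\notin C(u)\cup C(v)$ must have end edges $uu_s$ and $vv_t$ colored $i$ and hence trims to an $(i,j)_{(u_s,v_t)}$-path, i.e.\ $j\in B_i$, which the paper leaves implicit.
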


For each degree $k\in \{2, 3, 4, 5\}$, we collect the colors of the edges connecting $u$ and its $k$-neighbors,
into $S_k = \{c(u w) \mid w \in N(u) \setminus \{v\}, d(w) = k\}$.

We affirm the readers, and it is not hard to see, that if $G$ contains ($A_1$) or ($A_{2.1}$),
then the proof for the configurations ($A_{1.1}$)--($A_{1.4}$), ($A_{1.6}$) and ($A_{2.1}$) in \cite{SWW12}
can be adopted to derive an acyclic edge $(\Delta + 2)$-coloring for $G$;
and if $G$ contains ($A_4$), then the proof for the configuration ($A_3$) in \cite{WSW4} can be adopted.%
The next four subsections deal with the other configurations in ($A_2$), ($A_3$), ($A_5$), and ($A_6$), respectively.

\subsection{Configurations ($A_{2.2}$)--($A_{2.4}$)}
In this subsection we prove the inductive step for the case where $G$ contains one of the configurations ($A_{2.2}$)--($A_{2.4}$).
We have the following revised ($A_2$):
\begin{description}
\parskip=0pt
\item[$(A_2)$]
    A $6^+$-vertex $u$ is adjacent to a $2$-vertex $v$.
	Let $u_1, u_2, \ldots, u_{d(u)-1}$ be the other neighbors of $u$.
	At least one of the configurations ($A_{2.2}$)--($A_{2.4}$) occurs.
\end{description}
See Figure~\ref{fig01} for illustrations.
Let $w$ denote the other neighbor of $v$.
For each $1 \le i \le d(u)-1$, if $d(u_i)= 2$, let $x_i$ denote the other neighbor of $u_i$.

We first characterize two useful particularities of the $2$-vertices in $N(u)$.
For ease of presentation we let $u_{d(u)} = v$ and $x_{d(u)} = w$.

\begin{lemma}
\label{lemma10}
For any $2$-vertex $u_i \in N(u)$, if $u x_i \not\in E(G)$, and $x_i u_j \notin E(G)$ for every $u_j \in N(u) \setminus \{u_i\}$,
then $G$ admits an acyclic edge $(\Delta + 2)$-coloring.
\end{lemma}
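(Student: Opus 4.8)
I would work inside the coloring $c$ of $H = G - uv$ supplied by the inductive hypothesis and assume Proposition~\ref{prop3001} holds, since otherwise $c$ already extends to $G$. Because $v$ is a $2$-vertex, $d_H(v)=1$ and $C(v)=\{i_0\}$ with $i_0=c(vw)$, so $C(u)\cap C(v)=\{i_0\}$, the vertex $v_t$ of~(\ref{eq02}) is $w$, and Proposition~\ref{prop3001} reads $C\setminus(C(u)\cup\{i_0\})\subseteq B_{i_0}\subseteq C(u_s)\cap C(w)$, where $c(uu_s)=i_0$; in particular $|B_{i_0}|\ge |C|-|C(u)|=\Delta+3-d(u)\ge 3$. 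A first reduction: if $u_i=u_s$, then since $u_i$ has degree $2$ the only color that can belong to $B_{i_0}$ is $c(u_ix_i)$, so $|B_{i_0}|\le 1$, forcing $d(u)\ge\Delta+2$ against $d(u)\le\Delta$; hence we may assume $u_i\ne u_s$, which is automatic when $u_i=v$. In every case the plan is to recolor a bounded set of edges incident with $u_i$ (or with $v$, when $u_i=v$) so that afterwards either $C(u)\cap C(v)=\varnothing$ --- whence $uv$ gets any spare color with no bichromatic cycle --- or the obstruction set $B_{i_0}$ (or its analogue for a replacement color) fails to cover $C\setminus(C(u)\cup\{i_0\})$ --- whence Lemma~\ref{lemma09} supplies a safe color for $uv$. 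The leverage is that $u_i$ has degree $2$: recoloring one of its two incident edges is almost unconstrained, and any bichromatic cycle it creates must pass through both of them.

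For $u_i\ne v$, set $\alpha=c(uu_i)$ and $\beta=c(u_ix_i)$. I would recolor $uu_i$ with a fresh color $\gamma\notin C(u)\cup\{\beta\}$ --- when possible also $\gamma\notin C(x_i)$ --- and then put $uv\to\alpha$, which is free at $u$ after the recoloring. The only candidate new cycle is a $(\gamma,\beta)$-cycle through $u_i$; following the maximal $(\beta,\gamma)$-path out of $u$ and applying Lemma~\ref{lemma09} (legitimate since $\gamma\notin C(u)$), such a cycle would need $x_i$ on that path, which $ux_i\notin E(G)$ and $x_iu_j\notin E(G)$ rule out --- the path could reach $x_i$ only through $u_i$, hence only by revisiting $u$. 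Putting $uv\to\alpha$ can only produce an $(i_0,\alpha)_{(u_s,w)}$-path; changing the single edge $uu_i$ (old and new colors both distinct from $i_0$) neither creates such a path nor destroys one already present, and its presence in $H$ would put $\alpha$ into $B_{i_0}$, pushing us into the tight regime below. The only genuinely tight situation is $C=C(u)\cup C(x_i)$: then every admissible color on $uu_i$ lies in $C(x_i)$ and every admissible color on $u_ix_i$ lies in $C(u)$, so counting alone does not save us, and instead I would delete the vertex $u_i$, color $G-u_i$ by induction, and choose colors $(\gamma,p)$ for $(uu_i,u_ix_i)$ with $u$ off the maximal $(\gamma,p)$-path out of $x_i$; here Lemma~\ref{lemma09} applies because $p\notin C(x_i)$, and since $x_i\notin N(u)$ the vertex $u$ can lie on that path only as an endpoint reached through its unique $p$-edge, which keeps the bad pairs few enough against the $\ge 3$ admissible $\gamma$'s.

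For $u_i=v$, so $x_i=w$ is adjacent neither to $u$ nor to any $u_j$: recolor the pendant edge $vw$ --- which cannot create a bichromatic cycle, $v$ having degree $1$ in $H$ --- to a color $\eta\notin C(w)$ (possible as $d(w)\le\Delta$), thereby freeing $i_0$ at $w$. If $\eta$ can additionally be taken outside $C(u)$, then $C(u)\cap C(v)=\varnothing$ and we finish. Otherwise $C\setminus C(w)\subseteq C(u)$, so $d(u)+d(w)$ is large, $B_{i_0}$ is large, and $w,u_s$ share many colors although non-adjacent; then, $i_0$ being free at $w$, I would argue via Lemma~\ref{lemma09} applied to the maximal bichromatic path issuing from the pendant vertex $v$ that the analogue of $B_{i_0}$ for the new color $\eta$ --- contained in the post-recoloring color set of $w$ with $\eta$ removed --- cannot cover $C\setminus(C(u)\cup\{\eta\})$, leaving a safe color for $uv$.

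The step I expect to be the real obstacle is the ``saturated palette'' regime $C=C(u)\cup C(x_i)$ (respectively $C=C(u)\cup C(w)$): there no pure counting argument closes the proof, and the full force of the isolation hypotheses has to be channelled through Lemma~\ref{lemma09} --- specifically through the fact that a maximal bichromatic path issuing from $x_i$ (or from $v$) can meet $u$ only at an endpoint --- in order to certify that the chosen recoloring leaves the coloring acyclic.
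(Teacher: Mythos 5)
There is a genuine gap, and it sits exactly where you place the weight of the argument. First, in the step where you recolor $uu_i$ with $\gamma\notin C(u)\cup\{\beta\}$ but cannot avoid $\gamma\in C(x_i)$, your justification that no $(\gamma,\beta)$-cycle arises --- ``the path could reach $x_i$ only through $u_i$'' --- is false. The hypotheses $ux_i\notin E(G)$ and $x_iu_j\notin E(G)$ only forbid edges between $x_i$ and $\{u\}\cup N(u)$; a $(\beta,\gamma)$-path starting at the $\beta$-edge of $u$ may wander arbitrarily far and enter $x_i$ through its $\gamma$-edge, whose other endpoint need not be near $u$ at all, closing the cycle $u\,u_i\,x_i\cdots u$. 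Non-adjacency hypotheses never control long alternating paths, so they cannot play the role you assign to them (the same unsupported assertion ends your $u_i=v$ branch, where you simply claim the obstruction set for the new color $\eta$ cannot cover $C\setminus(C(u)\cup\{\eta\})$). Moreover, the event $\alpha\in B_{i_0}$ (an $(i_0,\alpha)_{(u_s,w)}$-path exists, so $uv\to\alpha$ closes a cycle through $v$) is logically independent of the palette-saturation condition $C=C(u)\cup C(x_i)$, so saying it ``pushes us into the tight regime below'' is a non sequitur; that case is covered by neither branch of your argument.

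Second, the saturated case, which you yourself call the real obstacle, is not closed. After deleting $u_i$ and coloring $G-u_i$ by induction, a pair $(\gamma,p)$ with $\gamma\in C\setminus C(u)$ and $p\in C\setminus C(x_i)$ is bad exactly when a $(\gamma,p)$-path joins the $p$-edge of $u$ to the $\gamma$-edge of $x_i$; Lemma~\ref{lemma09} only gives uniqueness of the maximal path for a fixed pair, it bounds neither the number of bad pairs nor excludes that all (at least nine) of them are bad, and no counting or parity argument does. So the proof is incomplete precisely at its hardest point. The paper avoids this entirely: it forms $G'=G-u_i+ux_i$, i.e.\ suppresses the $2$-vertex, and the two non-adjacency hypotheses are used only to guarantee that $G'$ is simple, planar and still without intersecting triangles, so the inductive hypothesis applies to $G'$ (one fewer edge, same $\Delta$). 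Writing $c(ux_i)=1$, one sets $u_ix_i\to 1$ and $uu_i\to C\setminus C(u)$: since the edge $ux_i$ is gone, the color $1$ no longer appears at $u$, so any bichromatic cycle would have to avoid $u_i$ and hence be a bichromatic cycle of $G'$ itself. Because the colors of the two new edges are tied together through the contracted edge, the ``bad pair'' problem never arises; contraction rather than deletion is the missing idea if you want to salvage your approach.
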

\begin{proof}
Let $G' = G - u_i + u x_i$, that is, merging the two edges $u u_i$ and $u_i x_i$ into a new edge $u x_i$ while eliminating the vertex $u_i$.

The graph $G'$ is a planar graph without intersecting triangles, $\Delta(G') = \Delta$, and contains one less edge than $G$.
By the inductive hypothesis, $G'$ has an acyclic edge $(\Delta + 2)$-coloring $c(\cdot)$ using the color set $C = \{1, 2, \ldots, \Delta + 2\}$,
in which $C(u)$ is the color set of the edges incident at $u$ and we assume that $c(u x_i) = 1$.
To transform back an edge coloring for $G$,
we keep the color for every edge of $E(G') \setminus \{u x_i\}$, $u_i x_i \to 1$, and $u u_i \to C \setminus C(u)$.
One may trivially verify that no bichromatic cycle would be introduced, that is, the achieved edge $(\Delta + 2)$-coloring for $G$ is acyclic.
\end{proof}

\begin{lemma}
\label{lemma11}
For any two $2$-vertices $u_i, u_j \in N(u)$,
if in an edge coloring $c(\cdot)$ we have $c(u u_i) = i$, $c(u u_j) = j$,
and there exist an $(\alpha, i)_{(u, u_i)}$-cycle and an $(\alpha, j)_{(u, u_j)}$-cycle for some color $\alpha \not\in \{i, j\}$,
then we can switch the colors of $\{u u_i, u u_j\}$
such that $G$ contains neither an $(\alpha, j)_{(u, u_i)}$-cycle nor an $(\alpha, i)_{(u, u_j)}$-cycle and no new bichromatic cycle is introduced.
\end{lemma}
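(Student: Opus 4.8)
The plan is to prove that the promised operation is simply the interchange $(uu_i,uu_j)\to(j,i)$ — i.e. recolour $uu_i$ with colour $j$ and $uu_j$ with colour $i$ — and that the resulting coloring $c'$ has no $(\alpha,j)_{(u,u_i)}$-cycle, no $(\alpha,i)_{(u,u_j)}$-cycle, and no bichromatic cycle that was absent under $c$.

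First I would extract two facts that use only that $u_i,u_j$ are $2$-vertices and that $c$ is proper. Since $c(uu_i)=i\ne\alpha$, the path part of the given $(\alpha,i)_{(u,u_i)}$-cycle must reach $u_i$ through its only other incident edge $u_ix_i$, and reading colours along that alternating path forces $c(u_ix_i)=\alpha$; symmetrically $c(u_jx_j)=\alpha$. That same path part leaves $u$ through $u$'s unique $\alpha$-coloured edge, say $up_1$ (so $p_1\notin\{u_i,u_j\}$), and so does the path part of the given $(\alpha,j)_{(u,u_j)}$-cycle. Consequently the $(\alpha,j)$-alternating walk that starts at $u$ along $up_1$ and then follows $p_1$'s unique $j$-coloured edge is exactly the given $(\alpha,j)_{(u,u_j)}$-cycle; call this cycle $\Gamma$. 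Likewise the $(\alpha,i)$-alternating walk from $up_1$ is exactly the given $(\alpha,i)_{(u,u_i)}$-cycle; call it $\Gamma'$. I would record that $u_i\notin V(\Gamma)$ and $u_j\notin V(\Gamma')$, because a $2$-vertex whose two incident colours are $\{i,\alpha\}$ (resp.\ $\{j,\alpha\}$) cannot lie on a cycle all of whose edges have colours in $\{\alpha,j\}$ (resp.\ $\{\alpha,i\}$).

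The core step is to rule out an $(\alpha,j)_{(u,u_i)}$-cycle $C$ in $c'$. Since $c'(uu_i)=j$, since $c'(u_ix_i)=c(u_ix_i)=\alpha$, and since the only $c'$-edge of colour $j$ at $u$ is $uu_i$ itself, $C$ must use the two edges $uu_i$ and $up_1$ at $u$; deleting them leaves an $(\alpha,j)$-alternating path $P^{\ast}$ from $u_i$ to $p_1$ that avoids $u$ and hence avoids both recoloured edges, so $P^{\ast}$ is coloured the same by $c$ and $c'$ and its edge at $p_1$ is $p_1$'s unique $j$-edge. Walking $P^{\ast}$ out of $p_1$, the uniqueness of the colour at each visited vertex forces it to trace $\Gamma$ step by step, so every vertex of $P^{\ast}$ — in particular its far endpoint $u_i$ — lies on $\Gamma$, contradicting $u_i\notin V(\Gamma)$. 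The symmetric argument with $\Gamma'$ in place of $\Gamma$ rules out an $(\alpha,i)_{(u,u_j)}$-cycle in $c'$. Finally $c'$ is plainly proper, and any bichromatic cycle of $c'$ that is not one of $c$ must use a recoloured edge: one through $uu_i$ continues along $u_ix_i$ (colour $\alpha$) and is therefore an $(\alpha,j)_{(u,u_i)}$-cycle, ruled out; one through $uu_j$ is an $(\alpha,i)_{(u,u_j)}$-cycle, ruled out; one through both $uu_i$ and $uu_j$ would have colour set $\{i,j\}$ at $u$ yet would have to continue along the $\alpha$-coloured edge $u_ix_i$, which is impossible. Hence no new bichromatic cycle is created.

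The step I expect to need the most care is the bookkeeping in the previous paragraph: keeping straight which edges are common to $c$ and $c'$ (so that $P^{\ast}$ may legitimately be read off as an alternating path of $c$), checking that $P^{\ast}$ is non-degenerate (which follows from $p_1\notin\{u_i,u_j\}$), and verifying that $\Gamma$ and $\Gamma'$ are genuine cycles so that the assertion "$P^{\ast}$ traces $\Gamma$ from $p_1$" really does force the far endpoint onto $\Gamma$. Everything else — properness of $c'$ and the case analysis for new bichromatic cycles — is routine.
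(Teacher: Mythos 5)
Your proposal is correct and follows the same route as the paper: observe that $c(u_i x_i) = c(u_j x_j) = \alpha$ makes the swap $(u u_i, u u_j) \to (j, i)$ proper, then argue that any new bichromatic cycle would have to pass through a recolored edge and derive a contradiction from the original coloring. The paper compresses the second step into "a simple contradiction to the feasibility of $c(\cdot)$," whereas you spell it out via the uniqueness of the $(\alpha,j)$- and $(\alpha,i)$-alternating walks out of $u$'s $\alpha$-edge, which is exactly the intended argument.
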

\begin{proof}
First of all, since $u_i$ and $u_j$ are both $2$-vertices, we have $c(u_i x_i) = c(u_j x_j) = \alpha$.
It follows that switching the colors of $\{u u_i, u u_j\}$, that is $(u u_i, u u_j) \to (j, i)$, is feasible.

Secondly, any newly introduced bichromatic cycle would get at least one of $u u_i$ and $u u_j$ involved,
which is impossible by a simple contradiction to the feasibility of $c(\cdot)$.
\end{proof}

We continue with the inductively hypothesized acyclic edge coloring $c(\cdot)$ for $H = G - u v$.
By Proposition~\ref{prop3001}, $c(vw) \in C(u)$, $H$ contains a $(c(v w), i)_{(u, w)}$-path for every $i \in C \setminus C(u)$,
and thus $(C\setminus C(u))\subseteq C(w) \setminus \{c(v w)\}$.
Note from $|C(u)| \le \Delta - 1$ that $|C \setminus C(u)| \ge 3$.
If there is a color $i_1 = c(u u_i) \in (S_2\cup S_3) \setminus (C(w) \setminus \{c(v w)\})$,
then $v w \to i_1$%
\footnote{Note that $i_1$ could be $c(v w)$ itself, but we recolor anyways.}
and $u v \to (C \setminus C(u))\setminus C(u_i)$ give rise to an acyclic edge $(\Delta + 2)$-coloring for $G$, and we are done.%
\footnote{Note that $(C \setminus C(u))\setminus C(u_i)$ is non-empty.}%
Otherwise, we proceed with the following proposition:

\begin{proposition}
\label{prop3101}
	$(C \setminus C(u)) \cup S_2 \cup S_3 \subseteq C(w) \setminus \{c(vw)\}$.
\end{proposition}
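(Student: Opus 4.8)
The plan is to argue in the mode used throughout this section: if the proposition fails I will exhibit an acyclic edge $(\Delta+2)$-coloring of $G$, i.e.\ ``we are done'', and so the proposition must hold. The inclusion $C\setminus C(u)\subseteq C(w)\setminus\{c(vw)\}$ has just been extracted from Proposition~\ref{prop3001}: since $v$ is a $2$-vertex, $C(v)=\{c(vw)\}$, hence $c(vw)\in C(u)$ and $C\setminus C(u)=C\setminus(C(u)\cup C(v))\subseteq B_{c(vw)}$, and a $(c(vw),i)$-path in $H$ reaching $w$ must end on an edge colored $i$ (else it clashes with $vw$ at $w$), so $i\in C(w)$ and $i\ne c(vw)$. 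Thus it remains only to show $S_2\cup S_3\subseteq C(w)\setminus\{c(vw)\}$.

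So suppose instead that $i_1=c(uu_i)\in S_2\cup S_3$, with $d(u_i)\in\{2,3\}$ and $i_1\notin C(w)\setminus\{c(vw)\}$. First I would recolor $vw\to i_1$. This is proper, because $uv\notin E(H)$ and, by the choice of $i_1$, no edge at $w$ other than $vw$ is colored $i_1$; and it creates no bichromatic cycle, since $v$ still has degree $1$. Call the new coloring $c'$; it alters color sets only at $v$ and $w$, so $C'(x)=C(x)$ elsewhere, in particular at $u$ and at $u_i$ --- here $u_i\ne w$, as otherwise $uu_i$ would be an edge at $w$ distinct from $vw$ and colored $i_1$, against the choice of $i_1$. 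Next I would pick $j\in(C\setminus C(u))\setminus C(u_i)$, which is nonempty because $|C\setminus C(u)|\ge3$ (as $|C(u)|=d_H(u)\le\Delta-1$), $|C(u_i)|=d_H(u_i)\le3$, and $i_1\in C(u)\cap C(u_i)$, and set $uv\to j$.

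It remains to check that this coloring of $G$ is acyclic. It is proper at $u$ ($j\notin C(u)$) and at $v$ ($j\ne i_1=c'(vw)$). Any bichromatic cycle not already present under $c'$ must use $uv$, hence pass through $v$, hence use both $uv$ (colored $j$) and $vw$ (colored $i_1$), so it is a $(j,i_1)$-cycle; and since $j\notin C(u)$, it re-enters $u$ along the unique edge at $u$ colored $i_1$, namely $uu_i$, and then leaves $u_i$ along an edge colored $j$, so $j\in C'(u_i)=C(u_i)$, contradicting the choice of $j$. Hence no bichromatic cycle is created, $G$ is acyclically edge $(\Delta+2)$-colored, and we are done --- so no such $i_1$ exists and $S_2\cup S_3\subseteq C(w)\setminus\{c(vw)\}$, as required.

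I expect the acyclicity verification for $uv\to j$ to be the only real point: the crux is that the $2$-vertex $v$ forces every newly created bichromatic cycle to traverse the pair $\{uv,vw\}$, so the lone threat is a single $(j,i_1)$-cycle, which is killed by choosing $j\notin C(u)\cup C(u_i)$; the preliminary recoloring $vw\to i_1$ is harmless for the same degree-one reason.
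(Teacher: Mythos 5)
Your proposal is correct and follows essentially the same route as the paper: derive $(C\setminus C(u))\subseteq C(w)\setminus\{c(vw)\}$ from Proposition~\ref{prop3001}, and for any offending $i_1=c(uu_i)\in S_2\cup S_3$ recolor $vw\to i_1$ and $uv\to (C\setminus C(u))\setminus C(u_i)$ to obtain an acyclic edge $(\Delta+2)$-coloring of $G$. You merely spell out the details the paper leaves to footnotes (the case $i_1=c(vw)$, nonemptiness of $(C\setminus C(u))\setminus C(u_i)$, and the degree-$2$ argument forcing any new bichromatic cycle to be a $(j,i_1)$-cycle through $uu_i$), all of which check out.
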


\subsubsection{Configuration ($A_{2.2}$)}
In this configuration, $n_2(u)+ n_3(u) = d(u) - 3$ and $n_3(u) \le 3$.

Let $c(u u_i) = i$ for each $1 \le i \le d(u) - 1$, in the acyclic edge $(\Delta + 2)$-coloring $c(\cdot)$ for $H = G - uv$.

From $n_2(u)+ n_3(u) = d(u) - 3$, we assume that $d(u_1), d(u_2), d(u_3) \ge 4$,
$d(u_j)= 3$ for $4\le j\le n_3(u) + 3$,
and $d(u_j)= 2$ for $n_3(u) + 4\le j \le d(u)- 1$.
It follows from Proposition~\ref{prop3101} that $C(w) = \{c(v w), 4, 5$, $\ldots, \Delta + 2\}$, where $v w$ can be arbitrarily colored by any $i \in \{1, 2, 3\}$;
thus $d(w) = \Delta$.
Using $C(u) = \{1, 2, \ldots, d(u) - 1\}$ and $C(v) = \{i\}$ in Proposition~\ref{prop3001}, we have $C \setminus C(u) \subseteq B_i$, for any $i \in \{1, 2, 3\}$.
Let $B^* = B_1 \cap B_2 \cap B_3$.
Then, for each color $j \in B^*$, we have $j \in C(u_1) \cap C(u_2) \cap C(u_3)$,
there is a neighbor $y$ of $w$ with $c(w y) = j$, and $y$ is incident with three edges colored $1, 2$, and $3$, respectively,
see for an illustration in Figure~\ref{fig03}.
Note that $C \setminus C(u) \subseteq B^*$.

When there exists $j = c(u u_j) \in S_2$ such that $j\not\in B_i$ for some $i \in \{1, 2, 3\}$,
if $c(u_j x_j) \not\in S_2\cup S_3$, then let $S = (C \setminus C(u)) \setminus C(u_j)$;
if $c(u_j x_j) = k \in S_2\cup S_3$, then let $S = (C \setminus C(u)) \setminus C(u_k)$.
It follows that $S \ne \emptyset$, and thus $u u_j \to S$ and $(u v, vw) \to (j, i)$ give rise to an acyclic edge $(\Delta+2)$-coloring for $G$, and we are done.%

In the other case, we have $B_i \subseteq (C(u_i) \setminus \{i\}) \cap (C(w) \setminus \{c(v w)\})$ for every $i \in \{1, 2, 3\}$,
and thus we proceed with the following proposition: 

\begin{proposition}
\label{prop3111}
	$(C \setminus C(u)) \cup S_2 \subseteq B^* \subseteq \left(\bigcap_{i = 1}^3 (C(u_i) \setminus \{i\})\right) \cap (C(w) \setminus \{c(v w)\})$.
\end{proposition}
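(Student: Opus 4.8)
The plan is to obtain Proposition~\ref{prop3111} by assembling facts already established in the two paragraphs that precede it; it is essentially a consolidation, and I would present it as the chain of three inclusions read from left to right.

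For the leftmost inclusion $C\setminus C(u)\subseteq B^*$, I would argue as follows. Since $v$ is a $2$-vertex and $uv$ has been removed, $vw$ is the unique edge incident at $v$ in $H$, so a bichromatic cycle cannot pass through $v$ and hence $vw$ may be recolored by any color in $\{1,2,3\}$ without destroying acyclicity of $c(\cdot)$ on $H$. Fixing any $i\in\{1,2,3\}$ and setting $c(vw)=i$, we get $C(u)\cap C(v)=\{i\}$, and because $\{1,2,3\}\subseteq C(u)$ we have $C\setminus(C(u)\cup C(v))=C\setminus C(u)$; Proposition~\ref{prop3001} (applied to this coloring, as otherwise an acyclic edge $(\Delta+2)$-coloring of $G$ is already in hand) then yields $C\setminus C(u)\subseteq B_i$. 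Running $i$ over $\{1,2,3\}$ gives $C\setminus C(u)\subseteq B_1\cap B_2\cap B_3=B^*$.

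For $S_2\subseteq B^*$, I would simply invoke the ``other case'' of the case split stated just above the proposition: having ruled out a color $j=c(uu_j)\in S_2$ with $j\notin B_i$ for some $i\in\{1,2,3\}$ (such a $j$ produces an acyclic edge $(\Delta+2)$-coloring of $G$, as shown there), every color of $S_2$ lies in $B_1\cap B_2\cap B_3=B^*$. For the rightmost inclusion, I would use that $B^*\subseteq B_i$ for each $i$ by definition together with the containment $B_i\subseteq(C(u_i)\setminus\{i\})\cap(C(w)\setminus\{c(vw)\})$ established in the current case; intersecting over $i\in\{1,2,3\}$ and pulling out the common factor $C(w)\setminus\{c(vw)\}$ gives $B^*\subseteq\left(\bigcap_{i=1}^3(C(u_i)\setminus\{i\})\right)\cap(C(w)\setminus\{c(vw)\})$.

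I do not anticipate a real obstacle: the substance has effectively been worked out in the surrounding text. The only points that need a moment's care are verifying that $i\notin B_i$ --- true because $uu_i$ and $u_iw$ are adjacent, so $u_iw$ cannot be colored $i$, ruling out a monochromatic $(i,i)$-path --- and checking that the free recoloring of $vw$ in the first step is legitimate, which is immediate since $v$ has degree one in $H$.
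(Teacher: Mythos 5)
Your proposal is correct and follows essentially the same route as the paper: Proposition~\ref{prop3111} is precisely a consolidation of the preceding discussion, namely the free recoloring of $vw$ within $\{1,2,3\}$ combined with Proposition~\ref{prop3001} to get $C\setminus C(u)\subseteq B_i$ for each $i$, the already-eliminated case giving $S_2\subseteq B_1\cap B_2\cap B_3$, and the containment $B_i\subseteq (C(u_i)\setminus\{i\})\cap(C(w)\setminus\{c(vw)\})$ coming from the definition of $B_i$. Two small touch-ups: the legitimacy of recoloring $vw$ uses not only that $v$ has degree $1$ in $H$ (acyclicity) but also properness at $w$, i.e.\ $\{1,2,3\}\setminus\{c(vw)\}$ is disjoint from $C(w)$ because $C(w)=\{c(vw),4,\ldots,\Delta+2\}$ by Proposition~\ref{prop3101}; and $i\notin B_i$ should not be argued via an edge $u_iw$ (which need not exist) --- it is immediate because an $(i,j)$-path is bichromatic, so $j\neq i$, while $j\in C(u_i)$ follows since the path edge at $u_i$ cannot be colored $i$ (it is adjacent to $uu_i$).
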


If $S_3\subseteq B^*$, then from Assumption~$(*_{2.2})$ we have $(C \setminus C(u)) \cup S_2 \cup S_3 \subseteq B^*$.
Since $(C \setminus C(u)) \cup S_2 \cup S_3 = C \setminus \{1, 2, 3\}$,
we conclude that $B^* = C \setminus \{1, 2, 3\} = B_i = C(u_i) \setminus \{i\} = C(w) \setminus \{c(v w)\}$,
and consequently $C(u_i) = \{i, 4, 5, \ldots, \Delta+2\}$, for every $i \in \{1, 2, 3\}$.
See for an illustration in Figure~\ref{fig03} with $B^* = \{4, 5, \ldots, \Delta+2\}$,
from which we conclude that $w$ does not collide into any of $u_i$, $i = 1, 2, 3$, neither any of $u_j$, $j = 4, 5, \ldots, d(u) - 1$ (which have degree $2$ or $3$),
that is, $u w \not\in E(G)$.
Since $B^* = C \setminus \{1, 2, 3\}$, $wu_j\not\in E(G)$ for every $j\in \{1, 2, \ldots, d(u)- 1\}$.
It follows from Lemma~\ref{lemma10} that $G$ admits an acyclic edge $(\Delta + 2)$-coloring, and we are done.%

\begin{figure}[]
\begin{center}
\vskip -2pt
\includegraphics[width=3.5in]{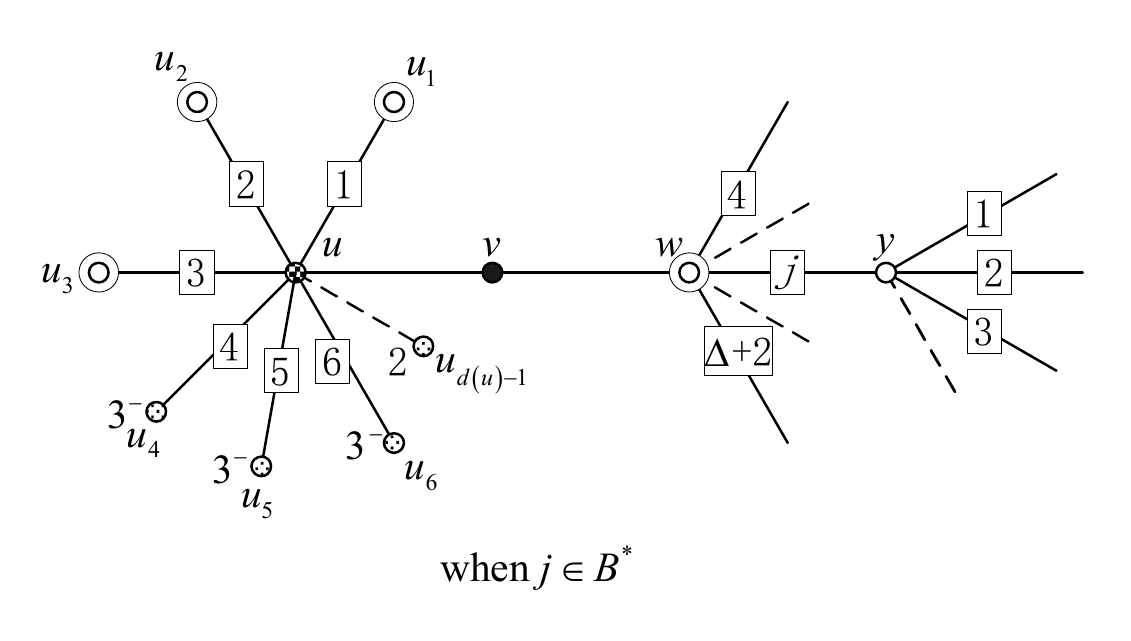}
\vskip 2pt
\caption{An illustration: for any color $j \in B^*$,
	there is an edge $wy$ colored $j$ and $y$ is incident with at least three edges colored $1, 2, 3$, respectively.
	Consequently, there are at least $|B^*|$ neighbors of $w$ each incident with at least three edges colored $1, 2, 3$, respectively.
	In the figure, a color is frame-boxed.\label{fig03}}
\end{center}
\end{figure}

Next we consider $S_3 \setminus B^* \ne \emptyset$, which is completely dealt with in the following Lemma~\ref{lemma12}.

\begin{lemma}
\label{lemma12}
{\em Condition~(C1)} states that $d(u) = \Delta = 6$, $n_3(u) = 2$ with $S_3 = \{4, 5\}$,
$C(u_4) = \{4, 5, \alpha_6\}$, and $C(u_5)= \{5, \alpha_7, \alpha_8\}$, where $\{\alpha_6, \alpha_7, \alpha_8\} = \{6, 7, 8\}$.

{\rm (1)} If (C1) holds, then $G$ admits an acyclic edge $(\Delta + 2)$-coloring.

{\rm (2)} If $S_3 \setminus B^* \ne \emptyset$, then either $G$ admits an acyclic edge $(\Delta + 2)$-coloring, or (C1) holds.
\end{lemma}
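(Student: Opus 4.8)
The overall strategy is a careful case analysis driven by the combinatorial bottleneck that has already been set up: by Proposition~\ref{prop3111} we know $(C\setminus C(u))\cup S_2\subseteq B^*\subseteq\bigcap_{i=1}^3(C(u_i)\setminus\{i\})\cap(C(w)\setminus\{c(vw)\})$, and we are now assuming $S_3\setminus B^*\neq\emptyset$. Fix a color $j=c(uu_j)\in S_3\setminus B^*$; so $d(u_j)=3$ and there is some $i\in\{1,2,3\}$ with no $(i,j)_{(u_s,v_t)}$-path in $H$ for the appropriate edges, i.e. recoloring $uv$ or $vw$ with $j$ is ``almost'' safe except for cycles through $u_j$. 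First I would try to recolor directly: attempt $(uv,vw)\to(j,i)$ (or a variant), and follow each obstruction to a cycle passing through $u_j$. Since $u_j$ has degree $3$, say with colored edges $\{j,p,q\}$ at $u_j$, I would then try to recolor $uu_j$ itself — set $S=(C\setminus C(u))\setminus C(u_k)$ for an appropriate $k$ depending on whether $c(u_jx)$ for the third neighbor $x$ of $u_j$ lies in $S_2\cup S_3$ — exactly mirroring the argument already used in the ``other case'' just above Proposition~\ref{prop3111}. The point is that each such recoloring attempt either succeeds (and then $G$ has an acyclic $(\Delta+2)$-coloring, discharging part (2)) or forces a rigid structural constraint; accumulating these constraints should pin us down to Condition (C1).

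For part (1), where (C1) is assumed outright, the plan is constructive. Under (C1) we have $\Delta=6$, $|C|=8$, $d(u)=6$, and the very tight color pattern $C(u_4)=\{4,5,\alpha_6\}$, $C(u_5)=\{5,\alpha_7,\alpha_8\}$ with $\{\alpha_6,\alpha_7,\alpha_8\}=\{6,7,8\}$; also $C(u)=\{1,2,3,4,5\}$ and $C(w)\supseteq\{c(vw),4,5,6,7,8\}$. I would first recolor $uu_5$ with a color not in $C(u_5)\cup\{4\}$ — there is slack because $|C(u_5)|=3<8$ — choosing it to avoid creating a new bichromatic cycle through $u_5$ (possible since $u_5$ is a $3$-vertex and Lemma~\ref{lemma09} controls which color is dangerous). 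This frees color $5$ at $u$. Then I would recolor $uu_4$ or $vw$ to free up a color common to $C(u)\cap C(v)$, and finally color $uv$ from $C\setminus(C(u)\cup C(v))$, verifying acyclicity by checking that each of the boundedly many two-color subgraphs touched remains a forest — the bichromatic-cycle check reduces to paths through the low-degree vertices $u_4,u_5,v$ and is finite.

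I expect the main obstacle to be part (2): showing that the failure of every recoloring attempt really does force (C1) and not some other configuration. The difficulty is bookkeeping — one must track, for the chosen $j\in S_3\setminus B^*$ and for the second $3$-neighbor (if $n_3(u)>1$), exactly which colors are forced into $C(u_4)$, $C(u_5)$, $C(w)$, and $C(u_1),C(u_2),C(u_3)$, and then argue that $n_3(u)=2$, $d(u)=\Delta=6$, $S_3=\{4,5\}$ are the only possibilities consistent with Proposition~\ref{prop3111} and the non-existence of configurations other than $(A_{2.2})$. In particular one must rule out $n_3(u)=3$ and $d(u)\ge 7$ by counting: $|C\setminus C(u)|\ge \Delta+2-(d(u)-1)$ must fit inside $B^*\subseteq C(u_i)\setminus\{i\}$ for each $i$, and the deficit left for $S_3$ after subtracting $(C\setminus C(u))\cup S_2$ is so small that only the $\Delta=6$, $n_3(u)=2$ case survives — with the residual color freedom being exactly the permutation $\{\alpha_6,\alpha_7,\alpha_8\}$ of $\{6,7,8\}$. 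Writing this cleanly, rather than the underlying idea, is where the work lies.
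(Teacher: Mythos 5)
Your proposal is a plan rather than a proof, and the two places where it stops short are exactly where the lemma's content lies. For part (2), you defer the entire case analysis ("accumulating these constraints should pin us down to (C1)"), but that analysis is the lemma: the paper fixes $4\in S_3\setminus B_i$, writes $C(u_4)=\{4,a,b\}$, and runs an explicit case distinction on $b$ — $b\in\{1,2,3\}\cup S_2$, $b\in S_3$ (the $n_3(u)=3$ case), or $b\in C\setminus C(u)$ — giving a concrete recoloring in each branch, and only in the last branch does the failure of all recolorings force $d(u)=\Delta$, $C(u_4)=\{4,5,\Delta\}$, $C(u_5)=\{5,\Delta+1,\Delta+2\}$; when $\Delta>6$ it still produces a coloring by recoloring $uu_6$ using a color of $(S_2\cup S_3)\setminus\{4,5\}$, so that only $\Delta=6$, i.e.\ (C1), survives. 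Your proposed substitute — "rule out $n_3(u)=3$ and $d(u)\ge 7$ by counting" — cannot work as stated: those configurations are not impossible, and the lemma's conclusion in those cases is that a coloring exists, which must be exhibited; a cardinality argument about $B^*\subseteq C(u_i)\setminus\{i\}$ produces no coloring (indeed $C(u_1),C(u_2),C(u_3)$ have unbounded degree, so there is no numerical contradiction to be had).

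For part (1) your sketch also misses the hard sub-case. Recoloring $uu_5$ and then setting $uv\to 5$ only works if $5\notin B_i$ for some admissible $i$; that is precisely the easy half of the paper's argument. If $5\in B_i$ for every $i\in\{1,2,3\}$, the $(i,5)_{(u_i,w)}$-paths are untouched by recoloring $uu_5$ (they cannot pass through $u_5$, since $C(u_5)=\{5,\alpha_7,\alpha_8\}$ contains no color of $\{1,2,3\}$), and $vw$ cannot leave $\{1,2,3\}$ because $\{4,\ldots,\Delta+2\}\subseteq C(w)$; so "free color $5$ at $u$ and color $uv$" creates an $(i,5)$-bichromatic cycle through $u,v,w$. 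The paper resolves this case structurally: $\{5,6,7,8\}\subseteq B^*$ forces every relevant neighbor of $w$ to be incident with colors $1,2,3$, hence $w$ is not adjacent to $u$, $u_4$, $u_5$, and then either Lemma~\ref{lemma10} (the edge-merging induction) applies, or $wu_1\in E(G)$ forces $C(u_1)=C\setminus\{2,3\}$ and the recoloring $(wu_1,vw,uv)\to(2,4,\alpha_7)$ finishes. Your plan invokes neither Lemma~\ref{lemma10} nor any analysis of $N(w)$, so the main obstruction in (1) is left unaddressed.
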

\begin{proof}
We first show that under Condition~(C1), $G$ admits an acyclic edge $(\Delta+2)$-coloring.%

In (C1), we have $S_2 = \emptyset$ due to $n_2(u) + n_3(u) = d(u) - 3$ and $H = G - u v$.

If the color $5 \not\in B_i$ for some $i \in \{1, 2, 3\}$, then $(u u_5, u v, v w) \to (\alpha_6, 5, i)$ gives an acyclic edge $(\Delta + 2)$-coloring for $G$.

In the other case, $5 \in B_i$ for every $i\in \{1, 2, 3\}$, that is, $|S_3 \cap B_i| \ge 1$.
By the definition of $B^*$, we have $|S_3 \cap B^*| \ge 1$ too.
It follows from Proposition~\ref{prop3111} that $\{5, 6, 7, 8\} \subseteq B^*$.
Since $w$ does not collide into any of $u_i$, $i = 1, 2, 3$, neither $u_4$ or $u_5$ (each has degree $3$), we conclude that $u w \not\in E(G)$;
and since $\{5, 6, 7, 8\} \subseteq B^*$, $w u_4, w u_5\not\in E(G)$.
If $w u_j\not\in E(G)$ for every $j\in \{1, 2, 3\}$, then by Lemma~\ref{lemma10} $G$ admits an acyclic edge $(\Delta + 2)$-coloring.
Otherwise we assume w.l.o.g. that $wu_1\in E(G)$;
then $c(wu_1) = 4$, $C(u_1) = C \setminus \{2, 3\}$, and thus the $(2, j)_{(u_2, w)}$-path, for any $j \in \{5, 6, 7, 8\}$, does not pass through $u_1$.
$(w u_1, v w, u v)\to (2, 4, \alpha_7)$ gives an acyclic edge $(\Delta + 2)$-coloring for $G$.

To prove the second item of the lemma, from Proposition~\ref{prop3111} we assume w.l.o.g. that the color $4 \in S_3 \setminus B_i$ for some $i\in \{1, 2, 3\}$.
Let $C(u_4)= \{4, a, b\}$.
Note that $C \setminus C(u) = \{d(u), d(u) + 1, \ldots, \Delta + 2\}$ contains at least three colors.
In all the following acyclic edge $(\Delta + 2)$-coloring schemes for $G$, $(v w, u v) \to (i, 4)$,
together with changing the colors of some other edges to be specified.

\begin{itemize}
\parskip=0pt
\item
	If $\{a, b\}\cap (S_2\cup S_3) = \emptyset$,
	then $u u_4 \to (C\setminus C(u))\setminus C(u_4)$ which, together with $(v w, u v) \to (i, 4)$ above,
	gives an acyclic edge $(\Delta + 2)$-coloring for $G$.
\item
	If $\{a, b\}\cap (S_2\cup S_3)= \{c(uu_s)\}\subseteq S_2$,
	then $u u_4 \to (C\setminus C(u))\setminus (C(u_4)\cup C(u_s))$ gives an acyclic edge $(\Delta + 2)$-coloring for $G$.
\item
	If $\{a, b\}\cap (S_2\cup S_3) =\{c(u u_s), c(u u_t)\}\subseteq S_2$,
	then $u u_4 \to (C\setminus C(u))\setminus (C(u_s)\cup C(u_t))$ gives an acyclic edge $(\Delta + 2)$-coloring for $G$.
\end{itemize}
In the other case, we have $\{a, b\} \cap S_3 \ne \emptyset$.
Since $4 \in S_3 \setminus B_i$, $|S_3| \ge 2$;
and we assume that $a = 5 \in S_3$, that is, $C(u_4) = \{4, 5, b\}$.
We distinguish the following three cases according to $b$.

{Case 1.} $b \in \{1, 2, 3\} \cup S_2$.
If $b \in \{1, 2, 3\}$, then $u u_4 \to (C\setminus C(u))\setminus C(u_5)$ gives an acyclic edge $(\Delta + 2)$-coloring for $G$.

When $b = j \in S_2$, we have $C(u_4) \subseteq C(u)$.
If there is a color $\alpha \in (C\setminus C(u)) \setminus (C(u_5)\cup C(u_j))$,
then $u u_4 \to \alpha$ gives an acyclic edge $(\Delta + 2)$-coloring for $G$.
Otherwise, we have $C\setminus C(u) \subseteq C(u_5)\cup C(u_j)$.
It follows that $d(u) = \Delta$, $C(u_4)= \{4, 5, j\}$, $C(u_5) = \{5, \Delta, \Delta+ 1\}$, and $C(u_j)= \{j, \Delta+ 2\}$.
Then $(u u_j, u u_4) \to (\Delta, \Delta + 2)$ gives an acyclic edge $(\Delta + 2)$-coloring for $G$.

{Case 2.} $b = 6 \in S_3$.
In this case we have $C(u_4) \subseteq C(u)$ too and $n_3(u)= 3$.
If there is a color $\alpha \in C\setminus C(u)$ such that $H$ does not contain any $(5, \alpha)_{(u_4, u_5)}$-path or any $(6, \alpha)_{(u_4, u_6)}$-path,
then $u u_4 \to \alpha$ gives an acyclic edge $(\Delta + 2)$-coloring for $G$.
Otherwise, for every color $j \in C \setminus C(u)$, $G$ contains a $(5, j)_{(u_4, u_5)}$-path or a $(6, j)_{(u_4, u_6)}$-path.
Consider the three colors of $\{\Delta, \Delta + 1, \Delta + 2\} \subseteq C \setminus C(u)$:
we assume w.l.o.g. that $C(u_5) = \{5, \Delta, \Delta + 1\}$, $C(u_6)= \{6, \Delta + 2, s\}$,
and there are a $(5, \Delta)_{(u_4, u_5)}$-path, a $(5, \Delta + 1)_{(u_4, u_5)}$-path, and a $(6, \Delta + 2)_{(u_4, u_6)}$-path in $H$.
If $s = c(u u_j)\in S_2$, then $u u_6 \to \{\Delta, \Delta + 1\}\setminus C(u_j)$;
otherwise, $u u_6 \to \{\Delta, \Delta + 1\}\setminus C(u_6)$.
Then, $u u_4 \to \Delta + 2$ gives an acyclic edge $(\Delta + 2)$-coloring for $G$.

{Case 3.} $b \in C \setminus C(u)$.
If there is a color $\alpha \in (C\setminus C(u))\setminus C(u_4)$ such that $H$ contains no $(5, \alpha)_{(u_4, u_5)}$-path,
then $u u_4\to \alpha$ gives an acyclic edge $(\Delta + 2)$-coloring for $G$.
Otherwise, for every $j\in (C\setminus C(u))\setminus C(u_4)$, $H$ contains a $(5, j)_{(u_4, u_5)}$-path.
It follows from $d(u_5) = 3$ that $d(u) = \Delta$, $C \setminus C(u) = \{\Delta, \Delta + 1, \Delta + 2\}$,
and we assume w.l.o.g. that $C(u_4)= \{4, 5, \Delta\}$, $C(u_5)= \{5, \Delta + 1, \Delta + 2\}$,
and there are a $(5, \Delta + 1)_{(u_4, u_5)}$-path and a $(5, \Delta + 2)_{(u_4, u_5)}$-path in $H$.
One sees that if $\Delta = 6$, then we have just arrived at Condition~(C1).

If $\Delta > 6$, then $(S_2\cup S_3)\setminus \{4, 5\}\ne\emptyset$.
When $n_3(u)= 3$ with $C(u_6) = \{6, s, t\}$,
if $s \in S_2$ and $t \in S_2$,
then $u u_6 \to \{\Delta, \Delta + 1, \Delta + 2\}\setminus (C(u_s) \cup C(u_t))$;
if $s \in S_2$ but $t \notin S_2$,
then $u u_6 \to \{\Delta, \Delta + 1, \Delta + 2\}\setminus (C(u_6) \cup C(u_s))$;
if $s \notin S_2$ and $t \notin S_2$,
then $u u_6 \to \{\Delta, \Delta + 1, \Delta + 2\}\setminus C(u_6)$.
When $n_3(u)= 2$ with $C(u_6)= \{6, s\}$,
if $s \in S_2$, then $u u_6 \to \{\Delta, \Delta + 1, \Delta + 2\}\setminus C(u_s)$;
if $s \not\in S_2$, then $u u_6 \to \{\Delta, \Delta + 1 , \Delta + 2\}\setminus C(u_6)$.
This re-coloring of $u u_6$, together with $(u u_4, v w, u v) \to (6, i, 4)$, gives an acyclic edge $(\Delta + 2)$-coloring for $G$.
\end{proof}

Lemma~\ref{lemma12} says that when $S_3 \setminus B^* \ne \emptyset$,
an acyclic edge $(\Delta + 2)$-coloring for $G$ can be obtained in $O(1)$ time from the inductively hypothesized acyclic edge $(\Delta + 2)$-coloring for $H = G - uv$,
or otherwise Condition~(C1) holds;
and if Condition~(C1) holds, then again an acyclic edge $(\Delta + 2)$-coloring for $G$ can be obtained in $O(1)$ time.
This finishes the inductive step for the case where $G$ contains the configuration ($A_{2.2}$).


\subsubsection{Configuration ($A_{2.3}$)}
In this configuration, $n_2(u)= d(u)- 4$, $d(u_j)= 2$ for $5 \le j \le d(u)- 1$, $n_2(u_4)\in \{d(u_4)- 4, d(u_4)- 5\}$, and $u_3u_4\in E(G)$.

From $n_2(u)= d(u) - 4$ and ($A_{2.1}$)--($A_{2.2}$), we assume $d(u_i) \ge 4$ for $1 \le i \le 4$.
Let $c(u u_i) = i$ for each $1 \le i \le d(u) - 1$, in the acyclic edge $(\Delta + 2)$-coloring $c(\cdot)$ for $H = G - uv$.
Since $u_3 u_4 \in E(G)$ and $G$ contains no intersecting triangles, we have $w, x_j\not\in N(u)$,
where $x_j$ is the other neighbor of $u_j$ other than $u$, for $j = 5, 6, \ldots, d(u) - 1$.

We prove the inductive step by first treating $u_1, u_2, u_3, u_4$ indistinguishably to obtain an acyclic edge $(\Delta + 2)$-coloring for $G$,
until impossible, by then (after Lemma~\ref{lemma13}) to distinguish $u_4$ from the other three vertices $u_1, u_2, u_3$.

Note that $S_2 = \{5, 6, \ldots, d(u)-1\}$ and $S_3 = \emptyset$.
From Proposition~\ref{prop3101} we have $(C \setminus C(u)) \cup S_2 = C \setminus \{1, 2, 3, 4\} \subseteq C(w) \setminus \{c(vw)\}$.
Therefore, $c(vw) \in \{1, 2, 3, 4\}$, $d(w) \ge \Delta - 1$, and $1 \le |C(w) \cap \{1, 2, 3, 4\}| \le 2$.
Since $vw$ can be colored by an arbitrary color $i \in C \setminus (C(w) \setminus c(vw))$,
we conclude that it can be colored by at least three of the four colors $\{1, 2, 3, 4\}$.
Then, for each such color $i$, by Proposition~\ref{prop3001} we have $C \setminus C(u) \subseteq B_i \subseteq C(u_i)$.

Let us tentatively un-color the edge $vw$;
later we will re-color it together with the edge $uv$ to present an acyclic edge $(\Delta + 2)$-coloring for $G$, in $O(1)$ time.

Consider first $d(w) = \Delta - 1$,
here we have $C(w) = C \setminus \{1, 2, 3, 4\}$ and $C \setminus C(u) \subseteq B_i \subseteq C(u_i)$ for every color $i \in \{1, 2, 3, 4\}$.

When there exists a color $j = c(u u_j) \in S_2 \setminus B_i$ for some $i \in \{1, 2, 3, 4\}$,
if $c(u_j x_j)\not\in S_2$, then let $S = (C\setminus C(u))\setminus C(u_j)$;
if $c(u_j x_j)= c(uu_k)\in S_2$, then let $S= (C\setminus C(u))\setminus C(u_k)$.
It follows that $|S| \ge 2$, and $u u_j \to S$ and $(u v, v w) \to (j, i)$ give rise to an acyclic edge $(\Delta + 2)$-coloring for $G$, and we are done.%

In the other case where $S_2 \setminus B_i = \emptyset$,
we have a stronger conclusion that $(C\setminus \{1, 2, 3, 4\}) \subseteq B_i\subseteq C(u_i)$ for every color $i\in \{1, 2, 3, 4\}$,
and consequently $\{1, 2, 3, 4\} \subseteq C(w_j)$ for every $w_j \in N(w)\setminus \{v\}$
(a similar illustration as shown in Figure~\ref{fig03} with $y$ now incident with at least four edges colored $1, 2, 3, 4$, respectively).
Since $|\{1, 2, 3, 4\} \cap C(u_j)| \le 2$ for every $j = 1, 2, \ldots, d(u)-1$, $u_j$ cannot be adjacent to $w$, that is, $w u_j\not\in E(G)$.
Together with $u w, u x_j\not\in E(G)$ for for every $j = 5, 6, \ldots, d(u)-1$,
it follows from Lemma~\ref{lemma10} that $G$ admits an acyclic edge $(\Delta + 2)$-coloring, and we are done.%

Consider next $d(w)= \Delta$, here we assume w.l.o.g. that $C(w)= C\setminus \{1, 2, 3\}$,
and its $\Delta - 1$ neighbors (other than $v$) are $w_4, w_5, \ldots, w_{\Delta+2}$ with $c(w w_j)= j$, $j = 4, 5, \ldots, \Delta+2$;
we have $C \setminus C(u) \subseteq B_i \subseteq C(u_i)$ for every color $i \in \{1, 2, 3\}$.
Let $B^*= B_1\cap B_2\cap B_3$;
then for each color $j = c(w w_j) \in B^*$, $w_j$ is incident with at least three edges colored $1, 2, 3$, respectively, and $d(w_j)\ge 4$
(a similar illustration as shown in Figure~\ref{fig03}).

\begin{lemma}
\label{lemma13}
When $d(w)= \Delta$ and assume w.l.o.g. that $C(w)= \{4, 5, \ldots, \Delta+2\}$,
either $G$ admits an acyclic edge $(\Delta + 2)$-coloring, or the following hold:
\begin{itemize}
\parskip=-4pt
\item[{(1)}]
	For every color $j= c(u u_j)\in S_2$, if $c(u_j x_j)\ne 4$ or $x_j\ne w$, then $j\in B^*$.
\item[{(2)}]
	$S_2\subseteq B^*$;
	or there exists exactly a color $j = c(u u_j)\in S_2$ such that $j\notin B^*$.
	In the latter case, $c(u_j x_j)= 4$, $x_j= w$, $C\setminus \{1, 2, 3, 4, j\}\subseteq C(u_4)$,
	and for every $k \in C\setminus \{1, 2, 3, 4, j\}$, there is a $(4, k)_{(u_4, w)}$-path, $\{1, 2, 3, 4\}\subseteq C(w_k)$, and $d(w_k)\ge 5$.
\item[{(3)}]
	$4\in B^*$;
	or $(((C\setminus C(u))\cup S_2)\cap B^*)\subseteq C(u_4)$.
\end{itemize}
\end{lemma}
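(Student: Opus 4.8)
The plan is to argue in this paper's established contrapositive style. We keep the inductively hypothesized acyclic edge $(\Delta+2)$-coloring $c(\cdot)$ of $H=G-uv$ with $c(uu_i)=i$ for each $i$; recall that $vw$ is currently uncolored, that $C(w)=\{4,5,\dots,\Delta+2\}$ with $c(ww_k)=k$, and that $C\setminus C(u)\subseteq B_i\subseteq C(u_i)\cap C(w)$ for each $i\in\{1,2,3\}$, so $C\setminus C(u)\subseteq B^*$. For each conclusion we exhibit a family of recoloring moves of the shape ``recolor one edge $uu_\bullet$ at $u$ (chaining in a recolor of a second edge $uu_k$ when the freed or target colour lies in $S_2$), then set $(uv,vw)\to(\cdot,\cdot)$'', and show that if no move in the family produces an acyclic edge $(\Delta+2)$-coloring of $G$, then the stated structure holds. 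Two facts are used repeatedly: the target colour sets are nonempty because $|C\setminus C(u)|\ge 3$ while a $2$-vertex $u_k$ has $|C(u_k)|=2$; and since $v$ has degree $2$, every bichromatic cycle a move creates either runs through both $uv$ and $vw$ --- hence corresponds to an $(i,j)_{(u_i,w)}$-path in $H$, i.e.\ a membership in $B_i$ --- or runs through the single recoloured edge at $u$.

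For (1), take $j=c(uu_j)\in S_2$ and suppose $j\notin B_i$ for some $i\in\{1,2,3\}$. Recolour $uu_j$ to a colour of $S=(C\setminus C(u))\setminus C(u_j)$ (resp.\ $S=(C\setminus C(u))\setminus C(u_k)$ when $c(u_jx_j)=c(uu_k)\in S_2$) and set $(uv,vw)\to(j,i)$. The $uv$-obstruction is absent because $j\notin B_i$ (recolouring $uu_j$ only destroys $(i,j)_{(u_i,w)}$-paths, never creates one), so the only possible new cycle is an $(\alpha,c(u_jx_j))$-cycle $u\,u_j\,x_j\cdots u$, which the choice of $S$ kills \emph{unless} $c(u_jx_j)=4$ and $x_j=w$, i.e.\ $u_j=w_4$: there the target $\alpha$ is forced into $C\setminus C(u)\subseteq C(w)$, so $ww_\alpha$ exists and an $(\alpha,4)$-cycle through $u,u_j,w,u_4$ may survive --- which is exactly why this subcase is excluded. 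Hence for every $j\in S_2$ outside this subcase the move succeeds, contradicting our standing assumption, so $j\in B^*$.

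For (2), conclusion (1) shows the only $j\in S_2$ that can fail $j\in B^*$ has $u_j=w_4$; as $w_4$ is the unique neighbour of $w$ carrying colour $4$, such $j=c(uw_4)$ is unique, giving the dichotomy. In the exceptional case we sharpen the move: fix $i\in\{1,2,3\}$ with $j\notin B_i$, and for each $k\in C\setminus\{1,2,3,4,j\}$ try $uu_j\to k$ (chaining a recolour of $uu_k$ when $k\in S_2$) followed by $(uv,vw)\to(j,i)$. The $uv$-obstruction is again gone, so the only obstruction is a $(k,4)$-cycle $u\,u_j\,w\,w_k\cdots u_4\,u$, which exists exactly when $k\in C(u_4)$ and $H$ contains a $(4,k)_{(u_4,w)}$-path whose penultimate vertex is $w_k$. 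Since every such move must fail, each $k\in C\setminus\{1,2,3,4,j\}$ lies in $C(u_4)$ and admits such a path, and then $w_k$ carries colour $4$ (from that path), colour $k=c(ww_k)$, and colours $1,2,3$ (as $k\in B^*$, using $C\setminus C(u)\subseteq B^*$ together with (1)), whence $\{1,2,3,4\}\subseteq C(w_k)$ and $d(w_k)\ge 5$.

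For (3), suppose $4\notin B^*$ and fix $i\in\{1,2,3\}$ with $4\notin B_i$. Since colouring $uv\to 4$ requires first freeing colour $4$ at $u$, the move is to recolour $uu_4\to\alpha$ (chaining a recolour of $uu_\alpha$ when $\alpha\in S_2$) and set $(uv,vw)\to(4,i)$, run over targets $\alpha\in\big((C\setminus C(u))\cup S_2\big)\cap B^*$ with $\alpha\notin C(u_4)$: the choice $4\notin B_i$ kills the $uv$-obstruction, and the $(1,\alpha),(2,\alpha),(3,\alpha)$-paths guaranteed by $\alpha\in B^*$ are what one uses to handle (or re-route around) the residual cycle through the recoloured edge $uu_4$; since by assumption no such move succeeds, $\alpha\in C(u_4)$ for every such $\alpha$, which is the assertion. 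Throughout, the step I expect to be the main obstacle is the cycle-bookkeeping under simultaneous recolouring --- verifying that the \emph{only} bichromatic cycles a move can create are the one through $uv$ (governed by $B_i$) and the short cycle through the recoloured edge at $u$ (governed by a two-colour path returning to $u$ through a designated neighbour) --- and isolating the configuration $x_j=w$, $c(u_jx_j)=4$ (and the analogous interaction around $w$ in (3)) as the exact place where that short cycle cannot be dodged; the chained recolourings for $S_2$-targets and the nonemptiness of the sets $S$ are routine-but-fiddly case work.
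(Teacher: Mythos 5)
Your overall strategy (exhibit the paper's standard one-edge recolouring at $u$ plus $(uv,vw)\to(\cdot,\cdot)$ and read the forced structure off the failed moves) is the right one, and your treatment of (2) is essentially a sound variant of the paper's (you recolour $uu_j\to k$ and keep $(uv,vw)=(j,i)$, the paper instead recolours $u_jw\to i$ and plays the pair $(4,k)$ on $(vw,uv)$; both hinge on the forced $(4,k)_{(u_4,w)}$-paths). But (1) has a genuine gap, and it is exactly the case the statement of (1) must cover. You claim that after $uu_j\to\alpha\in S$ and $(uv,vw)\to(j,i)$ the only surviving obstruction occurs when $c(u_jx_j)=4$ \emph{and} $x_j=w$, because ``the choice of $S$ kills'' the $(\alpha,c(u_jx_j))$-cycle otherwise. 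When $c(u_jx_j)=4$ but $x_j\ne w$ this is false: $\alpha\notin C(u_j)$ only forces $\alpha\ne 4$, and nothing excludes a $(4,\alpha)$-path from $u_4$ to $x_j$ in $H$, which closes the bichromatic cycle $u\,u_j\,x_j\cdots u_4\,u$. The escape that works for $c(u_jx_j)\in\{1,2,3\}$ --- namely $\alpha\in C\setminus C(u)\subseteq B_{c(u_jx_j)}$, so by Lemma~\ref{lemma09} the maximal $(c(u_jx_j),\alpha)$-path from $u$ ends at $w$ and cannot accommodate $x_j$ --- is unavailable for colour $4$, since $4\in C(w)$ and no containment $C\setminus C(u)\subseteq B_4$ holds. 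The paper closes precisely this case by a preliminary recolouring of the edge $u_jx_j$ itself, to a colour in $(\{1,2,3\}\cup\{5,\dots,\Delta+2\})\setminus(\{j\}\cup C(x_j))$, before running your move; without that step you have not proved $j\in B^*$ when $c(u_jx_j)=4$ and $x_j\ne w$.

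Part (3) has a second, larger gap of the same kind. After $uu_4\to\alpha$ with $\alpha\notin C(u_4)$ and $(uv,vw)\to(4,i)$, the $(1,\alpha),(2,\alpha),(3,\alpha)$-paths you invoke only control the $(\alpha,m)$-cycles with $m\in\{1,2,3\}$ (there Lemma~\ref{lemma09} plus $\alpha\notin C(u_4)$ does block them). The dangerous residual cycles are the $(\alpha,\ell)$-cycles through a $2$-neighbour $u_\ell$ with $\ell\in S_2\cap C(u_4)$ and $c(u_\ell x_\ell)=\alpha$; their existence requires $\ell\in C(u_4)$ but \emph{not} $\alpha\in C(u_4)$, so from ``every such move fails'' you cannot conclude $\alpha\in C(u_4)$, which is the whole assertion. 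The paper needs real extra work here: Lemma~\ref{lemma11} to reduce to a single offending colour $\ell$, and then a further recolouring of $u_\ell x_\ell$ (to a colour $p\notin C(x_\ell)\cup\{4\}$) together with $uu_\ell$ to destroy that cycle. Your proof has no counterpart to this, so (3) is not established; and the chained case $k\in S_2$ of your argument for (2) inherits the same colour-$4$ issue as (1).
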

\begin{proof}
To prove (1), assume to the contrary that $j\not\in B_i$ for some $i\in \{1, 2, 3\}$.
Firstly, if $c(u_j x_j)= 4$ and $x_j\ne w$,
then $( \{1, 2, 3\} \cup \{5, 6, \ldots, \Delta+2\} ) \setminus (\{j\} \cup C(x_j)) \ne \emptyset$.
We let $u_j x_j\to ( \{1, 2, 3\} \cup \{5, 6, \ldots, \Delta+2\} ) \setminus (\{j\} \cup C(x_j))$;
thus, it suffices to consider $c(u_j x_j) \ne 4$.

If $c(u_j x_j)\in \{1, 2, 3\}\cup (C\setminus C(u))$, then let $S= (C\setminus C(u))\setminus C(u_j)$;
if $c(u_j x_j) = k = c(u u_k)\in S_2$, then let $S= (C\setminus C(u))\setminus C(u_k)$.
It follows that $|S| \ge 2$;
$u u_j\to S$ and $(u v, v w)\to (j, i)$ give rise to an acyclic edge $(\Delta + 2)$-coloring for $G$, and we are done.%

To prove (2), we see from (1) that if there exists a color $j = c(u u_j)\in S_2 \setminus B^*$, then $c(u_j x_j)= 4$ and $x_j = w$.
Consequently there can be at most one such color.
Recalling that $C \setminus C(u) \subseteq B^*$,
we have $C\setminus \{1, 2, 3, 4, j\} = (C \setminus C(u)) \cup (S_2 \setminus \{j\}) \subseteq B^*$.

Denote $C^* = C\setminus \{1, 2, 3, 4, j\}$.
Assume that $j \not\in B_i$ for some $i\in \{1, 2, 3\}$.
Since we can recolor $u_j w\to i$, followed by a similar proof of (1),
if $G$ contains no $(4, k)_{(u_4, w)}$-path for some $k\in C^*$, then an acyclic edge $(\Delta + 2)$-coloring for $G$ can be achieved and we are done.%
\footnote{In the similar proof, the pair $(4, k)$ replaces $(i, j)$.} 
This proves that $G$ contains a $(4, k)_{(u_4, w)}$-path for every $k\in C^*$;
and thus $C^*\subseteq C(u_4)$, $\{1, 2, 3, 4\}\subseteq C(w_k)$ and $d(w_k) \ge 5$ for every $k\in C^*$.

To prove (3), one sees that in the latter case of (2), $(((C\setminus C(u))\cup S_2)\cap B^*)\subseteq (C \setminus \{1,2,3,4,j\}) \subseteq C(u_4)$;
therefore, it suffices to consider the former case of (2), that is, $S_2 \subseteq B^*$,
and consequently $((C\setminus C(u))\cup S_2)\cap B^* = (C\setminus C(u))\cup S_2 = C \setminus \{1,2,3,4\}$.

Assume that $4\not\in B_i$ for some $i\in \{1, 2, 3\}$, and there exists a $k\in ((C\setminus C(u))\cup S_2)\setminus C(u_4)$.
If $k= c(u u_k)\in S_2$, then letting $u u_k\to (C\setminus C(u))\setminus C(u_k)$ when $c(u_k x_k)\not\in S_2$,
or letting $u u_k\to (C\setminus C(u))\setminus C(u_\ell)$ when $c(u_k x_k)= c(u u_\ell)\in S_2$.
We thus consider below only the case where $k\in (C \setminus C(u)) \setminus C(u_4)$.
It follows that there is no $(k, i)_{(u_4, u_i)}$-path, since $k \notin C(u_4)$ and the $(k, i)_{(u, v)}$-path ends at $w$.
We re-color $(u u_4, u v, v w)\to (k, 4, i)$.

If there is no $(k, \ell)_{(u_4, u_\ell)}$-cycle for every $\ell \in S_2$, then this is an acyclic edge $(\Delta + 2)$-coloring for $G$, and we are done.%
%
In the other case, the re-coloring produces a $(k, \ell)_{(u_4, u_\ell)}$-cycle for some $\ell\in S_2$,
and by Lemma~\ref{lemma11} we can assume that there is exactly one such color $\ell$.
Subsequently, $c(u_\ell x_\ell) = k$ and $\ell \in C(x_\ell) \cap C(u_4)$.
We note that $x_\ell \ne w$ since $k \in B^*$.
Let $p$ be a color in $(C \setminus \{4\}) \setminus C(x_\ell)$, which is non-empty.
If $p\in (C\setminus C(u))\cup \{1, 2, 3\}$, then let $S= (C\setminus C(u))\setminus \{k, p\}$;
if $p\in S_2$, then let $S= (C\setminus C(u))\setminus (\{k\} \cup C(u_p))$.
It follows that $S\ne \emptyset$;
and the re-coloring $uu_\ell \to S$ and $u_\ell x_\ell \to p$ gives rise to an acyclic edge $(\Delta + 2)$-coloring for $G$, and we are done.%

This finishes the proof.
\end{proof}

The rest of the subsection deals with the un-settled cases described in Lemma~\ref{lemma13},
and we distinguish two cases on whether or not there is a color $j \in S_2 = \{5, 6, \ldots, d(u)-1\}$ such that $x_j = w$,
that is, the $2$-vertex $u_j$ is adjacent to $w$.

\paragraph{Case ($A_{2.3}$-1).} There exists a color $j \in S_2$ such that $x_j = w$ (i.e. $u_j \in N(w)$).

Since we do not differentiate $u_5, u_6, \ldots, u_{d(u)-1}$, we assume w.l.o.g. that $x_5 = w$.

From Lemma~\ref{lemma13}(2), we know that $|S_2 \setminus B^*| \le 1$ (or equivalently $|B^*| \ge \Delta - 3$);
recall that for each color $j = c(w w_j) \in B^*$, $d(w_j)\ge 4$ and $w_j$ is incident with at least three edges colored $1, 2, 3$, respectively.
At most two among $N(w) \setminus \{v\} = \{w_4, w_5, \ldots, \Delta + 2\}$ can have degree $2$.
Let $k = 1$ if there is another color $j \in S_2 \setminus \{5\}$ such that $x_j = w$, or $k = 0$ otherwise.
It follows that $\Delta \ge d(u)\ge 6 + k$.

{Case 1.1. $k= 1$.}
In this case, $\Delta \ge 7$.
By Lemma~\ref{lemma13}(2), it suffices to assume that $5 \notin B^*$;
that is, $w_4 = u_5$, $B^* = \{6, 7, \ldots, \Delta + 2\} \subseteq C(u_4)$, and $\{1, 2, 3, 4, j\}\subseteq C(w_j)$ for every $j\in B^*$.

Below we will take the edge between two of $\{u_1, u_2, u_3, u_4\}$ specified in ($A_{2.3}$) into consideration, say $u_{i_1} u_{i_2} \in E(G)$.
Denote the color-$j$ edge incident at $u_{i_1}$ ($u_{i_2}$, respectively) as $y_j u_{i_1}$ ($z_j u_{i_2}$, respectively), for every $j \in B^*$.
Then,
(i) $c(u_{i_1} u_{i_2})= \alpha\in \{1, 2, 3, 4, 5\}\setminus \{i_1, i_2\}$, or
(ii) $c(u_{i_1} u_{i_2})= \beta \in B^*$ and there is a color-$i_2$ (color-$i_1$, respectively) edge incident at $u_{i_1}$ ($u_{i_2}$, respectively),
denoted as $y_{i_2} u_{i_1}$ ($z_{i_1} u_{i_2}$, respectively).
We assume w.l.o.g. that $n_2(u_{i_1}) \in \{d(u_{i_1}) - 4, d(u_{i_1}) - 5\}$.
One sees that $d(u_{i_1})\ge \Delta- 1\ge 6$.
In (i) $|C(u_{i_1})\setminus (B^*\cup \{i_1, \alpha\})|\le 1$, and thus there is a color, say $j_0 \in B^*$, such that $d(y_{j_0})= 2$;
in (ii) $|C(u_{i_1})\setminus (B^*\cup \{i_1, i_2\})|\le 1$, and thus there is a color, say $j_0 \in (B^*\setminus \{\beta\})\cup \{i_2\}$, such that $d(y_{j_0})= 2$.
For every $y_j \in N(u_{i_1})$, if $d(y_j)= 2$ then let $y'_j\ne u_{i_1}$ denote its the other neighbor.

For (ii), if $d(y_{i_2})= 2$, then $c(y_{i_2} y'_{i_2})= \beta$.
Thus, $y_{i_2} u_{i_1}\to \{1, 2, 3, 4, 5\}\setminus (C(u_{i_1})\cup C(u_{i_2}))$ and $(u_5 w, u v, v w)\to (i_1, \beta, i_2)$
give rise to an acyclic edge $(\Delta + 2)$-coloring for $G$, and we are done.%

For (i) and (ii) where $j_0\in S_2$,
if $c(u_{j_0} x_{j_0})\not\in S_2$, then let $S = (C\setminus C(u))\setminus C(u_{j_0})$;
if $c(u_{j_0} x_{j_0})= c(u u_\ell)\in S_2$, then let $S = (C\setminus C(u))\setminus C(u_\ell)$.
It follows $S\ne \emptyset$ and letting $u u_{j_0}\to S$ reduces to the case where $j_0\in C\setminus C(u)$. }
Below we assume w.l.o.g. that $j_0 = \Delta$, i.e., $d(y_\Delta) = 2$.
It follows that $c(y_\Delta y'_\Delta)= i_1$.

If $5\not\in C(u_{i_1})$, then $(y_\Delta u_{i_1}, u v, v w)\to (5, \Delta, i_1)$ gives rise to an acyclic edge $(\Delta + 2)$-coloring for $G$, and we are done.%

If $5\in C(u_{i_1})$, then let $y_\Delta u_{i_1}\to \{1, 2, 3, 4\}\setminus (C(u_{i_1}) \cup \{i_2\})$ and $u u_{i_1}\to \Delta$.
Note that this does not create any $(\Delta, i)_{(u_{i_1}, u_i)}$-cycle for each $i\in \{1, 2, 3, 4\}$, due to $\Delta \in B^*$.

If there is no $(\Delta, j)_{(u_{i_1}, x_j)}$-cycle for every $j\in S_2$,
then $(u v, v w)\to (\Delta + 1, i_1)$ gives rise to an acyclic edge $(\Delta + 2)$-coloring for $G$, and we are done.%
%
Otherwise, by Lemma~\ref{lemma11} we can assume that there is exactly one color $j \in S_2$ such that there is a $(\Delta, j)_{(u_{i_1}, x_j)}$-cycle.
One sees that $c(u_j x_j)= \Delta$ and $j\in C(x_j)$, and thus $j \ne 5$ and $x_j \ne w$.
Note that there exists a color $p\in (C\setminus\{i_1\})\setminus C(x_j)$.
If $p\in C\setminus S_2$, then let $S = \{\Delta + 1, \Delta + 2\}\setminus \{p\}$;
if $p= c(u u_\ell)\in S_2$, then let $S = \{\Delta + 1, \Delta + 2\}\setminus C(u_\ell)$.
It follows that $S\ne \emptyset$;
$u_j x_j\to p$, $u u_j\to q\in S$, $uv\to \{\Delta + 1, \Delta + 2\}\setminus \{q\}$ and $vw\to i_1$
give rise to an acyclic edge $(\Delta + 2)$-coloring for $G$, and this finishes the discussion for Case 1.1.%

{Case 1.2.} $k= 0$.
In this case, $\Delta\ge 6$, and if $S_2 \setminus B^* \ne \emptyset$ then by Lemma~\ref{lemma13}(2) $S_2 \setminus B^* = \{5\}$.
That is, $\{6, 7, \ldots, \Delta+2\} \subseteq B^*$, and $G$ contains an $(i, j)_{(u_i, w_j)}$-path for every $i\in \{1, 2, 3\}$ and $j\in B^*$.

If there exists an $i\in \{1, 2, 3\}\setminus C(w_5)$,
then $(w w_5, u v, v w)\to (i, \Delta, 5)$ gives rise to an acyclic edge $(\Delta + 2)$-coloring for $G$, and we are done.%
Otherwise, $\{1, 2, 3\}\subset C(w_5)$ and thus $d(w_j) \ge 4$ for every $j\in \{5, 6, \ldots, \Delta+2\}$.
One sees that $w_5 \ne u_4$;
$w_5\notin \{u_1, u_2, u_3\}$ since $|\{1, 2, 3, 5\} \cup B^*| \ge \Delta+1$.
We conclude that $w_5 \notin N(u)$.

Since $4 \notin B^*$, by Lemma~\ref{lemma13}(3) we have $((C\setminus C(u))\cup S_2)\cap B^*)\subseteq C(u_4)$.
If $5\in B^*$, then $5\in C(u_4)$ too;
if $5\not\in B^*$, then by Lemma~\ref{lemma13}(2) $4\in C(w_j)$ for each $j\in \{6, 7, \ldots, \Delta + 2\}$.
Consequently, from $\{1, 2, 3, j\}\subseteq N(w_j)$ and $d(w_j) \le \Delta$ we conclude that $w_j \notin N(u)$ for each $j\in \{6, 7, \ldots, \Delta + 2\}$.

That is, $w_j \notin N(u)$ for each $j\in \{5, 6, \ldots, \Delta + 2\}$.
Next, let $G' = G - \{v, u_5\} + u w$, which is a planar graph without intersecting triangles, $\Delta(G') \le \Delta$, and contains three less edges than $G$.
By the inductive hypothesis, $G'$ has an acyclic edge $(\Delta + 2)$-coloring $c(\cdot)$ using the color set $C = \{1, 2, \ldots, \Delta + 2\}$,
in which we assume that $c(u w) = 1$, $2\not\in C(u)$, and $3\not\in C(w)$.
We keep the color for every edge of $E(G')$ except $u w\notin E(G)$,
and color the other edges of $E(G)$ as $(u v, v w, u u_5, u_5 w)\to (2, 1, 1, 3)$.
This does not introduce any bichromatic cycle, that is, the achieved edge $(\Delta + 2)$-coloring for $G$ is acyclic.%

\paragraph{Case ($A_{2.3}$-2).} $x_5, x_6, \ldots, x_{d(u)-1}$ and $w$ are distinct from each other.

By Lemma~\ref{lemma13}(2), $\{5, 6, \ldots, \Delta+2\} \subseteq B^*$ and $\{1, 2, 3, j\}\subseteq C(w_j)$ for every $j \in B^*$.
It follows that $w_j\not\in N(u)\setminus\{u_4\}$ for every $j \in B^*$.
Below we do not distinguish the colors of $\{5, 6, \ldots, \Delta+2\}$, and refer them to as $\alpha_5, \alpha_6, \ldots, \alpha_{(\Delta+2)}$, respectively.%
\footnote{That is, $(\alpha_5, \alpha_6, \ldots, \alpha_{(\Delta+2)})$ is an arbitrary permutation of $(5, 6, \ldots, \Delta+2)$.}
The edge between two of $\{u_1, u_2, u_3, u_4\}$ specified in ($A_{2.3}$) have two possibilities:
(Case 2.1) it is between two of $\{u_1, u_2, u_3\}$ and we assume w.l.o.g. the edge is $u_1 u_2$,
or (Case 2.2) it is between $u_4$ and one of $\{u_1, u_2, u_3\}$ and we assume w.l.o.g. the edge is $u_1 u_4$.

{Case 2.1.} $u_1 u_2\in E(G)$.
In this case, denote the color-$\alpha_j$ edge incident at $u_1$ ($u_2$, respectively) as $y_j u_1$ ($z_j u_2$, respectively),
for every $j \in \{6, 7, \ldots, \Delta+2\}$.
Similar to Case 1.1, either (i) $c(u_1 u_2)\in \{3, 4\}$ and $c(y_5 u_1) = c(z_5 u_2) = \alpha_5$,
or (ii) $c(u_1 u_2)= \alpha_5$, $c(y_5 u_1) = 2$, and $c(z_5 u_2) = 1$.
One sees that in either case both $u_1$ and $u_2$ have reached degree $\Delta \ge 6$ (by the non-existence of ($A_{1.1}$)) and thus $4 \notin B_i$ for $i = 1, 2$;
by Lemma~\ref{lemma13}(3), we have $B^* = \{5, 6, \ldots, \Delta+2\} \subseteq C(u_4)$.

It follows that $w_j \ne u_4$, implying $w_j \notin N(u)$ for every $j \in B^*$, and $w_4 \notin \{u_1, u_2, u_4\}$.
When $w_4 \ne u_3$,
$w u_j \notin E(G)$ for every $j = 1, 2, \ldots, d(u)-1$;
it follows from Lemma~\ref{lemma10} that $G$ admits an acyclic edge $(\Delta + 2)$-coloring, and we are done.%
%
When $w_4 = u_3$,
$C(u_3)= C\setminus \{1, 2\}$.

If there is no $(4, j)_{(u_4, w_j)}$-path for some $j\in B^*$,
then $(w u_3, uv, vw)\to (1, j, 4)$ gives rise to an acyclic edge $(\Delta + 2)$-coloring for $G$, and we are done.%
Otherwise, there is a $(4, j)_{(u_4, w_j)}$-path for every $j\in B^*$.
Note that the prior $u_1$ and $u_2$ are indistinguishable to us, and we assume w.l.o.g. below that $n_2(u_1) \in \{d(u_1)-4, d(u_1)-5\}$.
Since $n_2(u_1) \ge 1$, we assume that $d(y_{i_0})= 2$ for some $i_0$ and let $y'_{i_0} \ne u_1$ denote its the other neighbor.
In (i), $i_0\in B^*$, $c(y'_{i_0} y_{i_0})= 1$, and we let $u_1 y_{i_0}\to 2$.
In (ii), if $i_0= 5$, then $c(y'_{i_0} y_{i_0})= \alpha_5$, and we let $(u_1 y_{i_0}, v w)\to (3, 2)$;
if $i_0\in \{6, 7, \ldots, \Delta + 2\}$, then $c(y'_{i_0} y_{i_0})= 1$, and we let $u_1 y_{i_0}\to 3$.

If $\alpha_{i_0}\notin C(u)$, then $u v\to \alpha_{i_0}$ gives rise to an acyclic edge $(\Delta + 2)$-coloring for $G$, and we are done.%
Otherwise, $\alpha_{i_0}= c(uu_j)\in S_2$.
If $c(u_j x_j)\not\in S_2$, then let $S= \{\Delta, \Delta + 1, \Delta + 2\}\setminus C(u_j)$;
if $c(u_j x_j)= c(u u_\ell)\in S_2$, then let $S= \{\Delta, \Delta + 1, \Delta + 2\}\setminus C(u_\ell)$.
It follows that $S\ne \emptyset$,
and $u u_j\to S$ and $uv\to \alpha_{i_0}$ give rise to an acyclic edge $(\Delta + 2)$-coloring for $G$, and we are done.%

{Case 2.2.} $u_1 u_4\in E(G)$.
If there exists $i\in \{2, 3\}\setminus C(w_4)$, then we recolor $(ww_4, vw)\to (i, 4)$ to reduce the discussion to the above Case 2.1, and we are done.%

We thus consider below $\{2, 3, 4\}\subseteq C(w_4)$, which implies that $w_4 \notin N(u)$.

If $4 \notin B^*$, then by Lemma~\ref{lemma13}(3) $B^* = \{5, 6, \ldots, \Delta+2\} \subseteq C(u_4)$;
thus $w_j \notin N(u)$ for every $j \in \{5, 6, \ldots, \Delta+2\}$ either.
By Lemma~\ref{lemma10}, we obtain an acyclic edge $(\Delta + 2)$-coloring for $G$, and we are done.%

If $4 \in B^*$, that is, $B^* = \{4, 5, \ldots, \Delta+2\}$,
then we denote the color-$\alpha_j$ edge incident at $u_1$ as $y_j u_1$, for every $j \in \{6, 7, \ldots, \Delta+2\}$,
and then $c(u_1 u_4) = \alpha_5$, $c(z_5 u_4) = 1$, and $c(y_5 u_1) = 4$.
We further assume that $w_{\alpha_6}= u_4$ (otherwise by Lemma~\ref{lemma10} we are done%
), and $\{1, 2, 3, 4, \alpha_5, \alpha_6\}\subseteq C(u_4)$ with $c(z_6 u_4)= 2$, $c(z_7 u_4)= 3$.

First assume that there exists an $i_0$ such that $d(y_{i_0})= 2$ for some $i_0\in \{5, 6, \ldots, \Delta+2\}$.
If $i_0 = 5$, then $c(y'_5 y_5)= 1$;
the re-coloring $y_5 u_1\to 2$ makes $4\not\in B_1$, which together with $\{5, 6, \ldots, \Delta+2\}\setminus C(u_4) \ne \emptyset$
enables us to obtain an acyclic edge $(\Delta + 2)$-coloring of $G$ by Lemma~\ref{lemma13}(3).%
If $i_0\in \{6, 7, \ldots, \Delta+2\}$, then $c(y'_{i_0} y_{i_0})= 1$;
we let $(y_{i_0} u_1, u v, v w)\to (2, \alpha_{i_0}, 1)$.
If $\alpha_{i_0}\notin C(u)$, then this is an acyclic edge $(\Delta + 2)$-coloring for $G$, and we are done.%
Otherwise, $\alpha_{i_0}= c(u u_j)\in S_2$.
One sees that there exists a color $p\in (C\setminus \{4\})\setminus C(x_j)$.
If $p\not\in S_2$, then let $S = \{\Delta, \Delta + 1, \Delta + 2\}\setminus \{p\}$;
if $p = c(u u_\ell)\in S_2$, then let $S= \{\Delta, \Delta + 1, \Delta + 2\}\setminus C(u_\ell)$.
It follows that $S\ne \emptyset$;
$u_j x_j\to p$ and $u u_j\to S$ give rise to an acyclic edge $(\Delta+2)$-coloring for $G$, and we are done.%

Next assume that $d(y_j)\ge 3$ for every $j \in \{5, 6, \ldots, \Delta+2\}$.
It follows from $d(u_4) \ge 6$ that $n_2(u_4)\in \{d(u_4)- 4, d(u_4)- 5\}$ is at least $1$.
Since $\{1, \alpha_5, \alpha_6\}\subseteq C(z_5)$, at least one of $z_6$ and $z_7$ is a $2$-vertex, and we assume w.l.o.g. $d(z_6) = 2$.
It follows that $c(z_6 z'_6) = \alpha_6$.
If $\alpha_6\in C\setminus C(u)$, then let $S= C\setminus (C(u) \cup \{\alpha_6\})$;
if $\alpha_6= c(uu_j)\in S_2$, then let $S= C\setminus (C(u) \cup C(u_j))$.
Note that $S \ne \emptyset$.
Then $z_6 u_4\to C\setminus C(u_4)$, $(w u_4, v w)\to (2, \alpha_6)$ and $uv\to S$ give rise to an acyclic edge $(\Delta + 2)$-coloring for $G$, and we are done.%

This finishes the inductive step for the case where $G$ contains the configuration ($A_{2.3}$).


\subsubsection{Configuration ($A_{2.4}$)}
In this configuration, $n_2(u) = d(u)- 5$, $d(u_j) = 2$ for $6 \le j \le d(u)-1$, $d(u_5)= 3$, $d(u_4)= 4$, $d(u_3)\le 5$, and $u_3 u_4\in E(G)$.

From $n_2(u)= d(u) - 5$ and ($A_{2.1}$)--($A_{2.2}$), we have $n_3(u) = 1$ and thus $d(u_i) \ge 4$ for $1 \le i \le 3$.
Let $c(u u_i)= i$ for each $1\le i\le d(u)- 1$.
By Propositions~\ref{prop3001} and \ref{prop3101}, $\{5, 6, \ldots, \Delta+2\} = (C\setminus C(u)) \cup S_2\cup S_3\subseteq C(w)$ and $c(v w) \in \{1, 2, 3, 4\}$.
If $c(v w)\in \{1, 2\}$, then we recolor $vw\to \{3, 4\}\setminus C(w)$;
it suffices to deal with the following two subcases.

\paragraph{Case ($A_{2.4}$-1).} $c(vw)= 4$.

In this case, if $C(u) \cup C(u_4) \ne C$, then $uv\to C\setminus (C(u) \cup C(u_4))$ gives rise to an acyclic edge $(\Delta + 2)$-coloring for $G$, and we are done;%
we consider below $d(u) = \Delta$, $C(u_4) = \{4, \Delta, \Delta + 1, \Delta + 2\}$ with $c(u_3 u_4)= \Delta$,
and furthermore $4\in C(u_3)$, $C(w) = C\setminus \{1, 2\}$,
and $G$ contains an $(i, j)_{(u_i, w)}$-path for every pair $i\in \{1, 2\}$ and $j\in \{\Delta, \Delta + 1, \Delta + 2\}$.

Let $C(u_5)= \{5, a, b\}$.
When $3\not\in C(u_5)$, we can set up a color set $S$ as follows:
If $\{a, b\}\cap S_2 = \emptyset$, then let $S = \{\Delta, \Delta + 1, \Delta + 2\}\setminus C(u_5)$;
if $\{a, b\}\cap S_2 = \{a\}$, then let $S = \{\Delta, \Delta + 1, \Delta + 2\}\setminus (C(u_5)\cup C(u_a))$;
if $\{a, b\}\cap S_2 = \{a, b\}$, then let $S = \{\Delta, \Delta + 1, \Delta + 2\}\setminus (C(u_a)\cup C(u_b))$.
It follows that $S\ne \emptyset$,
and $u v\to 5$ and $u u_5\to S$ give rise to an acyclic edge $(\Delta + 2)$-coloring for $G$, and we are done.%

When $3\in C(u_5)$, we consider the following two subcases on whether $\Delta\in C(u_5)$ or not.

{Case 1.1.} $\Delta\not\in C(u_5)$.
We first re-color $(u u_5, u v)\to (\Delta, 5)$.
If there is no $(\Delta, j)_{(u_5, x_j)}$-cycle for every $j\in S_2$, then we are done.%
Otherwise, $T_\Delta = \{j\in S_2 \mid G \mbox{ contains a } (\Delta, j)_{(u_5, x_j)}\mbox{-cycle}\} \ne \emptyset$.
By Lemma~\ref{lemma11}, we can assume that there is exactly one color in $T$ and w.l.o.g. $T_\Delta = \{6\}$;
that is, $c(u_6 x_6)= \Delta$ and $C(u_5)= \{5, 3, 6\}$.
Then, $u u_6\to \Delta + 1$ gives rise to an acyclic edge $(\Delta + 2)$-coloring for $G$, and we are done.%

{Case 1.2.} $\Delta\in C(u_5)$ and thus $C(u_5) = \{5, 3, \Delta\}$.
If there exists a color $\alpha\in \{\Delta + 1, \Delta + 2\}\setminus C(u_3)$,
then $(u u_5, u v)\to (\alpha, 5)$ gives rise to an acyclic edge $(\Delta + 2)$-coloring for $G$, and we are done.%
Otherwise, $C(u_3) = \{3, 4, \Delta, \Delta + 1, \Delta + 2\}$.
Then, $(u u_5, u u_3, u v)\to (\Delta + 1, 5, 3)$ gives rise to an acyclic edge $(\Delta + 2)$-coloring for $G$, and we are done.%

\paragraph{Case ($A_{2.4}$-2).} $c(v w) = 3$, $d(u_3)= 5$ (by symmetry) and $C(w)= C\setminus \{1, 2\}$.

In this case, $G$ contains an $(i, j)_{(u_i, w)}$-path for every pair $i\in \{1, 2, 3\}$ and $j\in C\setminus C(u)$.
Therefore, if $c(u_3 u_4)\in C\setminus C(u)$, then $3\in C(u_4)$;
either way, $(C\setminus C(u))\setminus C(u_4)\ne\emptyset$.

When there exists a color $j = c(u u_j)\in S_2$ such that $G$ contains no $(i, j)_{(u_i, w)}$-path for some $i\in \{1, 2, 3\}$,
if $c(u_j x_j)\not\in S_2\cup \{4, 5\}$, then let $S = (C\setminus C(u))\setminus C(u_j)$;
or if $c(u_j x_j)= c(u u_k)\in S_2\cup \{4, 5\}$, then let $S = (C\setminus C(u))\setminus C(u_k)$.
Then $u v\to j$, $v w\to i$, and $u u_j\to S$ give rise to an acyclic edge $(\Delta + 2)$-coloring for $G$.%

When $S_2 = \emptyset$, or $G$ contains an $(i, j)_{(u_i, w)}$-path for every pair $i\in \{1, 2, 3\}$ and $j\in S_2$,
we conclude from $d(u_3)= 5$ that $3 \le |(C\setminus C(u))\cup S_2|\le 4$.
By a similar discussion as in Case ($A_{2.4}$-1), if $4\not\in C(u_5)$, then $G$ contains a $(3, 5)_{(u_3, w)}$-path.
We distinguish the two cases for $|(C\setminus C(u))\cup S_2|$.

{Case 2.1.} $|C\setminus C(u)|= 4$ or $S_2 = \{6\}$.
In this case, $\Delta = 7$, $4\not\in C(u_3)$, $3\in C(u_4)$, and $C(u_5) = \{5, 4, b\}$.
One sees that $5\not\in B_3$.

When $c(u_3 u_4)\in S_2$,
if $b\not\in S_2$, then let $S = (C\setminus C(u))\setminus (C(u_5)\cup C(u_4))$;
or if $b = c(u u_j)\in S_2$, then let $S = (C\setminus C(u))\setminus (C(u_4)\cup C(u_j))$.
It follows that $S\ne \emptyset$,
and thus $u v\to 5$ and $u u_5\to S$ give rise to an acyclic edge $(\Delta + 2)$-coloring for $G$.%

When $c(u_3 u_4)\notin S_2$, and assume w.l.o.g. that $c(u_3 u_4)= \Delta$,
if $b = \Delta$, then $u u_5\to (C\setminus C(u))\setminus C(u_4)$;
or if $b\in S_2$ and $c(u_b x_b)= \Delta$, then $u u_5\in (C\setminus C(u))\setminus (C(u_4)\cup C(u_b))$;
otherwise, $u u_5\to \Delta$.
This together with $u v\to 5$ gives rise to an acyclic edge $(\Delta + 2)$-coloring for $G$, and we are done.%

{Case 2.2.} $|C\setminus C(u)|= 3$ and $S_2 = \emptyset$.
In this case, $d(u) = \Delta = 6$.

{Case 2.2.1.} $G$ contains a $(3, 5)_{(u_3, w)}$-path and let $C(u_4) = \{3, 4, c(u_3 u_4), b\}$.
Then $C(u_3)= \{3, 5, 6, 7, 8\}$.

(1) $c(u_3 u_4)= 6$.
If $b\ne 5$, then $u u_4\to \{7, 8\}\setminus C(u_4)$ and $u v\to 4$ give rise to an acyclic edge $(\Delta + 2)$-coloring for $G$.%
If $b = 5$ and $\{7, 8\}\setminus C(u_5)\ne\emptyset$,
then $u u_4\to \{7, 8\}\setminus C(u_5)$ and $uv\to 4$ give rise to an acyclic edge $(\Delta + 2)$-coloring for $G$.%
Otherwise, we have $b = 5$ and $C(u_5) = \{5, 7, 8\}$.

(2) $c(u_3 u_4)= 5$.
If there exists a color $\alpha\in \{6, 7, 8\}\setminus (C(u_4)\cup C(u_5))$,
then $(uu_4, uv)\to (\alpha, 4)$ gives rise to an acyclic edge $(\Delta + 2)$-coloring for $G$.%
Otherwise, $\{6, 7, 8\}\subseteq C(u_4)\cup C(u_5)$ and we assume w.l.o.g. that $C(u_4)= \{3, 4, 5, 6\}$, $C(u_5)= \{5, 7, 8\}$.

In the leftover situation of (1) and (2), we may further assume that $G$ contains an $(i, 5)_{(u_i, w)}$-path for every $i\in \{1, 2\}$.
Note that $wu_i\not\in E(G)$ for every $i\in \{3, 4, 5\}$.
By Lemma~\ref{lemma10}, $w u_i\in E(G)$ for some $i\in \{1, 2\}$ and we assume w.l.o.g. that $w u_1\in E(G)$ with $c(w u_1)= 4$.
Then $(w u_1, v w, u v)\to (2, 4, 7)$ gives rise to an acyclic edge $(\Delta + 2)$-coloring for $G$.%

{Case 2.2.2.} $G$ contains no $(3, 5)_{(u_3, w)}$-path and it suffices to assume that $4\in C(u_5)$.
Furthermore, when there exists a color $\alpha\in \{6, 7, 8\}\setminus (C(u_4)\cup C(u_5))$,
$(u u_5, u v)\to (\alpha, 5)$ gives rise to an acyclic edge $(\Delta + 2)$-coloring for $G$.%
We consider below the case where $\{6, 7, 8\} \subseteq C(u_4)\cup C(u_5)$, and discuss the two possible values of $c(u_3 u_4)$.

(1) $c(u_3 u_4)\not\in \{6, 7, 8\}$.
It follows that $C(u_5) = \{4, 5, 8\}$ and $6, 7\in C(u_4)\setminus \{c(u_3 u_4)\}$.
Then $(u u_3, u u_4)\to (4, 3)$ and $u v\to 6$ give rise to an acyclic edge $(\Delta + 2)$-coloring for $G$.%

(2) $c(u_3 u_4)\in \{6, 7, 8\}$ and we assume w.l.o.g. $c(u_3 u_4)= 6$.
It follows that $C(u_4)= \{3, 4, 6, 7\}$ and $C(u_5)= \{5, 4, 8\}$.
If $G$ contains no $(i, 4)_{(u_i, w)}$-path for some $i\in \{1, 2, 3\}$,
then $(u u_4, u v)\to (8, 4)$ and $v w\to i$ give rise to an acyclic edge $(\Delta + 2)$-coloring for $G$.%
Otherwise, $G$ contains an $(i, 4)_{(u_i, w)}$-path and $4\in C(u_i)$ for every $i\in \{1, 2, 3\}$,
which implies that $C(u_3)= \{3, 4, 6, 7, 8\}$ and $4\in C(u_1)\cap C(u_2)$.
Note that $w u_j\not\in E(G)$ for every $j\in \{3, 4, 5\}$.
By Lemma~\ref{lemma10}, $w u_i\in E(G)$ for some $i\in \{1, 2\}$, and we assume w.l.o.g. that $w u_1\in E(G)$ with $c(w u_1)= 5$.
Then $(w u_1, v w, u v)\to (2, 5, 7)$ gives rise to an acyclic edge $(\Delta + 2)$-coloring for $G$.%

This finishes the inductive step for the case where $G$ contains the configuration ($A_{2.4}$),
and thus completes the discussion for the configuration ($A_2$).

The discussion for the configurations ($A_3$), ($A_5$) and ($A_6$) follows the same scheme to identify certain local edge coloring,
followed by re-coloring of a minimal subset of edges and at the same time assigning a color for the edge $u v$,
to give rise to an acyclic edge $(\Delta + 2)$-coloring for $G$.
We show, as in the above, that there are only a finitely many situations, and in each situation such a process takes $O(1)$ time.
We hope that the above detailed analysis for the configuration ($A_2$) has convinced the readers the correctness,
with the similar analyses for the configurations ($A_3$), ($A_5$) and ($A_6$) moved to the Appendix,
and so that Theorem~\ref{thm01} is proved.


\section{Conclusions}
Proving the acyclic edge coloring conjecture (AECC) on general graphs has a long way to go,
but on planar graphs it is very close with recent efforts.
In particular, the AECC has been confirmed true for planar graphs without $i$-cycles, for each $i \in \{3, 4, 5, 6\}$.
In this paper, we dealt with the planar graphs containing $3$-cycles but no intersecting $3$-cycles.
Prior to our work, their acyclic chromatic index was shown to be at most $\Delta + 3$,
and in this work we closed the AECC by confirming it affirmatively.
The main technique is a discharging method to show at least one of the several specific local structures must exist in the graph,
followed by recoloring certain edges in each such local structure and by an induction on the number of edges in the graph.

Our detailed construction and proofs are built on top of several existing works,
and we believe our new design of the local structures and the associated discharging method can be helpful for proving the AECC on general planar graphs.
On the other hand, for general graphs, improvements from the approximation algorithm perspective,
that is, to reduce the coefficient for $\Delta$ in the upper bound (from the currently best $3.74$),
are impactful pursuit.

\subsection*{Acknowledgements.}
QS is supported by the ZJNSF Grants LY20F030007 and LQ15A010010 and the NSFC Grant 11601111.
GL is supported by the NSERC Canada.
EM is supported by the KAKENHI Grant JP17K00016 and the JST CREST JPMJR1402.



\newpage
\section*{Appendix: Inductive proofs for the configurations $A_3$, $A_5$ and $A_6$}
\setcounter{section}3
\setcounter{subsection}{1}
\subsection{Configuration ($A_3$)}
In this subsection we prove the inductive step for the case where $G$ contains the configuration ($A_3$), that is,
\begin{description}
\parskip=0pt
\item[$(A_3)$]
	A path $u_2 u v$ with $d(u)= 3$, $d(v)= 4$.
	Let $w_1, w_2, \ldots, w_{d(u_2)-1}$ be the neighbors of $u_2$ other than $u$, sorted by their degrees.
    At least one of the configurations ($A_{3.1}$)--($A_{3.3}$) occurs.
\end{description}

Let $v_1, v_2, v_3$ be the neighbors of $v$ other than $u$, and assume that $c(v v_i)= i$ for $i= 1, 2, 3$;
let $u_1$ be the neighbor of $u$ other than $u_2$ and $v$.
Before we discuss the detailed configurations ($A_{3.1}$)--($A_{3.3}$),
we do not distinguish $u_1$ and $u_2$ but use $x_1$ to refer to any one of $u_1$ and $u_2$ while $x_2$ to refer to the other ($\{x_1, x_2\} = \{u_1, u_2\}$).


Denote $C^*= C\setminus\{1, 2, 3\}$.
One sees that $|C^*| = \Delta - 1\ge 4$,
from the inexistence of ($A_{1.3}$) that $\{x_1, x_2\}\cap \{v_1, v_2, v_3\}= \emptyset$,
and from Proposition~\ref{prop3001} that $1\le|C(u)\cap C(v)|\le 2$.

\begin{lemma}
\label{lemma14}
If $|C(u)\cap C(v)| = 2$, then in $O(1)$ time either an acyclic edge $(\Delta + 2)$-coloring for $G$ can be obtained,
or an acyclic edge $(\Delta + 2)$-coloring for $H$ can be obtained such that $|C(u)\cap C(v)| = 1$.
\end{lemma}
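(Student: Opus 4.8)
The plan is to exploit the asymmetry created by $d(u)=3$: once we delete $uv$, the vertex $u$ has only two remaining edges $ux_1$ and $ux_2$, so $|C(u)\cap C(v)|=2$ forces $c(ux_1)$ and $c(ux_2)$ to both lie in $C(v)=\{1,2,3\}$. Write $c(ux_1)=i$, $c(ux_2)=j$ with $\{i,j\}\subseteq\{1,2,3\}$, and let $\{k\}=\{1,2,3\}\setminus\{i,j\}$. By Proposition~\ref{prop3001}, since $C\setminus(C(u)\cup C(v))=C^*$ and $C(u)\cap C(v)=\{i,j\}$, every color of $C^*$ lies in $B_i\cup B_j$; in particular, for each color $\alpha\in C^*$ there is an $(i,\alpha)_{(x_1,v_i)}$-path or a $(j,\alpha)_{(x_2,v_j)}$-path in $H$. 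First I would try the direct move: recolor $ux_1\to k$ (which is legal since $k\notin\{i,j\}$ and $k\notin C(x_1)$ unless $x_1$ already sees $k$). If $k\notin C(x_1)$, then after $ux_1\to k$ we have $C(u)\cap C(v)=\{j\}$, and I must check that no bichromatic cycle was created by this recoloring; any such cycle uses $ux_1$ and hence is an $(k,\cdot)_{(u,\cdot)}$-cycle through the other edge at $u$, which is impossible because $k$ does not currently appear on any edge at $u$ after $uv$ is removed. This already gives the desired $|C(u)\cap C(v)|=1$ coloring for $H$, so that branch is done.

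The substantive work is the case $k\in C(x_1)$ (and symmetrically $k\in C(x_2)$, and indeed possibly $1,2,3$ all appear at $x_1$ or at $x_2$). Here I would pivot to recoloring the edge $ux_1$ itself with a color from $C^*$: choose $\alpha\in C^*\setminus C(x_1)$ — nonempty because $|C^*|=\Delta-1\ge 4$ while $x_1$, having bounded degree in the configurations of $(A_3)$, cannot see all of $C^*$ — and attempt $ux_1\to\alpha$. Validity is automatic at $u$; the only danger is an $(\alpha,\beta)_{(u,\cdot)}$-cycle for the color $\beta=c(ux_2)=j$, i.e. a $(\alpha,j)$ bichromatic path from $x_1$ back to $u$ via $x_2$. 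If no such cycle exists for the chosen $\alpha$, we are done (still with $C(u)\cap C(v)\subseteq\{j\}\subsetneq\{i,j\}$, actually now $=\{j\}$ unless $j$ also left, and in either case $|C(u)\cap C(v)|\le 1$). If such a cycle exists, I would invoke the standard trick used repeatedly in this paper (cf. the proof of Lemma~\ref{lemma11} and Lemma~\ref{lemma13}): there can be at most one offending $\alpha$ for which the $(\alpha,j)$-path closes up, or more precisely the set of bad $\alpha$ is small, so a counting argument over $C^*\setminus C(x_1)$ produces a good $\alpha$ — unless $C(x_1)$ and the forced structure around $x_2$ together saturate all of $C$, in which case the degrees are pinned down exactly and I recolor a second edge (one of $ux_2$, or an edge at $x_1$) to break the unique bad cycle, exactly as in Cases~1--3 of Lemma~\ref{lemma12}.

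I expect the main obstacle to be the fully-saturated subcase: when both $x_1$ and $x_2$ see all of $\{1,2,3\}$ plus enough of $C^*$ that no single recoloring of $ux_1$ or $ux_2$ avoids a bichromatic cycle. The resolution should be that these constraints force $d(x_1)$ and $d(x_2)$ to be large and their color sets to be essentially $C$, and then — because $d(u)=3$ is so small — swapping the two colors at $u$, i.e. $(ux_1,ux_2)\to(j,i)$, together with one further local recoloring near $v$ (we still have the freedom that $v$ is a $4$-vertex with $C^*\setminus C(v)\ne\emptyset$), kills the obstruction. The whole argument is a finite case analysis and each case is $O(1)$, matching the paper's running-time claim; the only care needed is the routine verification, in each recoloring, that the single edge $uv$ being absent prevents any new bichromatic cycle from threading through $u$.
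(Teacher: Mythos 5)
Your opening reduction is fine ($d_H(u)=2$, so $|C(u)\cap C(v)|=2$ forces $\{c(ux_1),c(ux_2)\}\subseteq\{1,2,3\}$), but the first move you propose is vacuous: since $d(v)=4$, in $H$ the vertex $v$ has exactly the three edges $vv_1,vv_2,vv_3$ colored $1,2,3$, so $C(v)=\{1,2,3\}$ in full. Recoloring $ux_1$ to the third color $k\in\{1,2,3\}$ therefore leaves $C(u)\subseteq C(v)$ and $|C(u)\cap C(v)|$ is still $2$; it cannot produce the claimed $C(u)\cap C(v)=\{j\}$. (Its acyclicity justification is also wrong: the danger is a $(k,j)$-cycle threading through both edges at $u$, i.e.\ a $(k,j)$-path from $x_1$ to $x_2$, and this is not excluded by $k$ being absent at $u$ beforehand — the same care is needed for your later recolorings.) The correct cheap move, which is the paper's first step, is to recolor $ux_i$ with a color of $C^*\setminus C(x_i)$, which does drop the intersection to one element.

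The second, more serious gap is your claim that $C^*\setminus C(x_1)\ne\emptyset$ because "$x_1$ has bounded degree in $(A_3)$": one of $x_1,x_2$ is $u_1$, whose degree is unconstrained (and $u_2$ can have degree $6$ while $|C^*|=\Delta-1$ may be $4$), so the saturated situation $C(x_1)=C^*\cup\{1\}$, $C(x_2)=C^*\cup\{2\}$ genuinely occurs, and it is exactly where the paper's proof does all its work. There one cannot look for a good $\alpha$ at all, and your fallback ("at most one offending $\alpha$", citing Lemma~\ref{lemma11}) is both inapplicable — Lemma~\ref{lemma11} concerns two $2$-vertices adjacent to $u$ — and false in general, since many colors can simultaneously admit a closing $(2,\alpha)$-path. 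The paper instead uses Proposition~\ref{prop3001} to get $4\in B_1\cup B_2$, say $4\in B_1$, and then either finds $j\in\{2,3\}$ with no $(j,4)_{(x_1,v_j)}$-path and sets $(ux_2,ux_1,uv)\to(1,j,4)$ (legal because $j\notin C(x_1)$ and $1\notin C(x_2)$ in the saturated case), or deduces $C(v_j)=C^*\cup\{j\}$ for all $j$ and then performs the swap $(ux_1,ux_2)\to(2,1)$ together with $(vv_2,vv_3)\to(3,2)$ and $uv\to4$, verifying acyclicity via Lemma~\ref{lemma09}. Your closing sentence gestures at this ("swap at $u$ plus one recoloring near $v$"), but without identifying the color for $uv$, the role of the $(j,4)$-paths to the $v_j$, or the acyclicity check, it is a description of the answer rather than a proof; as written the proposal does not establish the lemma.
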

\begin{proof}
Assume w.l.o.g. that $(u x_1, u x_2)_c= (1, 2)$.

When there exists a color $j\in C^*\setminus C(x_i)$ for some $i \in \{1, 2\}$,
$u x_i\to j$ gives rise to an acyclic edge $(\Delta + 2)$-coloring for $H$ such that $|C(u)\cap C(v)| = 1$.

In the other case, $C(x_1)= C^*\cup \{1\}$ and $C(x_2) = C^*\cup \{2\}$.
Since $4\in B_1\cup B_2$, we assume w.l.o.g. that $4\in B_1$.
If $G$ contains no $(j, 4)_{(x_1, v_j)}$-path for some $j\in \{2, 3\}$,
then $(u x_2, u x_1, u v)\to (1, j, 4)$ gives rise to an acyclic edge $(\Delta + 2)$-coloring for $G$.%
Otherwise, $G$ contains a $(j, 4)_{(x_1, v_j)}$-path and $4\in C(v_j)$ for every $j\in \{1, 2, 3\}$.
Similarly, we have $C^*\subseteq C(v_j)$ and thus $C(v_j) = C^*\cup \{j\}$ for every $j\in \{1, 2, 3\}$.
Then by Lemma~\ref{lemma09}, $(u x_1, u x_2)\to (2, 1)$, $(v v_2, v v_3)\to (3, 2)$, and $u v\to 4$ give rise to an acyclic edge $(\Delta + 2)$-coloring for $G$.%
\end{proof}

By Lemma \ref{lemma14}, hereafter we assume that $|C(u)\cap C(v)| = 1$, and w.l.o.g. that $C(u) = \{1, 4\}$, $c(u x_1) = 1$, $c(u x_2) = 4$,
and further from Proposition~\ref{prop3001} that $(C\setminus \{1, 2, 3, 4\})\subseteq B_1 \subseteq C(x_1)\cap C(v_1)$;
it follows that $\{2, 3\}\setminus C(x_1)\ne\emptyset$.

When $G$ contains no $(1, 4)_{(x_1, v_1)}$-path, $u v\to 4$;
if $C^*\setminus C(x_2)\ne \emptyset$, then $u x_2\to C^*\setminus C(x_2)$,
or if $1\not\in C(x_2)$, then $u x_2\to \{2, 3\}\setminus C(x_2)$,
or if $C(x_2)= C\setminus \{2, 3\}$, then $u x_2\to \{2, 3\}\setminus C(x_1)$.
This gives rise to an acyclic edge $(\Delta + 2)$-coloring for $G$, and we are done.%
We proceed with $4 \in B_1$ and the following proposition: 	

\begin{proposition}
\label{prop3201}
	$G$ contains a $(1, j)_{(x_1, v_1)}$-path that cannot pass through $x_2$ for every $j\in C^*\subseteq B_1$, 
	and $C(x_1) = C(v_1) = C\setminus\{2, 3\}$.
\end{proposition}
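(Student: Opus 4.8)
The plan is to pin down $C^{*}\subseteq B_{1}$ and the two colour-set equalities by a short counting argument, and then extract the path property from the observation that any two colour classes of an acyclic colouring span a forest. \emph{Step 1.} In $H=G-uv$ we have $C(v)=\{1,2,3\}$ and $C(u)=\{1,4\}$, so $C(u)\cap C(v)=\{1\}$ and $B_{1}$ is the only set occurring in Proposition~\ref{prop3001}, with $u_{s}=x_{1}$ and $v_{t}=v_{1}$. Hence Proposition~\ref{prop3001} reads $C\setminus\{1,2,3,4\}\subseteq B_{1}$, and together with the standing hypothesis $4\in B_{1}$ this gives $C^{*}=\{4,5,\ldots,\Delta+2\}\subseteq B_{1}$, which is the first assertion.

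\emph{Step 2 (the equalities).} Recall the containment $B_{1}\subseteq C(x_{1})\cap C(v_{1})$ already noted before the proposition. Since $1=c(ux_{1})\in C(x_{1})$ and $1=c(vv_{1})\in C(v_{1})$, Step~1 yields $C\setminus\{2,3\}=\{1\}\cup C^{*}\subseteq C(x_{1})$ and likewise $C\setminus\{2,3\}\subseteq C(v_{1})$. But $uv$ is incident to neither $x_{1}$ nor $v_{1}$, so $|C(x_{1})|=d_{H}(x_{1})=d(x_{1})\le\Delta$ and $|C(v_{1})|=d(v_{1})\le\Delta$, whereas $|C\setminus\{2,3\}|=\Delta$. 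Therefore both inclusions are equalities: $C(x_{1})=C(v_{1})=C\setminus\{2,3\}$ (and, as a by-product, $d(x_{1})=d(v_{1})=\Delta$ and $\{2,3\}\cap(C(x_{1})\cup C(v_{1}))=\emptyset$).

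\emph{Step 3 (the path avoids $x_{2}$).} Fix $j\in C^{*}$. As $j\in B_{1}$, the subgraph $H_{1,j}$ of $H$ spanned by the colour classes $1$ and $j$ contains an $x_{1}$--$v_{1}$ path; since $c$ is acyclic, $H_{1,j}$ is a forest, so this path $P$ is the \emph{unique} $x_{1}$--$v_{1}$ path of $H_{1,j}$, it is automatically alternating, and, by the description of $B_{1}$ following~\eqref{eq02}, it may be taken with its edge at $x_{1}$ coloured $j$; by uniqueness that is therefore its edge at $x_{1}$. The only edges of $H$ at $u$ are $ux_{1}$ (colour $1$) and $ux_{2}$ (colour $4$). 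If $j\ne4$, then $u$ is a leaf of $H_{1,j}$, hence off $P$, so $P$ uses neither $ux_{1}$ nor $ux_{2}$ and cannot reach $x_{2}$ through $u$. If $j=4$, then $u$ has degree $2$ in $H_{1,4}$; but the edge of $P$ at $x_{1}$ is coloured $4\ne1$, so $ux_{1}\notin P$, and a path through a degree-$2$ vertex uses both its edges, so $u\notin P$ and hence $ux_{2}\notin P$; moreover $ux_{2}$ is the \emph{only} colour-$4$ edge at $x_{2}$, so if $x_{2}$ lay on $P$ (necessarily as an interior vertex, since $x_{2}\notin\{x_{1},v_{1}\}$ by the non-existence of $(A_{1.3})$) then $P$ would have to use $ux_{2}$, a contradiction; thus $x_{2}\notin P$. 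In all cases the $(1,j)_{(x_{1},v_{1})}$-path cannot pass through $x_{2}$, which completes the proof.

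The bookkeeping in Step~2 is routine. The delicate point is Step~3: one must keep straight which of the two edges at $u$ survives in $H_{1,j}$ (only $ux_{1}$ when $j\ne4$, both when $j=4$), use the forest structure to get uniqueness of $P$, and from uniqueness conclude that $P$ starts at $x_{1}$ with a colour-$j$ edge; I expect this small case split ($j=4$ versus $j\ne4$) to be the only real obstacle.
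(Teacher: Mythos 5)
Your Steps 1 and 2 are correct and reproduce exactly the reasoning the paper itself relies on: the case $4\notin B_1$ is disposed of in the paragraph immediately preceding the proposition, Proposition~\ref{prop3001} gives $C\setminus\{1,2,3,4\}\subseteq B_1$, hence $C^*\subseteq B_1\subseteq C(x_1)\cap C(v_1)$, and the two equalities follow from the count $d_H(x_1),d_H(v_1)\le\Delta=|C\setminus\{2,3\}|$. The $j=4$ part of Step 3 is also sound, granted the convention (stated after Eq.~(\ref{eq02}) and used throughout the paper) that the $(1,4)_{(x_1,v_1)}$-path has its end edges coloured $4$: since $ux_2$ is the unique colour-$4$ edge at $x_2$, having $x_2$ on $P$ would force $ux_2\in P$, hence $u\in P$ and $ux_1\in P$, contradicting that $x_1$ is an endpoint which $P$ leaves on a colour-$4$ edge.

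The gap is the case $j\ne 4$. There you only show that $u$ is off $P$, so $P$ uses neither $ux_1$ nor $ux_2$ and ``cannot reach $x_2$ through $u$'' --- and you then assert the full conclusion that $P$ cannot pass through $x_2$. That inference does not follow: for $j\ge 5$, nothing known at this point about $x_2$ (only $4\in C(x_2)$ is available; $\{1,2,3,4\}\subseteq C(x_2)$ is derived only later, en route to Proposition~\ref{prop3203}) prevents $x_2$ from carrying a colour-$1$ edge and a colour-$j$ edge and from sitting as an \emph{interior} vertex of the unique $(1,j)$-path between $x_1$ and $v_1$, entered and left through edges not incident with $u$. So, read literally as vertex-avoidance of $x_2$ for every $j\in C^*$, the claim is not established by your argument for any $j\ne 4$; indeed I do not see how the literal statement could be derived from the hypotheses available at this stage at all. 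What your argument does deliver --- the path leaves $x_1$ on its colour-$j$ edge, never meets $u$, never uses the edge $ux_2$, and avoids the vertex $x_2$ when $j=4$ --- is the defensible content and appears to be what the later recolourings of $ux_1$, $ux_2$ and $uv$ actually exploit; but as written, the final sentence of Step 3 overclaims, and you should either supply an argument excluding $x_2$ as an interior vertex for $j\ne 4$ (I believe none is available here) or state explicitly the weaker avoidance property you have actually proved.
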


If $C^*\subseteq C(v_i)$ for some $i\in \{2, 3\}$, then $(v v_1, v v_i, uv)\to (i, 1, 5)$ gives rise to an acyclic edge $(\Delta + 2)$-coloring for $G$.%
We thus consider below $C^*\setminus C(v_2)\ne\emptyset$ and $C^*\setminus C(v_3)\ne\emptyset$.
If $G$ contains no $(3, \alpha)_{(v_2, v_3)}$-path for some $\alpha\in C^*\setminus C(v_2)$,
then $(v v_2, u v)\to (\alpha, 2)$ gives rise to an acyclic edge $(\Delta + 2)$-coloring for $G$.%
Symmetrically, if $G$ contains no $(2, \beta)_{(v_2, v_3)}$-path for some $\beta\in C^*\setminus C(v_3)$,
then $(v v_3, u v)\to (\beta, 3)$ gives rise to an acyclic edge $(\Delta + 2)$-coloring for $G$.%
Otherwise, $G$ contains a $(3, \alpha)_{(v_2, v_3)}$-path for every $\alpha\in C^*\setminus C(v_2)$ and
a $(2, \beta)_{(v_2, v_3)}$-path for every $\beta\in C^*\setminus C(v_3)$.
It follows that $\{3\}\cup (C^*\setminus C(v_3))\subseteq C(v_2)$ and $\{2\}\cup (C^*\setminus C(v_2)) \subseteq C(v_3)$.
If $1\not\in C(v_2)\cup C(v_3)$, then $(v v_2, v v_1, u v)\to (1, 2, 5)$ gives rise to an acyclic edge $(\Delta + 2)$-coloring for $G$.%
Otherwise, we assume w.l.o.g. that $1\in C(v_2)$.
Therefore, $|C^*\setminus C(v_2)| \ge 2$ and $|C^*\setminus C(v_3)| \ge 1$,
and we let $\alpha_1, \alpha_2\in C^*\setminus C(v_2)$ be two distinct colors and $\beta_1\in C^*\setminus C(v_3)$.
Note that $|C^*\setminus \{\alpha_1, \alpha_2, \beta_1\}|\ge 1$.

Thus, we proceed with the following proposition:

\begin{proposition}
\label{prop3202}
\begin{itemize}
\parskip=0pt
\item[{(1)}]
	$\{1, 2, 3, \beta_1\} \subseteq C(v_2)$, $\{2, 3, \alpha_1, \alpha_2\}\subseteq C(v_3)$, $\max\{d(v_2), d(v_3)\}\ge 5$.
\item[{(2)}]
	$G$ contains a $(3, \alpha_1)_{(v_2, v_3)}$-path and a $(2, \beta_1)_{(v_2, v_3)}$-path, which don't pass through $x_1$.
\end{itemize}
\end{proposition}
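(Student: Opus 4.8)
The plan is to read off all three assertions from the inclusions already accumulated just before the proposition, from Proposition~\ref{prop3201}, and from the inexistence of ($A_{1.3}$); no recoloring of $G$ is needed, since the data established so far already force the statement. First I would record the two containments in~(1): we have $1\in C(v_2)$ by the choice made in the line preceding the proposition, $2=c(vv_2)\in C(v_2)$ and $3=c(vv_3)\in C(v_3)$ by the coloring convention, $3\in C(v_2)$ and $2\in C(v_3)$ from the derived inclusions $\{3\}\cup(C^*\setminus C(v_3))\subseteq C(v_2)$ and $\{2\}\cup(C^*\setminus C(v_2))\subseteq C(v_3)$, and $\beta_1\in C^*\setminus C(v_3)\subseteq C(v_2)$ together with $\alpha_1,\alpha_2\in C^*\setminus C(v_2)\subseteq C(v_3)$. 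Hence $\{1,2,3,\beta_1\}\subseteq C(v_2)$ and $\{2,3,\alpha_1,\alpha_2\}\subseteq C(v_3)$, and the listed colors are pairwise distinct because $\alpha_1\ne\alpha_2$ and $\{1,2,3\}\cap C^*=\emptyset$; in particular $d(v_2),d(v_3)\ge 4$.

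For $\max\{d(v_2),d(v_3)\}\ge 5$ I would use a degree count. Since $\{1,2,3\}\subseteq C(v_2)$, we have $|C^*\cap C(v_2)|=d(v_2)-3$, hence $|C^*\setminus C(v_2)|=|C^*|-(d(v_2)-3)=\Delta+2-d(v_2)$. As $C^*\setminus C(v_2)\subseteq C^*\cap C(v_3)$ and $\{2,3\}\subseteq C(v_3)$, this yields $d(v_3)=|C(v_3)|\ge 2+(\Delta+2-d(v_2))$, that is, $d(v_2)+d(v_3)\ge\Delta+4\ge 9$; so $d(v_2)$ and $d(v_3)$ cannot both be at most $4$, and (1) follows.

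For~(2), the existence of a $(3,\alpha_1)_{(v_2,v_3)}$-path and of a $(2,\beta_1)_{(v_2,v_3)}$-path is precisely what was obtained immediately before the proposition, instantiated at $\alpha=\alpha_1\in C^*\setminus C(v_2)$ and at $\beta=\beta_1\in C^*\setminus C(v_3)$. That neither passes through $x_1$ follows from Proposition~\ref{prop3201}: since $C(x_1)=C\setminus\{2,3\}$, the vertex $x_1$ is incident with no edge of color $2$ and no edge of color $3$, so it cannot be an internal vertex of a $(3,\alpha_1)$-path (such a vertex meets a color-$3$ edge) nor of a $(2,\beta_1)$-path (such a vertex meets a color-$2$ edge); and $x_1$ is not an endpoint either, since the endpoints are $v_2,v_3$ while $\{x_1,x_2\}\cap\{v_1,v_2,v_3\}=\emptyset$ by the inexistence of ($A_{1.3}$). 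The only step where any care is needed is the degree count establishing $\max\{d(v_2),d(v_3)\}\ge 5$: one must use that $1,2,3\in C(v_2)$ and $2,3\in C(v_3)$, together with $|C^*|=\Delta-1$ and $\Delta\ge 5$; everything else is a direct consequence of facts already in hand.
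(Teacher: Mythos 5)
Your proof is correct and follows essentially the same route as the paper: Proposition~\ref{prop3202} is precisely a summary of the inclusions $\{3\}\cup(C^*\setminus C(v_3))\subseteq C(v_2)$, $\{2\}\cup(C^*\setminus C(v_2))\subseteq C(v_3)$, the w.l.o.g.\ choice $1\in C(v_2)$, and the alternating $(3,\alpha)$- and $(2,\beta)$-paths established in the discussion immediately preceding it, under the standing convention that otherwise an acyclic $(\Delta+2)$-coloring of $G$ has already been produced. Your explicit counting argument for $\max\{d(v_2),d(v_3)\}\ge 5$ and the justification that the paths avoid $x_1$ (using $C(x_1)=C\setminus\{2,3\}$ from Proposition~\ref{prop3201} together with $\{x_1,x_2\}\cap\{v_1,v_2,v_3\}=\emptyset$ from the inexistence of ($A_{1.3}$)) merely spell out details the paper leaves implicit.
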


If $1\not\in C(x_2)$, then $(u x_1, u x_2, u v)\to (2, 1, \alpha_1)$ gives rise to an acyclic edge $(\Delta + 2)$-coloring for $G$.%
If $G$ contains no $(2, j)_{(x_1, x_2)}$-path for some $j\in \{4\}\cup (C^*\setminus C(x_2))$,
then $(u x_1, u x_2)\to (2, j)$ and $u v\to \{\alpha_1, \beta_1\}\setminus \{j\}$ give rise to an acyclic edge $(\Delta + 2)$-coloring for $G$.%
Symmetrically, if $G$ contains no $(3, j)_{(x_1, x_2)}$-path for some $j\in \{4\}\cup (C^*\setminus C(x_2))$,
then we can obtain an acyclic edge $(\Delta + 2)$-coloring of $G$ too.%
In the other case, $1\in C(x_2)$ and $G$ contains an $(i, j)_{(x_1, x_2)}$-path for every $i\in \{2, 3\}$, $j\in \{4\}\cup (C^*\setminus C(x_2))$.
We thus have $\{1, 2, 3, 4\}\subseteq C(x_2)$ with $c(u x_2) = 4$.

Let $y_4, y_5, \ldots, y_{\Delta + 2}$ be the neighbors of $x_1$ other than $u$, and $c(x_1 y_j)= j$ for every $4\le j\le \Delta + 2$.
Let $z_1, z_2, \ldots, z_{d(x_2)- 1}$ be the neighbors of $x_2$ other than $u$, and $c(x_2 z_i)= i$ for every $i\in \{1, 2, 3\}$;
and let
\[
B_{23}= \{4\}\cup (C^*\setminus C(x_2)) \mbox{ and } B'_{23}= \{j \mid \mbox{there is an } (i, j)_{(x_1, z_i)}\mbox{-path for every } i\in \{2, 3\}\}.
\]
It follows that $B_{23}\subseteq B'_{23}$ and $|B_{23}|= |C^*\setminus C(x_2)| + 1 \ge 2 + 1 = 3$.
We assume w.l.o.g. $\Delta + 1, \Delta + 2\in C^*\setminus C(x_2)$,
and summarize the above into the following proposition:

\begin{proposition}
\label{prop3203}
	$\{4\}\cup (C^*\setminus C(x_2)) = B_{23}\subseteq B'_{23}$,
	$d(z_i)\ge |B_{23}|+ 1\ge 4$ for every $i\in\{2, 3\}$,
	and $d(y_j)\ge |\{1, 2, 3, j\}|\ge 4$ for every $j\in B_{23}$.
\end{proposition}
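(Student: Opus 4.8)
The plan is to assemble a short proof from the case analysis just completed: each clause of the proposition is a one-line deduction from Propositions~\ref{prop3201} and~\ref{prop3202} together with the facts now in force, namely $C(x_1)=C\setminus\{2,3\}$ (so $d(x_1)=\Delta$, and the neighbours of $x_1$ are precisely $u$ with $c(ux_1)=1$ and $y_4,\dots,y_{\Delta+2}$ with $c(x_1 y_k)=k$), $\{1,2,3,4\}\subseteq C(x_2)$ with $c(ux_2)=4$, and the existence of an $(i,j)_{(x_1,x_2)}$-path for every $i\in\{2,3\}$ and every $j\in B_{23}=\{4\}\cup(C^*\setminus C(x_2))$. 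A recurring tool is that $u$ has degree $2$ in $H$ with its two edges coloured $1$ and $4$, so a bichromatic path whose colour pair misses $1$ or misses $4$ cannot pass through $u$, and a $(1,4)$-path through $u$ is forced to continue into $x_2$.

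First I would prove $B_{23}\subseteq B'_{23}$. Fix $j\in B_{23}$ and $i\in\{2,3\}$ and take an $(i,j)_{(x_1,x_2)}$-path $P$. Since $2,3\notin C(x_1)$ and $j\in C^*=\{4,\dots,\Delta+2\}$, the first edge of $P$ at $x_1$ is the colour-$j$ edge $x_1 y_j$. At $x_2$: if $j\in C^*\setminus C(x_2)$ then $x_2$ has no colour-$j$ edge, so $P$ reaches $x_2$ through the unique colour-$i$ edge $x_2 z_i$; if $j=4$ then the only colour-$4$ edge at $x_2$ is $x_2 u$, which $P$ cannot use because $u$ misses colour $i$, so again $P$ reaches $x_2$ through $x_2 z_i$. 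Deleting the last edge $z_i x_2$ of $P$ leaves an $(i,j)_{(x_1,z_i)}$-path; doing this for $i=2$ and $i=3$ shows $j\in B'_{23}$.

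Next I would settle $|B_{23}|\ge 3$ and the two degree bounds. Since $4\in C^*$ but $4\in C(x_2)$, the union defining $B_{23}$ is disjoint, so $|B_{23}|=1+|C^*\setminus C(x_2)|$; as $\{1,2,3\}\subseteq C(x_2)\setminus C^*$ and $|C(x_2)|\le d(x_2)\le\Delta$, we get $|C^*\cap C(x_2)|\le\Delta-3$, hence $|C^*\setminus C(x_2)|\ge(\Delta-1)-(\Delta-3)=2$ and $|B_{23}|\ge 3$, and we may relabel colours so that $\Delta+1,\Delta+2\in C^*\setminus C(x_2)$. For $z_i$ with $i\in\{2,3\}$: besides the colour-$i$ edge $x_2 z_i$, for each $j\in B_{23}\subseteq B'_{23}$ the truncated $(i,j)_{(x_1,z_i)}$-path meets $z_i$ in a colour-$j$ edge, and since $i\notin C^*\supseteq B_{23}$ these are $|B_{23}|+1$ edges of distinct colours, so $d(z_i)\ge|B_{23}|+1\ge 4$. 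For $y_j$ with $j\in B_{23}$: the edge $x_1 y_j$ has colour $j$; by the paragraph above a $(2,j)_{(x_1,x_2)}$-path and a $(3,j)_{(x_1,x_2)}$-path both leave $x_1$ along $x_1 y_j$, so $y_j$ meets edges of colours $2$ and $3$; and by Proposition~\ref{prop3201} there is a $(1,j)_{(x_1,v_1)}$-path avoiding $x_2$, which cannot route through $u$ (for $j\ne4$ because $u$ misses colour $j$, and for $j=4$ because a $(1,4)$-path through $u$ would enter $x_2$), hence also leaves $x_1$ along $x_1 y_j$ and forces a colour-$1$ edge at $y_j$. Thus $\{1,2,3,j\}\subseteq C(y_j)$ and $d(y_j)\ge 4=|\{1,2,3,j\}|$.

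I expect the degree bound at $y_j$ to be the delicate point: it is the one place where three distinct bichromatic paths (the $(1,j)$-, $(2,j)$- and $(3,j)$-paths) are pinned together at a single vertex, and the argument only closes after ruling out that any of them slips through the degree-$2$ vertex $u$ of $H$. The bound $|B_{23}|\ge3$ and the bound on $d(z_i)$ are routine counting once $B_{23}\subseteq B'_{23}$ is available.
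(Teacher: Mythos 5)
Your proof is correct and takes essentially the route the paper intends: the paper records Proposition~\ref{prop3203} as an immediate summary (``It follows that $B_{23}\subseteq B'_{23}$ and $|B_{23}|\ge 3$'') of the just-established $(i,j)_{(x_1,x_2)}$-paths, Proposition~\ref{prop3201}, and the count $|C(x_2)|\le\Delta$ with $\{1,2,3\}\subseteq C(x_2)$, and your truncation-at-$z_i$ and forced-colour arguments are exactly those verifications made explicit. The only cosmetic omission is the degenerate possibility that the $(1,j)_{(x_1,v_1)}$-path is the single edge $x_1y_j$ (i.e.\ $y_j=v_1$), which is dismissed at once because you have already shown $2,3\in C(y_j)$ while $C(v_1)=C\setminus\{2,3\}$.
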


One sees from $d(x_2) \ge 4$ and $d(v_i) \ge 4$ for every $i\in \{1, 2, 3\}$ that, if $d(y_j) = 3$ for some $j\in C^*$, then $y_j\not\in N(v)\cup \{x_2\}$.
Since $C^*\in B_1$, we have $1\in C(y_j)$ for every $j\in C^*$.
When $\Delta = 5$ and $d(y_5) = 3$, $x_1 y_5\to \{2, 3\}\setminus C(y_5)$;
When $\Delta = 6$ and $d(y_5) = d(y_6) = 3$,
if $6\not\in C(y_5)$, then $x_1 y_5\to \{2, 3\}\setminus C(y_5)$;
or if $6\in C(y_5)$, then $x_1 y_5\to \{2, 3\}\setminus C(y_6)$.
This together with $u v\to 5$ gives rise to an acyclic edge $(\Delta + 2)$-coloring for $G$.%
We proceed with the following proposition:

\begin{proposition}
\label{prop3204}
	If $\Delta = 5$, then $n_3(x_1)= 1$;
	if $\Delta = 6$, then $n_3 (x_1)\le 2$.
\end{proposition}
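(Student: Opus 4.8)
The plan is to deduce Proposition~\ref{prop3204} directly from the degree estimates already packaged in Proposition~\ref{prop3203}, combined with the two case reductions carried out in the paragraph immediately preceding the statement.

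First I would record the shape of $x_1$. By Proposition~\ref{prop3201} we have $C(x_1)=C\setminus\{2,3\}$, so $d(x_1)=\Delta$ and the neighbours of $x_1$ are precisely $u$ together with $y_4,y_5,\ldots,y_{\Delta+2}$. Since $d(u)=3$, this gives $n_3(x_1)=1+|\{\,j\in\{4,\ldots,\Delta+2\}:d(y_j)=3\,\}|$, so it suffices to control which of the $y_j$ can be $3$-vertices.

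Next I would invoke Proposition~\ref{prop3203}, which asserts $d(y_j)\ge 4$ for every $j\in B_{23}=\{4\}\cup(C^*\setminus C(x_2))$. Since $\{1,2,3\}\subseteq C(x_2)$ we have $C^*\cap C(x_2)=C(x_2)\setminus\{1,2,3\}$, and together with the w.l.o.g.\ normalisation $\Delta+1,\Delta+2\in C^*\setminus C(x_2)$ this forces $d(y_j)\ge 4$ for all $j\in\{4,\Delta+1,\Delta+2\}$ and, more generally, for all $j\in C^*\setminus C(x_2)$. Hence $d(y_j)=3$ is possible only for indices $j\in(C^*\cap C(x_2))\setminus\{4\}$, a set of size at most $|C(x_2)|-4\le\Delta-4$, which is contained in $\{5\}$ when $\Delta=5$ and in $\{5,6\}$ when $\Delta=6$. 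For $\Delta=5$ the preceding paragraph shows that if $d(y_5)=3$ then $G$ already admits an acyclic edge $(\Delta+2)$-coloring; so we may assume $d(y_5)\ge4$, whence $n_3(x_1)=1$. For $\Delta=6$ the preceding paragraph shows that if $d(y_5)=d(y_6)=3$ then $G$ already admits an acyclic edge $(\Delta+2)$-coloring; so at most one of $y_5,y_6$ is a $3$-vertex, whence $n_3(x_1)\le 2$.

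I do not expect a serious obstacle here; the one place that needs care is verifying that the reductions quoted from the preceding paragraph genuinely apply in the present configuration, namely that a $3$-vertex $y_j$ with $j\in C^{*}$ indeed lies outside $N(v)\cup\{x_2\}$ and carries colour $1$ on an incident edge (both recorded just above, using $C^{*}\subseteq B_1$), and that the w.l.o.g.\ choices pinning down $\Delta+1,\Delta+2\notin C(x_2)$ are consistent with the earlier normalisation. Everything else is routine bookkeeping with the colour sets $C(x_1)$, $C(x_2)$ and $B_{23}$.
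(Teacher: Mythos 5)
Your proposal is correct and follows essentially the same route as the paper: the proposition is exactly the residue of the recoloring reductions in the paragraph just before it (which dispose of $d(y_5)=3$ when $\Delta=5$ and of $d(y_5)=d(y_6)=3$ when $\Delta=6$), combined with the bound $d(y_j)\ge 4$ for $j\in B_{23}$ from Proposition~\ref{prop3203} and the normalisation $\Delta+1,\Delta+2\in C^*\setminus C(x_2)$, which together confine any possible $3$-vertex among the $y_j$ to the indices $5$ (resp.\ $5,6$). Your explicit counting of $C(x_1)=C\setminus\{2,3\}$ and of $(C^*\cap C(x_2))\setminus\{4\}$ merely spells out what the paper leaves implicit, so there is no gap.
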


We summarize the above many cases after Proposition~\ref{prop3202} in the following Lemma~\ref{lemma15},
in each of which an acyclic edge $(\Delta+2)$-coloring for $G$ can be obtained in $O(1)$ time.

\begin{lemma}
\label{lemma15}
If the acyclic edge $(\Delta+2)$-coloring $c$ for the graph $H = G - u v$ satisfies one of the following,
then we can obtain an acyclic edge $(\Delta+2)$-coloring for $G$ in $O(1)$ time:
\begin{itemize}
\parskip=-3pt
\item[{(1)}]
	$1\not\in C(x_2)$;
\item[{(2)}]
	$2\not\in C(x_2)$ or $3\not\in C(x_2)$;
\item[{(3)}]
	$B_{23}\setminus C(x_2)\ne\emptyset$ and $c(x_2 z_k) = i$ with $k\ne i$ for some $i\in \{2, 3\}$;
\item[{(4)}]
	$c(x_2 z_2) = 2$, $c(x_2 z_3) = 3$, and there exists a color $j\in (C^*\setminus C(x_2))\setminus B'_{23}$.
\end{itemize}%
\end{lemma}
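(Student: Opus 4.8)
The plan is to verify each of (1)--(4) by an explicit recolouring that touches at most three edges: in every case the two edges $u x_1, u x_2$ are recoloured, then the still-uncoloured edge $u v$ is given a colour, the new colours being drawn from $\{2,3\}$, from $B_{23}$, and from $\{\alpha_1,\beta_1\}$, respectively. Properness of each resulting colouring is read off immediately from the colour-set facts already in hand --- $C(x_1)=C(v_1)=C\setminus\{2,3\}$ (Proposition~\ref{prop3201}), $\{1,2,3,\beta_1\}\subseteq C(v_2)$ and $\{2,3,\alpha_1,\alpha_2\}\subseteq C(v_3)$ (Proposition~\ref{prop3202}), the hypothesis on $C(x_2)$, together with $\alpha_1,\beta_1\in C^*$ and $4\notin C(v)$ --- so that the substance of the proof is acyclicity.

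For acyclicity I would argue uniformly. Any bichromatic cycle created by a recolouring must contain a recoloured edge, hence passes through $u$, $x_1$, or $x_2$. A cycle through $x_1$ but not $u$ must leave $x_1$ along its unique surviving edge of the relevant colour (in particular, once $ux_1$ is recoloured to a colour $i\in\{2,3\}$ it is the \emph{only} edge of colour $i$ at $x_1$, since $x_1$ originally carries no colour $2$ or $3$); a cycle through $x_2$ but not $u$ is similarly constrained; and a cycle through $u$ uses two of $ux_1,ux_2,uv$. These are ruled out with Lemma~\ref{lemma09} and the alternating paths already located: the $(1,j)_{(x_1,v_1)}$-paths for $j\in C^*$ are maximal in $H$, end at $v_1$, and avoid $x_2$ (Proposition~\ref{prop3201}); the $(3,\alpha_1)_{(v_2,v_3)}$- and $(2,\beta_1)_{(v_2,v_3)}$-paths avoid $x_1$ (Proposition~\ref{prop3202}); and $v$ carries no edge of colour $4$, so no $(4,\cdot)$-cycle closes at $v$.

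Cases (1) and (2) are quick. If $1\notin C(x_2)$, set $(ux_1,ux_2,uv)\to(2,1,\alpha_1)$: a $(1,2)$-cycle cannot form (no second colour-$2$ edge at $x_1$), a $(2,\alpha_1)$-cycle cannot close at $v$ (because $\alpha_1\notin C(v_2)$), and a $(1,\alpha_1)$-cycle is excluded by the maximality in $H$ of the $(1,\alpha_1)$-path from $x_1$ through $v_1$ to $v$ and its avoidance of $x_2$. If $2\notin C(x_2)$ ($3\notin C(x_2)$ symmetric), set $ux_1\to2$ and $uv$ to any colour of $\{5,\dots,\Delta+2\}$: no $(2,\cdot)$-cycle can involve $x_2$, no $(4,\cdot)$-cycle closes at $v$, and a cycle through $x_1$ would need a second colour-$2$ edge there.

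Cases (3) and (4) are the heart, and I expect the main obstacle here. Now $\{1,2,3,4\}\subseteq C(x_2)$ and $c(ux_2)=4$; the plan is to recolour $(ux_1,ux_2)\to(i,j)$ for a suitable index $i\in\{2,3\}$ and a witness colour $j$, and then put $uv\to\{\alpha_1,\beta_1\}\setminus\{j\}$, coordinating $i$ with the choice of $uv$ so that the $(i,\cdot)$-cycle on $ux_1,uv$ is killed by the avoidance facts of the second paragraph. With this recolouring the $(j,\cdot)$-cycle on $ux_2,uv$ cannot close at $v$ ($v$ has no colour-$j$ edge), all cycles through $x_1$ or $x_2$ avoiding $u$ already existed in $H$, and the single remaining danger is an $(i,j)$-cycle on $ux_1,ux_2$, whose non-$u$ portion restricts to an $(i,j)$-alternating path from $x_1$ to the colour-$i$ neighbour of $x_2$. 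Case (4)'s hypothesis $j\in(C^*\setminus C(x_2))\setminus B'_{23}$ is precisely the statement that for some $i\in\{2,3\}$ no such path exists (and $c(x_2z_2)=2$, $c(x_2z_3)=3$ ensures that this colour-$i$ neighbour is $z_i$), so no such cycle can arise; case (3) uses instead the witness $j\in B_{23}\setminus C(x_2)$ together with the misalignment $c(x_2z_k)=i$ with $k\ne i$, which unlocks the index $i$ for the same recolouring. The delicate bookkeeping --- and the reason the hypotheses record the exact colours on the neighbours $z_1,z_2,z_3$ of $x_2$ --- is exactly this control over whether shifting $ux_2$ off colour $4$ revives an $(i,j)$-alternating path to the colour-$i$ neighbour of $x_2$; once the finitely many sub-cases of (3) and (4) are checked in this way, each in $O(1)$ time, the lemma follows.
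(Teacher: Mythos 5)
Your overall plan (recolour $ux_1,ux_2$, then give $uv$ a colour and rule out the three possible bichromatic cycles through $u$) is the paper's plan, and your cases (1) and (4) are essentially the paper's recolourings. Case (2), however, has a genuine error. After $ux_1\to 2$ you let $uv$ receive \emph{any} colour $k\in\{5,\dots,\Delta+2\}$ and claim the only dangers are cycles meeting $x_2$, $(4,\cdot)$-cycles at $v$, or a cycle needing a second colour-$2$ edge at $x_1$. This misses the $(2,k)$-cycle through $u$: it uses $ux_1$ as its \emph{only} colour-$2$ edge at $x_1$, leaves $x_1$ along its colour-$k$ edge, and returns via $vv_2$ (colour $2$) and $uv$ (colour $k$). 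Nothing in Propositions~\ref{prop3201}--\ref{prop3204} excludes a $(2,k)$-alternating path from $x_1$ to $v_2$ for an arbitrary $k\in C^*\cap C(v_2)$; for instance $x_1v_2\in E(G)$ with $c(x_1v_2)=k$ is consistent with everything established and yields the bichromatic $4$-cycle $ux_1v_2vu$ (similarly with $v_3$ in the symmetric subcase $3\notin C(x_2)$). This is exactly why the paper does not allow an arbitrary colour: it sets $ux_1\to\{2,3\}\setminus C(x_2)$, $ux_2\to C^*\setminus C(x_2)$ and then $uv\to\{\alpha_1,\beta_1\}\setminus C(u)$, so that the $(i,\cdot)$-cycle through $v_i$, $i\in\{2,3\}$, is killed either because the colour of $uv$ is absent at $v_i$ ($\alpha_1\notin C(v_2)$, $\beta_1\notin C(v_3)$) or because the $(2,\beta_1)$-/$(3,\alpha_1)$-path of Proposition~\ref{prop3202}(2) ends at the other $v_j$ and avoids $x_1$, via Lemma~\ref{lemma09}. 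Your case (2) must be repaired by restricting $uv$ to $\{\alpha_1,\beta_1\}$ in the same way.

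A secondary gap: case (3) is stated but not argued. You identify the danger (an $(i,j)$-alternating path from $x_1$ to the colour-$i$ neighbour of $x_2$) but only say the misalignment ``unlocks'' $i$ and that the sub-cases ``can be checked.'' The missing step is that $j\in B_{23}\subseteq B'_{23}$ supplies an $(i,j)$-path from $x_1$ to $z_i$ which avoids $x_2$ (because $j\notin C(x_2)$), so by the uniqueness of maximal bichromatic paths (Lemma~\ref{lemma09}) the $(i,j)$-path starting from $x_1$ cannot also reach $z_k$ with $k\ne i$; since $x_2z_k$ is now the unique colour-$i$ edge at $x_2$, no $(i,j)$-cycle can close through $ux_1,ux_2$. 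Also note that in (3) and (4) the index $i$ (resp.\ $\gamma$) is dictated by the hypothesis, so you cannot ``coordinate $i$ with the choice of $uv$''; fortunately both $\alpha_1$ and $\beta_1$ are safe for either value of $i$ by the same Proposition~\ref{prop3202}/Lemma~\ref{lemma09} facts you already quote, so this is a wording issue rather than a further gap.
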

\begin{proof}
For (1), let $(u x_2, u x_1)\to (1, 2)$;
for (2), let $u x_1\to \{2, 3\}\setminus C(x_2)$ and $u x_2\to C^*\setminus C(x_2)$;
for (3), let $u x_2\to j\in B_{23}\setminus C(x_2)$, and $u x_1\to i$;
for (4), if $G$ contains no $(\gamma, j)_{(z_\gamma, x_1)}$-path for some $\gamma\in \{2, 3\}$, then let $u x_2\to j$ and $u x_1\to \gamma$.
The above together with $u v\to \{\alpha_1, \beta_1\}\setminus C(u)$ gives rise to an acyclic edge $(\Delta + 2)$-coloring for $G$,
by Proposition~\ref{prop3202} and Lemma~\ref{lemma09}.
\end{proof}

We continue to discuss each of the configurations ($A_{3.1}$)--($A_{3.3}$) separately in the following subsections,
under Propositions~\ref{prop3201}--\ref{prop3204} and specifically with $x_1 = u_1$;
that is, $(u u_1, u u_2)_c = (1, 4)$, $C(u_1) = \{1, 4, 5, \ldots, \Delta + 2\}$ and $C(u_2) = \{1, 2, 3, 4, \ldots, d(u_2)\}$.
We tentatively un-color the two edges $u u_1$ and $u u_2$, and will recolor them at the end.

\subsubsection{Configuration ($A_{3.1}$): $d(u_2) = 5$, $d(w_4) = 3$, and $w_2 w_3\in E(G)$}
From Propositions~\ref{prop3201}--\ref{prop3204}, we have $c(u_2 w_4)\not\in \{2, 3\}$.
Let $s_5, s_6, \ldots, s_{\Delta + 2}$ be the neighbors of $w_2$ other than $u_2$ and $w_3$,
and $t_5, t_6, \ldots, t_{\Delta + 2}$ be the neighbors of $w_3$ other than $u_2$ and $w_2$.
Note that $N(u_2)\cap N(u)\ne\emptyset$, $w_4\not\in \{v_1, v_2, v_3\}$,
and for each $i\in \{1, 2, 3\}$ any re-coloring of $u_2 w_i$ cannot change $C(v) = \{1, 2, 3\}$.
By symmetry, it suffices to consider the following three cases {\bf ($A_{3.1}$-1)}, {\bf ($A_{3.1}$-2)} and {\bf ($A_{3.1}$-3)}.

\paragraph{Case ($A_{3.1}$-1).} $(u_2 w_2, u_2 w_3)_c = (2, 3)$.

Since $C^*\setminus \{5\}\subseteq B_{23}$,
it suffices to assume that $c(w_2 s_j)= c(w_3 t_j)= j$ for every $j\in \{6, 7, \ldots, \Delta + 2\}$ and,
either (i) $c(w_2 s_5) = c(w_3 t_5) = 4$, $c(w_2 w_3)\in \{1, 5\}$
or (ii) $c(w_2 w_3) = 4$, $c(w_2 s_5) = 3$, $c(w_3 t_5) = 2$.
One sees that there is no $(i, j)$-path passing through $w_i$ for every $i\in \{2, 3\}$ and $j\in \{1, 5\}$.
We assume w.l.o.g. that $(u_2 w_1, u_2 w_4) = (5, 1)$.

If $G$ contains no $(5, i)_{(w_1, w_4)}$-path for some $i\in B_{23}\setminus C(w_4)$, then let $u_2 w_4\to i$;
or if $1\not\in C(w_1)$, then let $u_2 w_1\to 1$ and $u_2 w_4\to B_{23}\setminus C(w_4)$.
It follows from Lemma~\ref{lemma15}(1) and (4), respectively, that we are done.

If $1\in C(w_1)$ and $G$ contains a $(5, i)_{(w_1, w_4)}$-path for every $i\in B_{23}\setminus C(w_4)$,
then $B_{23}\subseteq C(w_1)\cup C(w_4)$ and $5\in C(w_4)$.
It follows that $\{2, 3\}\setminus (C(w_1)\cup C(w_4))\ne\emptyset$ and assume w.l.o.g. that $2\not\in C(w_1)\cup C(w_4)$.
For (i), if $c(w_2 w_3) = 5$, then let $(u_2 w_2, u_2 w_4)\to (1, 2)$;
or if $c(w_2 w_3) = 1$, then let $(u_2 w_2, u_2 w_1) \to (5, 2)$.
For (ii), if $G$ contains no $(2, 3)_{(w_3, w_4)}$-path, then let $(u_2 w_2, u_2 w_4)\to (1, 2)$;
or otherwise, let $(u_2 w_2, u_2 w_1)\to (5, 2)$.
It follows from Lemma~\ref{lemma15}(3) that we are done.

\paragraph{Case ($A_{3.1}$-2).} $(u_2 w_2, u_2 w_1, u_2 w_3, u_2 w_4)_c = (2, 3, 5, 1)$.

When $G$ contains no $(5, i)_{(w_3, w_4)}$-path for some $i\in B_{23}\setminus C(w_4)$, let $u_2 w_4\to i$ and we are done by Lemma \ref{lemma15}(1).
In the other case, $G$ contains a $(5, i)_{(w_3, w_4)}$-path for every $i\in B_{23}\setminus C(w_4)$,
which implies that $5\in C(w_4)$ and $(B_{23}\setminus C(w_4))\subseteq C(w_3)$.
We assume w.l.o.g. that $c(w_2 s_i) = i\in \{6, \ldots, \Delta + 2\}$,
and consider the following two subcases on the different values of $c(w_2 w_3)$.

{Case 2.1.}\ $c(w_2 s_5) = 4$ and $c(w_2 w_3)\in \{1, 3\}$.
One sees that $G$ contains no $(2, 5)_{(w_2, u_1)}$-path in this case.
If $B_{23}\subseteq C(w_3)$, then switch the colors of $\{u_2 w_2, u_2 w_3\}$ and we are done by Lemma \ref{lemma15}(3).
Otherwise, $B_{23}\setminus C(w_3)\ne\emptyset$ and we assume w.l.o.g. that $4\not\in C(w_3)$.
It follows that $C(w_4) = \{1, 4, 5\}$ and $(B_{23}\setminus \{4\})\subseteq C(w_3)$.
If $1\not\in (w_3)$, then let $u_2 w_3\to 4$;
if $1\in C(w_3)\setminus \{c(w_2 w_3)\}$, then switch the colors of $\{u_2 w_2, u_2 w_3\}$;
we are done by Lemma \ref{lemma15}(4) and (3), respectively.
Otherwise, it suffices to assume that $c(w_2 w_3) = 1$.

If $2\not\in C(w_3)$, then let $(u_2 w_2, u_2 w_3, u_2 w_4)\to (5, 4, 2)$;
otherwise $C(w_3) = C\setminus \{3, 4\}$.
If $2\not\in C(w_1)$, then let $(u_2 w_1, u_2 w_2, u_2 w_3)\to (2, 5, 3)$;
otherwise, we have $C(w_1) = C\setminus \{1, 5\}$ and let $(u_2 w_1, u_2 w_3)\to (5, 3)$.
It follows from Lemma~\ref{lemma15}(1), (3) and (3), respectively, that $G$ admits an acyclic edge $(\Delta + 2)$-coloring.

{Case 2.2.}\ $(w_2 w_3, w_3 t_5) = (4, 2)$.
In this case, we assume w.l.o.g. that $7\not\in C(w_4)$ and $G$ contains a $(5, 7)_{(w_3, w_4)}$-path.
We consider the following two subcases on the different values of $B_{23}\setminus C(w_3)$.

{Case 2.2.1.}\ $B_{23}\subseteq C(w_3)$.
Then $C(w_3) = C\setminus \{1, 3\}$.
If $3, 5\not\in C(w_2)$, then let $(u_2 w_2, u_2 w_3, u_2 w_4)$ $\to (5, 1, 7)$ and we are done by Lemma \ref{lemma15}(2);
otherwise, $\{3, 5\}\cap C(w_2)\ne \emptyset$ and thus $1\not\in C(w_2)$.

If $5\not\in B'_{23}$, then let $(u_2 w_3, u_2 w_4)\to (1, 7)$;
it follows from Lemma~\ref{lemma15}(4) that $G$ admits an acyclic edge $(\Delta + 2)$-coloring.
In the other case, we have $5\in B'_{23}$, $C(w_1) = C\setminus \{1, 2\}$ and $C(w_2)= C\setminus \{1, 3\}$.
If $3\not\in C(w_4)$, then switch the colors of $\{u_2 w_1, u_2 w_4\}$;
otherwise, $3\in C(w_4)$ and $C(w_4) = \{1, 3, 5\}$.
If there is no $(2, 5)_{(w_1, w_3)}$-path, then switch the colors of $\{u_2 w_2, u_2 w_1\}$;
or otherwise, switch the colors of $\{u_2 w_2, u_2 w_4\}$.
It follows from Lemma~\ref{lemma15}(3) that $G$ admits an acyclic edge $(\Delta + 2)$-coloring.

{Case 2.2.2.}\ $B_{23}\setminus C(w_3)\ne\emptyset$.
We assume w.l.o.g. that $6\not\in C(w_3)$.
It follows that $C(w_4) = \{1, 5, 6\}$ and $C(w_2) = C\setminus \{1, 3\}$ with $c(w_2 s_5) = 5$.
If $3\not\in C(w_3)$, then we first let $(u_2 w_3, u_2 w_4)\to (3, 7)$.
Next, if $1\not\in C(w_1)$, then let $u_2 w_1\to 1$;
or otherwise, we have $C(w_1) = C\setminus \{2, 5\}$ and let $u_2 w_1\to 5$.
We are done by Lemma \ref{lemma15}(3) and (1), respectively.

In the other case, $3\in C(w_3)$ and thus $C(w_3) = C\setminus \{1, 6\}$.
If $5\not\in B'_{23}$, then let $u_2 w_3\to 6$ and we are done by Lemma \ref{lemma15}(4);
otherwise, $5\in C(w_1)$ and thus $C(w_1)= C\setminus \{1, 2\}$.
If $G$ contains no $(2, 5)_{(w_3, w_4)}$-path, then switch the colors of $\{u_2 w_2, u_2 w_4\}$;
if $G$ contains no $(3, 5)_{(w_3, w_4)}$-path, then switch the colors of $\{u_2 w_1, u_2 w_4\}$;
or otherwise, $G$ contains a $(2, 5)_{(w_3, w_4)}$-path and a $(3, 5)_{(w_3, w_4)}$-path,
and switch the colors of $\{u_2 w_1, u_2 w_2\}$.
It follows from Lemma~\ref{lemma09} and Lemma~\ref{lemma15}(3) that $G$ admits an acyclic edge $(\Delta + 2)$-coloring, and we are done.

\paragraph{Case ($A_{3.1}$-3).} $(u_2 w_2, u_2 w_1, u_2 w_3, u_2 w_4)_c = (2, 3, 1, 5)$.

When $G$ contains no $(5, i)_{(w_3, w_4)}$-path for some $i\in B_{23}\setminus C(w_3)$,
let $u_2 w_3 \to i$ and we are done by Lemma \ref{lemma15}(1).
In the other case, we assume that if $B_{23}\setminus C(w_3)\ne \emptyset$, then $G$ contains a $(5, i)_{(w_3, w_4)}$-path for every $i\in B_{23}\setminus C(w_3)$,
and thus $5\in C(w_3)$ and $(B_{23}\setminus C(w_3))\subseteq C(w_4)$.

If $5\not\in B'_{23}$ and $G$ contains no $(1, i)_{(w_3, w_4)}$-path for some $i\in B_{23}\setminus C(w_4)$,
then let $u_2 w_4\to i$ and we are done by Lemma \ref{lemma15}(4).
Otherwise, either $5\in B'_{23}$,
or $5\not\in B'_{23}$ and then $G$ contains a $(1, i)_{(w_3, w_4)}$-path for every $i\in B_{23}\setminus C(w_4)$, $1\in C(w_4)$,
and $(B_{23}\setminus C(w_3))\subseteq C(w_4)$.

We assume w.l.o.g. that $c(w_2 s_i)= i\in \{6, \ldots, \Delta + 2\}$, and consider the following two subcases on the different values of $c(w_2 w_3)$.

{Case 3.1.}\ $c(w_2 s_5) = 4$ and $c(w_2 w_3)\in \{3, 5\}$.
If $B_{23}\subseteq C(w_3)$, then switch the colors of $\{u_2 w_2, u_2 w_3\}$ and we are done by Lemma \ref{lemma15}(3).
Otherwise, $B_{23}\setminus C(w_3)\ne\emptyset$ and we assume w.l.o.g. that $6\not\in C(w_3)$.
It follows that $5\in C(w_3)$, $6\in C(w_4)$ and $G$ contains a $(5, 6)_{(w_3, w_4)}$-path.

First, assume that $5\in C(w_3)\setminus \{c(w_2 w_3)\}$.
One sees from $5\not\in C(w_2)$ that $5\not\in B'_{23}$.
It follows that $1\in C(w_4)$, $C(w_4)= \{1, 5, 6\}$, and $C(w_3)= C\setminus\{2, 6\}$.
Then switch the colors of $\{u_2 w_2, u_2 w_3\}$ and we are done by Lemma \ref{lemma15}(3).

Next, assume that $c(w_2 w_3)= 5$.
If $2, 3\not\in C(w_3)$, then switch the colors of $\{u_2 w_2, u_2 w_3\}$ and we are done by Lemma \ref{lemma15}(3);
otherwise, $\{2, 3\}\cap C(w_3)\ne \emptyset$.
When $2\not\in C(w_3)$,
since $5\not\in B'_{23}$, we have $C(w_4) = \{5, 1, 6\}$ and $C(w_3) = C\setminus \{2, 6\}$,
let $(u_2 w_2, u_2 w_3, u_2 w_4)\to (1, 6, 2)$ and we are done by Lemma \ref{lemma15}(3).
Now it suffices to assume that $2\in C(w_3)$.

When $1\in C(w_4)$ and thus $C(w_3) = C\setminus \{3, 6\}$,
if $1\not\in C(w_1)$, then let $(u_2 w_1$, $u_2 w_3$, $u_2 w_4)\to (1, 6, 3)$;
if $C(w_1) = C\setminus \{2, 5\}$, then let $(u_2 w_1, u_2 w_3, u_2 w_4)\to (5, 3, 7)$;
we are done by Lemma \ref{lemma15}(3) and (1), respectively.

In the other case, $1\not\in C(w_4)$.
It follows that $5\in B'_{23}$ and $C(w_1) = C\setminus \{1, 2\}$.
If $3\not\in C(w_3)$, then let $(u_2 w_1, u_2 w_3)\to (1, 3)$;
otherwise, $3\in C(w_3)$.
Since $B_{23}\subseteq C(w_3)\cup C(w_4)$,
we assume w.l.o.g. that $C(w_4)= \{4, 5, 6\}$ and $C(w_3) = C\setminus \{4, 6\}$.
Then switch the colors of $\{u_2 w_1, u_2 w_2\}$.
It follows from Lemma~\ref{lemma15}(3) that $G$ admits an acyclic edge $(\Delta + 2)$-coloring.

{Case 3.2.}\ $(w_2 w_3, w_3 t_5) = (4, 2)$.
According to $5\in B'_{23}$ or not, there are two subcases.

{Case 3.2.1.}\ $5\not\in B'_{23}$.
In this case, $1\in C(w_4)$ and $B_{23}\subseteq C(w_3)\cup C(w_4)$.

First, assume that $1\not\in C(w_2)$.
It follows that $C(w_4) = \{5, 1, 4\}$, $C(w_3)= C\setminus \{3, 5\}$, and $G$ contains a $(1, 6)_{(w_3, w_4)}$-path.
Let $(u_2 w_3, u_2 w_4)\to (5, 6)$.
If $5\not\in C(w_2)$ and thus $C(w_2)= C\setminus \{1, 3\}$, then let $u_2 w_2\to 1$.
We are done by Lemma \ref{lemma15}(1) and (2), respectively.

Next, assume that $1\in C(w_2)$ and $C(w_2) = C\setminus \{3, 5\}$.
If $B_{23}\subseteq C(w_3)$ and thus $C(w_3) = C\setminus \{3, 5\}$, then let $u_2 w_3\to 5$ and $u_2 w_4\to B_{23}\setminus C(w_4)$;
we are done by Lemma \ref{lemma15}(1).
In the other case, $B_{23}\setminus C(w_3)\ne \emptyset$ and we assume w.l.o.g. that $6\in C(w_4)\setminus C(w_3)$.
It follows that $C(w_4) = \{5, 1, 6\}$, $C(w_3) = C\setminus \{3, 6\}$ and $G$ contains a $(5, 6)_{(w_3, w_4)}$-path.
Let $(u_2 w_2, u_2 w_3, u_2 w_4)\to (5, 6, 4)$,
we are done by Lemma~\ref{lemma15}(2).

{Case 3.2.2.}\ $5\in B'_{23}$ and $C(w_2) = C\setminus \{1, 3\}$, $C(w_1) = C\setminus \{1, 2\}$.
If $5\not\in C(w_3)$, then let $u_2 w_3\to 5$ and $u_2 w_4\to B_{23}\setminus C(w_4)$;
we are done by Lemma \ref{lemma15}(1).

In the other case, $5\in C(w_3)$ and we assume w.l.o.g. that $6\in C(w_4)\setminus C(w_3)$.
If $G$ contains neither a $(5, 3)_{(w_2, w_4)}$-path nor a $(5, 2)_{(w_1, w_4)}$-path,
then switch the colors of $\{u_2 w_1, u_2 w_2\}$;
otherwise, $G$ contains a $(5, \gamma)_{(w_\gamma, w_4)}$-path, where $\{2, 3\}\cap C(w_4) = \{\gamma\}$.
It follows that $1\not\in C(w_4)$ and $C(w_3) = C\setminus \{3, 6\}$, and then let $(u_2 w_3, u_2 w_1)\to (3, 1)$.
One sees from Lemma~\ref{lemma09} that there is no $(5, 3)_{(w_3, w_4)}$-path even if $3\in C(w_4)$.
We are done by Lemma~\ref{lemma15}(3).

\subsubsection{Configuration ($A_{3.2}$): $d(u_2)= 6$, $d(w_3)= d(w_4)= d(w_5)= 3$}
From Propositions~\ref{prop3201}--\ref{prop3204},
we assume that $c(u_2 w_i) = i + 1$, $i\in \{1, 2, 4, 5\}$, $c(u_2 w_3) = 1$, and $C(w_3) = \{1, a, b\}$.
Note that $w_3, w_4, w_5\not\in \{v_1, v_2, v_3, u_1, v\}$ and $|B_{23}|\ge 3$.

If $G$ contains neither an $(a, i)_{(w_3, u_2)}$-path nor a $(b, i)_{(w_3, u_2)}$-path for some $i\in B_{23}\setminus C(w_3)$,
then let $u_2 w_3\to i$ and we are done by Lemma \ref{lemma15}(1).
Otherwise, $G$ contains an $(a, i)_{(w_3, u_2)}$-path or a $(b, i)_{(w_3, u_2)}$-path for every $i\in B_{23}\setminus C(w_3)$.
By symmetry, we consider the following two possibilities:
\begin{itemize}
\parskip=0pt
\item[(i)]
	$\{a, b\}\cap \{5, 6\} = \{5\}$, $C(w_3) = \{1, 4, 5\}$, $C(w_4) = \{5, 7, 8\}$, and $G$ contains a $(5, i)_{(w_3, w_4)}$-path for every $i\in \{7, 8\}$.
    Then let $u_2 w_3\to 6$ and $u_2 w_5\to B_{23}\setminus C(w_5)$.
\item[(ii)]
	$C(w_3) = \{1, 5, 6\}$, $C(w_4) = \{5, 4, 7\}$, $8\in C(w_5)$, and $G$ contains $(5, i)_{(w_3, w_4)}$-path for every $i\in \{4, 7\}$.
     Then let $u_2 w_3\to 8$ and $u_2 w_5\to \{4, 7\}\setminus C(w_5)$.
\end{itemize}
By Lemma~\ref{lemma15}(1) we are done.

\subsubsection{Configuration ($A_{3.3}$): $d(u_2) = 6$, $d(w_4) = d(w_5) = 3$, $d(w_2)\le 5$, $d(w_3) = 4$, and $w_2 w_3\in E(G)$}
From Propositions~\ref{prop3201}--\ref{prop3204},
we assume that $(u_2 w_1, u_2 w_4, u_2 w_5)_c = (3, \xi, 5)$, $\{c(u_2 w_2), c(u_2 w_3)\} = \{2, \eta\}$, where $\{\xi, \eta\} = \{1, 6\}$.
Note that $N(u_2)\cap N(u)\ne\emptyset$, $w_4, w_5\not\in \{v_1, v_2, v_3\}$, and any re-coloring $u_2 w_3$ cannot change $C(v)$.
Denote $C(w_5) = \{5, a, b\}$ and $C(w_4) = \{\xi, j, k\}$.

\paragraph{Case ($A_{3.3}$-1).}\ $5, 6\not\in B'_{23}$.

If $G$ contains neither an $(a, i)_{(w_5, u_2)}$-path nor a $(b, i)_{(w_5, u_2)}$-path for some $i\in B_{23}\setminus C(w_5)$,
then let $u_2 w_5\to i$ and we are done by Lemma \ref{lemma15}(4).
Otherwise, $G$ contains an $(a, i)_{(w_5, u_2)}$-path or a $(b, i)_{(w_5, u_2)}$-path for every $i\in B_{23}\setminus C(w_5)$.
Symmetrically, $G$ contains an $(j, i)_{(w_4, u_2)}$-path or a $(k, i)_{(w_4, u_2)}$-path for every $i\in B_{23}\setminus C(w_4)$.
It follows that $C(w_5)\cap \{\xi, \eta\}\ne \emptyset$ and $C(w_4)\cap \{5, \eta\}\ne \emptyset$.
One sees that there exists $\gamma\in B_{23}\setminus (C(w_4)\cup C(w_5))$.
Next, if $G$ contains no $(\eta, \gamma)_{(u_2, w_5)}$-path, then let $u_2 w_5\to \gamma$;
otherwise, let $u_2w_4\to\gamma$;
either way we are done by Lemma \ref{lemma15}(4).

\paragraph{Case ($A_{3.3}$-2).}\ $\{5, 6\}\cap B'_{23}\ne\emptyset$.

In this case, $c(u_2 w_2) = 2$, $2\in C(w_3)$, and $|\{5, 6\}\cap B'_{23}| = 1$.
Let $(u_2 y_1, u_2 y_5, u_2 y_6)_c = (1, 5, 6)$, where $\{y_1, y_5, y_6\} = \{w_3, w_4, w_5\}$.
For $k\in \{1, 5, 6\}$, if $y_k\in \{w_4, w_5\}$, then let $C(y_k) = \{k, a_k, b_k\}$;
or if $y_k = w_3$, then let $C(y_k) = \{k, 2, a_k, b_k\}$.
If $G$ contains neither an $(a_1, i)_{(y_1, u_2)}$-path nor a $(b_1, i)_{(y_1, u_2)}$-path for some $i\in B_{23}\setminus C(y_1)$,
then let $u_2 y_1\to B_{23}\setminus C(y_1)$ and we are done by Lemma \ref{lemma15}(1);
otherwise, $G$ contains an $(a_1, i)_{(y_1, u_2)}$-path or a $(b_1, i)_{(y_1, u_2)}$-path for every $i\in B_{23}\setminus C(y_1)$.

We assume w.l.o.g. that $\{a_5, b_5\} = \{7, 8\}$, $G$ contains a $(5, i)_{(y_1, y_5)}$-path for every $i\in \{7, 8\}$ and
either (i)\ $\{a_1, b_1\} = \{4, 5\}$
or (ii)\ $\{a_1, b_1\} = \{5, 6\}$ with $4\in C(y_6)$.
If $5\not\in B'_{23}$, then let $u_2 y_5\to 4$ and we are done by Lemma~\ref{lemma15}(4).
Otherwise, $6\not\in B'_{23}$ and first let $u_2 y_6\to B_{23}\setminus C(y_6)$, and then let $u_2 y_1\to 6$ for (i), or $u_2 y_1\to 4$ for (ii);
by Lemma~\ref{lemma15}(1) we are done.

This finishes the inductive step for the case where $G$ contains the configuration ($A_3$).

\subsection{Configuration ($A_5$)}
In this subsection we prove the inductive step for the case where $G$ contains one of the configurations described in ($A_5$):

\begin{description}
\parskip=0pt
\item[$(A_5)$]
	A $5$-vertex $u$ adjacent to $u_1, u_2, u_3, u_4$ and a $3$-vertex $v$, sorted by their degrees.
    At least one of the configurations ($A_{5.1}$)--($A_{5.4}$) occurs.
\end{description}

The same as in the above, we first not to distinguish the four neighbors $u_1, u_2, u_3, u_4$ but refer to them as $x_1, x_2, x_3, x_4$;
we assume w.l.o.g. that $c(u x_i) = i$ for $1\le i\le 4$.
Let $v_1, v_2$ denote the other two neighbors of $v$ and $C^*_5 = C\setminus\{1, 2, 3, 4\} = \{5, 6, \ldots, \Delta+2\}$,
and similarly we refer to these $\Delta-2$ colors indistinguishably as $\gamma_5, \gamma_6, \ldots, \gamma_{\Delta + 2}$.
Assume from Proposition~\ref{prop3001} that $1\le|C(u)\cap C(v)|\le 2$.

\begin{lemma}
\label{lemma16}
If $|C(u)\cap C(v)|=2$, then in $O(1)$ time either an acyclic edge $(\Delta + 2)$-coloring for $G$ can be obtained,
or an acyclic edge $(\Delta + 2)$-coloring for $H$ can be obtained such that $|C(u)\cap C(v)| = 1$.
\end{lemma}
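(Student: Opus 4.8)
The plan is to mirror the proof of Lemma~\ref{lemma14}, using that configurations $A_3$ and $A_5$ present the same local picture at $uv$: $v$ is a $3$-vertex, so in $H=G-uv$ one has $d_H(v)=2$ and $|C(v)|=2$, while $u$ has its four neighbours $x_1,x_2,x_3,x_4$ with $c(ux_i)=i$. After relabelling colours I would assume $C(u)\cap C(v)=\{1,2\}$ with $c(ux_1)=c(vv_1)=1$ and $c(ux_2)=c(vv_2)=2$; then $C\setminus(C(u)\cup C(v))=C^*_5$ is non-empty, and by Proposition~\ref{prop3001} $C^*_5\subseteq B_1\cup B_2$, so in particular $5\in B_1\cup B_2$.

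The first step is a \emph{free-colour} reduction. If some colour $j\in C^*_5$ lies outside $C(x_i)$ for an $i\in\{1,2\}$, then $ux_i\to j$ is proper and, because $j\notin C(u)\cup C(x_i)$, creates no bichromatic cycle; the result is an acyclic edge $(\Delta+2)$-coloring of $H$ with $|C(u)\cap C(v)|=1$, and we are done. The symmetric move $vv_i\to j$ for $j\in C^*_5\setminus C(v_i)$ does the same. Hence I may assume $C^*_5\subseteq C(x_1)\cap C(x_2)\cap C(v_1)\cap C(v_2)$; in particular $C(x_i)\supseteq C^*_5\cup\{i\}$ and $C(v_i)\supseteq C^*_5\cup\{i\}$, so $d(x_i),d(v_i)\ge\Delta-1$, and each of these colour sets differs from $C^*_5\cup\{i\}$ by at most one colour of $\{1,2,3,4\}\setminus\{i\}$.

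In this saturated case I would take w.l.o.g.\ $5\in B_1$, which means $H$ contains the maximal $(1,5)$-alternating path $u,x_1,\ldots,v_1,v$ (maximal because $5\notin C(u)\cup C(v)$). I then attempt $(ux_2,ux_1,uv)\to(1,2,5)$, which is proper at $u$ and --- when $2\notin C(x_1)$ and $1\notin C(x_2)$ --- at $x_1$ and $x_2$. Lemma~\ref{lemma09}, applied to the maximal $(1,5)$-path, kills the $(1,5)$-type cycle that could appear through $ux_2$ and $uv$; in the principal subcase where $C(x_1),C(x_2)$ carry no colour beyond $C^*_5\cup\{i\}$, the only remaining possible new bichromatic cycle is a $(2,5)$-cycle through $ux_1$ and $uv$, i.e.\ one witnessed by a $(2,5)_{(x_1,v_2)}$-path. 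So if $H$ has no such path we obtain an acyclic edge $(\Delta+2)$-coloring of $G$; otherwise $5\in C(v_2)$, which together with the saturation gives $C(v_1)\supseteq C^*_5\cup\{1\}$ and $C(v_2)\supseteq C^*_5\cup\{2\}$, and then a single colour-swap at $v$ --- recolour $vv_2$ to free a colour for $uv$, with the reappearance of a bichromatic cycle blocked by Lemma~\ref{lemma09} --- finishes, exactly as in the closing line of Lemma~\ref{lemma14}'s proof.

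The step I expect to be the genuine obstacle is the gap with the $A_3$ argument. There $d_H(u)=2$, so after the free-colour reduction $d(x_1)=d(x_2)=\Delta$ and $C(x_1),C(x_2)$ are pinned down exactly, which is what makes the swap clean and removes all cycle types except the two above. Here $d(x_i)$ may still equal $\Delta$ with the spurious colour being $2$ (resp.\ $1$) or $3$ or $4$, so $(ux_2,ux_1,uv)\to(1,2,5)$ need not be proper, and one cannot simply overwrite colour $1$ on $ux_1$ by $3$ or $4$ because those clash with $ux_3,ux_4$. Handling this needs extra case distinctions on exactly which of $\{1,2,3,4\}$ appears in $C(x_1)$ and $C(x_2)$ --- to be resolved by the symmetric use of $x_2$ when $5\in B_2$, by choosing the $C^*_5$-representative colour for $uv$ more carefully, or by performing the final swap at $v$ rather than at $u$ once $v_1,v_2$ are saturated. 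Reconciling ``proper at $u$'' (where essentially only colour $2$, or a $C^*_5$-colour that is unavailable at $x_1$ in the saturated case, can overwrite $ux_1$) with ``proper at $x_1$'' across all of these subcases is the delicate bookkeeping, the same tension the author's configuration-specific arguments for $A_5$ must also negotiate.
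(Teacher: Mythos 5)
Your proposal has two genuine gaps. First, the ``free-colour reduction'' at $u$ rests on a false principle: recolouring $ux_1$ to some $j\in C^*_5\setminus C(x_1)$ is proper, but $j\notin C(u)\cup C(x_1)$ does \emph{not} rule out a new bichromatic cycle. Since $d_H(u)=4$ in configuration $(A_5)$, a $(j,k)$-cycle through $ux_1$ and $ux_k$ can appear for any $k\in\{2,3,4\}\cap C(x_1)$ with $j\in C(x_k)$, witnessed by a $(k,j)_{(x_1,x_k)}$-path, and nothing in your hypotheses excludes these. This is exactly why the paper never recolours an edge at $u$ in this lemma without first checking the relevant alternating paths (it recolours $ux_2\to 5$ only after verifying there is no $(i,5)_{(x_2,x_i)}$-path for $i\in\{3,4\}$), and why its cheap opening move is made at the degree-$2$ vertex $v$ (recolour $vv_1$), not at $u$: the Lemma~\ref{lemma14} situation does not transfer, because in $(A_3)$ the $3$-vertex is $u$ whereas in $(A_5)$ it is $v$. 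Consequently your saturation claim $C^*_5\subseteq C(x_1)\cap C(x_2)$ is not established by your argument; in the paper the corresponding facts (Proposition~\ref{prop3301}) come from $C^*_5\subseteq B_1\cup B_2$ together with an explicit $(2,j)_{(x_2,v_1)}$-path dichotomy and the swap $(vv_1,vv_2,uv)\to(2,1,j)$, carried out under an auxiliary assumption on the colours $1,2$ at $v_1,v_2$ whose failure is treated separately later in the proof.

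Second, and more seriously, the proposal stops where the real proof begins. You concede that $(ux_2,ux_1,uv)\to(1,2,5)$ need not even be proper once $C(x_1)$ or $C(x_2)$ carries a spurious colour from $\{1,2,3,4\}$, and you only list candidate strategies for that case without carrying any of them out; the intended closing ``single colour-swap at $v$, as in Lemma~\ref{lemma14}'' has no analogue here, since that swap exploited the exact sets $C(v_j)=C^*\cup\{j\}$ at the three coloured neighbours of the degree-$4$ endpoint, whereas now $v$ has only two coloured neighbours and the obstruction sits on $u$'s side. In the paper this residual situation is the bulk of the lemma: it is shown to force configuration $(A_{5.4})$ (so $d(u_3)=d(u_4)=4$ and $u_3u_4\in E(G)$), the proof then splits on how $c(uu_3),c(uu_4)$ sit inside $\{1,2,3,4\}$, and each branch is settled by explicit multi-edge recolourings justified through alternating-path arguments (Lemma~\ref{lemma09} and Proposition~\ref{prop3301}). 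Neither that analysis nor any substitute for it appears in your proposal, so the lemma is not proved.
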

\begin{proof}
Assume $(v v_1, v v_2)_c = (1, 2)$.

When there exists $i\in C^*_5\setminus C(v_1)$, $v v_1\to i$ reduces to the case $|C(u)\cap C(v)| = 1$.
In the other case, $C^*_5\setminus C(v_1) = \emptyset$,
that is, $C^*_5\cup \{1\}\subseteq C(v_1)$ and likewise, $C^*_5\cup \{2\}\subseteq C(v_2)$.

Assume that $1, 2\not\in (C(v_1)\cup C(v_2))\setminus \{c(vv_1), c(vv_2)\}$.
Since $j\in B_1\cup B_2$ for every $j\in C^*_5$, we assume w.l.o.g. that $j\in B_1\cap C(x_1)$.
If $G$ contains no $(2, j)_{(x_2, v_1)}$-path, then $(v v_1, v v_2, u v)\to (2, 1, j)$ gives rise to an acyclic edge $(\Delta + 2)$-coloring for $G$, and we are done.
Otherwise, $G$ contains a $(2, j)_{(x_2, v_1)}$-path and $j\in C(x_2)$.
Thus, $C^*_5\subseteq C(x_1)\cup C(x_2)$ and $G$ contains an $(i, j)_{(u, v)}$-path for every $i\in \{1, 2\}$, $j\in C^*_5$.

Assume that $C^*_5\setminus C(x_2)\ne\emptyset$ and assume w.l.o.g. that $5\not\in C(x_2)$.
Then $5\in B_1$ and $\{1, 2\}\cap (C(v_1)\cup C(v_2))\setminus \{c(v v_1), c(v v_2)\}\ne\emptyset$.
If $G$ contains no $(i, 5)_{(x_2, x_i)}$-path for every $i\in \{3, 4\}$, then $u x_2\to 5$ reduces to the case $|C(u)\cap C(v)| = 1$.
Otherwise, $G$ contains a $(i, 5)_{(x_2, x_i)}$-path for some $i\in \{3, 4\}$ and we assume w.l.o.g. that $G$ contains a $(3, 5)_{(x_2, x_3)}$-path.
The re-coloring $u v\to 5$ and,
if $1\in C(v_2)$ and $2\not\in C(v_1)$, then $(v v_1, v v_2)\to (2, 3)$,
or if $2\in C(v_1)$ and $1\not\in C(v_2)$, then $(v v_1, v v_2)\to (3, 1)$,
or otherwise $v v_1\to 3$,
give rise to an acyclic edge $(\Delta + 2)$-coloring for $G$, and we are done.

We proceed with the following proposition:

\begin{proposition}
\label{prop3301}
\begin{itemize}
\parskip=0pt
\item[{(1)}]
	$C^*_5\cup \{1\}\subseteq C(v_1)\cap C(x_1)$ and $C^*_5\cup \{2\}\subseteq C(v_2)\cap C(x_2)$.
\item[{(2)}]
	$d(x_1)\ge \Delta- 1\ge4$, $d(x_2)\ge \Delta- 1\ge 4$, and $x_1, x_2\not\in S_2\cup S_3$.
\item[{(3)}]
	If $1, 2\not\in (C(v_1)\cup C(v_2))\setminus \{c(v v_1), c(v v_2)\}$, then $G$ contains an $(i, j)_{(x_i, v)}$-path for every pair $i\in \{1, 2\}$, $j\in C^*_5$.
\end{itemize}
\end{proposition}

Note that if re-coloring some edges $E'\subseteq E(H)$ doesn't produce any new bichromatic cycles,
and $|C(u)\cap C(v)| = 2$ with $(C(u)\cap C(v))\cap S_3\ne \emptyset$,
then we are done.

Assume that $x_3$ is a $3$-vertex.
If $3\not\in C(v_2)$ and $G$ contains no $(1, 3)_{(v_1, v_2)}$-path, then let $v v_2\to 3$ and we are done.
If $3\not\in C(v_1)$ and $G$ contains no $(2, 3)_{(v_1, v_2)}$-path, then let $v v_1\to 3$ and we are done.
Otherwise, $3\in C(v_1)\cup C(v_2)$.
By symmetry, we assume w.l.o.g. that $C(v_1) = C^*_5\cup \{1, 3\}$,
and either (i)\ $C(v_2) = C^*_5\cup \{2, 1\}$,
or (ii)\ $C(v_2) = C^*_5\cup \{2, 3\}$.

For (i), let $(v v_1, v v_2)\to \{2, 3\}$ and we are done.
For (ii), from Proposition~\ref{prop3301}, $G$ contains an $(i, j)_{(x_i, v)}$-path for every pair $i\in \{1, 2\}$, $j\in C^*_5$.
One sees that the same argument applies if $c(v v_2) = 4$, and thus $G$ contains a $(4, j)_{(x_4, v)}$-path for every $j\in C^*_5$, and $C^*_5\subseteq C(x_4)$.
We assume w.l.o.g. that $5\not\in C(x_3)$ and let $u x_3\to 5$.
By similar discussion, we conclude that $G$ contains a $(3, i)_{(x_i, v)}$-path and $3\in C(x_i)$ for every $i\in \{1, 2, 4\}$.
Thus, $(C^*_5\cup \{3\})\subseteq C(x_i)$ and $d(x_i) = \Delta$ for every $i\in \{1, 2, 4\}$,
$C(v_1) = C^*_5\cup \{1, 3\}$, $C(v_2) = C^*_5\cup \{2, 3\}$, and $v_1, v_2\not\in N(u)$.

In such a case, none of ($A_{5.1}$)--($A_{5.3}$) holds, and we assume that ($A_{5.4}$) holds.
From Proposition~\ref{prop3301}, $C(u)\cap C(v)\ne \{c(u u_3), c(u u_4)\}$.
By symmetry, we only need to discuss the following two cases.

{Case 1.}\ $(u u_1, u u_3, u u_2, u u_4)_c = (1, 2, 3, 4)$.

From Proposition~\ref{prop3301}, $\Delta = 5$, $C(u_3) = C^*_5\cup \{2\}$ and we assume w.l.o.g. that $c(u_3 u_4) = 5$.
If $4\not\in C(v_1)$ and $G$ contains no $(2, 4)_{(v_1, v_2)}$-path, then $v v_1\to 4$;
if $4\in C(v_1)\setminus C(v_2)$, then $(v v_1, v v_2)\to (2, 4)$;
this reduces to the case where $C(u)\cap C(v) = \{c(u u_3), c(u u_4)\}$.

Thus, we have either (i)\ $4\in C(v_1)\cap C(v_2)$,
or (ii)\ $4\not\in C(v_1)$, which implies that $G$ contains a $(2, 4)_{(v_1, v_2)}$-path, $C(v_1) = C^*_5\cup \{1, 2\}$ and $C(v_2) = C^*_5\cup \{2, 4\}$.

For (i), since $1, 2\not\in (C(v_1)\cup C(v_2))\setminus \{c(v v_1), v(v v_2)\}$, we have $2\in C(u_4)$.
One sees that the same argument applies if $c(v v_1) = 3$, and thus $G$ contains a $(3, j)_{(u_2, v)}$-path for every $j\in C^*_5$.
Let $u u_4\to \rho\in C^*_5\setminus C(u_4)$ and $u v\to 4$.
Note that there is no $(i, \rho)_{(u_4, u)}$-cycle for every $i\in \{1, 2, 3\}$.
If $4\not\in B_1$, we are done;
or otherwise, switch the colors of $\{v v_1, v v_2\}$ and we are also done.

For (ii),
one sees that the same argument applies if $(v v_1, v v_2)_c =(4, 1)$, and thus we may further assume that $C(u_4) = C^*_5\cup \{4\}$ and $5\in B_1$.
Then $(v v_1, v v_2, u v)\to (4, 1, 5)$ gives rise to an acyclic edge $(\Delta + 2)$-coloring for $G$, and we are done.

{Case 2.}\ $(u u_1, u u_2)_c = (1, 2)$.

If there exists a color $k\in \{3, 4\}\setminus (C(v_1)\cup C(v_2))$, then $vv_1\to k$ reduces the proof to the above Case 1.
Otherwise, $3, 4\in C(v_1)\cup C(v_2)$,
and we assume w.l.o.g. that $C(v_1) = C^*_5\cup \{1, 3\}$ and $C(v_2) = C^*_5\cup \{2, 4\}$.
Then $v v_2\to 3$ reduces the proof to the above Case 1 too.
\end{proof}

By Lemma \ref{lemma16} we consider below the case where $|C(u)\cap C(v)| = 1$,
and $G$ contains the configuration ($A_5$) with $(v v_1, v v_2)_c = (a, 5)$ and $a\in C(u)$.
From Proposition~\ref{prop3001}, $C\setminus \{1, 2, 3, 4, 5\} \subseteq B_a\subseteq C(u_1)\cap C(v_1)$.

\begin{lemma}
\label{lemma17}
If $5\not\in B_a$, then an acyclic edge $(\Delta + 2)$-coloring for $G$ can be obtained in $O(1)$ time.
\end{lemma}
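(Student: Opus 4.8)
The objective is to color the edge $uv$ so that $c$ extends to an acyclic edge $(\Delta+2)$-coloring of $G$. Since $C(u)\cap C(v)=\{a\}$ and, by Proposition~\ref{prop3001}, $C\setminus\{1,2,3,4,5\}=C\setminus(C(u)\cup C(v))\subseteq B_a$, no color of $C\setminus\{1,2,3,4,5\}$ may be placed on $uv$, while color $5$ is blocked because $5\in C(v)$; so $uv$ cannot be colored directly and we must first recolor an edge at $v$. As $C(v_1)\supseteq\{a\}\cup(C\setminus\{1,2,3,4,5\})$ is already saturated on the ``large'' colors, recoloring $vv_1$ opens nothing useful; the plan is therefore to recolor $vv_2$ (currently colored $5$) to a color $\xi$ and then set $uv\to 5$. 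The key point is that recoloring $vv_2$ off color $5$ only deletes one edge from the $\{a,5\}$-subgraph of $H$, so no new $(a,5)_{(u_1,v_1)}$-path is created and the hypothesis $5\notin B_a$ is preserved; hence, provided $\xi\notin C(u)\setminus\{a\}$ (so that $C(u)\cap C(v)$ stays $\{a\}$), setting $uv\to 5$ creates no bichromatic cycle. Since $\xi=a$ is illegal at $v_1$ and $\xi=5$ does nothing, the useful targets for $\xi$ are exactly the colors of $C\setminus\{1,2,3,4,5\}$.

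First I would dispose of the generic situation: if there is $\xi\in(C\setminus\{1,2,3,4,5\})\setminus C(v_2)$ for which the recoloring $vv_2\to\xi$ creates no bichromatic cycle in $H$, we are done. After this recoloring the two edges at $v$ are colored $a$ and $\xi$, so the only new bichromatic cycle possible is an $\{a,\xi\}$-cycle through $v$; because $\xi\notin C(v_2)$, such a cycle forces $a\in C(v_2)$, and Lemma~\ref{lemma09} applied to the maximal $(a,\xi)$-path of $H$ leaving $v$ along $vv_1$ decides, for any fixed $\xi$, whether it occurs. Thus when $a\notin C(v_2)$ every $\xi\in(C\setminus\{1,2,3,4,5\})\setminus C(v_2)$ works, and such $\xi$ exists unless $C(v_2)\supseteq C\setminus\{1,2,3,4,5\}$. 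The only cases that survive are therefore $C(v_2)\supseteq C\setminus\{1,2,3,4,5\}$, or $a\in C(v_2)$ with every available $\xi$ blocked by a Kempe chain.

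In these residual cases I would recolor $vv_2$ instead to a color $\beta\in(C(u)\setminus\{a\})\setminus C(v_2)$ whenever one exists; such $\beta$ fails to exist only if $C(v_2)\supseteq C(u)$, since otherwise $C(v_2)$ would contain $5$, all of $C\setminus\{1,2,3,4,5\}$ (as it is blocked above), and $C(u)\setminus\{a\}$, giving $\Delta+1$ distinct colors and contradicting $d(v_2)\le\Delta$. Given such a $\beta$, after $vv_2\to\beta$ we have $C(v)=\{a,\beta\}$ and $C(u)\cap C(v)=\{a,\beta\}$, and $uv\to 5$ is valid: the only candidate cycles through $uv$ are an $\{a,5\}$-cycle, excluded since $5\notin B_a$ still holds, and a $\{\beta,5\}$-cycle, excluded because $v_2$ has lost its unique $5$-edge and no $(\beta,5)$-path can leave $v_2$; and the recoloring $vv_2\to\beta$ itself can create only an $\{a,\beta\}$-cycle through $v$, which requires $a\in C(v_2)$, $\beta\in C(v_1)$, and an $(a,\beta)_{(v_1,v_2)}$-path, and counting against $C(v_1)\supseteq\{a\}\cup(C\setminus\{1,2,3,4,5\})$ together with $d(v_1)\le\Delta$ leaves a safe choice of $\beta$ except in a single most-saturated configuration where $d(v_1)=d(v_2)=\Delta$ and both palettes are fully forced. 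The main obstacle is precisely this most-saturated configuration, together with the parallel sub-case $C(v_2)\supseteq C(u)$ in which no $\beta$ is available; both I would finish by a bespoke argument — recoloring $vv_1$ (a legal target in $(C(u)\setminus\{a\})\setminus C(v_1)$ exists by the same counting) and then tracing directly the handful of relevant $(a,\cdot)$- and $(\cdot,5)$-Kempe chains, exploiting the remaining slack in the palettes $C(v_1)$ and $C(v_2)$.
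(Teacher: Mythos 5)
Your opening move coincides with the paper's: recolor $v v_2$ off color $5$ and then set $uv \to 5$, noting that no $(a,5)$-cycle can appear because $5 \notin B_a$ and removing an edge from the $\{a,5\}$-subgraph creates no new $(a,5)$-paths, and no $(\cdot,5)$-cycle through $v v_2$ can appear because $v_2$ loses its only $5$-edge. Up to the point where you isolate the surviving situation --- all large colors present at $v_2$ or blocked, every remaining recoloring of $v v_2$ blocked by an $(a,\cdot)_{(v_1,v_2)}$-path --- you are on the paper's track: with $a=1$ the paper pins this down (up to symmetry) as $C(v_1)=C\setminus\{4,5\}$, $C(v_2)=C\setminus\{2,3\}$, together with $(1,2)_{(v_1,v_2)}$- and $(1,3)_{(v_1,v_2)}$-paths.

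The gap is in how you finish. In that residual case your ``$\beta$-move'' is exactly what is blocked: the only candidates are $\beta\in\{2,3\}$, and both are excluded by the assumed $(1,2)_{(v_1,v_2)}$- and $(1,3)_{(v_1,v_2)}$-paths, so there is no ``safe choice of $\beta$'' and no ``remaining slack'' in $C(v_1),C(v_2)$ --- both palettes are completely forced there. Your fallback (recolor $v v_1 \to 4$ and trace a few Kempe chains) does not close it either: the paper performs that same recoloring and shows it only yields a symmetric family of obstructions (an $(i,j)_{(x_i,v)}$-path for every $i\in\{1,2,3,4\}$ and $j\in\{6,\ldots,\Delta+2\}$, plus $(i,j)_{(v_1,v_2)}$-paths for $i\in\{1,4\}$, $j\in\{2,3\}$), after which the argument must move to the $u$-side: one attempts $u x_1\to 5$ or $u x_2\to 5$, deduces $d(x_i)\ge 4$ for all $i$, concludes that only configuration $(A_{5.4})$ can hold, and then exploits its structure ($d(u_3)=d(u_4)=4$ and $u_3u_4\in E(G)$, which forces $c(u_3u_4)=5$ and $C(u_3)\setminus\{3\}=C(u_4)\setminus\{\gamma\}=\{5,6,7\}$) to finish with $(u_3u_4,\,u u_3,\,uv)\to(3,5,3)$. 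Your proposal never touches the edges at $u$ nor the sub-configuration hypotheses $(A_{5.1})$--$(A_{5.4})$, which is a telltale sign the sketch cannot be completed as written: the lemma lives inside configuration $(A_5)$, and the hard case genuinely needs that structure, not just the palettes at $v_1$ and $v_2$. A smaller point: your counting claim that ``$\beta$ fails to exist only if $C(v_2)\supseteq C(u)$'' uses $\{5,\ldots,\Delta+2\}\subseteq C(v_2)$, which holds in only one of your two residual branches; in the branch where the large colors are merely blocked by chains rather than present at $v_2$, that count does not apply.
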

\begin{proof}
We assume w.l.o.g. that $a = 1$.

When there exists a color $i\in C^*_5\setminus C(v_2)$, or $G$ contains no $(1, i)_{(v v_1, v v_2)}$-path for some $i\in \{2, 3, 4\}\setminus C(v_2)$,
$(v v_2, u v)\to (i, 5)$ gives rise to an acyclic edge $(\Delta + 2)$-coloring for $G$, and we are done.
In the other case, $(1\cup C^*_5)\subseteq C(v_2)$, $2, 3, 4\in C(v_1)\cup C(v_2)$ and
$G$ contains a $(1, i)_{(v v_1, v v_2)}$-path for every $i\in \{2, 3, 4\}\setminus C(v_2)$.
Hence, we assume w.l.o.g. that $C(v_1)= (C^*_5)\setminus \{5\}\cup \{1, 2, 3\}$,
$C(v_2) = C^*_5\cup \{1, 4\}$, and $G$ contains a $(1, i)_{(v v_1, v v_2)}$-path for every $i\in \{2, 3\}$.

One sees that the same argument applies if $c(v v_1)= 4$, or if $(v v_1, v v_2)_c= (5, i)$ for any $i\in \{2, 3\}$,
we thus may further assume that $G$ contains an $(i, j)_{(x_i, v)}$-path for every pair $i\in \{1, 2, 3, 4\}$, $j\in (C^*_5\setminus \{5\})\subseteq C(x_i)$,
and $G$ contains an $(i, j)_{(v_1, v_2)}$-path for every pair $i\in \{1, 4\}$, $j\in \{2, 3\}$.

If $5\not\in C(x_1)$ and $G$ contains no $(5, i)_{(x_1, x_i)}$-path for every $i\in \{2, 3, 4\}$,
then $(u x_1, v v_1, u v)\to (5, 4, 1)$ gives rise to an acyclic edge $(\Delta + 2)$-coloring for $G$.
If $5\not\in C(x_2)$ and $G$ contains no $(5, i)_{(x_2, x_i)}$-path for every $i\in \{1, 3, 4\}$,
then $(u x_2, u v)\to (5, 2)$ gives rise to an acyclic edge $(\Delta + 2)$-coloring for $G$.
Otherwise, for every $i\in \{1, 2, 3, 4\}$,
we have $5\in C(x_i)$, or $G$ contains a $(5, j)_{(x_i, x_j)}$-path for some $j\in (C(x_i) \cap \{1, 2, 3, 4\})\setminus \{i\}$.
It follows that $d(x_i)\ge 4$ for every $i\in \{1, 2, 3, 4\}$.

In such a case, none of ($A_{5.1}$)--($A_{5.3}$) holds, and we assume that ($A_{5.4}$) holds.
Furthermore, we assume w.l.o.g. that $(u u_3, u u_4)_c= (3, \gamma)$, where $\gamma\in \{1, 2, 4\}$.
One sees that $6, 7\in C(u_3)$, and if $c(u_3 u_4)\in \{6, 7\}$, then $C(u_3) = C(u_4) = \{6, 7, 3, \gamma\}$.
It follows that $c(u_3 u_4) = 5$, $C(u_3)\setminus \{3\} = C(u_4)\setminus \{\gamma\} = \{5, 6, 7\}$ and
$G$ contains a $(3, i)_{(u_3, v)}$-path for every $i\in \{6, 7\}$ which cannot pass through $u_4$.
Note that $N(u)\cap N(v)= \emptyset$ and $G$ contains a $(1, 3)_{(v_1, v_2)}$-path.
Then $(u_3 u_4, u u_3, uv)\to (3, 5, 3)$ gives rise to an acyclic edge $(\Delta + 2)$-coloring for $G$, and we are done.
\end{proof}

By Lemma \ref{lemma17}, we assume below $a = 1$, $c(u x_1) = 1$, $5\in B_1$ and $C^*_5\subseteq C(x_1)\cap C(v_1)$.
Recall that by Lemma~\ref{lemma09},
if there is a $(1, j)_{(v_1, x_1)}$-path for some $j\in C\setminus C(u)$ and $1\not\in C(x_i)$ where $i\ne 1$,
then there is no $(1, j)_{(v_1, x_2)}$-path.
We immediately have the following two useful Lemmas.

\begin{lemma}
\label{lemma18}
If re-coloring some edges $E'\subseteq E(H)$ incident to $u$ or $v$ does not produce any new bichromatic cycles,
but at least one of the following holds:
\begin{itemize}
\parskip=0pt
\item[{(1)}]
	$C(u)\cap C(v) = \emptyset$,
\item[{(2)}]
	$C(u)\cap C(v) = \{1\}$ with $c(u x_i) = 1$, where $i\ne 1$,
\end{itemize}
then an acyclic edge $(\Delta + 2)$-coloring for $G$ can be obtained in $O(1)$ time.
\end{lemma}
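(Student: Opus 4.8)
The plan is to reduce both cases to colouring the single edge $uv$. Note first that $d_H(u)+d_H(v)=4+2=6\le\Delta+1<|C|$, so $C\setminus(C(u)\cup C(v))\neq\emptyset$; moreover, since the re-colouring touches only edges incident to $u$ or $v$ and leaves $H$ acyclically $(\Delta+2)$-coloured while $uv\notin E(H)$, any bichromatic cycle created by assigning $uv$ a colour $j$ must run through $uv$ and hence be an $(i,j)_{(u,v)}$-cycle for some $i\in C(u)\cap C(v)$. In case~(1) there is no such $i$, so $uv\to j$ for any $j\in C\setminus(C(u)\cup C(v))$ already yields an acyclic edge $(\Delta+2)$-colouring of $G$, obtained in $O(1)$ time.

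In case~(2) the only possible obstruction to $uv\to j$, for $j\in C\setminus(C(u)\cup C(v))$, is a $(1,j)_{(u,v)}$-path in $H$; such a path must leave $u$ along the (now unique) colour-$1$ edge $ux_i$ and enter $v$ along $v$'s colour-$1$ edge, say $vv_1$. I would take $j$ from $C^*_5\setminus(C(u)\cup C(v))$. This is legitimate because $5\in B_1$ together with Proposition~\ref{prop3001} gives $C^*_5\subseteq B_1$, so the coloring of $H$ inherited before the re-colouring contains a $(1,j)_{(x_1,v_1)}$-path $R$; since at that point $C(u)=\{1,2,3,4\}$ and (for $j\neq5$) $j\notin C(v)$, the path $R$ avoids $u$ and $v$, so it is untouched by the re-colouring and still realises a colour-$j$ edge at $x_1$ and at $v_1$. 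The crucial point is that in case~(2) the edge $ux_1$ has been re-coloured to some colour of $\{2,3,4\}$, so $x_1$ has lost its colour-$1$ edge while its other edges keep the colours of $C^*_5$; hence $R$, reaching $x_1$ through a colour-$j$ edge, must \emph{end} at $x_1$.

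Now suppose, for a contradiction, that $uv\to j$ fails and let $P$ be the resulting $(1,j)_{(u,v)}$-path. Since $v_1$ carries a unique colour-$j$ edge, $P$ and $R$ share it and therefore lie on one maximal $(1,j)$-path $M$ of $H$; but tracing $M$ now forces it to terminate at $v$ (its colour-$1$ edge $vv_1$ leads to the $j$-free vertex $v$), at $u$ (following $P$, as $j\notin C(u)$), and at $x_1$ (following $R$), i.e.\ at three distinct endpoints of one path — impossible, which is exactly the situation ruled out by Lemma~\ref{lemma09}. Hence such a $j$ can be placed on $uv$, completing the inductive step. I expect the main obstacle to be not this path argument but the colour-availability bookkeeping, namely guaranteeing $C^*_5\setminus(\{5\}\cup C(u)\cup C(v))\neq\emptyset$: this follows by counting $|C(u)\cup C(v)|=5$ against $|C^*_5|=\Delta-2$ and using that $ux_1$ has been forced onto $\{2,3,4\}$ once $\Delta$ is large enough, while the tight small-$\Delta$ situations must be cleared either by rerunning the same maximal-path argument on an available colour of $\{2,3,4\}$ (invoking Lemma~\ref{lemma09} through the edge at $x_i$) or by observing that the configuration invoking the lemma does not reach them.
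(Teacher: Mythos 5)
Your case-(2) argument is exactly the proof the paper leaves implicit: Lemma~\ref{lemma18} is stated right after the recalled use of Lemma~\ref{lemma09}, and the intended justification is precisely what you write — since $C^*_5\subseteq B_1\subseteq C(x_1)\cap C(v_1)$, for $j\in C^*_5\setminus\{5\}$ the old $(1,j)_{(x_1,v_1)}$-path avoids $u,v$ and survives the re-colouring, so once $ux_1$ loses colour $1$ the maximal $(1,j)$-path through $v$'s colour-$1$ edge ends at $x_1$ and cannot reach $u$, ruling out a $(1,j)$-cycle through $uv$ (case (1) being immediate). Your remaining worry about the existence of $j\in C^*_5\setminus(\{5\}\cup C(u)\cup C(v))$ is not an issue in the paper's scope: every invocation of the lemma re-colours the edges at $u$ and $v$ so that at most one colour of $\{6,\dots,\Delta+2\}$ appears at $u$ and at most one at $v$, so with $\Delta\ge 5$ such a $j$ always exists.
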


\begin{lemma}
\label{lemma19}
If re-coloring some edges $E'\subseteq E(H)$ does not produce any new bichromatic cycles but $C(u)\cap C(v) = \{i\}$,
then $G$ contains an $(i, j)_{(x_{i_0}, v_{j_0})}$-path for every $j\in C\setminus C(u)$, where $c(u x_{i_0})= c(v v_{j_0}) = i$.
\end{lemma}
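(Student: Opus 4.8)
The plan is to re-run, inside the current (re-coloured) acyclic coloring $c$ of $H$, the short argument that led to Proposition~\ref{prop3001}, and then to dispose separately of the one colour that such an argument cannot reach. Write $C(v)=\{i,b\}$; since $C(u)\cap C(v)=\{i\}$ we have $b\notin C(u)$, so $C(v)\setminus C(u)=\{b\}$ and $C\setminus(C(u)\cup C(v))=(C\setminus C(u))\setminus\{b\}$. As $u$ has degree $5$ and $v$ degree $3$ in $G$, we have $d_H(u)=4$ and $d_H(v)=2$, so $|C(u)\cup C(v)|\le 5<\Delta+2=|C|$; hence $C\setminus(C(u)\cup C(v))$ is non-empty (indeed of size at least $\Delta-3\ge 2$), and Proposition~\ref{prop3001} gives $(C\setminus C(u))\setminus\{b\}\subseteq B_i\subseteq C(x_{i_0})\cap C(v_{j_0})$.

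The first step is to settle every $j\in C\setminus(C(u)\cup C(v))$. Suppose $H$ contains no $(i,j)_{(x_{i_0},v_{j_0})}$-path and set $uv\to j$; this is proper since $j\notin C(u)\cup C(v)$. Any bichromatic cycle of the resulting coloring of $G$ must use the new edge $uv$, hence uses exactly one further edge at $u$ and one further edge at $v$, and those two edges carry a common colour lying in $C(u)\cap C(v)=\{i\}$; so the cycle is an $(i,j)$-cycle through $uv$, and deleting $u$ and $v$ from it leaves an $(i,j)_{(x_{i_0},v_{j_0})}$-path between the unique $i$-neighbour $x_{i_0}$ of $u$ and the unique $i$-neighbour $v_{j_0}$ of $v$ — contradicting the hypothesis. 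Hence $uv\to j$ would be an acyclic edge $(\Delta+2)$-coloring of $G$, so the required path exists for each such $j$.

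It remains to handle the single colour $b$. Here the idea is to free $b$ at $v$ by re-colouring the unique edge of $v$ other than $vv_{j_0}$, namely the $b$-edge $vv_{j_1}$, to a suitable colour $p\in C\setminus(C(u)\cup C(v))$, and then to set $uv\to b$. Re-colouring $vv_{j_1}$ only deletes a $b$-edge, so it cannot create an $(i,b)_{(x_{i_0},v_{j_0})}$-path, and since $p\notin C(u)$ we still have $C(u)\cap C(v)=\{i\}$; after $uv\to b$ any bichromatic cycle must again be an $(i,b)$-cycle through $uv$, forcing such a path. Thus, assuming no $(i,b)_{(x_{i_0},v_{j_0})}$-path, we would reach an acyclic edge $(\Delta+2)$-coloring of $G$, provided the re-colouring of $vv_{j_1}$ itself introduces no bichromatic cycle — i.e.\ provided we can choose $p\notin C(v_{j_1})$ with no $(i,p)$-cycle through $v$, which, by Lemma~\ref{lemma09} applied to the maximal $(i,p)_{(u,\cdot)}$-path obtained by prolonging an $(i,p)_{(x_{i_0},v_{j_0})}$-path (available from the first step) through the edge $ux_{i_0}$, amounts to excluding an $(i,p)$-path from $v_{j_0}$ to $v_{j_1}$.

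The crux — and where I expect essentially all the work to lie — is guaranteeing that such a colour $p$ always exists; a plain counting bound is too weak. The argument must use the structural information already secured: $(C\setminus C(u))\setminus\{b\}\subseteq C(v_{j_0})$ together with $d_H(v_{j_0})\le\Delta$ pins the palette of $v_{j_0}$ down to within two undetermined colours, and combining this with the Lemma~\ref{lemma09} dichotomy on the prolonged path reduces the remaining possibilities to a short case analysis; in the few residual configurations one falls back on a different target colour for $uv$ or a swap of the two edges at $v$ in the spirit of Lemma~\ref{lemma18}. Once $b$ is disposed of, the two steps together give the claim for every $j\in C\setminus C(u)$.
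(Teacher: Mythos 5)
Your handling of the colours $j\in C\setminus(C(u)\cup C(v))$ is correct and is exactly the intended argument: if the $(i,j)_{(x_{i_0},v_{j_0})}$-path were missing, $uv\to j$ would already give an acyclic $(\Delta+2)$-colouring of $G$, so under the paper's standing convention the path exists. The problem is the remaining colour $b\in C(v)\setminus C(u)$, which the lemma does assert and which the paper uses at full strength downstream (for instance in ($A_{5.3}$) it is precisely what rules out $a=4$, and what forces $C(u_3)=C^*_5\cup\{3\}$ when $\Delta=5$, so this case cannot be dropped). For $j=b$ the move $uv\to b$ is unavailable, and your substitute --- recolour $vv_{j_1}$ to some $p\in C\setminus(C(u)\cup C(v))$ and then set $uv\to b$ --- hinges on producing $p\notin C(v_{j_1})$ admitting no $(i,p)$-path from $v_{j_0}$ to $v_{j_1}$. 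You yourself concede this is ``where essentially all the work lies'' and do not carry it out; and it can genuinely fail: every colour of $C\setminus(C(u)\cup C(v))$ may lie in $C(v_{j_1})$, or $v_{j_1}$ may be an interior vertex of the maximal $(i,p)$-path through $v_{j_0}$ and $x_{i_0}$, a situation Lemma~\ref{lemma09} does not exclude (it only forbids $(i,p)$-paths to vertices \emph{off} that path). The closing appeal to ``a short case analysis'' or a swap ``in the spirit of Lemma~\ref{lemma18}'' is a placeholder, not an argument, so the $j\in C(v)\setminus C(u)$ case is a genuine gap.

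It is also worth seeing why the paper can state Lemma~\ref{lemma19} without proof: the assertion for the second colour at $v$ is not a local triviality --- for the original colouring it is exactly the content of Lemma~\ref{lemma17}, whose proof is a substantial case analysis over ($A_{5.1}$)--($A_{5.4}$). Lemma~\ref{lemma19} is ``immediate'' only because Proposition~\ref{prop3001} and Lemmas~\ref{lemma16}--\ref{lemma17} were established for an arbitrary acyclic edge $(\Delta+2)$-colouring of $H=G-uv$ in configuration ($A_5$), hence apply verbatim (after renaming colours) to the re-coloured colouring in the hypothesis: the Proposition~\ref{prop3001} argument yields the paths for $j\notin C(u)\cup C(v)$, and the Lemma~\ref{lemma17} argument yields the path for $j=b$ or else an acyclic colouring of $G$. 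The correct way to finish your proof is this reduction, rather than the new recolouring scheme you sketch; as written, your proposal proves only the easy half of the lemma.
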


Next assume that there exists $i_0\in \{2, 3, 4\}\setminus B_1$, say $i_0 = 4$.
If there exists some $j\in C^*_5\setminus C(x_4)$ such that $G$ contains no $(i, j)_{(x_i, x_4)}$-path for every $i\in \{2, 3\}$,
then $(u x_4, u v)\to (j, 4)$ gives rise to an acyclic edge $(\Delta + 2)$-coloring for $G$, and we are done.
Otherwise, $C^*_5\subseteq C(x_4)$,
or for every $j\in C^*_5\setminus C(u_4)$, $G$ contains an $(i_0, j)_{(u_{i_0}, u_4)}$-path for some $i_0\in \{2, 3\} \cap C(u_4)$.

Assume that there exists $i_0\in \{2, 3, 4\}$ such that there is an $(i_0, j)_{(x_{i_0} x_{j_0})}$-path for some $j\in (C^*_5\setminus \{5\})\setminus C(x_{j_0})$ where $j_0\in \{2, 3, 4\}\setminus \{i_0\}$, say $i_0= 2$.
If $2\not\in C(v_1)$ and $G$ contains no $(2, 5)_{(v_1, v_2)}$-path,
then $(v v_1, u v)\to (2, j)$ gives rise to an acyclic edge $(\Delta + 2)$-coloring for $G$, and we are done;
otherwise, $2\in C(v_1)$, or $G$ contains a $(2, 5)_{(v_1, v_2)}$-path.

For each $i\in \{2, 3, 4\}$, if recoloring $vv_k\to i$ doesn't produce any bichromatic cycle for some $k \in \{1, 2\}$,
then we define
\[
B_i = \{j \in C^*_5 \mid \mbox{there is an } (i, j)_{(x_i, v)}\mbox{-path in } H\};
\]
otherwise, $B_i = \emptyset$.
We note such a definition is consistent with the set $B_i$ defined in Eq.~(\ref{eq02}).

Assume that there exists $i_0\in \{2, 3, 4\}$ such that $|B_{i_0}|\le 2$ and $i_0\not\in C(v_1)$, say $i_0 = 4$.
Then first let $v v_1\to 4$.
If $G$ contains no $(5, 4)_{(v_1, v_2)}$-path,
then by Lemma~\ref{lemma19} $G$ admits an acyclic edge $(\Delta + 2)$-coloring, and we are done.
If $1\not\in C(v_2)$, then $v v_2\to 1$ and $u v\to C^*_5\setminus B_4$ give rise to an acyclic edge $(\Delta + 2)$-coloring for $G$, and we are done.
Otherwise, $G$ contains a $(5, 4)_{(v_1, v_2)}$-path and $1, 4\in C(v_2)$.
One sees that if there exists $\gamma\in \{2, 3\}\setminus (C(v_1)\cup C(v_2))$, then no new bichromatic cycles will be produced if letting $v v_1\to \gamma$.
Since $\{1\}\cup C^*_5\subseteq C(v_1)$, we have $|\{2, 3\}\cap C(v_1)|\le 1$.
It follows that $|\{2, 3\}\cap C(v_2)|\ge 1$ and $C^*_5\setminus C(v_2)\ne \emptyset$.
Thus, we proceed with the following proposition:

\begin{proposition}
\label{prop3302}
\begin{itemize}
\parskip=0pt
\item[{(1)}]
	For every $i\in \{2, 3, 4\}\setminus B_1$, $C^*_5\subseteq C(x_i)$;
	or for every $j\in C^*_5\setminus C(x_i)$, $G$ contains an $(i, j)_{(x_i, x_j)}$-path for some $j\in (\{2, 3, 4\}\cap C(x_i))\setminus \{i\}$.
\item[{(2)}]
	If there exists an $(i_0, j)_{(x_{i_0}, x_{j_0})}$-path for some $i_0\ne j_0\in \{2, 3, 4\}$ and $j\in (C^*_5\setminus \{5\})\setminus C(x_{j_0})$,
	then {\rm (i)}\ $i_0\in C(v_1)$, or {\rm (ii)}\ $G$ contains an $(i_0, 5)_{(v_1, v_2)}$-path.
\item[{(3)}]
	If $|B_{i_0}|\le 2$ or $d(x_0)= 3$ for some $i_0\in \{2, 3, 4\}$,
	then {\rm (i)}\ $i_0\in C(v_1)$, or {\rm (ii)}\ $G$ contains a $(5, i_0)_{(v_1, v_2)}$-path,
	$\{1, 2, 3, 4\}\subseteq (C(v_1)\cup C(v_2))\setminus \{c(vv_1), c(vv_2)\}$, and $C^*_5\setminus C(v_2)\ne\emptyset$.
\item[{(4)}]
	If $|B_i|\le 2$ and $|B_j|\le 2$, or $d(x_i) = d(x_j) = 3$ for some $i, j\in \{2, 3, 4\}$,
	then $\{1, 2, 3, 4\}\subseteq (C(v_1)\cup C(v_2))\setminus \{c(v v_1), c(v v_2))\}$ and $C^*_5\setminus C(v_2)\ne\emptyset$.
\end{itemize}
\end{proposition}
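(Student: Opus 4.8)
The plan is to prove each of the four items by contradiction: assume the stated conclusion fails, and then exhibit a recolouring of a bounded number of edges incident to $u$ or $v$, together with a colour for $uv$, that extends to an acyclic edge $(\Delta+2)$-colouring of $G$. Throughout I would use the standing hypotheses $|C(u)\cap C(v)|=1$, $a=1$, $c(ux_1)=1$, $C^*_5\subseteq C(x_1)\cap C(v_1)$ and $5\in B_1$, together with $|C^*_5|=\Delta-2\ge 3$, and the three tools already available: Lemma~\ref{lemma09} (a maximal bichromatic path forbids a second $(a,b)$-path from the same vertex), Lemma~\ref{lemma18} (a cycle-free recolouring landing in one of its two situations finishes $G$), and Lemma~\ref{lemma19} (a cycle-free recolouring leaving $C(u)\cap C(v)=\{i\}$ forces $(i,j)$-paths for all $j\in C\setminus C(u)$). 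The recurring verification in every case is that the proposed recolouring creates no new bichromatic cycle: such a cycle must use one of the few edges we touch, and is ruled out either by the hypothesis being contradicted or by Lemma~\ref{lemma09}.

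For (1) I would fix $i\in\{2,3,4\}\setminus B_1$, say $i=4$, assume $C^*_5\not\subseteq C(x_4)$, and pick $j\in C^*_5\setminus C(x_4)$. If $H$ has no $(i',j)_{(x_{i'},x_4)}$-path for either $i'\in\{2,3\}$, then $(ux_4,uv)\to(j,4)$ works, since $4\notin B_1$ means there is no $(1,4)_{(u,v)}$-path and recolouring $ux_4\to j$ can only close a $(i',j)$-cycle through $x_4$ when one of the excluded paths exists; this forces the second alternative. Item (2) is handled in the same spirit: given an $(i_0,j)_{(x_{i_0},x_{j_0})}$-path with $j\in(C^*_5\setminus\{5\})\setminus C(x_{j_0})$ and, say, $i_0=2$, if moreover $2\notin C(v_1)$ and $H$ has no $(2,5)_{(v_1,v_2)}$-path, then $(vv_1,uv)\to(2,j)$ is cycle-free — the first move because $2\notin C(v_1)$ and there is no $(2,5)_{(v_1,v_2)}$-cycle, the second because $j\notin C(x_{j_0})$ blocks the closing of the given path and Lemma~\ref{lemma09} disposes of the only competing $(j,t)_{(u,v)}$-candidate — so $G$ is coloured, a contradiction.

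For (3) I would run the recolouring chain already sketched just before the proposition. Assuming $|B_{i_0}|\le 2$ (note $d(x_{i_0})=3$ forces $B_{i_0}\subseteq C^*_5\cap C(x_{i_0})$, hence $|B_{i_0}|\le 2$ as well) and $i_0=4\notin C(v_1)$: first recolour $vv_1\to 4$, which is safe; if there is no $(5,4)_{(v_1,v_2)}$-path, Lemma~\ref{lemma19} finishes $G$; if instead $1\notin C(v_2)$, then $vv_2\to 1$ followed by $uv\to C^*_5\setminus B_4$ (nonempty as $|B_4|\le 2<\Delta-2$) finishes $G$. Hence $H$ has a $(5,4)_{(v_1,v_2)}$-path with $1,4\in C(v_2)$; a colour $\gamma\in\{2,3\}\setminus(C(v_1)\cup C(v_2))$ would let $vv_1\to\gamma$ safely and put us in alternative (i), so $\{2,3\}\subseteq C(v_1)\cup C(v_2)$, and since $\{1\}\cup C^*_5\subseteq C(v_1)$ with $d(v_1)\le\Delta$ give $|\{2,3\}\cap C(v_1)|\le 1$, we obtain $|\{2,3\}\cap C(v_2)|\ge1$ and, by a second degree count on $v_2$, $C^*_5\setminus C(v_2)\ne\emptyset$ — which is alternative (ii). Item (4) follows by applying (3) to the two indices simultaneously: since $\{1\}\cup C^*_5\subseteq C(v_1)$ leaves room for at most one element of $\{2,3,4\}$ in $C(v_1)$, at least one index lands in alternative (ii), whose conclusions are exactly those asserted. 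The main obstacle I anticipate is not any single step but the bookkeeping: tracking $C(u)$, $C(v)$, $C(x_i)$ and $C(v_1),C(v_2)$ through several interleaved recolourings and certifying that each recolouring is genuinely cycle-free, the two delicate points being that the conditionally-defined sets $B_i$ must be reconciled with the global $B_i$ of Eq.~(\ref{eq02}) so that the bounds $|B_i|\le d(x_i)-1$ are legitimate, and that recolouring $uv$ by a colour of $C^*_5\setminus B_4$ closes no cycle through $x_1$ or $v_1$, which is where $C^*_5\subseteq C(x_1)\cap C(v_1)$ and Lemma~\ref{lemma19} are used most carefully.
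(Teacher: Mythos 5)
Your proposal follows the paper's own derivation almost line by line: the paper gives no separate proof environment for this proposition — it is precisely the summary of the recoloring chain carried out in the paragraphs immediately preceding it — and your items (1)--(3) reproduce that chain (the $(ux_{i_0},uv)\to(j,i_0)$ move for (1); the $(vv_1,uv)\to(i_0,j)$ move for (2), where your appeal to Lemma~\ref{lemma09} to kill the one competing $(i_0,j)$-cycle is in fact more explicit than the paper; and for (3) the chain $vv_1\to i_0$, Lemma~\ref{lemma19}, then $vv_2\to 1$ and $uv\to C^*_5\setminus B_{i_0}$, followed by the count $\{1\}\cup C^*_5\subseteq C(v_1)$), while (4) is the same two-index reduction the paper leaves implicit (at most one of $\{2,3,4\}$ fits in $C(v_1)$, so one index must fall into alternative (ii)). Your observation that $d(x_{i_0})=3$ forces $|B_{i_0}|\le 2$ is also the right way to absorb that hypothesis.

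One local step needs repair. In (3) you assert that a colour $\gamma\in\{2,3\}\setminus(C(v_1)\cup C(v_2))$ would, after $vv_1\to\gamma$, ``put us in alternative (i)''. It would not: recolouring $vv_1$ changes $c(vv_1)$ to $\gamma$ but does not place $i_0$ into $C(v_1)$, so alternative (i) is untouched. What the paper uses (tersely, via the standing convention attached to these propositions) is only that this recolouring is cycle-free, so that either the colouring extends to $G$ or one may henceforth assume no such $\gamma$ exists; it is that assumption which yields $\{2,3\}\subseteq C(v_1)\cup C(v_2)$, hence $|\{2,3\}\cap C(v_2)|\ge 1$ and then $C^*_5\setminus C(v_2)\ne\emptyset$ by a degree count at $v_2$. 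Relatedly, your blanket claim that ``$vv_1\to 4$ is safe'' is only true once the $(5,4)_{(v_1,v_2)}$-path has been excluded; since your (and the paper's) very next case split is exactly on that path, this is a wording issue rather than a structural one. With the $\gamma$-step restated in the paper's form, your argument coincides with the paper's proof.
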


To introduce the next important proposition, we introduce a few symbols used in \cite{BLSNHT11}.
A {\em multiset} is a generalized set where a member can appear multiple times.
If an element $x$ appears $t$ times in the multiset $MS$, then we say that the {\em multiplicity} of $x$ in $MS$ is $t$, and write mult$_{MS}(x) = t$.
The {\em cardinality} of a finite multiset $M S$, denoted by $\|M S\|$, is defined as $\|M S\|= \sum_{x\in M S}$mult$M S(x)= t$.
Let $M S_1$ and $M S_2$ be two multisets.
The {\em join} of $M S_1$ and $M S_2$, denoted $M S_1\biguplus M S_2$, is a multiset that has all the members of $M S_1$ and of $M S_2$.
For $x\in M S_1 \biguplus MS_2$, mult$_{M S_1 \biguplus M S_2}(x) =$ mult$_{M S_1}(x) + $mult$_{M S_2}(x)$.
Clearly, $\|M S_1 \biguplus M S_2\| = \|M S_1\| + \|M S_2\|$.
Specifically, we define
\[
S_u = \biguplus\limits_{i\in \{2, 3, 4\}}(C(x_i)\setminus \{c(u x_i)\}).
\]

When there exist $i_0, j_0\in \{2, 3, 4\}\setminus B_1$, and there exists a $j\in C^*_5\setminus (C(x_{i_0})\cup C(x_{j_0}))$, say $3, 4\not\in B_1$,
if $G$ contains no $(2, j)_{(x_2, x_3)}$-path, then $(u x_3, u v)\to (j, 3)$,
or otherwise, $(u x_4, u v)\to (j, 4)$;
the re-coloring gives rise to an acyclic edge $(\Delta + 2)$-coloring for $G$, and we are done.

In the other case, we proceed with the following proposition and let $(u u_1, u u_2, u u_3, u u_4)_c = (1, 2, 3, 4)$:

\begin{proposition}
\label{prop3303}
\begin{itemize}
\parskip=0pt
\item[{(1)}]
	If there exist $i, j\in \{2, 3, 4\}\setminus B_1$, then $C^*_5\subseteq C(x_i)\cup C(x_j)$.
\item[{(2)}]
	If $2, 3, 4\not\in B_1$, then mult$_{S_u}(i)\ge 2$ for every $j\in C^*_5$.
\end{itemize}
\end{proposition}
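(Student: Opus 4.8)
The plan is to read off both parts from the recoloring displayed immediately before the proposition, followed by an elementary pigeonhole count; as with the preceding propositions, the statement is understood modulo having already obtained an acyclic edge $(\Delta+2)$-coloring of $G$, in which case we are done.

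For part (1) I would argue directly. Suppose $i,j\in\{2,3,4\}\setminus B_1$ and there is a colour $\lambda\in C^*_5\setminus(C(x_i)\cup C(x_j))$. Relabelling $u_2,u_3,u_4$ if necessary, assume $\{i,j\}=\{3,4\}$, so that $3,4\notin B_1$ and $\lambda\in C^*_5\setminus(C(x_3)\cup C(x_4))$. Then the recoloring described just above --- namely $(ux_3,uv)\to(\lambda,3)$ if $G$ contains no $(2,\lambda)_{(x_2,x_3)}$-path, and $(ux_4,uv)\to(\lambda,4)$ otherwise --- applies verbatim with $\lambda$ playing the role of the colour ``$j$'' there, and by that argument it yields an acyclic edge $(\Delta+2)$-coloring of $G$; hence we are done. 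Consequently, in the surviving case there is no such $\lambda$, i.e.\ $C^*_5\subseteq C(x_i)\cup C(x_j)$ for every pair $i,j\in\{2,3,4\}\setminus B_1$, which is exactly (1).

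For part (2), assume $2,3,4\notin B_1$, so (1) applies to each of the pairs $\{2,3\}$, $\{2,4\}$, $\{3,4\}$: $C^*_5\subseteq C(x_p)\cup C(x_q)$ whenever $\{p,q\}\subseteq\{2,3,4\}$. Fix $\lambda\in C^*_5$. If $\lambda$ were contained in at most one of $C(x_2),C(x_3),C(x_4)$, it would be missing from two of them, say $C(x_p)$ and $C(x_q)$ with $p\neq q$, contradicting $C^*_5\subseteq C(x_p)\cup C(x_q)$; hence $\lambda$ lies in at least two of $C(x_2),C(x_3),C(x_4)$. Since $\lambda\in C^*_5$ gives $\lambda\geq 5$ while $c(ux_i)=i$ for $i\in\{2,3,4\}$, membership $\lambda\in C(x_i)$ coincides with $\lambda\in C(x_i)\setminus\{c(ux_i)\}$, so $\lambda$ contributes to at least two of the three summands of $S_u=\biguplus_{i\in\{2,3,4\}}(C(x_i)\setminus\{c(ux_i)\})$; that is, $\mathrm{mult}_{S_u}(\lambda)\geq 2$. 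As $\lambda\in C^*_5$ was arbitrary, (2) follows.

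The only delicate point is the correctness of the recoloring invoked in (1) --- that reassigning $uv$ to $3$ (resp.\ $4$) together with $ux_3$ (resp.\ $ux_4$) to $\lambda$ introduces no bichromatic cycle --- but this is precisely what is established in the paragraph preceding the proposition, using $3,4\notin B_1$, $\lambda\notin C(x_3)\cup C(x_4)$, the case split on the $(2,\lambda)_{(x_2,x_3)}$-path, and Lemma~\ref{lemma09}. Granting that, parts (1) and (2) reduce to the bookkeeping above together with the three-pair pigeonhole count, so I do not expect any genuine obstacle here.
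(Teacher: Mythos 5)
Your proposal is correct and follows essentially the same route as the paper: part (1) is exactly the contrapositive of the recoloring argument in the paragraph immediately preceding the proposition (the case split on the $(2,j)_{(x_2,x_3)}$-path, using $3,4\notin B_1$ and Lemma~\ref{lemma09}), after which the proposition records the surviving case. Part (2), which the paper leaves implicit, is precisely the three-pair pigeonhole count you give, and your observation that $\lambda\ge 5$ makes membership in $C(x_i)$ and in $C(x_i)\setminus\{c(ux_i)\}$ coincide closes it correctly.
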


\subsubsection{Configuration ($A_{5.1}$): $d(u_3) = d(u_4) = 3$}
If $3, 4\not\in B_1$, then $C^*_5\subseteq C(u_3)\cup C(u_4)$ by Proposition~\ref{prop3303}(1).
We assume w.l.o.g. that $C(u_3) = \{3, \gamma_5, \gamma_6\}$.
Then $(u u_3, u v)\to (\gamma_7, 3)$ gives rise to an acyclic edge $(\Delta + 2)$-coloring for $G$, and we are done.
Otherwise, $\{3, 4\}\cap B_1\ne \emptyset$ and assume w.l.o.g. that $3\in B_1$.
It follows that $C(u_1) = C\setminus \{2, 4\}$.

If $G$ contains no $(2, i)_{(u_2, u_4)}$-path for every $i \in C^*_5\setminus C(u_4)$,
then $C(u_4) = \{4, 3, \gamma_5\}$, $C(u_3) = \{3, \gamma_6, \gamma_7\}$ by Proposition~\ref{prop3302}(1) and
then $(u u_1, u u_4, u v)\to (4, 1, 6)$ gives rise to an acyclic edge $(\Delta + 2)$-coloring for $G$, and we are done.
Otherwise, $2\in C(u_4)$ and $G$ contains a $(2, \alpha_0)_{(u_2, u_4)}$-path for some $\alpha_0\in C^*_5\setminus C(u_4)$.
One sees that $4\not\in C(v_1)$.
By Proposition~\ref{prop3302}(3) $G$ contains a $(5, 4)_{(v v_1, v v_2)}$-path, and $C^*_5\setminus C(v_2)\ne\emptyset$.

Let $T_2 = \{i \mid G \mbox{ contains a } (2, i)_{(u_2, u_4)}\mbox{-path}\}$,
and $T_3 = \{i \mid G \mbox{ contains a } (3, i)_{(u_3, u_4)}\mbox{-path}\}$.
We deal with the following three cases of $C(u_4)$.

\paragraph{Case ($A_{5.1}$-1).}\ $C(u_4) = \{4, 2, 1\}$.

By Proposition~\ref{prop3302}(1), $G$ contains a $(2, i)_{(u_2, u_4)}$-path for every $i\in C^*_5$.
If $3\not\in T_2$, then $(u u_4, u v)\to (3, 4)$ and $u u_3\to C^*_5\setminus C(u_3)$ give rise to an acyclic edge $(\Delta + 2)$-coloring for $G$;
otherwise, $3\in T_2$ and $C(u_2) = C^*_5\cup \{2, 3\}$.
If $4\not\in C(u_3)$, then $(u u_2, u u_4, u v)\to (4, 6, 2)$ gives rise to an acyclic edge $(\Delta + 2)$-coloring for $G$;
otherwise, $4\in C(u_3)$.

If $G$ contains neither a $(3, 1)_{(u_2, u_3)}$-path nor a $(3, 2)_{(u_1, u_3)}$-path,
then switch the colors of $\{u u_1, u u_2\}$ and we are done by Lemma \ref{lemma18};
otherwise, $G$ contains a $(3, 1)_{(u_2, u_3)}$-path and $1\in C(u_3)$, or $G$ contains a $(3, 2)_{(u_1, u_3)}$-path and $2\in C(u_3)$.
If $1\in C(u_3)$, then $(u u_1, u u_2, u u_3, u u_4)\to (4, 1, 2, 3)$;
if $2\in C(u_3)$, then $(u u_1, u u_2, u u_3, u u_4)\to (2, 4, 1, 3)$;
either way we are done by Lemma \ref{lemma18}.

\paragraph{Case ($A_{5.1}$-2).}\ $C(u_4) = \{4, 2, 6^*\}$.
(By $6^*$, we mean that the same discussion applies to the symmetric case where $C(u_4) = \{4, 2, 5\}$, after recoloring $v v_2\to 6$.
	The succeeding $^*$'s have the same meaning.)

By Proposition~\ref{prop3302}, $(C^*_5\setminus \{6\})\subseteq T_2$.
When $3\not\in T_2$, if $(C^*_5\setminus \{6\})\setminus C(u_3)\ne\emptyset$, then $u u_3\to (C^*_5\setminus \{6\})\setminus C(u_3)$;
or if $C(u_3) = \{3, 5, 7\}$, then $u u_3\to 6$;
together with $(u u_4, u v)\to (3, 4)$, this gives rise to an acyclic edge $(\Delta + 2)$-coloring for $G$, and we are done.
In the other case, $3\in T_2$.

{Case 2.1.}\ $G$ contains no $(3, 4)_{(u_1, u_3)}$-path.
If $1\not\in T_2$, then $(u u_1, u u_4)\to (4, 1)$ and we are done by Lemma \ref{lemma18}.
Otherwise, $1\in T_2$, $C(u_2) = C\setminus \{4, 6\}$, and $G$ contains a $(3, 6)_{(u_2, u_3)}$-path by Proposition~\ref{prop3302}(1).
If $4\not\in C(u_3)$, then $(u u_2, u u_4, u v)\to (4, 7, 2)$;
or if $C(u_3) = \{3, 4, 6\}$, then $(u u_2, u u_3, u u_4, u v)\to (6, 2, 3, 4)$;
this gives rise to an acyclic edge $(\Delta + 2)$-coloring for $G$, and we are done.

{Case 2.2.}\ $G$ contains a $(3, 4)_{(u_1, u_3)}$-path and $c(u_3 y_1) = 4$ with $3\in C(y_1)$.
If $4\not\in C(u_2)$, then $(u u_2, u u_4, u v)\to (4, 7, 2)$ gives rise to an acyclic edge $(\Delta + 2)$-coloring for $G$;
otherwise, $4\in C(u_2)$ and $C(u_2) = C\setminus \{1, 6\}$.
If $2\not\in C(u_3)$, then $(u u_1, u u_2, u u_3, u u_4)\to (4, 1, 2, 3)$ and we are done by Lemma \ref{lemma18};
otherwise, $2\in C(u_3)$ and $C(u_3) = \{3, 4, 2\}$.
If there exists a color $j\in C^*_5\setminus C(y_1)$,
then $(u u_1, u u_3, u u_4)\to (4, j, 1)$ and $vv_2\to C^*_5\setminus C(v_2)$ if $j = 5$, and we are done by Lemma \ref{lemma18};
otherwise, $C(y_1) = C^*_5\cup \{3, 4\}$.
One sees that $y_1\not\in N(u)\cup N(v)$.
Then $\{u_3 y_1, u u_2\}\to 1$ and $(u u_1, u u_3, u u_4)\to (2, 4, 3)$, and we are done by Lemma \ref{lemma18}.

\paragraph{Case ($A_{5.1}$-3).}\ $C(u_4) = \{4, 2, 3\}$.

W.l.o.g., let $6^*\not\in C(u_3)$.
One sees from Proposition~\ref{prop3302}(1) that $G$ contains a $(2, 6)_{(u_2, u_4)}$-path.
Then $uu_3\to 6$ reduces the proof to Case ($A_{5.1}$-2).

\subsubsection{Configuration ($A_{5.2}$): $d(u_4) = 3$ and $u_1 v\in E(G)$}
Since $u_1v\in E(G)$, we have $v_2\not\in N(u)$.
By Lemma~\ref{lemma19}, $G$ contains a $(2, i)_{(u_2, v)}$-path for every $i\in C^*_5\subseteq C(u_2)$ and
either (i)\ $(v v_1, v v_2)_c = (2, 5)$,
or (ii)\ $(v v_1, v v_2)_c = (5, 2)$.
For (ii), if $1\not\in C(v_2)$, then $v v_2\to 1$ and we are done by Lemma~\ref{lemma19}.
In the other case, $1\in C(v_2)$ and $C(v_2) = C^*_5\cup \{1, 2\}$.

Let $\{i_0, j_0\} = \{1, 2\}$ with $c(v v_{i_0}) = 2$, $c(v v_{j_0}) = 5$.
One sees that $C(v_{i_0}) = C^*_5\cup \{1, 2\}$ and $4\not\in C(v_{i_0})$ for (i) and (ii).
Proposition~\ref{prop3302}(3) implies that $\{2, 3, 4\}\subseteq C(v_{j_0})$ and $C^*_5\setminus C(v_{j_0})\ne \emptyset$.
We separately consider (i) and (ii) below.

\paragraph{Case ($A_{5.2}$-1).}\ $(v v_1, v v_2)_c = (2, 5)$.

If $G$ contains no $(3, j)_{(u_3, u_4)}$-path for some $j\in C^*_5\setminus C(u_4)$, then $(u u_4, u u_1, u v)\to (j, 4, 1)$;
additionally, if $j = 5$, then $v v_2\to C^*_5\setminus C(v_2)$.
This gives rise to an acyclic edge $(\Delta + 2)$-coloring for $G$, and we are done.
Otherwise, $G$ contains a $(3, j)_{(u_3, u_4)}$-path for every $j\in C^*_5\setminus C(u_4)$ and $3\in C(u_4)$.
W.l.o.g., let $6\not\in C(u_4)$ and then $G$ contains a $(3, 6)_{(u_3, u_4)}$-path.

{Case 1.1.}\ $C(u_4) = \{4, 3, 1\}$.
In this case, $C^*_5\subseteq C(u_3)$ and $G$ contains a $(3, j)_{(u_3, u_4)}$-path for every $j\in C^*_5$.
Two possibilities:
\begin{itemize}
\parskip=0pt
\item
	$3\not\in C(u_2)$.
	We first let $(u u_2, u u_4)\to (3, 2)$.
	Next, if $4\not\in C(u_3)$, then let $u u_3\to 4$;
	otherwise, $C(u_3) = C^*_5\cup \{3, 4\}$ and let $(u u_3, u u_1)\to (1, 4)$.
	By Lemma~\ref{lemma18} $G$ admits an acyclic edge $(\Delta + 2)$-coloring, and we are done.
\item
	$3\in C(u_2)$ and $C(u_2) = C^*_5\cup \{2, 3\}$.
	When $2, 4\not\in C(u_3)$, $(u u_2, u u_4)\to (4, 2)$ and we are done by Lemma \ref{lemma18}.
	In the other case, $\{2, 4\}\cap C(u_3)\ne\emptyset$ and $1\not\in C(u_3)$.
	We first let $(u u_1, u u_4)\to (4, 6)$.
	Next, if $G$ contains no $(1, 6)_{(u_3, u_4)}$-path, then $(u u_3, u v)\to (1, 3)$;
	otherwise, $G$ contains a $(1, 6)_{(u_3, u_4)}$-path, let $(u u_2, u v)\to (1, 7)$;
	this gives rise to an acyclic edge $(\Delta + 2)$-coloring for $G$, and we are done.
\end{itemize}

{Case 1.2.}\ $1\not\in C(u_4)$.
In this case, if $i\not\in C(u_3)$ and $G$ contains no $(2, i)_{(u_2, u_3)}$-path for some $i\in \{1, 4\}$, then let $(u u_3, u u_4, u v)\to (i, 6, 3)$;
additionally, if $i = 1$, then let $uv_1\to 4$;
this gives rise to an acyclic edge $(\Delta + 2)$-coloring for $G$, and we are done.

Next we consider $i\in C(u_3)$ or $G$ contains a $(2, i)_{(u_3, u_2)}$-path for every $i\in \{1, 4\}$.
It follows that $1, 4\in C(u_2)\cup C(u_3)$ and, if $\{1, 4\}\setminus C(u_3)\ne\emptyset$, then $2\in C(u_3)$.
Thus, it suffices to assume that $C(u_4) = \{4, 3, 7^*\}$ and $(C^*_5\setminus\{7\})\subseteq C(u_3)$.
Three possibilities:
\begin{itemize}
\parskip=0pt
\item
	If $2, 4\in C(u_3)$ and $1\in C(u_2)$, then $(u u_3, u u_4, u u_1, u v)\to (7, 1, 3, 4)$ gives rise to an acyclic edge $(\Delta + 2)$-coloring for $G$.
\item
	If $2, 1\in C(u_3)$ and $4\in C(u_2)$, then $(u u_2, u u_3, u u_4)\to (3, 4, 2)$ and we are done by Lemma \ref{lemma18}.
\item
	When $1, 4\in C(u_3)$, if $1\not\in C(u_2)$, then $(u u_2, u u_3, u u_1)\to (1, 2, 3)$;
	otherwise, $C(u_2) = C^*_5\cup \{1, 2\}$ and let $(u u_2, u u_4)\to (4, 2)$.
    We are done by Lemma \ref{lemma18}.
\end{itemize}

\paragraph{Case ($A_{5.2}$-2).}\ $(v v_1, v v_2)_c = (5, 2)$.

Since $\{1, 2, 3, 4\}\subseteq C(v_1)$, we have $|C^*_5\setminus C(v_1)\cup \{5\}|\ge 3$, say $6, 7\not\in C(v_1)$.
Proposition~\ref{prop3302} implies that $3\in C(u_4)$,
and we assume w.l.o.g. that $6\not\in C(u_4)$ and $G$ contains a $(3, 6)_{(u_3, u_4)}$-path. 
Since $6\not\in C(u_4)\cup C(v_1)$, by Proposition~\ref{prop3303}, we have $1\in B_2$ and $C(u_2) = C^*_5\cup \{1, 2\}$.

If $G$ contains no $(4, i)_{(v_1, v_2)}$-path for some $i\in \{1, 6\}$, then $(v v_1, v v_2)\to (i, 4)$;
additionally, if $i = 1$, then let $uv_1\to 6$;
we are done by Lemma~\ref{lemma19}.
Otherwise, $G$ contains a $(4, i)_{(v_1, v_2)}$-path for every $i\in \{1, 6\}$.
If $4\not\in C(u_3)$, then $(u u_3, u u_4, u v)\to (4, 6, 3)$ gives rise to an acyclic edge $(\Delta + 2)$-coloring for $G$.
If $2\not\in C(u_3)\cup C(u_4)$, then $(u u_2, u u_4, u u_1)\to (4, 2, 6)$ and we are done by Lemma \ref{lemma18}.
Otherwise, $C(u_3)\cup C(u_4) = \{2, 3, 4\}\cup C^*_5$ and $1\not\in C(u_3)\cup C(u_4)$.
Then $(u u_3, u u_4, u u_1, u v)\to (1, 6, 7, 4)$ gives rise to an acyclic edge $(\Delta + 2)$-coloring for $G$.

\subsubsection{Configuration ($A_{5.3}$): $d(u_4) = 3$, $d(u_3) = 4$, $d(u_2) = 5$, and $u_3 u_2\in E(G)$}
One sees that $N(u)\cap N(v)\ne \emptyset$, and by Lemma~\ref{lemma19} we let $(v v_1, v v_2)_c = (a, 5)$ where $a\in \{1, 2, 3\}$.
We thus distinguish them into the following three cases of $c(v v_1)$.

\paragraph{Case ($A_{5.3}$-1).}\ $c(v v_1) = 3$.

In this case, $\Delta = 5$.
By Lemma~\ref{lemma19}, $C(u_3) = C^*_5\cup \{3\}$ with $3\in C(u_2)$.
One sees that $|B_2|\le 2$ and $|B_4|\le 2$.
Proposition~\ref{prop3302}(4) implies that $C^*_5\setminus C(v_2)\ne \emptyset$. 
Since $1, 2, 4\not\in B_3$, by Proposition~\ref{prop3303}, we have $C^*_5\subseteq C(u_i)\cup C(u_j)$ for every $i, j\in \{1, 2, 4\}$ and
mult$_{S_u}(j)\ge 2$ for every $j\in C^*_5$. 
W.l.o.g., let $c(u_3u_2) = 6^*$ and let $\{\gamma_5, \gamma_7\} = \{5, 7\}$.

{Case 1.1.}\ Assume that $G$ contains no $(1, i)_{(u_1, u_4)}$-path for every $i\in C^*_5\setminus C(u_4)$.
It follows that $2\in C(u_4)$ and further $C(u_4) = \{4, 2, 6\}$, $C(u_2) = C^*_5\cup \{3, 2\}$ by Proposition~\ref{prop3302}.
Then $(u u_2, u u_4, u v)\to (4, 7, 2)$ gives rise to an acyclic edge $(\Delta + 2)$-coloring for $G$.

{Case 1.2.}\ $G$ contains a $(1, i)_{(u_1, u_4)}$-path for some $i\in C^*_5\setminus C(u_4)$ and $1\in C(u_3)$, say $i= \gamma_6$.
Two possibilities:
\begin{itemize}
\parskip=0pt
\item
	$1\in C(u_2)$.
	Propositions~\ref{prop3302}(1) and \ref{prop3303} imply that $C(u_4) = \{4, 1, \gamma_7\}$,
	$C(u_2) = \{3, 2, 1, 6, \gamma_5\}$, and $C^*_5\subseteq C(u_1)$.
	If $3\not\in C(u_1)$, then $(u u_3, u u_4)\to (4, 3)$ and we are done by Lemma \ref{lemma18};
	otherwise, $C(u_1) = C^*_5\cup \{3, 1\}$,
	and then $(u u_1, u u_4, u v)\to (4, 6, 1)$ gives rise to an acyclic edge $(\Delta + 2)$-coloring for $G$, and we are done.
\item
	$1\not\in C(u_2)$.
	We have two situations (i)\ $C^*_5\subseteq C(u_2)$, or (ii)\ $C(u_4) = \{4, 1, \gamma_7\}$, $C(u_2) = \{3, 2, 4, 6, \gamma_5\}$.
	If $3\not\in C(u_1)\cup C(u_4)$, then $u u_4\to 3$, and $(u u_3, u u_2)\to (2, 4)$ for (i),
	or $u u_3\to 4$ for (ii);
	either gives rise to an acyclic edge $(\Delta + 2)$-coloring for $G$.
	If $4\not\in C(u_1)$, then $(u u_1, u u_2, u u_4, u v)\to (4, 1, \gamma_6, 2)$,
	and additionally if $\gamma_6 = 5$ then $vv_2\to C^*_5\setminus C(v_2)$,
	gives rise to an acyclic edge $(\Delta + 2)$-coloring for $G$.

	In the other case, $(\{3, 4\}\cup C^*_5)\subseteq C(u_1)\cup C(u_4)$.
	It follows that (i) holds and $2\not\in C(u_1)\cup C(u_4)$.
	Then $(u u_1, u u_2, u u_4, u v)\to (2, 1, \gamma_6, 4)$ gives rise to an acyclic edge $(\Delta + 2)$-coloring for $G$.
\end{itemize}

\paragraph{Case ($A_{5.3}$-2).}\ $c(v v_1) = 2$.

One sees from Lemma~\ref{lemma19} that if $c(u_3 u_2)\in C^*_5$, then $2\in C(u_3)$.
It follows that $|C^*_5\cap C(u_3)|\le 2$ and $|B_3|\le 2$.
Together with $|B_4|\le 2$, we have $C^*_5\setminus C(v_2)\ne\emptyset$ by Proposition~\ref{prop3302}(4).
By Proposition~\ref{prop3302}(1), if $4\not\in B_2$, then $\{3, 1\}\cap C(u_4)\ne\emptyset$,
and if $3\not\in B_2$, then $\{1, 4\}\cap C(u_3)\ne\emptyset$.

{Case 2.1.}\ $c(u_3 u_2) = 1$.
We have $C(u_2) = C^*_5\cup \{1, 2\}$, $C(u_4) = \{4, a, \gamma_5\}$, and $C(u_3) = \{1, 3, \gamma_6, \gamma_7\}$, where $a\in \{1, 3\}$;
and then $(u u_2, u u_3)\to (3, 2)$ and we are done by Lemma \ref{lemma18}.

{Case 2.2.}\ $c(u_3 u_2) = 4$.
If $4\not\in B_2$, then $C(u_2) = C^*_5\cup \{4, 2\}$, $C(u_4) = \{4, a, \gamma_5\}$, and $C(u_3) = \{4, 3, \gamma_6, \gamma_7\}$, where $a\in \{1, 3\}$;
and then again $(u u_2, u u_3)\to (3, 2)$ and we are done by Lemma \ref{lemma18}.
Next we assume $4\in B_2$, $2\in C(u_3)$, and $G$ contains a $(5, 3)_{v_1, v_2}$-path.
There are two possibilities of $C(u_3)$, as follows:
\begin{itemize}
\parskip=0pt
\item
	$C(u_3) = \{3, 4, 2, 6^*\}$.
	Propositions~\ref{prop3302}(1) and \ref{prop3303}(1) implies that $C(u_4) = \{4, 5, 7\}$ and $5, 7\in C(u_1)$.
    Then, if $6\not\in C(u_1)$, then $(u u_3, u u_1, u v)\to (1, 6, 3)$,
    or if $2\not\in C(u_1)$, then $(u u_3, u u_1, u u_2)\to (1, 2, 3)$,
    or otherwise, $C(u_1) = C^*_5\cup \{1, 2\}$ and then $(u u_1, u u_4, u v)\to (4, 6, 1)$.
   It gives rise to an acyclic edge $(\Delta + 2)$-coloring for $G$, or we are done by Lemma ~\ref{lemma18}.
\item
	$C(u_3) = \{3, 4, 1, 2\}$ with $c(u_3 y_2) = 1$.
    Proposition~\ref{prop3303}(1) implies that $C^*_5\subseteq C(u_1)$.
    If $G$ contains no $(1, i)_{(u_3, u_1)}$-path for some $i\in C^*_5\cup \{4\}$,
    $(u_2 u_3, u u_3, u v)\to (3, i, 3)$ and $u u_4\to C^*_5\setminus C(u_4)$ if $i = 4$ give rise to an acyclic edge $(\Delta + 2)$-coloring for $G$.
    Otherwise, $G$ contains a $(1, i)_{(u_3, u_1)}$-path for every $i\in C^*_5\cup \{4\}$.
    It follows that $C(y_2) = C^*_5\cup \{1, 4\}$, and $y_2\not\in N(u)\cup N(v)$.
    Then $(u_3 u_2, u u_3)\to (1, 6)$ and $\{u_3 y_2, u v\}\to 3$ give rise to an acyclic edge $(\Delta + 2)$-coloring for $G$.
\end{itemize}

{Case 2.3.}\ $c(u_3u_2)= 6^*$ with $2\in C(u_3)$.
If $3, 4\not\in B_2$, then $\{1, 4\}\cap C(u_3)\ne \emptyset$, $\{3, 1\}\cap C(u_4)\ne \emptyset$,
and thus $C^*_5\setminus (C(u_3)\cup C(u_4))\ne\emptyset$, a contradiction to Proposition~\ref{prop3303}(1).
Otherwise, $\{3, 4\}\cap B_2 \ne \emptyset$.
It follows that $\Delta = 6$, $C(u_2) = C(v_1) = C^*_5\cup \{2, a\}$,
and $C^*_5\subseteq C(u_1)\cup C(u_b)$ by Proposition~\ref{prop3303}(1), where $\{a, b\} = \{3, 4\}$.

{Case 2.3.1.}\ $4\in B_2$.
There are two possibilities since $\{1, 4\}\cap C(u_3)\ne\emptyset$:
\begin{itemize}
\parskip=0pt
\item
	$C(u_3) = \{3, 6, 2, 4\}$.
    In this case, $C(u_4) = \{4, 5, 7\}$, $C^*_5\subseteq C(u_1)$, and $G$ contains $(4, 7)_{(u_3, u_4)}$-path.
    If $4\not\in C(u_1)$, then $(u u_3, u u_1, u u_4, u v)\to (7, 4, 1, 3)$ gives rise to an acyclic edge $(\Delta + 2)$-coloring for $G$.
    Otherwise, $C(u_1) = C^*_5\cup \{1, 4\}$.
    Then $(u u_1, u u_2)\to (2, 1)$ and we are done by Lemma \ref{lemma18}.
\item
	$C(u_3) = \{3, 6, 2, 1\}$ with $c(u_3 y_2) = 1$ and $G$ contains $(1, i)_{(u_3, u_1)}$-path for every $i\in \{5, 7\}$.
	When there exists $i\in \{4, 6\}\setminus C(y_2)$, first let $(u_3 u_2, u u_3, u v)\to (3, i, 3)$,
	and next if $i = 4$ and $C(u_4) = \{4, 5, 7\}$, then $u u_4\to 6$,
	or otherwise, $u u_4\to \{5, 7\}\setminus C(u_4)$.
	This gives rise to an acyclic edge $(\Delta + 2)$-coloring for $G$.
	In the other case, $4, 6\in C(y_2)$.
	It follows that $C(y_2) = C^*_5\cup \{1, 4\}$, and $y_2\not\in N(u)\cup N(v)$.
	Then $(u_3 y_2, u u_3, u v)\to (3, 7, 3)$ gives rise to an acyclic edge $(\Delta + 2)$-coloring for $G$.
\end{itemize}

{Case 2.3.2.}\ $3\in B_2$ and $C(v_1) = C^*_5\cup \{3, 2\}$, $G$ contains a $(5, 4)_{(v_1, v_2)}$-path.
Two possibilities:
\begin{itemize}
\parskip=0pt
\item
	$1\not\in C(u_4)$.
	It follows that $C(u_4) = \{4, 3, \gamma_7\}$, $C(u_3) = \{3, 6, 2, \gamma_5\}$,
	and $6, \gamma_5\in C(u_1)$ by Propositions~\ref{prop3302} and \ref{prop3303}.
	If $\gamma_7\not\in C(u_1)$, then $(u u_1, u u_4, u v)\to (\gamma_7, 1, 4)$ gives rise to an acyclic edge $(\Delta + 2)$-coloring for $G$.
	If $3\not\in C(u_1)$, then $(u u_3, u u_1, u u_4, u v)\to (\gamma_7, 3, 1, 4)$ gives rise to an acyclic edge $(\Delta + 2)$-coloring for $G$.
	Otherwise, $C(u_1) = C^*_5\cup \{3, 1\}$.
	If $G$ contains no $(2, 3)_{(u_3, u_4)}$-path, then $(u u_2, u u_4)\to (4, 2)$;
	otherwise, let $(u u_2, u u_1)\to (1, 2)$;
	and we are done by Lemma \ref{lemma18}.
\item
	$1\in C(u_4)$.
	If $G$ contains no $(1, i)_{(u_1, u_4)}$-path for every $i\in (C^*_5\setminus \{5\})\setminus C(u_4)$,
	then $C(u_4) = \{4, 1, 3\}$, $C(u_3) = \{2, 6, 7\}$ and
	$(u u_3, u u_4, u v)\to (5, 6, 4)$ gives rise to an acyclic edge $(\Delta + 2)$-coloring for $G$.
	Otherwise, $G$ contains a $(1, i)_{(u_1, u_4)}$-path for some $\alpha_0\in (C^*_5\setminus \{5\})\setminus C(u_4)$ and
	then $G$ contains a $(5, 1)_{(v_1, v_2)}$-path by Proposition~\ref{prop3302}.

	When $3\not\in C(u_1)\cup C(u_4)$, first let $(uu_1, uv)\to (3, 1)$;
	next, if $4\not\in C(u_3)$, then $u u_3\to i\in C^*_5\setminus C(u_3)$;
	or if $4\in C(u_3)$, then $u u_3\to C^*_5\setminus (C(u_3)\cup C(u_4))$;
	either way this gives rise to an acyclic edge $(\Delta + 2)$-coloring for $G$, and we are done.
	In the other case, $C^*_5\cup \{3\}\subseteq C(u_1)\cup C(u_4)$.

\begin{itemize}
\parskip=0pt
\item
	$C(u_4) = \{4, 1, 3\}$ and thus $C^*_5\subseteq C(u_1)$.
    If $4\not\in C(u_1)$ and $G$ contains no $(4, 3)_{(u_1, u_3)}$-path,
    then $(u u_1, u v)\to (4, 1)$ and $uu_4\to C^*_5 \setminus C(u_3)$ give rise to an acyclic edge $(\Delta+2)$-coloring for $G$.
    Otherwise, either (i)\ $4\in C(u_1)$, or (ii)\ $C(u_1) = C^*_5\cup\{3, 1\}$ and $G$ contains a $(3, 4)_{(u_1, u_3)}$-path.

    For (i), if $1\not\in C(u_3)$, then $(u u_1, u u_2)\to (2, 1)$;
    or otherwise, $(u u_1, u u_2, u u_4)\to (2, 4, 7)$.
    For (ii), if $G$ contains no $(3, 2)_{(u u_3, u u_1)}$-path,
    then $(u u_1, u u_2)\to (2, 1)$;
    or otherwise, $(u u_4, u u_2)\to (2, 4)$.
    Either way we are done by Lemma \ref{lemma18}.
\item
	$3\not\in C(u_4)$.
   	When $C(u_3) = \{3, 6, 2, 4\}$, if $G$ contains no $(3, 1)_{(u_2, u_1)}$-path, then $(u_3 u_2, u u_2)\to (1, 6)$ and we are done by Lemma \ref{lemma18};
	or otherwise, $(u u_3, u u_4, u v)\to (7, 3, 4)$ gives rise to an acyclic edge $(\Delta + 2)$-coloring for $G$.
	Hence, it suffices to assume that $4\not\in C(u_3)$.

	If $C(u_4) = \{4, 1, 2\}$ and thus $C(u_1) = C^*_5\cup \{3, 1\}$,
	then $(u u_1, u u_4, u v)\to (4, 6, 1)$ gives rise to an acyclic edge $(\Delta + 2)$-coloring for $G$;
	otherwise, $2\not\in C(u_4)$ and assume w.l.o.g. that $C(u_4) = \{4, 1, \beta\}$, where $\beta\in C^*_5$.
	If $2\not\in C(u_1)$, then $(u u_2, u u_4)\to (4, 2)$ and we are done by Lemma \ref{lemma18};
	otherwise, $C(u_1) = (C^*_5\setminus \{\beta\})\cup \{1, 2, 3\}$.
	Then $(u u_1, u v)\to (4, 1)$ and $u u_4\to C^*_5\setminus C(u_4)$ give rise to an acyclic edge $(\Delta + 2)$-coloring for $G$.
\end{itemize}
\end{itemize}

\paragraph{Case ($A_{5.3}$-3).}\ $c(v v_1) = 1$.

One sees that if $C(u_3) = C^*_5\cup \{3\}$ with $3\in C(u_2)$, then $|B_2|\le 2$.
Hence, $|B_3|\le 2$ or $|B_2|\le 2$.
Together with $|B_4|\le 2$, we have $C^*_5\setminus C(v_2)\ne \emptyset$ by Proposition~\ref{prop3302}(4).

Assume that $C(u_3) = C^*_5\cup \{3\}$ with $c(u_3u_2) = 7^*$.
We discuss the following two subcases.

First, assume that $3\not\in C(u_1)$.
When $C(u_4)\ne \{4, 2, 7\}$, we first $(u u_3, u v)\to (4, 3)$;
next, if $6, 7\in C(u_4)$, then $u u_4\to 5$,
or if $2\not\in C(u_4)$, then $u u_4\to \{6, 7\}\setminus C(u_4)$,
or if $7\not\in C(u_4)$, then $u u_4\to 7$;
this gives rise to an acyclic edge $(\Delta + 2)$-coloring for $G$.
In the other case, $C(u_4) = \{4, 2, 7\}$ and further $5, 6\in C(u_2)$.
If $1\not\in C(u_2)$, then $(u u_3, u u_1, u u_2)\to (2, 3, 1)$;
otherwise, $C(u_2) = C^*_5\cup \{1, 2\}$ and then $(u u_3, u u_1)\to (1, 3)$;
and we are done by Lemma \ref{lemma18}.

Next, assume that $3\in C(u_1)$ and $C(u_1) = C^*_5\cup \{3, 1\}$.
When $C(u_4)\ne \{4, 2, 7\}$ or $3\not\in C(u_2)$, we first $(u u_3, u u_1)\to (1, 4)$;
next, if $6, 7\in C(u_4)$, then $(u u_4, u v)\to (5, 3)$,
or if $2\not\in C(u_4)$, then $u u_4\to \{6, 7\}\setminus C(u_4)$,
or if $7\not\in C(u_4)$, then $u u_4\to 7$,
or if $3\not\in C(u_2)$, then $u u_4\to 3$;
this gives rise to an acyclic edge $(\Delta + 2)$-coloring for $G$.
In the other case, $C(u_4) = \{4, 2, 7\}$ and further $3, 5, 6\in C(u_2)$.
Then $(u u_1, u u_4)\to (4, 1)$, and we are done by Lemma \ref{lemma18}.

We continue with the following proposition:

\begin{proposition}
\label{prop3331}
	$C^*_5\setminus C(v_2)\ne \emptyset$,
    $C^*_5\setminus C(u_3)\ne\emptyset$,
    and if $3\not\in B_1$, then for every $j\in C^*_5\setminus C(u_3)$,
    $G$ contains an $(i, j)_{(u_3, u_i)}$-path for some $i\in \{2, 4\}\cap C(u_3)$.
\end{proposition}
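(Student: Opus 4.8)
The plan is to read the three assertions of Proposition~\ref{prop3331} off from material already in hand: Proposition~\ref{prop3302}, a short colour count at $u_3$, and the sub-case of ($A_{5.3}$-3) just disposed of (namely $C(u_3)=C^*_5\cup\{3\}$). Assertion~(1), $C^*_5\setminus C(v_2)\ne\emptyset$, requires no new work at all: it was already obtained at the start of the ($A_{5.3}$-3) analysis, where $|B_4|\le 2$ (forced by $d(u_4)=3$) is combined with the dichotomy that $|B_3|\le 2$ or $|B_2|\le 2$, and the resulting pair of small $B_i$'s is fed into Proposition~\ref{prop3302}(4) (applied to $\{3,4\}$ or to $\{2,4\}$). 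So for the proof of Proposition~\ref{prop3331} I would simply cite that earlier deduction.

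For assertion~(2), $C^*_5\setminus C(u_3)\ne\emptyset$, I would count colours at $u_3$. Since $d(u_3)=4$ we have $|C(u_3)|=4$, and $3=c(uu_3)$ lies in $C(u_3)$ but not in $C^*_5$, so $|C^*_5\cap C(u_3)|\le 3$. If $\Delta\ge 6$ then $|C^*_5|=\Delta-2\ge 4>3$ and we are done. If $\Delta=5$ then $|C^*_5|=3$, and $C^*_5\subseteq C(u_3)$ would force $C(u_3)=C^*_5\cup\{3\}$ since both sets have size $4$; but that is exactly the sub-case of ($A_{5.3}$-3) that has just been resolved completely, so it does not occur here, and again $C^*_5\setminus C(u_3)\ne\emptyset$.

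For assertion~(3), I would assume $3\notin B_1$ and apply Proposition~\ref{prop3302}(1) with $i=3$, which is legitimate because then $3\in\{2,3,4\}\setminus B_1$. It yields that either $C^*_5\subseteq C(u_3)$, or for every $j\in C^*_5\setminus C(u_3)$ the graph $G$ contains a bichromatic path joining $u_3$ to some $u_i$ with $i\in(\{2,3,4\}\cap C(u_3))\setminus\{3\}=\{2,4\}\cap C(u_3)$, the two colours of that path being $i$ and $j$ (this is the obstruction to recolouring $uu_3\to j$, closed through $u$). The first alternative is ruled out by assertion~(2), so the second holds, which is precisely the claim.

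Thus assertions~(2) and~(3) are routine. The only genuinely delicate step is the one buried in assertion~(1), namely re-establishing that $|B_3|\le 2$ or $|B_2|\le 2$: one argues that $|B_3|\ge 3$, given $d(u_3)=4$, forces $C(u_3)$ to consist of the colour $3$ together with three colours of $C^*_5$, and then — using the triangle $uu_2u_3$, so that the colour-$c(u_2u_3)$ edge at $u_3$ begins a $(3,\cdot)$-alternating path running into $u_2$ — one concludes $3\in C(u_2)$ and hence $|B_2|\le 2$. Since assertion~(1) is already established before Proposition~\ref{prop3331}, however, this Lemma~\ref{lemma09}-style bookkeeping need not be repeated inside the proof of the proposition, so the proof of Proposition~\ref{prop3331} proper is short.
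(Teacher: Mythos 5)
Your proposal is correct and matches the paper's (implicit) justification of Proposition~\ref{prop3331}: claim (1) is exactly the deduction made just before the proposition from $|B_4|\le 2$, the dichotomy $|B_3|\le 2$ or $|B_2|\le 2$, and Proposition~\ref{prop3302}(4); claim (2) holds because the sub-case $C(u_3)=C^*_5\cup\{3\}$ has just been eliminated (and is automatic when $\Delta\ge 6$); and claim (3) is Proposition~\ref{prop3302}(1) applied with $i=3$, its first alternative ruled out by (2). The only step glossed over --- in your closing sketch exactly as in the paper's own one-line assertion --- is that $3\in C(u_2)$ by itself only yields $|B_2|\le 3$; one also needs $c(u_2u_3)\notin B_2$ (which holds since $2\notin C(u_3)$ blocks any $(2,c(u_2u_3))$-path leaving $u_2$ through $u_3$) to bring the bound down to $|B_2|\le 2$.
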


Assume that $3, 2, 4\not\in B_1$,
and thus by Proposition~\ref{prop3303}(2) mult$_{\biguplus_{j\in \{3, 2, 4\}}C(u_j)\setminus \{c(u u_j)\}} (i)\ge 2$ for every $i\in C^*_5$.
It suffices to assume that $\Delta = 5$, $C(u_4) = \{4, a, 6^*\}$ where $a\in \{3, 2\}$, $C(u_3) = \{3, 4, 5, 7\}$,
and $C(u_2) = C^*_5\cup \{2, 4\}$ with $c(u_3 u_2) = 4$.
If $4\not\in C(u_1)$, then let $(u u_1, u u_4)\to (4, 1)$;
otherwise, $C(u_1) = C^*_5\cup \{1, 4\}$, and let $(u u_3, u u_1, u u_4)\to (6, 3, 1)$;
and we are done by Lemma \ref{lemma18}.

Next we assume that $\{3, 2, 4\}\cap B_1\ne\emptyset$.
One sees from Proposition~\ref{prop3302}(3) that if $i\in \{3, 4\}\setminus B_1$, then $G$ contains a $(5, i)_{(v_1, v_2)}$-path.
We discuss the following three cases for possible values of $B_1$.

{Case 3.1.}\ $2\in B_1$ and $C(u_1) = C(v_1) = C^*_5\cup \{1, 2\}$.
We distinguish whether or not $2\in C(u_4)$:

{Case 3.1.1.}\ $2\not\in C(u_4)$.
By Proposition~\ref{prop3302}(1), we assume w.l.o.g. that $C(u_4) = \{4, 3, 6^*\}$, $5, 7\in C(u_3)$ and $\{2, 4\}\cap C(u_3)\ne\emptyset$.
If $G$ contains no $(2, 4)_{(u_1, u_2)}$-path, then $(u u_1, u u_4)\to (4, 1)$ and we are done by Lemma \ref{lemma18};
otherwise, $G$ contains a $(2, 4)_{(u_1, u_2)}$-path and $4\in C(u_2)$.
If $4\not\in C(u_3)$, then $(u u_3, u u_4, u v)\to (4, 7, 3)$ gives rise to an acyclic edge $(\Delta + 2)$-coloring for $G$;
otherwise, $C(u_3) = \{3, 4, 5, 7\}$ with $c(u_3 u_2) = \gamma_7$ and further
$G$ contains a $(3, 5)_{(u_3, u_4)}$-path, a $(3, 7)_{(u_3, u_4)}$-path, and a $(4, 6)_{(u_3, u_4)}$-path.
One sees that $2, 3, 4\in C(u_2)$.
Then $u u_2\to C^*_5\setminus C(u_2)$ and $(u u_4, u v)\to (2, 4)$ gives rise to an acyclic edge $(\Delta + 2)$-coloring for $G$.

{Case 3.1.2.}\ $2\in C(u_4)$.
Proposition~\ref{prop3331} implies that $C^*_5\setminus C(u_3)\ne\emptyset$.
Thus, by Proposition~\ref{prop3303}(1), $C^*_5 \subseteq C(u_3)\cup C(u_4)$.
It follows that $\Delta= 5$ and $C^*_5\setminus C(u_3)\subseteq C(u_4)$.
We assume w.l.o.g. that $C(u_4) = \{4, 2, 6^*\}$.
When $2\not\in C(u_3)$, $C(u_3) = \{3, 4, 5, 7\}$;
if $G$ contains no $(3, 2)_{(u_2, u_4)}$-path, then $(u u_4, u u_3, u v)\to (3, 6, 4)$;
or otherwise, $(u u_3, u u_1)\to (1, 3)$;
this gives rise to an acyclic edge $(\Delta+2)$-coloring for $G$.
In the other case, $2\in C(u_3)$, $C(u_3) = \{3, 2, 5, 7\}$, and $C^*_5\subseteq C(u_2)$.
If $1\not\in C(u_2)$, then let $(u u_1, u u_2, u u_4)\to (4, 1, 7)$ and we are done by Lemma \ref{lemma18};
otherwise, $C(u_2) = C^*_5\cup \{1, 2\}$.
If $G$ contains no $(1, 2)_{(u_3, u_2)}$-path, then let $(u u_3, u u_1)\to (1, 3)$;
or otherwise, let $(u u_1, u u_4)\to (4, 1)$;
we are done by Lemma \ref{lemma18}.

{Case 3.2.}\ $3\in B_1$ and $C(u_1) = C(v_1) = C^*_5\cup \{1, 3\}$.
Proposition~\ref{prop3302}(3) implies that $G$ contains a $(5, 4)_{(v_1, v_2)}$-path and $C^*_5\setminus C(v_2)\ne \emptyset$.

First, assume that $2\not\in C(u_4)$;
we have $C(u_4) = \{4, 3, 6^*\}$, $\Delta = 5$.
By Propositions~\ref{prop3302}(1) and \ref{prop3303}(1), we have $5, 7\in C(u_3)\cap C(u_2)$.
If $G$ contains neither a $(3, 1)_{(u_3, u_4)}$-path nor a $(3, 4)_{(u_3, u_1)}$-path,
then let $(u u_1, u u_4)\to (4, 1)$ and we are done by Lemma \ref{lemma18};
otherwise, $C(u_3) = \{3, 5, 7, a\}$ with $a\in \{1, 4\}$, and $G$ contains a $(3, 1)_{(u_3, u_4)}$-path or a $(3, 4)_{(u_3, u_1)}$-path.
If $6\not\in C(u_2)$, then $(u u_2, u u_4, u v)\to (6, 2, 4)$ gives rise to an acyclic edge $(\Delta + 2)$-coloring for $G$;
otherwise, $C(u_2) = C^*_5\cup \{3, 2\}$ with $c(u_3 u_2)\in \{5, 7\}$.
Let $(u u_1, u u_2)\to (2, 1)$ and we are done by Lemma \ref{lemma18}.

Next, assume that $2\in C(u_4)$.
We discuss the following three possible cases of $C(u_4)$.

{Case 3.2.1.}\ $C(u_4) = \{4, 2, 1\}$.
In this case $G$ contains a $(2, i)_{(u_2, u_4)}$-path for every $i\in C^*_5$.
If $G$ contains no $(3, 2)_{(u_2, u_4)}$-path,
then $(u u_4, u v)\to (3, 4)$ and $u u_3\to C^*_5\setminus C(u_3)$ give rise to an acyclic edge $(\Delta + 2)$-coloring for $G$;
otherwise, $G$ contains a $(3, 2)_{(u_2, u_4)}$-path and $C(u_2) = C^*_5\cup \{3, 2\}$.
We assume w.l.o.g. that $c(u_3 u_2) = 6^*$.
If $4\not\in C(u_3)$, then $(u u_2, u u_4, u v)\to (4, 6, 2)$;
or if $C(u_3) = \{3, 2, 6, 4\}$, then $(u_3 u_2, u u_4, u v)\to (1, 6, 4)$;
this gives rise to an acyclic edge $(\Delta + 2)$-coloring for $G$.

{Case 3.2.2.}\ $C(u_4) = \{4, 2, 6^*\}$.
In this case, $G$ contains a $(2, i)_{(u_2, u_4)}$-path for every $i\in C^*_5\setminus \{6\}$ by Proposition~\ref{prop3302}(1),
and contains a $(5, 2)_{(v_1, v_2)}$-path by Proposition~\ref{prop3302}(2).
If $G$ contains neither a $(1, 2)_{(u_2, u_4)}$-path nor a $(3, 4)_{(u_3, u_1)}$-path,
then let $(u u_4, u u_1)\to (1, 4)$ and we are done by Lemma \ref{lemma18}.
Otherwise, $G$ contains a $(1, 2)_{(u_2, u_4)}$-path or a $(3, 4)_{(u_3, u_1)}$-path.

{Case 3.2.2.1.}\ $4\not\in (u_3)$.
It follows that $1\in (u_2)$.
If $4\not\in C(u_2)$, then $(u u_2, u u_4, u v)\to (4, 7, 2)$;
otherwise, $C(u_2) = \{2, 5, 7, 1, 4\}$, and then $(u u_3, u u_1, u u_2, u u_4)\to (4, 2, 3, 1)$;
this gives rise to an acyclic edge $(\Delta + 2)$-coloring for $G$, and we are done.

{Case 3.2.2.2.}\ $4\in C(u_3)$.
When $c(u_3u_2) = \gamma_7\in C^*_5\setminus \{6\}$, i.e., $C(u_3) = \{3, 2, 4, \gamma_7\}$,
if $3\not\in C(u_2)$, then $(u u_2, u u_3, u v)\to (3, \gamma_5, 2)$;
or if $6\not\in C(u_2)$, then $(u u_2, u u_4, u v)\to (6, \gamma_7, 2)$;
or if $C(u_2) = C^*_5\cup \{3, 2\}$, then $(u_3 u_4, u u_4, u v)$ $\to$ $(1, \gamma_7, 4)$;
this gives rise to an acyclic edge $(\Delta + 2)$-coloring for $G$.
When $c(u_3 u_2)\not\in C^*_5\setminus \{6\}$, there are two possibilities:
\begin{itemize}
\parskip=0pt
\item
	$(C^*_5\setminus \{6\})\subseteq C(u_3)$.
	It follows that $C(u_3) = \{3, 4, 5, 7\}$ with $c(u_3 u_2) = 4$.
	Since $\{1, 3\}\cap C(u_2)\ne\emptyset$, we have $6\not\in C(u_2)$.
	Then $(u u_2, u v)\to (6, 2)$ gives rise to an acyclic edge $(\Delta + 2)$-coloring for $G$.
\item
	There exists an $\eta\in (C^*_5\setminus \{6\})\setminus C(u_3)$.
	If $G$ contains no $(3, 2)_{(u_2, u_4)}$-path, then $(u u_3, u u_4, u v)\to (\eta, 3, 4)$ gives rise to an acyclic edge $(\Delta + 2)$-coloring for $G$;
	otherwise, $G$ contains a $(3, 2)_{(u_2, u_4)}$-path and $3\in C(u_2)$.
	One sees from Proposition~\ref{prop3302}(1) that if $c(u_3 u_2)\in \{1, 4\}$, then $6\in C(u_3)$.
	It follows that $G$ contains a $(3, 4)_{(u_3, u_1)}$-path.

\begin{itemize}
\parskip=0pt
\item
	If $c(u_3u_2)\in \{1, 6\}$, then $(u u_2, u u_4, u v)\to (4, 7, 2)$ gives rise to an acyclic edge $(\Delta + 2)$-coloring for $G$.
\item
	$c(u_3 u_2) = 4$ and $6\in C(u_3)$.
    If $2\not\in C(u_3)$, then let $(u u_3, u u_1, u u_2, u u_4)\to (2, 4, 1, 7)$ and we are done by Lemma \ref{lemma18};
    otherwise, $C(u_3) = \{3, 4, 2, 6\}$, and then $(u_3 u_2, u u_2, u u_4, u v)\to (1, 4, 7, 2)$ gives rise to an acyclic edge $(\Delta + 2)$-coloring for $G$.
\end{itemize}
\end{itemize}

{Case 3.2.3.}\ $C(u_4) = \{4, 2, 3\}$.
In this case, we assume w.l.o.g. that $6^*\not\in C(u_3)$.
Then $u u_3\to 6$ reduces the proof to the above Case 3.2.2.

{Case 3.3.}\ $4\in B_1$.
Since $3, 2\not\in B_1$, we have $C^*_5\subseteq C(u_3)\cup C(u_2)$ by Proposition~\ref{prop3303}(1).
By Proposition~\ref{prop3331}, $|B_3|\le 2$, and thus $G$ contains a $(5, 3)_{(v_1, v_2)}$-path by Proposition~\ref{prop3302}(3).
One sees that if $C^*_5\setminus C(u_2)\ne\emptyset$, then $G$ contains a $(5, 2)_{(v_1, v_2)}$-path.
For every $i\in C^*_5\setminus C(u_3)$,
let $T_2 = \{i\in  C^*_5\setminus C(u_3) \mid G \mbox{ contains a } (2, i)_{(u_3, u_2)}\mbox{-path}\}$ and
$T_4 = \{i\in  C^*_5\setminus C(u_3) \mid G \mbox{ contains a } (4, i)_{(u_3, u_4)}\mbox{-path}\}$.

W.l.o.g., let $6^*\in C(u_2)\setminus C(u_3)$ and $\{\gamma_5, \gamma_7\} = \{5, 7\}$.
If $6\not\in T_2$, $1\not\in C(u_4)$, and $G$ contains neither a $(1, 2)_{(u_2, u_4)}$-path nor a $(1, 6)_{(u_3, u_4)}$-path,
then let $(u u_3, u u_1, u u_4)\to (6, 3, 1)$ and we are done by Lemma \ref{lemma18}.
If $1\not\in C(u_3)$ and $G$ contains none of a $(1, 2)_{(u_3, u_2)}$-path, a $(4, 1)_{(u_3, u_4)}$-path and a $(4, 3)_{(u_1, u_4)}$-path,
then let $(u u_3, u u_1)\to (1, 3)$ and we are done by Lemma \ref{lemma18}.
If $6\not\in C(u_4)$, $4\not\in C(u_2)$, and $G$ contains no $(3, 4)_{(u_3, u_2)}$-path,
then $(u u_2, u u_4, u v)\to (4, 6, 2)$ gives rise to an acyclic edge $(\Delta + 2)$-coloring for $G$.
If $6\not\in T_4$, $3\not\in C(u_2)$, and $G$ contains no $(3, 4)_{(u_2, u_4)}$-path,
then $(u u_3, u u_2, u v)\to (6, 3, 2)$ gives rise to an acyclic edge $(\Delta + 2)$-coloring for $G$.

We continue with the following proposition:

\begin{proposition}
\label{prop3332}
\begin{itemize}
\parskip=0pt
\item[{(1)}]
	(i). If $6\not\in T_2$, then $1\in C(u_4)$, or $G$ contains a $(1, 2)_{(u_2, u_4)}$-path or a $(1, 6)_{(u_3, u_4)}$-path;
	(ii). $\{1, 2\}\cap (C(u_3)\cup C(u_4))\ne\emptyset$.
\item[{(2)}]
	(i). $1\in C(u_3)$, or $G$ contains a $(1, 2)_{(u_3, u_2)}$-path, or a $(4, 1)_{(u_3, u_4)}$-path, or a $(4, 3)_{(u_1, u_4)}$-path;
	(ii). If $1, 2\not\in C(u_3)$, then $\{3, 1\}\cap C(u_4)\ne \emptyset$, and $G$ contains a $(4, 1)_{(u_3, u_4)}$-path or a $(4, 3)_{(u_1, u_4)}$-path.
\item[{(3)}]
	If $6\not\in C(u_4)$, then $4\in C(u_2)$, or $G$ contains a $(3, 4)_{(u_3, u_2)}$-path.
\item[{(4)}]
	If $6\not\in C(u_4)$ or $6\not\in T_4$, then $3\in C(u_2)$, or $G$ contains a $(3, 4)_{(u_2, u_4)}$-path.
\end{itemize}
\end{proposition}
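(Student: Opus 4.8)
The plan is to prove Proposition~\ref{prop3332} as the \emph{bundled contrapositive} of the four ``then we are done'' recolouring schemes established immediately above it, sharpened by the rigidity of bichromatic paths through $u$. Throughout one works with the hypothesised coloring $c$ of $H=G-uv$ under the running data of Case~3.3: $4\in B_1$, $c(vv_1)=1$, $c(vv_2)=5$, $C^*_5\subseteq C(u_3)\cup C(u_2)$, $|B_3|\le2$, and $G$ contains a $(5,3)_{(v_1,v_2)}$-path. I would first note that parts (1)(i), (2)(i), (3), and the branch ``$6\notin T_4$'' of part~(4) require no new work: each is precisely the negation of the hypothesis of one of the four displayed statements just before the proposition, so if any of these conclusions failed we would already have exhibited an acyclic edge $(\Delta+2)$-coloring of $G$ (via a bounded recolouring of edges near $u$ and $v$ plus Lemma~\ref{lemma18}), contradicting that we are still inside Case~3.3.

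The remaining assertions follow from \emph{path-rigidity} at $u$. Since $C(u)=\{1,2,3,4\}$ with $c(uu_i)=i$, any $(a,b)$-path through $u$ must enter and leave $u$ along the two edges of $\{uu_1,\dots,uu_4\}$ whose colours lie in $\{a,b\}$; in particular a $(4,6)_{(u_3,u_4)}$-path would force the colour-$4$ edge at $u_4$, namely $uu_4$, onto the path, whence the adjacent edge at $u$ would have colour $6\notin C(u)$ --- impossible. Hence $6\notin C(u_4)$ implies $6\notin T_4$, so the branch ``$6\notin C(u_4)$'' of part~(4) reduces to the branch already proved.

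The same principle handles part~(2)(ii): under $1,2\notin C(u_3)$ a $(1,2)_{(u_3,u_2)}$-path cannot exist (its $u_3$-end would need a colour in $\{1,2\}$), so the four alternatives of (2)(i) collapse to the two path-alternatives claimed in (2)(ii); and for the clause $\{1,3\}\cap C(u_4)\ne\emptyset$ I would argue by contradiction --- if $1,3\notin C(u_4)$ then $C(u_4)=\{4,p,q\}$ with $p,q\in\{2\}\cup C^*_5$, and combining this with the surviving $(4,1)_{(u_3,u_4)}$- or $(4,3)_{(u_1,u_4)}$-path one recolours $uu_4$ by a colour of $C^*_5\setminus C(u_4)$ (or swaps $(uu_3,uu_1)\to(1,3)$), sets $uv$ through Lemma~\ref{lemma18}, and checks that no bichromatic cycle is created. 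Finally part~(1)(ii): if $\{1,2\}\cap(C(u_3)\cup C(u_4))=\emptyset$ then $1\notin C(u_4)$, so by (1)(i) in the case $6\notin T_2$ a $(1,2)_{(u_2,u_4)}$- or $(1,6)_{(u_3,u_4)}$-path would be forced, again ruled out because a required endpoint colour lies in $\{1,2\}$; in the case $6\in T_2$ the recolouring $uu_3\to C^*_5\setminus C(u_3)$, $uu_4\to\{1,2\}$, $uv\to3$ finishes. These clauses should be carried out in the order (4)-reduction, then (2)(ii), then (1)(ii), since each reuses the previous one.

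The main obstacle is not the logical skeleton but the bookkeeping inside each ``no new bichromatic cycle'' verification: one must pin down which of $C(u_1),\dots,C(u_4),C(v_1),C(v_2)$ contain each of $1,2,3,4,6$ under the hypotheses of Case~3.3 --- invoking Proposition~\ref{prop3303} and the already-established paths --- and must appeal to Lemma~\ref{lemma09} to conclude that a recoloured edge does not close an $(a,b)$-cycle once the corresponding maximal $(a,b)$-path has been located elsewhere. I would tame this by first enumerating the few admissible colour patterns on the degree-$3$ vertex $u_4$ and the degree-$4$ vertex $u_3$, then dispatching the checks pattern by pattern, exactly as in the surrounding cases.
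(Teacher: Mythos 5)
Your skeleton is indeed the paper's: items (1)(i), (2)(i), (3) and the ``$6\notin T_4$'' branch of (4) are nothing but the contrapositives of the four recoloring statements displayed immediately before Proposition~\ref{prop3332}, and the other branch of (4) reduces to it because a $(4,6)_{(u_3,u_4)}$-path cannot use $uu_4$ (the next edge at $u$ would have to be colored $6\notin C(u)$), so $6\notin C(u_4)$ forces $6\notin T_4$. Up to this point your proposal matches what the paper leaves implicit.

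The trouble is in the two clauses that genuinely need an argument, and there your concrete steps fail. For (1)(ii), your ``$6\in T_2$'' branch is handled by the recoloring $uu_3\to C^*_5\setminus C(u_3)$, $uu_4\to\{1,2\}$, $uv\to 3$, which is not even proper: $uu_1$ and $uu_2$ keep the colors $1$ and $2$, so either choice for $uu_4$ clashes at $u$. In fact that branch is vacuous and should be disposed of instead: since $2\notin C(u_3)$, any $(2,i)_{(u_3,u_2)}$-path with $i\notin C(u_3)$ would need an edge colored $2$ at $u_3$, so $T_2=\emptyset$; then, using $1,2\notin C(u_3)\cup C(u_4)$ together with the standing w.l.o.g. $6\in C(u_2)\setminus C(u_3)$, all alternatives of (1)(i) are impossible and the first displayed recoloring would finish $G$, the desired contradiction. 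For the clause $\{1,3\}\cap C(u_4)\ne\emptyset$ of (2)(ii), your primary route (recolor $uu_4$ into $C^*_5\setminus C(u_4)$ and ``set $uv$ through Lemma~\ref{lemma18}'') does not terminate: after that change $C(u)\cap C(v)$ still contains $1$ and the color $1$ still sits on $uu_1=ux_1$, so neither alternative of Lemma~\ref{lemma18} applies and you are back inside the $B_1$ analysis. The workable route is your parenthetical swap $(uu_3,uu_1)\to(1,3)$, but it needs the one observation your path-rigidity principle does not supply (both colors of the pairs $(4,1)$ and $(4,3)$ do occur at $u$): a $(4,1)_{(u_3,u_4)}$- or $(4,3)_{(u_1,u_4)}$-path that passes through $u$ cannot close a bichromatic cycle with the swapped edges, because such a cycle must use an edge colored $1$, respectively $3$, at $u_4$; hence if $1,3\notin C(u_4)$ (and $1,2\notin C(u_3)$, $3\notin C(u_1)$ from $C(u_1)=\{1,4\}\cup C^*_5$) the swap is legal, Lemma~\ref{lemma18}(2) finishes, and the contradiction gives the claim. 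With these repairs your argument coincides with the paper's intended justification; as written, the two steps above would fail.
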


We discuss the two subcases on whether $2\in C(u_3)$ or not.

{Case 3.3.1.}\ $2\not\in C(u_3)$.
Proposition~\ref{prop3302}(1) implies that $(C^*_5\setminus C(u_3))\subseteq T_4\subseteq C(u_4)$ and $4\in C(u_3)$.
One sees from Proposition~\ref{prop3332}(1) that, if there exists $6, \gamma_5\not\in C(u_3)$, then $C(u_4) = \{4, 6, \gamma_5\}$;
and from Proposition~\ref{prop3332}(2) that, if $C(u_3) = \{3, 4, 5, 7\}$, then $C(u_4) = \{4, 6, 1\}$.
Then $(u u_3, u u_4)\to (6, 3)$ reduces the proof to the above Case 3.2.

{Case 3.3.2.}\ $2\in C(u_3)$.

{Case 3.3.2.1.}\ $C(u_3) = \{3, 1, 2, 4\}$.
Proposition~\ref{prop3303}(1) implies that $C(u_2) = \{5, 6, 7, 2, c(u_2u_3)\}$ and $\Delta = 5$.
First let $(u u_2, u v)\to (3, 2)$;
next, if $C(u_4) = \{4, 6, 7\}$, then $(u u_2, u u_4)\to (6, 5)$;
or otherwise, $u u_3\to \{6, 7\}\setminus C(u_4)$;
this gives rise to an acyclic edge $(\Delta + 2)$-coloring for $G$.

{Case 3.3.2.2.}\ Assume that $C(u_3) = \{3, 2, \gamma_5, \gamma_7\}$ with $c(u_3 u_2) = \gamma_7$.
Two possibilities:
\begin{itemize}
\parskip=0pt
\item
	$3\not\in C(u_2)$.
	One sees from Proposition~\ref{prop3332}(4) that $G$ contains a $(3, 4)_{(u_2, u_4)}$-path,
    and from Proposition~\ref{prop3332}(2) that $G$ contains a $(2, 1)_{(u_3, u_2)}$-path.
    It follows that $C(u_2) = \{2, 1, 4, 6, \gamma_7\}$.
	Then $(u_3 u_2, u u_3, u v)\to (3, \gamma_7, 3)$ gives rise to an acyclic edge $(\Delta + 2)$-coloring for $G$.
\item
	$3\in C(u_2)$.
	If $1, \gamma_5\not\in C(u_2)$, then $(u_3 u_2, u u_3, u v)\to (1, \gamma_7, 3)$ gives rise to an acyclic edge $(\Delta + 2)$-coloring for $G$;
	otherwise, $\{1, \gamma_5\}\cap C(u_2)\ne \emptyset$, i.e., (i)\ $C(u_2) = \{3, 1, 2, 6, \gamma_7\}$, or (ii)\ $C(u_2) = C^*_5\cup \{3, 2\}$.
	Proposition~\ref{prop3332}(3) implies that $6\in C(u_4)$.
	For (i), if $2\not\in C(u_4)$, then $(u_3 u_2, u u_3, u v)\to (4, \gamma_7, 3)$;
	or if $C(u_4) = \{4, 2, 6\}$, then $(u u_2, u u_4, u v)\to (4, \gamma_7, 2)$;
	this gives rise to an acyclic edge $(\Delta + 2)$-coloring for $G$.
	For (ii), we have $C(u_4) = \{4, 3, 6\}$ by Proposition~\ref{prop3332}(2),
	then let $(u u_1, u u_2)\to (2, 1)$ and we are done by Lemma \ref{lemma18}.
\end{itemize}

{Case 3.3.2.3.}\ $|C^*_5\cap C(u_3)| = 1$.
We assume w.l.o.g. from Proposition~\ref{prop3303}(1) that $7^*\in C(u_3)$ and $C^*_5\setminus \{7\}\subseteq C(u_2)$.
When $4\not\in C(u_2)$ and $G$ contains no $(3, 4)_{(u_3, u_2)}$-path,
we first $(u u_2, u v)\to (4, 2)$;
next, if $C(u_4) = \{4, 5, 6\}$, then $u u_4\to 7$;
or otherwise there exists a $\beta\in \{5, 6\}\setminus C(u_4)$, and then $u u_4\to \beta$ (additionally, if $\beta = 5$ then $v v_2\to C^*_5\setminus C(v_2)$);
this gives rise to an acyclic edge $(\Delta + 2)$-coloring for $G$.

In the other case, $4\in C(u_2)$ or $G$ contains a $(3, 4)_{(u_3, u_2)}$-path.
Three possibilities:
\begin{itemize}
\parskip=0pt
\item
	$c(u_3 u_2) = 1$ and $C(u_3) = \{3, 2, 7, 1\}$.
    It follows that $C(u_2) = \{1, 2, 4, 5, 6\}$.
    Further, Propositions~\ref{prop3302}(1) and \ref{prop3332}(4) imply that $C(u_4) = \{4, 7, 3\}$.
    Then let $(u_3 u_2, u u_3, u u_2, u u_1)\to (3, 5, 1, 2)$ and we are done by Lemma \ref{lemma18}.
\item
	$c(u_3 u_2) = 4$.
    One sees that if $1, 7\not\in C(u_2)$, then we can let $u_3 u_2\to 1$.
    Hence, it suffices to assume that (i)\ $C(u_2) = \{1, 2, 4, 5, 6\}$, or (ii)\ $C(u_2) = C^*_5\cup \{2, 4\}$.
	For (i), $7\in C(u_4)$ by Proposition~\ref{prop3302}(1), and $G$ contains a $(2, 5)_{(v_1, v_2)}$-path by Proposition~\ref{prop3302}(3);
	then $(u u_2, u v)\to (3, 2)$ and $u u_3\to \{5, 6\}\setminus C(u_4)$ give rise to an acyclic edge $(\Delta + 2)$-coloring for $G$.
    For (ii), $C(u_4) = \{4, 3, 6\}$ by Proposition~\ref{prop3332}(2,4),
    then let $(u u_1, u u_2)\to (2, 1)$ and we are done by Lemma \ref{lemma18}.
\item
	$c(u_3 u_2) = 7$.
    If $C(u_3) = \{3, 1, 2, 7\}$ and thus $C(u_2) = C^*_5\cup \{2, 4\}$,
    then $(u_3 u_2, u u_3, u v)\to (3, 7, 3)$ gives rise to an acyclic edge $(\Delta + 2)$-coloring for $G$;
    otherwise, $C(u_3) = \{3, 2, 4, 7\}$.
    One sees that if $1\not\in C(u_2)$ and $G$ contains no $(1, 4)_{(u_3, u_2)}$-path, then we can let $u_3 u_2\to 1$.
    We have $C(u_2) = C^*_5\cup \{2, 4\}$ and $G$ contains a $(4, 1)_{(u_3, u_2)}$-path.
	Further, Proposition~\ref{prop3332}(2) implies that $G$ contains a $(4, 3)_{(u_1, u_4)}$-path and $3\in C(u_4)$.
    Then $(u u_2, u v)\to (3, 2)$ and $u u_3\to \beta\in \{5, 6\}\setminus C(u_4)$ (additionally, if $\beta = 5$ then $v v_2\to C^*_5\setminus C(v_2)$)
	give rise to an acyclic edge $(\Delta + 2)$-coloring for $G$.
\end{itemize}

\subsubsection{Configuration ($A_{5.4}$): $d(u_3) = d(u_4) = 4$, and $u_3 u_4\in E(G)$}
One sees that $N(u)\cap N(v)\ne \emptyset$, and we let $(v v_1, v v_2)_c = (a, 5)$ where $a\in \{3, 1\}$.

\paragraph{Case ($A_{5.4}$-1).}\ $c(v v_1) = 3$.

In this case, $c(u_3 u_4) = \gamma_5$ with $3\in C(u_4)$ by Lemma~\ref{lemma19}.
Proposition~\ref{prop3302}(1) implies that $\{1, 2\}\cap C(u_4)\ne \emptyset$.
We assume w.l.o.g. that $C(u_4) = \{4, \gamma_5, 3, 1\}$.
Then let $(u_3 u_4, u u_3)\to (2, \gamma_5)$.
If $\gamma_5 = 5$, then since $3\not\in B_5$ we are done by Lemma~\ref{lemma19};
otherwise, $\gamma_5\ne 5$, and then $u v\to \{6, 7\}\setminus \{\gamma_5\}$ gives rise to an acyclic edge $(\Delta + 2)$-coloring for $G$.

\paragraph{Case ($A_{5.4}$-2).}\ $c(v v_1) = 1$ and $C^*_5\subseteq C(u_1)\cap C(v_1)$.

If $G$ contains no $(5, i)_{(v_1, v_2)}$-path for some $i\in \{3, 4\}\setminus C(v_1)$,
then $v v_1\to i$ reduces the proof to the above Case ($A_{5.4}$-1).

Assume that $1\not\in C(v_2)$.
One sees that $\{3, 4\}\setminus C(v_1)\ne \emptyset$ and we assume w.l.o.g. that $3\not\in C(v_1)$.
Note that no new bichromatic cycles will be produced if letting $(v v_1, v v_2)\to (3, 1)$.
By Lemma~\ref{lemma19}, $G$ contains a $(3, i)_{(u_3, v_1)}$-path for every $i\in C^*_5$, $C(u_3) = C^*_5\cup \{3\}$ with $c(u_3 u_4) = \gamma_5$, and $3\in C(u_4)$.
When $2\not\in C(u_4)$, we first let $u u_4\to C^*_5\setminus C(u_4)$ and $uv\to 4$;
next, if $4\not\in B_1$, then we are done;
or if $4\in B_1$, then $(v v_1, v v_2)\to (3, 1)$ gives rise to an acyclic edge $(\Delta + 2)$-coloring for $G$.
In the other case, $C(u_4) = \{4, 2, 3, \gamma_5\}$,
and $(u_3 u_4, u u_3, u v)\to (1, \gamma_5, 3)$ gives rise to an acyclic edge $(\Delta + 2)$-coloring for $G$.
One sees that letting $v v_1\to 2$ gives also an acyclic edge coloring of $H = G - u v$.
We thus have $2\in C(v_1)\cup C(v_2)$, and continue with the following proposition:

\begin{proposition}
\label{prop3341}
	$\{1, 2, 3, 4\}\subseteq (C(v_1)\cup C(v_2))\setminus \{c(v v_1), c(v v_2)\}$,
	$G$ contains a $(5, i)_{(v_1, v_2)}$-path for every $i\in \{3, 4\}\setminus C(v_1)$,
	$|C(v_2)\cap \{1, 2, 3, 4\}|\ge 3$,
	and $C^*_5\setminus C(v_2)\ne\emptyset$.
\end{proposition}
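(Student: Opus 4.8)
The plan is to read off everything that the preceding discussion of Case~($A_{5.4}$-2) has already forced and then to finish with a short degree-and-colour count. Recall that at this point we are in the situation $c(vv_1)=1$, $c(vv_2)=5$, $C^*_5\subseteq C(v_1)$, with $1\in C(v_2)$ and $2\in C(v_1)\cup C(v_2)$; moreover, for every $i\in\{3,4\}\setminus C(v_1)$ the graph $G$ must contain a $(5,i)_{(v_1,v_2)}$-path, since otherwise recolouring $vv_1\to i$ reduces us to Case~($A_{5.4}$-1). This last statement is literally the second clause of the proposition, so what remains is to deduce the other three clauses from it.

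First I would nail down the shape of such a $(5,i)$-path $P$. In $H=G-uv$ the vertex $v$ is incident only with $vv_1$ (colour $1$) and $vv_2$ (colour $5$), and since $i\in\{3,4\}$ the only edge at $v$ whose colour lies in $\{5,i\}$ is $vv_2$; hence $v$ cannot be an interior vertex of $P$, and as $P$ joins the distinct vertices $v_1,v_2\neq v$ it avoids $v$ altogether. Because $i\notin C(v_1)$, the path $P$ leaves $v_1$ along a colour-$5$ edge, so its edges are coloured $5,i,5,i,\dots$ in order; if $P$ reached $v_2$ along a colour-$5$ edge, that edge would either be $vv_2$ (impossible, since $v\notin P$) or a second colour-$5$ edge at $v_2$ (impossible by properness). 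So $P$ enters $v_2$ along a colour-$i$ edge, and therefore $i\in C(v_2)$. Consequently $\{3,4\}\setminus C(v_1)\subseteq C(v_2)$; combining this with $2\in C(v_1)\cup C(v_2)$ and with $1=c(vv_1)\in C(v_1)$ shows that every colour of $\{1,2,3,4\}$ occurs at $v_1$ or at $v_2$, which is the content of the first clause.

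For the third clause, observe that $C(v_1)\supseteq\{1\}\cup C^*_5$, a set of $\Delta-1$ colours, so $d(v_1)\ge\Delta-1$ and hence $|C(v_1)\cap\{2,3,4\}|=d(v_1)-(\Delta-1)\le 1$; thus at least two colours of $\{2,3,4\}$ lie outside $C(v_1)$, and by the analysis above each of them lies in $C(v_2)$ (invoking the $(5,i)$-path clause for those equal to $3$ or $4$, and $2\in C(v_1)\cup C(v_2)$ for the one equal to $2$). Together with $1\in C(v_2)$ this gives $|C(v_2)\cap\{1,2,3,4\}|\ge 3$. The fourth clause then follows by counting: $|C(v_2)\cap C^*_5|=d(v_2)-|C(v_2)\cap\{1,2,3,4\}|\le\Delta-3<\Delta-2=|C^*_5|$, so $C^*_5\setminus C(v_2)\neq\emptyset$. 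The only step that is not pure bookkeeping is the endpoint analysis of $P$ in the middle paragraph, which is an instance of the alternating-path principle recorded in Lemma~\ref{lemma09}; I expect that to be the main (and essentially only) obstacle.
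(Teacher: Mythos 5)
Your proposal is correct and takes essentially the same route as the paper: clause two and the facts $1\in C(v_2)$, $2\in C(v_1)\cup C(v_2)$ are exactly what the discussion preceding the proposition establishes, and your endpoint analysis of the $(5,i)_{(v_1,v_2)}$-path (an instance of Lemma~\ref{lemma09}) plus the count $|C(v_1)\cap\{2,3,4\}|\le 1$ forced by $\{1\}\cup C^*_5\subseteq C(v_1)$ is the same bookkeeping the paper carries out before Proposition~\ref{prop3302}. One minor point: since the first clause excludes the edges $vv_1,vv_2$ themselves (as in Proposition~\ref{prop3301}(3)), the membership of color $1$ should be certified by $1\in C(v_2)$ --- which you already have and use for the third clause --- rather than by $c(vv_1)=1\in C(v_1)$.
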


\begin{lemma}
\label{lemma20}
For every $i\in \{3, 4\}\setminus C(u_1)$, $C^*_5\setminus C(u_i)\ne\emptyset$ and, for every $j\in C^*_5\setminus C(u_4)$,
$G$ contains a $(j, k)_{(u_i, u)}$-path for some $k\in (\{2, 3, 4\}\cap C(u_i))\setminus \{i\}$;
or otherwise an acyclic edge $(\Delta + 2)$-coloring for $G$ can be obtained in $O(1)$ time.
\end{lemma}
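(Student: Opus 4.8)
The plan is to argue the contrapositive: suppose that for some $i\in\{3,4\}\setminus C(u_1)$ one of the two asserted properties fails, and then produce an acyclic edge $(\Delta+2)$-colouring of $G$ in $O(1)$ time. Consider first the failure $C^*_5\subseteq C(u_i)$. Since $i=c(uu_i)\in C(u_i)$ and $i\notin C^*_5$, this gives $d(u_i)\ge|C^*_5|+1=\Delta-1$, so (as $d(u_i)=4$) we must have $\Delta=5$ and $C(u_i)=C^*_5\cup\{i\}=\{5,6,7,i\}$. In this corner the palettes at $u_1$, $v_1$ and $v_2$ are nearly determined: $C^*_5\subseteq C(u_1)\cap C(v_1)$ together with $i\notin C(u_1)$ pins down most of $C(u_1)$ and $C(v_1)$, while Proposition~\ref{prop3341} together with $c(vv_1)=1$ and $c(vv_2)=5$ forces $\{1,2,3,4\}\subseteq C(v_2)$. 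I would then run a short casework on $c(u_3u_4)\in\{5,6,7\}$ and on the remaining free colours of $C(u_2)$ and of $C(u_{i'})$ (where $\{i,i'\}=\{3,4\}$): in each case one slides $c(u_3u_4)$ to a colour absent from both $u_3$ and $u_4$, or recolours one of $uu_1$, $uu_2$, $uu_{i'}$, and then colours $uv$, appealing to Lemma~\ref{lemma09} to check that each single recolouring step creates none of the short list of possible bichromatic cycles, and to Proposition~\ref{prop3001}/Lemma~\ref{lemma19} for the colourability of $uv$.

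Now suppose instead $C^*_5\setminus C(u_i)\ne\emptyset$ but there is a colour $j\in C^*_5\setminus C(u_4)$ for which $G$ contains no $(j,k)_{(u_i,u)}$-path for any $k\in(\{2,3,4\}\cap C(u_i))\setminus\{i\}$. The core move is to recolour $uu_i\to j$ and then colour $uv\to i$. Recolouring $uu_i\to j$ is a proper change provided $j\notin C(u_i)$, which is automatic when $i=4$ (since $j\notin C(u_4)$); when $i=3$ and $j\in C(u_3)$ one first uses the edge $u_3u_4$ (and $j\notin C(u_4)$) to reroute so that $j$ becomes free at $u_3$. The subsequent $uv\to i$ is proper because $i$ is now free at $u$ and $i\notin C(v)=\{1,5\}$. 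For acyclicity: a new bichromatic cycle through $uv$ could only be an $(i,1)$-cycle, impossible because $i\notin C(u_1)$ and $uu_1$ is the unique $1$-edge at $u$, so the $\{1,i\}$-component of $u$ is merely the path $u_1$-$u$-$v$-$v_1$; a new cycle through $uu_i$ must be a $(j,c)$-cycle with $c\in(\{1,2,3,4\}\cap C(u_i))\setminus\{i\}$, where $c\in(\{2,3,4\}\cap C(u_i))\setminus\{i\}$ is excluded by hypothesis and $c=1$ is excluded because $j\in C^*_5\subseteq B_1$ forces (via Lemma~\ref{lemma09}) the $\{1,j\}$-component of $u$ to be a path with endpoints $u$ and $v$, while $j\notin C(u_i)$ leaves $u_i$ a leaf of its own $\{1,j\}$-component, which therefore cannot be that path; finally no cycle can use both $uv$ and $uu_i$ since $v$ is a leaf of the $\{i,j\}$-subgraph. (The boundary value $j=5$, when it occurs, is absorbed into the $|C(u)\cap C(v)|=2$ treatment of Lemma~\ref{lemma16}.) Hence the colouring stays acyclic and we are done.

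The step I expect to be the real obstacle is the $\Delta=5$ corner of the first case: there essentially every colour is ``used up'', so a single recolouring almost never suffices and one must chain a recolouring of $u_3u_4$ with a recolouring at $u$ while simultaneously reserving a colour for $uv$, tracking by hand the handful of bichromatic cycles each step can introduce. A secondary difficulty is the asymmetry between $i=4$, where $j\notin C(u_i)$ comes for free, and $i=3$, where $j$ may lie in $C(u_3)$ so that the adjacency $u_3u_4\in E(G)$ must be exploited to reroute that edge first, with a separate verification that this reroute does not itself create a bichromatic cycle.
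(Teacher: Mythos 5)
Your second paragraph (failure of the path property) is essentially sound for $j\neq 5$ and $j\notin C(u_i)$: the recoloring $(uu_i,uv)\to(j,i)$ with the $(1,j)$-cycle excluded via $j\in B_1$ and Lemma~\ref{lemma09} is exactly the argument the paper already ran to establish Proposition~\ref{prop3302}(1), which is why (since $i\notin C(u_1)$ implies $i\notin B_1$) this half of Lemma~\ref{lemma20} is not where its content lies. Two small blemishes there: the boundary value $j=5$ is not ``absorbed into Lemma~\ref{lemma16}'' --- that lemma concerns the inductively given coloring of $H$ with $|C(u)\cap C(v)|=2$, not an arbitrary recoloring you construct; the correct fix is to first recolor $vv_2\to C^*_5\setminus C(v_2)$, which is nonempty by Proposition~\ref{prop3341}, as the paper does in parallel situations. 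Also your detour for ``$i=3$ with $j\in C(u_3)$'' chases a typo (the intended condition is $j\in C^*_5\setminus C(u_i)$, which is how the lemma is applied afterwards), and the proposed ``reroute via $u_3u_4$'' is not justified in any case.

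The genuine gap is the first case, $C^*_5\subseteq C(u_i)$, which forces $\Delta=5$ and $C(u_i)=C^*_5\cup\{i\}$ with $c(u_3u_4)\in C^*_5$. This is the entire substance of the paper's proof of Lemma~\ref{lemma20}, and you do not prove it: you announce ``a short casework'' on $c(u_3u_4)$ and the free colors at $u_2,u_{i'}$, and you yourself flag this corner as ``the real obstacle.'' It is not short, and it is not a routine slide-and-color: the paper's argument branches on the existence of a $(2,i)_{(u_3,u_2)}$-path for $i\in C^*_5\setminus C(u_3)$, then on $(3,4)_{(u_3,u_1)}$- versus $(2,4)_{(u_1,u_2)}$-paths (which pins $C(u_1)$ to $C^*_5\cup\{1,3\}$ or $C^*_5\cup\{1,2\}$), then on $6\in C(u_2)$, $1\in C(u_2)$, $(1,3)_{(u_2,u_3)}$-, $(1,2)_{(u_2,u_3)}$- and $(2,3)_{(u_1,u_2)}$-paths, and finally resolves two residual palette patterns for $(C(u_2),C(u_3))$ with specific three-edge recolorings such as $(uu_2,uu_3,uv)\to(7,2,3)$ (with a further contingency on a $(4,7)_{(u_2,u_4)}$-path) and $(uu_1,uu_3,uu_4)\to(3,4,1)$, each certified by Lemma~\ref{lemma18}. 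None of these cases, recolorings, or acyclicity checks appear in your proposal, and your one concrete claim for this corner (that Proposition~\ref{prop3341} forces $\{1,2,3,4\}\subseteq C(v_2)$) overstates that proposition, which only gives $|C(v_2)\cap\{1,2,3,4\}|\ge 3$. As it stands the lemma is asserted for the hard case rather than proved.
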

\begin{proof}
Assume that $4\not\in C(u_1)$ and $C(u_4) = C^*_5\cup \{4\}$ with $c(u_3 u_4) = 6^*$.
It follows that $\Delta = 5$.
If $G$ contains no $(2, i)_{(u_3, u_2)}$-path for some $i\in C^*_5\setminus C(u_3)$,
then $(u u_3, u u_4, u v)\to (i, 3, 4)$ (additionally, if $i = 5$ then $i\to C^*_5\setminus C(v_2)$) gives rise to an acyclic edge $(\Delta + 2)$-coloring for $G$;
otherwise, $5, 7\in C(u_3)$ or $2\in C(u_3)$, and $G$ contains a $(2, i)_{(u_3, u_2)}$-path for every $i\in C^*_5\setminus C(u_3)$.

If $G$ contains neither a $(3, 4)_{(u_3, u_1)}$-path nor a $(2, 4)_{(u_1, u_2)}$-path,
then let $(u u_1, u u_4)\to (4, 1)$ and we are done by Lemma \ref{lemma18};
otherwise, $G$ contains a $(3, 4)_{(u_3, u_1)}$-path or a $(2, 4)_{(u_1, u_2)}$-path, and $\{2, 3\}\cap C(u_1)\ne \emptyset$.
If $C(u_1) = C^*_5\cup \{1, 3\}$ and $4\in C(u_3)$,
then $C(u_3) = \{3, 6, 4, 2\}$ and $(u_3 u_4, u u_4, u v)\to (1, 6, 4)$ gives rise to an acyclic edge $(\Delta + 2)$-coloring for $G$;
otherwise, $C(u_1) = C^*_5\cup \{1, 2\}$ and $G$ contains a $(2, 4)_{(u_2, u_1)}$-path with $4\in C(u_2)$.

If $6\not\in C(u_2)$, then $(u u_2, u u_4, u v)\to (6, 2, 4)$ gives rise to an acyclic edge $(\Delta + 2)$-coloring for $G$;
otherwise, $6\in C(u_2)$.
If $1\not\in C(u_2)$ and $G$ contains no $(1, 3)_{(u_2, u_3)}$-path,
then let $(u u_1, u u_2, u u_4)\to (4, 1, 2)$ and we are done by Lemma \ref{lemma18};
otherwise, $1\in C(u_2)$ or $G$ contains a $(1, 3)_{(u_2, u_3)}$-path.
If $1\not\in C(u_3)$ and $G$ contains neither a $(1, 2)_{(u_2, u_3)}$-path nor a $(2, 3)_{(u_1, u_2)}$-path,
then let $(u u_3, u u_1)\to (1, 3)$ and we are done by Lemma \ref{lemma18};
otherwise, $1\in C(u_3)$ or $G$ contains a $(1, 2)_{(u_2, u_3)}$-path or a $(2, 3)_{(u_1, u_2)}$-path.

It follows that (i)\ $C(u_3) = C^*_5\cup \{3\}$ and $C(u_2) = C\setminus \{5, 7\}$,
or (ii) $C(u_3) = \{3, 2, 6, \gamma_5\}$ and $C(u_2)= \{1, 2, 4, 6, \gamma_7\}$, where $\{\gamma_5, \gamma_7\}= \{5, 7\}$.
For (i), first let $(u u_2, u u_3, u v)\to (7, 2, 3)$;
next, if $G$ contains no $(4, 7)_{(u_2, u_4)}$-path, then we are done;
or otherwise, let $(u u_3, u u_4)\to (4, 2)$ and we are done by Lemma \ref{lemma18}.
For (ii), let $(u u_1, u u_3, u u_4)\to (3, 4, 1)$ and we are done by Lemma \ref{lemma18}.
\end{proof}

One sees that $\{3, 4\}\setminus C(u_1)\ne\emptyset$.
In the sequel we assume w.l.o.g. that $4\not\in C(u_1)$ and let $\{\gamma_5, \gamma_7\}= \{5, 7\}$.
It follows that $C(u_4)\cap \{2, 3\}\ne \emptyset$,
and if $3\not\in C(u_1)$, then $C(u_3)\cap \{2, 4\}\ne \emptyset$.

{Case 2.1.}\ $C(u_1) = C^*_5\cup \{1, 2\}$.
In this case, one sees that $3, 4\not\in C(u_1)$.
If $2\not\in C(u_3)\cup C(u_4)$, then $C(u_3) = \{3, 4, 6, 7\}$, $C(u_4) = \{4, 3, 7, 5\}$,
and $(u_3 u_4, u u_4, u v)\to (1, 7, 4)$ gives rise to an acyclic edge $(\Delta + 2)$-coloring for $G$;
otherwise, by symmetry we assume that $2\in C(u_3)$.

{Case 2.1.1.}\ $c(u_3 u_4) = 2$ and, by Proposition~\ref{prop3303}, $C^*_5\subseteq C(u_3)\cup C(u_4)$.
W.l.o.g., let $C(u_4) = \{2, 4, 5, 7\}$ and $6\in C(u_3)$.
Note that there exists a $\gamma_7\in \{5, 7\}\setminus C(u_3)$.

If $4\not\in C(u_3)$ and $G$ contains no $(4, i)_{(u_3, u_4)}$-path for every $i\in C(u_3)\cap C(u_4)$,
then $(u_3 u_4, u u_4, u v)\to (4, 6, 4)$ gives rise to an acyclic edge $(\Delta+2)$-coloring for $G$;
otherwise, $\Delta = 5$ and, if $4\in C(u_3)$, then $G$ contains a $(4, i)_{(u_3, u_4)}$-path for some $i\in \{5, 7\}$.
If $4\in C(u_3)$, that is, $C(u_3) = \{3, 2, 4, 6\}$, then $(u_3 u_4, u u_4, u v)\to (1, 6, 4)$ gives rise to an acyclic edge $(\Delta + 2)$-coloring for $G$;
otherwise, $\gamma_5\in C(u_3)$ and $C(u_3) = \{3, 2, 6, \gamma_5\}$.

Next, if $3\not\in C(u_2)$, then let $(u u_3, u u_1)\to (1, 3)$;
or if $4\not\in C(u_2)$, then let $(u u_4, u u_1)\to (1, 4)$;
or otherwise we have $C(u_2) = \{2, 3, 4, 6, \gamma_7\}$, and let $(u u_1, u u_2, u u_4)\to (4, 1, 6)$;
we are done by Lemma \ref{lemma18}.

{Case 2.1.2.}\ $2\in (C(u_3)\cup C(u_4))\setminus \{c(u_3 u_4)\}$.
We assume w.l.o.g. that $2\in C(u_3)\setminus \{c(u_3 u_4)\}$.
One sees that $C^*_5\subseteq C(u_3)\cup C(u_4)$ and $\{2, 3\}\cap C(u_4)\ne \emptyset$.
We have $\Delta = 5$ and w.l.o.g. let $C(u_3) = \{3, 2, \gamma_7, 6\}$.

If $C(u_4) = \{4, 3, 5, 7\}$, then $(u_3 u_4, u u_4, u v)\to (1, \gamma_7, 4)$ gives rise to an acyclic edge $(\Delta + 2)$-coloring for $G$;
otherwise, $C(u_4) = \{4, 2, 7, 5\}$, and $6, \gamma_5\in C(u_2)$.
If $1\not\in C(u_2)$, then let $(u u_4, u u_2, u u_1)\to (6, 1, 4)$ and we are done by Lemma \ref{lemma18};
otherwise, $1\in C(u_2)$.

If $G$ contains no $(1, 2)_{(u_3, u_4)}$-path, then let $u_3 u_4\to 1$ and we are done by Proposition~\ref{prop3303}(1);
otherwise, $G$ contains a $(1, 2)_{(u_3, u_4)}$-path.
Next, if $4\not\in C(u_2)$, then let $(u u_1, u u_4)\to (4, 1)$;
or otherwise, let $(u u_1, u u_3)\to (3, 1)$;
we are done by Lemma \ref{lemma18}.

{Case 2.2.}\ $C(u_1) = C^*_5\cup \{1, 3\}$ and $C^*_5\subseteq  C(u_4)\cup C(u_2)$.
If $1\not\in C(u_4)$ and $G$ contains none of $(1, 3)_{(u_3, u_4)}$-path, $(1, 2)_{(u_2, u_4)}$-path and $(3, 4)_{(u_3, u_1)}$-path,
then let $(u u_4, u u_1)\to (1, 4)$ and we are done by Lemma \ref{lemma18};
otherwise, $1\in C(u_4)$ or $G$ contains one of $(1, 3)_{(u_3, u_4)}$-path, $(1, 2)_{(u_2, u_4)}$-path and $(3, 4)_{(u_3, u_1)}$-path.
We distinguish the following two cases on whether $3\in C(u_4)$ or not.

{Case 2.2.1}\ $3\in C(u_4)$, and there are four possibilities of $C(u_4)$.

{Case 2.2.1.1}\ $C(u_4) = \{4, 3, \gamma_5, \gamma_7\}$ with $c(u_3 u_4) = \gamma_5$.
We have $\{1, 4\}\cap C(u_3)\ne\emptyset$, $6\in C(u_3)$ and $\Delta = 5$,
We first let $u_1 u_3\to i\in \{1, 4\}\setminus C(u_3)$ and $(u u_4, u v)\to (\gamma_5, 4)$.
Next, if $G$ contains no $(3, 1)_{(u_1, u_4)}$-path, then we are done;
otherwise, $i = 1$, and $(u u_3, u u_4, u v)\to (\gamma_5, 4, 3)$ gives rise to an acyclic edge $(\Delta + 2)$-coloring for $G$.

{Case 2.2.1.2}\ $C(u_4) = \{4, 3, 1, 6^*\}$, and then $\Delta = 5$.
If $c(u_3 u_4) = 6$, that is, $C(u_3) = \{3, 6, 5, 7\}$, then $(u_3 u_4, u u_4, u v)\to (4, 6, 4)$ gives rise to an acyclic edge $(\Delta+2)$-coloring for $G$;
otherwise, $c(u_3 u_4) = 1$.
It follows that $C(u_3) = \{3, 1, 5, 7\}$ and we assume w.l.o.g. that $c(u_4 y_2) = 3$ with $5, 7\in C(y_2)$.
If $2\not\in C(y_2)$, then let $(u_3 u_4, u u_4, u u_1)\to (2, 1, 4)$;
or if $4\not\in C(y_2)$, then let $(u_3 u_4, u u_4, u u_1)\to (4, 1, 4)$;
or otherwise, $C(y_2) = \{2, 3, 4, 5, 7\}$, then let $\{u_4 y_2, u u_3\}\to 1$, $(u_3 u_4, u u_1)\to 4$ and $u u_4\to 3$;
we are done by Lemma \ref{lemma18}.

{Case 2.2.1.3}\ $C(u_4) = \{4, 3, 2, 6^*\}$.
If $c(u_3 u_4) = 2$, that is, $C(u_3) = \{2, 3, 5, 7\}$, then let $(u u_4, u u_1)\to (1, 4)$ and we are done by Lemma \ref{lemma18};
otherwise, $c(u_3 u_4) = 6$ and we assume w.l.o.g. that $c(u_4 y_2) = 3$.
When $C(u_3)= \{3, 6, 5, 7\}$, since $1, 4\not\in C(u_3)$, we have $1\in C(u_2)$;
if $6\not\in C(u_2)$, then $(u_3 u_4, u u_4, u v)\to (4, 6, 4)$ gives rise to an acyclic edge $(\Delta + 2)$-coloring for $G$;
or if $C(u_2) = C\setminus \{3, 4\}$, then let $(u u_1, u u_2, u u_3)\to (2, 3, 1)$ and we are done by Lemma \ref{lemma18}.

In the other case, $C^*_5\setminus C(u_3)\ne\emptyset$ and we assume w.l.o.g. that $\gamma_7\not\in C(u_3)$.
It follows that $3\in B_1$.
One sees that $G$ contains a $(2, \gamma_7)_{(u_2, u_4)}$-path.
If $6\not\in C(u_2)$, then $(u_3 u_4, u u_4, u v)\to (\gamma_7, 6, 4)$ gives rise to an acyclic edge $(\Delta + 2)$-coloring for $G$;
otherwise, $6\in C(u_2)$.
If $3\not\in C(u_2)$ and $G$ contains no $(3, 4)_{(u_2, u_4)}$-path,
then $(u u_3, u u_2, u v)\to (\gamma_7, 3, 2)$ and additionally, if $\gamma_7 = 5$ then $v v_2\to C^*_5\setminus C(v_2)$,
gives rise to an acyclic edge $(\Delta + 2)$-coloring for $G$;
otherwise, $3\in C(u_2)$ or $G$ contains a $(3, 4)_{(u_2, u_4)}$-path, i.e., $\{3, 4\}\cap C(u_2)\ne\emptyset$.
Thus, two possibilities:
\begin{itemize}
\parskip=0pt
\item
	$C(u_2) = C^*_5\cup \{2, 3\}$ and $G$ contains either a $(3, 1)_{(u_3, u_4)}$-path or a $(3, 4)_{(u_3, u_1)}$-path.
    If $G$ contains neither a $(1, 3)_{(u_3, u_2)}$-path nor a $(2, 3)_{(u_3, u_1)}$-path,
    then let $(u u_1, u u_2)\to (2, 1)$ and we are done by Lemma \ref{lemma18};
    otherwise, $G$ contains a $(1, 3)_{(u_3, u_2)}$-path or a $(2, 3)_{(u_3, u_1)}$-path.
    One sees that $(C(u_3)\setminus \{3, 6\})\subseteq \{1, 2, 4\}$, and that $4\not\in C(u_3)$ or $G$ contains a $(4, 3)_{(u_3, u_1)}$-path.
    Then $(u u_2, u u_4, u v)\to (4, 7, 2)$ gives rise to an acyclic edge $(\Delta + 2)$-coloring for $G$.
\item
	$C(u_2) = C^*_5\cup \{4, 2\}$ and $G$ contains a $(4, 3)_{(u_4, u_2)}$-path.
	If $2\not\in C(u_3)$, then let $(u u_1, u u_2)\to (2, 1)$;
	or if $4\not\in C(u_3)$, then let $(u u_1, u u_2, u u_3, u u_4)\to (2, 3, 4, 1)$;
	we are done by Lemma \ref{lemma18}.
	Otherwise, $C(u_3) = \{3, 6, 2, 4\}$ and $G$ contains a $(5, 4)_{(v_1, v_2)}$-path.
	If there exists a $\beta\in C^*_5\setminus C(y_2)$, then $(u u_2, u u_4, u v)\to (3, \beta, 4)$.
	Next, if $\beta\ne 6$, then $uu_3\to \{5, 7\}\setminus \{\beta\}$;
	or if $\beta = 6$, then $(u u_3, u_3 u_4)\to (5, 7)$;
	these give rise to an acyclic edge $(\Delta + 2)$-coloring for $G$.
	Otherwise, $C(y_2) = C^*_5\cup \{3, 4\}$, and then $(u_4 y_2, u u_4, u u_3, u v)\to (1, 3, 7, 4)$ gives rise to an acyclic edge $(\Delta + 2)$-coloring for $G$.
\end{itemize}

{Case 2.2.1.4}\ $C(u_4) = \{4, 3, 2, 1\}$ with $c(u_3 u_4) = 1$.
One sees that $G$ contains a $(2, i_0)_{(u_4, u_2)}$-path.
Then $u_3 u_4\to i_0\in C^*_5\setminus C(u_3)$ (additionally, if $i_0 = 5$ then $v v_2\to C^*_5\setminus C(v_2)$) reduces the proof to the above Case 2.2.1.3.

{Case 2.2.2}\ $3\not\in C(u_4)$.
Proposition~\ref{prop3302}(1) implies that $2\in C(u_4)$ and $G$ contains a $(2, i)_{(u_2, u_4)}$-path for every $i\in C^*_5\setminus C(u_4)$.
W.l.o.g., let $6^*\not\in C(u_4)$.
Proposition~\ref{prop3302}(2) implies that $2\in C(v_1)$ or $G$ contains a $(5, 2)_{(v_1, v_2)}$-path.

{Case 2.2.2.1}\ $c(u_3 u_4) = 2$ and $(C^*_5\setminus C(u_4))\subseteq C(u_3)$.
That is, $C(u_4) = \{4, 2, \gamma_5, \gamma_7\}$, or $C(u_4) = \{4, 2, 7^*, 1\}$.
If $4\not\in C(u_3)$ and $G$ contains no $(4, i)_{(u_3, u_4)}$-path for every $i\in C(u_3)\cap C(u_4)$,
then $(u_3 u_4, u u_4, u v)\to (4, 6, 4)$ gives rise to an acyclic edge $(\Delta+2)$-coloring for $G$;
otherwise, it suffices to assume that $C(u_4) = \{4, 2, 5, 7\}$, $\Delta= 5$,
and that $4\in C(u_3)$ or $G$ contains a $(4, i)_{(u_3, u_4)}$-path for some $\beta^*\in \{5, 7\}$.
If $C(u_3) = \{3, 6, 2, 4\}$, then $(u_3 u_4, u u_4, u v)\to (1, 6, 4)$ gives rise to an acyclic edge $(\Delta + 2)$-coloring for $G$;
or if $C(u_3) = \{3, 6, 2, \beta^*\}$, then let $(u u_4, u u_1)\to (1, 4)$ and we are done by Lemma \ref{lemma18}.

{Case 2.2.2.2}\ $c(u_4 y_2) = 2$ with $C(u_4) = \{4, 2, 1, 7^*\}$ or $C(u_4) = \{4, 2, \gamma_5, \gamma_7\}$.
If $4\not\in C(u_2)$ and $G$ contains no $(4, 3)_{(u_3, u_2)}$-path,
then $(u u_2, u u_4, u v)\to (4, 6, 2)$ gives rise to an acyclic edge $(\Delta + 2)$-coloring for $G$;
or if $1\not\in C(u_4)$ and $G$ contains neither a $(1, 2)_{(u_2, u_4)}$-path nor a $(3, 4)_{(u_3, u_1)}$-path,
then let $(u u_1, u u_4)\to (4, 1)$ and we are done by Lemma \ref{lemma18};
or if $1\not\in C(u_2)$ and $G$ contains none of $(1, 4)_{(u_2, u_4)}$-path, $(1, 3)_{(u_2, u_3)}$-path, and $(3, 2)_{(u_3, u_1)}$-path,
then let $(u u_2, u u_1)\to (1, 2)$ and we are done by Lemma \ref{lemma18}.

In the other case, we assume that there exists a $\gamma^*\in C^*_5\setminus (C(u_3)\cup C(u_4))$.
Then by Proposition~\ref{prop3302}(1), we have $3\in B_1$ and $C(v_1) = C^*_5\cup \{1, 3\}$,
and by Proposition~\ref{prop3302}(3), $G$ contains a $(5, 4)_{(v_1, v_2)}$-path.
Further, if $G$ contains no $(2, 3)_{(u_4, u_2)}$-path, then $(u u_3, u u_4, u v)\to (\gamma^*, 3, 4)$ gives rise to an acyclic edge $(\Delta + 2)$-coloring for $G$.

We continue with the following proposition, to discuss the three possible values of $C(u_4)$:

\begin{proposition}
\label{prop3342}
\begin{itemize}
\parskip=0pt
\item[{(1)}]
	$4\in C(u_2)$, or $G$ contains a $(4, 3)_{(u_3, u_2)}$-path.
\item[{(2)}]
	$1\in C(u_4)$, or $G$ contains a $(1, 2)_{(u_2, u_4)}$-path or a $(3, 4)_{(u_3, u_1)}$-path.
\item[{(3)}]
	$1\in C(u_2)$, or $G$ contains a $(1, 4)_{(u_2, u_4)}$-path, a $(1, 3)_{(u_2, u_3)}$-path or a $(3, 2)_{(u_3, u_1)}$-path.
\item[{(4)}]
	If there exists a $\gamma^*\in C^*_5\setminus (C(u_3)\cup C(u_4))$,
	then $3\in B_1$, $G$ contains a $(5, 4)_{(v_1, v_2)}$-path, a $(2, 3)_{(u_4, u_2)}$-path and $3\in C(y_2)\cap C(u_2)$.
\end{itemize}
\end{proposition}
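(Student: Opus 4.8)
The plan is to read off Proposition~\ref{prop3342} directly from the case analysis that immediately precedes it in Case~2.2.2.2 of Configuration~($A_{5.4}$), in exactly the style of the other ``summary'' propositions of this section (for instance Propositions~\ref{prop3341} and \ref{prop3302}): each clause is the negation of a local colour pattern for which a concrete recolouring of $G$ has just been written down, so the clause must hold unless $G$ is already acyclically edge $(\Delta+2)$-colourable, contrary to the standing assumption recorded in the footnote to Proposition~\ref{prop3001}.

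For clauses (1)--(3) I would argue by contraposition, one clause at a time. For (1): if $4\notin C(u_2)$ and $H$ contains no $(4,3)_{(u_3,u_2)}$-path, then $(u u_2, u u_4, u v)\to(4,6,2)$ is a proper edge colouring of $G$ --- properness from $4\notin C(u_2)$ at $u_2$, $6^*\notin C(u_4)$ at $u_4$, and $2\notin C(v)\setminus\{c(uv)\}$ at $v$ --- and the only colour classes meeting a recoloured edge are $\{4,\cdot\}$ at $u u_2$, $\{6,\cdot\}$ at $u u_4$, and $\{2,\cdot\}$ at $u v$; a bichromatic cycle through these is excluded respectively by the assumed absence of a $(4,3)_{(u_3,u_2)}$-path, by $6^*\notin C(u_4)$, and by Lemma~\ref{lemma09} applied to the $(1,j)_{(v_1,x_1)}$-paths furnished by Proposition~\ref{prop3201}. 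So $G$ would already be coloured, which is impossible, and clause (1) follows. Clauses (2) and (3) have the same shape: the simultaneous failure of their listed alternatives licenses the recolourings $(u u_1,u u_4)\to(4,1)$ and $(u u_2,u u_1)\to(1,2)$ respectively, each of which leaves $C(u)\cap C(v)=\{1\}$ with the colour-$1$ edge at $u$ distinct from $u x_1$, so Lemma~\ref{lemma18} completes the colouring; hence no such simultaneous failure can occur.

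For clause (4) I would first observe that the hypothesis fixes $\gamma^*\in C^*_5\setminus(C(u_3)\cup C(u_4))$. If $3\notin B_1$, Proposition~\ref{prop3302}(1) would force either $C^*_5\subseteq C(u_3)$ or an $(i,j)_{(x_i,x_j)}$-path that puts $\gamma^*$ back into $C(u_3)\cup C(u_4)$, both impossible; so $3\in B_1$, whence $3\in C(v_1)$, and since $C^*_5\subseteq C(v_1)$ already while $d(v_1)\le\Delta$ this forces $C(v_1)=C^*_5\cup\{1,3\}$. Then $4\notin C(v_1)$ together with Proposition~\ref{prop3341} (equivalently Proposition~\ref{prop3302}(3)) produces a $(5,4)_{(v_1,v_2)}$-path. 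Finally, were there no $(2,3)_{(u_4,u_2)}$-path, the scheme $(u u_3, u u_4, u v)\to(\gamma^*,3,4)$ would be an acyclic edge $(\Delta+2)$-colouring of $G$: properness is immediate from $\gamma^*\notin C(u_3)$, $3\notin C(u_4)$, $4\notin C(v)$, and no bichromatic cycle arises because a cycle through $u u_4$ would yield a $(2,3)_{(u_4,u_2)}$-path, $\gamma^*$ is the unique colour-$\gamma^*$ edge at each of $u_3$ and $u$, $u$ has no second colour-$4$ edge, and $3\in B_1$ controls the $(3,\cdot)_{(u_3,u_1)}$ direction. Hence the $(2,3)_{(u_4,u_2)}$-path exists; since it leaves $u_4$ along its unique colour-$2$ edge $u_4 y_2$, its second edge is colour-$3$, so $3\in C(y_2)$, and inspecting its other endpoint gives $3\in C(u_2)$.

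The step I expect to be the genuine obstacle is the very last one, namely pinning down that $3\in C(u_2)$ rather than merely $2\in C(u_2)$: a priori the $(2,3)_{(u_4,u_2)}$-path could reach $u_2$ by running through the vertex $u$ along the edges $u_3u$ (colour $3$) and $uu_2$ (colour $2$), since those two colours are present at $u$, in which case its terminal edge at $u_2$ is colour-$2$ and carries no information about $C(u_2)$. Excluding this detour requires using the two admissible shapes of $C(u_4)$ and the branch hypothesis $3\notin C(u_4)$ to constrain the possible routes of the path; that case distinction, together with the parallel ``no new bichromatic cycle'' verifications for the three recolourings above, is the only part of the argument that is not purely mechanical, and it is of precisely the kind carried out repeatedly throughout Section~3.
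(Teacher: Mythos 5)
Your reading of the proposition is the right one: the paper proves it exactly as you do, by the recolourings $(uu_2,uu_4,uv)\to(4,6,2)$, $(uu_1,uu_4)\to(4,1)$ and $(uu_2,uu_1)\to(1,2)$ (the latter two closed off by Lemma~\ref{lemma18}) for clauses (1)--(3), and for clause (4) by deducing $3\in B_1$ and the $(5,4)_{(v_1,v_2)}$-path from the standing propositions and then observing that $(uu_3,uu_4,uv)\to(\gamma^*,3,4)$ would succeed in the absence of a $(2,3)_{(u_4,u_2)}$-path. Two small citation points: the cleanest source for $3\in B_1$ is Proposition~\ref{prop3303}(1) rather than the reconstruction you give of Proposition~\ref{prop3302}(1) (which would only yield a $(2,\gamma^*)$-path into $u_2$, not a contradiction): since $4\notin C(u_1)$ in Case 2.2 we have $4\notin B_1$, so $3\notin B_1$ would force $C^*_5\subseteq C(u_3)\cup C(u_4)$, contradicting $\gamma^*$. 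Also, the $(1,j)$-path facts you attribute to Proposition~\ref{prop3201} belong to configuration $(A_3)$; in $(A_{5.4})$ the relevant standing facts are $5\in B_1$ and $C^*_5\subseteq C(u_1)\cap C(v_1)$, recorded after Lemma~\ref{lemma17}, together with $C(u_1)=C^*_5\cup\{1,3\}$ in Case 2.2.

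The step you single out as the genuine obstacle --- excluding a $(2,3)_{(u_4,u_2)}$-path that detours through $u$ via $u_3u$ and $uu_2$, so as to conclude $3\in C(u_2)$ --- does not require the two admissible shapes of $C(u_4)$, and the case analysis you gesture at is not the right tool. The path whose existence is forced when $(uu_3,uu_4,uv)\to(\gamma^*,3,4)$ fails is the one that closes into a $(2,3)$-bichromatic cycle together with the two edges $uu_4$ (now coloured $3$) and $uu_2$ (coloured $2$); that cycle meets $u$ in exactly these two edges, so the connecting $(2,3)$-path from $u_4$ to $u_2$ is internally disjoint from $u$. Since the colouring is proper, the unique colour-$2$ edge at $u_2$ is $uu_2$ itself, which such a path cannot use, so its terminal edge at $u_2$ is coloured $3$, giving $3\in C(u_2)$; and since $3\notin C(u_4)$ (Case 2.2.2), its initial edge must be the colour-$2$ edge $u_4y_2$, whose successor is coloured $3$, giving $3\in C(y_2)$. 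With this one-line observation your argument coincides with the paper's; the remaining ``no new bichromatic cycle'' verifications you defer are precisely the ones the paper also leaves implicit, and they go through with the standing facts of this branch.
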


{Case 2.2.2.2.1}\ $C(u_4) = \{4, 2, 1, 7^*\}$ with $c(u_3 u_4) = 7$.
Two possibilities:
\begin{itemize}
\parskip=0pt
\item
	$(C(u_3)\setminus \{3, 7\})\subseteq C^*_5\setminus \{7\}$.
    If $7\not\in C(u_2)$, then $(u_3 u_4, u u_4, u v)\to (4, 7, 4)$ gives rise to an acyclic edge $(\Delta + 2)$-coloring for $G$;
    or if $3\not\in C(u_4)$, then $(u_3 u_4, u u_3, u u_4, u v)\to (4, 7, 3, 4)$ gives rise to an acyclic edge $(\Delta + 2)$-coloring for $G$;
    or otherwise, $C(u_2)= C\setminus \{1, 4\}$, and then let $(u u_1, u u_2)\to (2, 1)$ and we are done by Lemma \ref{lemma18}.
\item
	There exists a $\gamma_6\in C^*_5\setminus C(u_3)$.
	Then $G$ contains a $(2, i)_{(u_2, u_4)}$-path for every $C^*_5\setminus \{7\}$, a $(2, 3)_{(u_2, u_4)}$-path, and a $(5, 2)_{(v_1, v_2)}$-path
	by Propositions~\ref{prop3302}(1), \ref{prop3342}(4), and \ref{prop3302}(2), respectively.
	If $7\not\in C(u_2)$, then $(u u_2, u u_4, u v)\to (7, \gamma_6, 2)$ gives rise to an acyclic edge $(\Delta + 2)$-coloring for $G$;
	otherwise, $C(u_2) = C^*_5\cup \{2, 3\}$.
	Proposition~\ref{prop3342}(1,3) implies that $4\in C(u_3)$ and $G$ contains a $(4, 3)_{(u_2, u_3)}$-path,
	and $G$ contains a $(3, 1)_{(u_2, u_3)}$-path or a $(3, 2)_{(u_3, u_1)}$-path.
	We first $(u_3 u_4, u u_3, u u_4, u v)\to (3, 6, 7, 4)$;
	next, if $7\not\in C(y_2)$, then we are done;
	or otherwise, $C(y_2) = C^*_5\cup \{2, 3\}$, and $u_4 y_2\to 4$;
	these give rise to an acyclic edge $(\Delta + 2)$-coloring for $G$.
\end{itemize}

{Case 2.2.2.2.2}\ $C(u_4) = \{4, 2, 1, 7^*\}$ with $c(u_3u_4)= 1$.
First, assume that $(C(u_3)\setminus \{1, 3\})\subseteq C^*_5\setminus \{7\}$.
Proposition~\ref{prop3342}(1) implies that $4\in C(u_2)$.
If $3\not\in C(u_2)$, then $(u u_3, u u_4, u v)$ $\to (7, 3, 4)$ gives rise to an acyclic edge $(\Delta + 2)$-coloring for $G$;
otherwise, let $(u_3 u_4, u u_4, u u_1)\to (4, 1, 4)$ and we are done by Lemma \ref{lemma18}.

Next, assume that there exists $\gamma_6\in (C^*_5\setminus \{7\})\setminus C(u_3)$.
Then, by Proposition~\ref{prop3302}(1), $G$ contains a $(2, i)_{(u_2, u_4)}$-path for every $C^*_5\setminus \{7\}$;
by Proposition~\ref{prop3342}(4), $3\in B_1$ and $G$ contains a $(2, 3)_{(u_2, u_4)}$-path;
and by Proposition~\ref{prop3302}(2), $G$ contains a $(5, 2)_{(v_1, v_2)}$-path.
Two possibilities:
\begin{itemize}
\parskip=0pt
\item
	$4, 7\not\in C(u_2)$ and $G$ contains a $(3, 4)_{(u_2, u_3)}$-path.
	If $G$ contains no $(k, \gamma_6)_{(u_2, u_3)}$-path for some $k\in \{4, 7\}$,
	then $(u u_2, u u_3, u v)\to (k, \gamma_6, 2)$ (additionally, if $k= 4$ then $u u_4\to 3$) gives rise to an acyclic edge $(\Delta + 2)$-coloring for $G$;
	otherwise, $C(u_3) = \{3, 1, 4, 7\}$ and $G$ contains an $(i, \gamma_6)_{(u_2, u_3)}$-path for every $i\in \{4, 7\}$.
	We first let $(u_3 u_4, u u_4, u u_1)\to (\gamma_6, 1, 4)$;
	next, if $1\not\in C(y_2)$, then we are done by Lemma \ref{lemma18};
	otherwise, $C(y_2) = C\setminus \{4, 7\}$, and $u_4 y_2\to 4$ gives rise to an acyclic edge $(\Delta + 2)$-coloring for $G$.
\item
	$1\not\in C(u_2)$.
    If $G$ contains no $(2, 3)_{(u_1, u_3)}$-path, then let $(u u_2, u u_4, u u_1)\to (1, \gamma_6, 2)$ and we are done by Lemma \ref{lemma18};
    otherwise, $G$ contains a $(2, 3)_{(u_1, u_3)}$-path and $2\in C(u_3)$.
    If $4\not\in C(u_3)$ and $G$ contains no $(3, 7)_{(u_3, u_4)}$-path,
    then let $(u_3 u_4, u u_3, u u_4, u u_1)\to (3, \gamma_6, 1, 4)$ and we are done by Lemma \ref{lemma18};
    otherwise, (i)\ $4\in C(u_3)$ or (ii)\ $G$ contains a $(3, 7)_{(u_3, u_4)}$-path.
    \begin{description}
	\parskip=0pt
	\item[{\rm (i)}]
		$C(u_3) = \{3, 1, 2, 4\}$ with $c(u_3 x_2) = 2$.
        If $G$ contains no $(2, i)_{(u_3, u_1)}$-path for some $i\in C^*_5$, then $(u_3 u_4, u u_1, u u_2, u u_3)\to (3, 2, 1, i)$;
        next, if $i = 5$, then $v v_2\to C^*_5\setminus C(v_2)$;
        or if $i = 7$, then $u u_4\to 6$;
        we are done by Lemma \ref{lemma18}.
		Otherwise, $C(x_2) = C^*_5\cup \{2, 3\}$, and let $\{u_3 x_2, u u_2\}\to 1$ and $\{u_3 u_4, u u_1, u u_3, u u_4\}\to (6, 4, 2, 3)$,
		and we are done by Lemma \ref{lemma18}.
	\item[{\rm (ii)}]
		We have $C(u_3) = \{3, 1, 2, 7\}$.
        Proposition~\ref{prop3342}(1) implies that $4\in C(u_2)$ and $C(u_2) = C\setminus \{1, 7\}$.
		If $G$ contains no $(7, 6)_{(u_3, u_4)}$-path, then let $(u_3 u_4, u u_4, u u_1)\to (6, 1, 4)$ and we are done by Lemma \ref{lemma18};
        or otherwise, $(u u_2, u u_4, u v)\to (7, 6, 2)$ gives rise to an acyclic edge $(\Delta + 2)$-coloring for $G$.
	\end{description}
\end{itemize}

{Case 2.2.2.2.3}\ $C(u_4) = \{4, 2, \gamma_5, \gamma_7\}$ with $c(u_3 u_4) = \gamma_7$.
If $1\not\in C(u_3)$ and $G$ contains no $(1, i)_{(u_3, u_4)}$-path for every $i\in \{2, \gamma_5\}$,
then $u_3 u_4\to 1$ (additionally, if $\gamma_5 = 5$ then $v v_2\to C^*_5\setminus C(v_2)$) reduces the proof to the above Cases 2.2.2.2.1 and 2.2.2.2.2.
Thus, $1\in C(u_3)$ or $G$ contains a $(1, i)_{(u_3, u_4)}$-path for some $i\in \{2, \gamma_5\}$ and $\{1, 2, \gamma_5\}\cap C(u_3)\ne \emptyset$.
Two possibilities:
\begin{itemize}
\parskip=0pt
\item
	$6\in C(u_3)$ and thus $C(u_3)\in \{\{3, \gamma_7, 6, 1\}, \{3, 6, \gamma_7, 2\}, \{3, 6, \gamma_7, 5\}\}$.
    We have $4\in C(u_2)$ by Proposition~\ref{prop3342}(1).
    If $C(u_3) = \{3, 6, \gamma_7, 2\}$ and $G$ contains $(1, 2)_{(u_3, u_4)}$-path,
    then let $(u u_4, u u_1)\to (1, 4)$ and we are done by Lemma \ref{lemma18}.
    If $C(u_3) = \{3, \gamma_7, 6, 1\}$, then there exists $i\in \{1, 3, \gamma_7\}\setminus C(u_2)$,
    and if $i\in \{1, \gamma_7\}$ (and $vv_2\to C^*_5\setminus C(v_2)$), then let $(u_3 u_4, u u_4, u u_1)\to (4, i, 4)$ and we are done by Lemma \ref{lemma18};
    or if $i = 3$, then $(u u_3, u u_4, u v)\to (\gamma_5, 3, 4)$ gives rise to an acyclic edge $(\Delta + 2)$-coloring for $G$.
    Otherwise, $C(u_3) = \{3, 6, \gamma_7, \gamma_5\}$.
	It follows from Proposition~\ref{prop3342}(1) that $4\in C(u_2)$, and from Proposition~\ref{prop3342}(3) that $1\in C(u_2)$.
    Next, if $3\not\in C(u_2)$, then let $(u u_1, u u_3, u u_4)\to (4, 1, 3)$ and we are done by Lemma \ref{lemma18};
    or otherwise, $(u u_2, u v)\to (7, 2)$ gives rise to an acyclic edge $(\Delta + 2)$-coloring for $G$.
\item
	$6\not\in C(u_3)$.
	Proposition~\ref{prop3342}(4) implies that $G$ contains a $(2, 3)_{(u_2, u_4)}$-path,
	and Proposition~\ref{prop3302}(2) implies that $G$ contains a $(5, i)_{(v_1, v_2)}$-path for every $i\in \{2, 4\}$.
    If $\gamma_7\not\in C(u_2)$, then $(u u_2, u u_4, u v)\to (\gamma_7, 6, 4)$ gives rise to an acyclic edge $(\Delta + 2)$-coloring for $G$;
    otherwise, $\gamma_7\in C(u_2)$.
    Recall that $4\in C(u_2)$ or $G$ contains a $(3, 4)_{(u_2, u_3)}$-path.
    \begin{description}
	\parskip=0pt
	\item[{\rm (i)}]
		$C(u_2) = C\setminus \{1, 5\}$.
        Proposition~\ref{prop3342}(3) implies that $G$ contains a $(3, 4)_{(u_1, u_3)}$-path and $4\in C(u_3)$.
        If $G$ contains neither a $(3, 1)_{(u_2, u_3)}$-path nor a $(3, 2)_{(u_1, u_3)}$-path,
        then let $(u u_1, u u_2)\to (2, 1)$ and we are done by Lemma \ref{lemma18};
        otherwise, $G$ contains a $(3, 1)_{(u_2, u_3)}$-path or a $(3, 2)_{(u_1, u_3)}$-path.
        If $C(u_3)= \{3, 7, 4, 1\}$, then $(u u_1, u u_2, u u_4, u v)\to (2, 5, 1, 4)$ gives rise to an acyclic edge $(\Delta + 2)$-coloring for $G$;
        otherwise, $C(u_3) = \{3, 7, 4, 2\}$ with $c(u_3 x_2) = 2$ and $1, 3\in C(x_2)$.
        Then there exists an $i\in C^*_5\setminus C(x_2)$.
        We first let $(u u_1, u u_2, u u_3, u v)\to (2, 1, i, 3)$;
        next, if $i = \gamma_5$, then let $u u_4\to 6$;
        or otherwise $i = \gamma_7$, and let $u_3 u_4\to 6$;
        these give rise to an acyclic edge $(\Delta + 2)$-coloring for $G$.
	\item[{\rm (ii)}]
		$4\not\in C(u_2)$.
		Proposition~\ref{prop3342}(1--2) implies that $G$ contains a $(3, 4)_{(u_2, u_3)}$-path, a $(1, 2)_{(u_2, u_4)}$-path,
        and thus $C(u_2) = C\setminus \{4, \gamma_5\}$.
        If $C(u_3) = \{3, 7, 4, 1\}$, then $(u u_2, u v)\to (\gamma_5, 2)$ gives rise to an acyclic edge $(\Delta + 2)$-coloring for $G$;
        otherwise, $C(u_3) = \{3, \gamma_7, 4, \gamma_5\}$ and $G$ contains a $(1, \gamma_5)_{(u_3, u_4)}$-path.
        Let $(u u_1, u u_2, u u_3)\to (2, \gamma_5, 1)$ (additionally, if $\gamma_5 = 5$ then $v v_2\to C^*_5\setminus C(v_2)$)
		and we are done by Lemma \ref{lemma18}.
	\end{description}
\end{itemize}

This finishes the inductive step for the case where $G$ contains the configuration ($A_5$).

\subsection{Configuration ($A_6$)}
In this subsection we prove the inductive step for the case where $G$ contains the configuration ($A_6$):

\begin{description}
\parskip=0pt
\item[$(A_6)$]
	A $6$-vertex $u$ adjacent to $u_1$, $u_2$ and four $3$-vertices $u_3$, $u_4$, $u_5$ and $v$, sorted by their degrees.
    At least one of the configurations ($A_{6.1}$)--($A_{6.2}$) occurs.
\end{description}

The same as before, we first not to distinguish the first five neighbors but refer to them as $x_1, x_2, x_3, x_4, x_5$;
we assume w.l.o.g. that $c(u x_i) = i$ for $1\le i\le 5$.
Let $C^*_6 = C\setminus \{1, 2, 3, 4, 5\} = \{i_6, i_7, \ldots, i_{\Delta + 2}\}$.
Note that $|C^*_6|\ge 3$.
From Proposition~\ref{prop3001}, we have $1\le|C(u)\cap C(v)|\le 2$.

\begin{lemma}
\label{lemma21}
If $|C(u)\cap C(v)| = 2$, then in $O(1)$ time either an acyclic edge $(\Delta + 2)$-coloring for $G$ can be obtained,
or an acyclic edge $(\Delta + 2)$-coloring for $H$ can be obtained such that $|C(u)\cap C(v)| = 1$.
\end{lemma}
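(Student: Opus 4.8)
The plan is to follow the template of Lemma~\ref{lemma14} and Lemma~\ref{lemma16}. Assuming $|C(u)\cap C(v)|=2$, I will show that either $O(1)$ recolourings around $u$ and $v$ together with a colour for $uv$ extend $c$ to an acyclic edge $(\Delta+2)$-coloring of $G$, or one recolouring at $v$ turns $c$ into an acyclic $(\Delta+2)$-coloring of $H$ with $|C(u)\cap C(v)|=1$. First I normalise: since $d_H(u)=5$ we have $C(u)=\{1,2,3,4,5\}$, so $c(vv_1),c(vv_2)\in\{1,2,3,4,5\}$, and after relabelling the colours (consistently with $c(ux_i)=i$) and the neighbours $v_1,v_2$ of $v$ we may take $(vv_1,vv_2)_c=(1,2)$. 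By Proposition~\ref{prop3001}, $C^*_6=C\setminus\{1,2,3,4,5\}\subseteq B_1\cup B_2$, and since $B_1\subseteq C(x_1)\cap C(v_1)$ and $B_2\subseteq C(x_2)\cap C(v_2)$ we get $C^*_6\subseteq\big(C(x_1)\cap C(v_1)\big)\cup\big(C(x_2)\cap C(v_2)\big)$; in particular no colour of $C^*_6$ is free at both $v_1$ and $v_2$, so the plain move ``recolour $vv_1$ with a colour missing at both $v_1$ and $v_2$'' is never available here.

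Next I dispose of the recolourings at $v$. If some $j\in C^*_6\setminus C(v_1)$ (necessarily $j\in B_2$) satisfies $2\notin C(v_1)$, then $vv_1\to j$ is valid, because a new bichromatic cycle would run through $v$, hence be a $(j,2)$-cycle that needs a colour-$2$ edge at $v_1$; this leaves $C(u)\cap C(v)=\{2\}$. If instead such a $j$ satisfies $1\notin C(v_2)$, then the swap $(vv_1,vv_2)\to(j,1)$ is valid, since $vv_1$ was the only colour-$1$ edge at $v_1$ so no $(j,1)$-cycle through $v$ can close; this leaves $C(u)\cap C(v)=\{1\}$. The two symmetric moves (interchanging $v_1\leftrightarrow v_2$ and $1\leftrightarrow 2$) cover the remaining cases; when all of them fail we already have $C^*_6\cup\{1\}\subseteq C(v_1)$ and $C^*_6\cup\{2\}\subseteq C(v_2)$.

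The rest is recolouring at $u$, and here configuration ($A_6$) is generous: exactly three of $x_1,\dots,x_5$ are $3$-vertices, so at least one of $x_3,x_4,x_5$, say $x_3$, is a $3$-vertex with $c(ux_3)=3\notin C(v)$, and $|C^*_6|=\Delta-3\ge3$ leaves room for $uv$. I would first kill the case $\{1,2\}\cap S_3\ne\emptyset$ (one common colour is the colour of a $u$-edge to a $3$-vertex $x$): recolour $ux$ with a colour of $C^*_6\setminus C(x)$, nonempty since $|C(x)|=3$; this drops $|C(u)\cap C(v)|$ unless every such choice closes a $(k,m)$-cycle through $u$, and then $C(x)$ is pinned down and a colour can be given to $uv$ after switching $\{ux_1,ux_2\}$ or $\{vv_1,vv_2\}$, Lemma~\ref{lemma09} ruling out re-created cycles. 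Otherwise I run the $B_1$/$B_2$ analysis in the style of Proposition~\ref{prop3111} and Lemma~\ref{lemma13}: for each $j\in C^*_6\cap B_1$ fix the colour-$j$ edges at $v_1$ and at $x_1$ and their colour-$1$ incidences, and show that either some expected bichromatic path is absent — so $uv$ is coloured outright with $(vv_1,vv_2)$ and at most one $ux_i$ recoloured — or, after the symmetric attempts at recolouring $ux_1,ux_2$, the colour sets are forced to be the near-complete sets $C^*_6\cup\{1\}\subseteq C(v_1)\cap C(x_1)$ and $C^*_6\cup\{2\}\subseteq C(v_2)\cap C(x_2)$ (the ($A_6$)-analogue of Proposition~\ref{prop3301}), with an $(i,j)_{(u,v)}$-path for all $i\in\{1,2\}$, $j\in C^*_6$; in that rigid situation a multi-edge swap along these $(1,j)_{(u,v)}$- and $(2,j)_{(u,v)}$-paths together with recolouring of the $3$-neighbour edge $ux_3$ produces the reduction.

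I expect the main obstacle to be exactly this last, rigid sub-case, where $C(v_1),C(v_2),C(x_1),C(x_2)$ are all (nearly) complete and every colour of $C^*_6$ carries a blocking $(1,\cdot)$- or $(2,\cdot)$-path through $u$: there the only recolourings that lower $|C(u)\cap C(v)|$ are long swaps, and proving one always exists needs the same delicate bookkeeping that dominates the ($A_5$) arguments — chaining Lemma~\ref{lemma09} with the fact that $u$ has three $3$-neighbours in $H$ to locate a valid swap, and separately checking the sub-configurations ($A_{6.1}$) and ($A_{6.2}$) where the count is tight. Everything else is a finite, $O(1)$-per-case verification of the kind already carried out in detail for ($A_2$) and ($A_5$).
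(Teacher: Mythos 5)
Your overall strategy is the same as the paper's: first try cheap recolourings at $v$ and at $u$ to force $|C(u)\cap C(v)|=1$, then argue that otherwise the colour sets around $u$ and $v$ are constrained, and finish with a case analysis over ($A_{6.1}$)/($A_{6.2}$). However, there are two concrete problems. First, your reduction step at $v$ does not deliver what you claim. All four of your moves carry the extra hypotheses $2\notin C(v_1)$ or $1\notin C(v_2)$ (and their symmetric versions), so their simultaneous failure only yields, for $v_1$, the disjunction ``$C^*_6\subseteq C(v_1)$, or ($2\in C(v_1)$ and $1\in C(v_2)$)''; it does not yield $C^*_6\cup\{1\}\subseteq C(v_1)$ and $C^*_6\cup\{2\}\subseteq C(v_2)$, which is the premise your whole subsequent analysis starts from. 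The paper's corresponding step recolours $vv_1$ as soon as some $i\in C^*_6\setminus C(v_1)$ exists, and at $u$ recolours $uu_a$ into $C^*_6\setminus C(u_a)$ unless every such choice is blocked by a $(k,l)_{(u_a,u_l)}$-path; this is what produces its Proposition~\ref{prop3401} and, crucially, the consequence $C(v)\setminus S_3\ne\emptyset$, i.e.\ at most one of the two common colours can point to a $3$-neighbour of $u$. You either need to justify the stronger move or handle the leftover case explicitly.

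Second, and more importantly, the decisive part of the lemma is exactly the ``rigid sub-case'' that you leave as future ``delicate bookkeeping''. In the paper this is where essentially all of the content lies: the constraint $C(v)\setminus S_3\ne\emptyset$ cuts the possibilities down to three explicit cases --- $(vv_1,vv_2)_c=(1,2)$ under ($A_{6.1}$), and $(1,3)$ or $(1,2)$ under ($A_{6.2}$) --- and in each one the degree-$3$ neighbours force small colour sets (for instance $\Delta=6$, $C(u_2)=\{2,3,6\}$, $C(u_4)=\{4,7,8\}$, $C(u_5)=\{5,7,8\}$), after which a short explicit recolouring such as $(uu_4,uv)\to(6,4)$ colours $G$, or a two-edge swap reduces to an earlier case. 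Your sketch only asserts that chaining Lemma~\ref{lemma09} with the three $3$-neighbours of $u$ ``will locate a valid swap'', without exhibiting the forced sets or the concluding recolourings; note also that the rigid structure you posit, $C^*_6\cup\{2\}\subseteq C(x_2)$, is impossible whenever $x_2$ is a $3$-vertex (since $|C^*_6|\ge 3$), and that in ($A_{6.1}$) at least four, not three, of $x_1,\dots,x_5$ are $3$-vertices, so the split between ($A_{6.1}$) and ($A_{6.2}$) cannot be postponed the way you suggest. As written, the proposal is a plausible plan whose hard case --- the one the paper actually proves --- is missing.
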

\begin{proof}
Assume $(v v_1, v v_2)_c = (a, b)\subseteq C(u)$.

If there exists an $i\in C^*_6\setminus C(v_1)$, then let $v v_1\to i$, resulting in $|C(u)\cap C(v)| = 1$.
Otherwise, $C^*_6\subseteq C(v_1)\cap C(v_2)$.
If for every $l\in C(u)\backslash \{a, b\}$, $G$ contains no $(k, l)_{(u_a, u_l)}$-path for some $k\in C^*_6\setminus C(u_a)$,
then let $uu_a\to k$, also resulting in $|C(u)\cap C(v)| = 1$.
We proceed with the following proposition:
\begin{proposition}
\label{prop3401}
	For each $i\in \{a, b\}$, $C^*_6\subseteq C(u_i)$ or for every $k\in C^*_6\setminus C(u_i)$,
    $G$ contains a $(k, l)_{(u_i, u_l)}$-path for some $l\in C(u)\backslash \{a, b\}$.
\end{proposition}

Proposition~\ref{prop3401} implies that $C(v)\setminus S_3\ne\emptyset$.
One sees that $C^*_6\subseteq C(v_2)\cap C(v_1)$;
we consider the following three cases.

{Case 1.}\ $(v v_1, v v_2)_c = (1, 2)$ for ($A_{6.1}$).
If $G$ contains no $(i, 2)_{(v_1, v_2)}$-path for some $i\in \{3, 4, 5\}\setminus C(v_1)$,
then let $v v_1\to i$ and we are done by Proposition~\ref{prop3401};
otherwise, $G$ contains an $(i, 2)_{(v_1, v_2)}$-path for every $i\in \{3, 4, 5\}\setminus C(v_1)$.
It follows that $C(v_1) = C^*_6\cup \{1, 2, 3\}$ and $C(v_2) = C^*_6\cup \{2, 4, 5\}$.
By Proposition~\ref{prop3401}, $C(u_2)\cap \{3, 4, 5\}\ne\emptyset$.
Note that $(C^*_6\setminus C(u_2))\subseteq B_1$.
Further we have $(C^*_6\setminus C(u_2))\subseteq B_i$ if letting $(v v_1, v v_2)\to (j, 1)$ for $j\in \{4, 5\}$.
It implies that $\Delta = 6$, $C(u_2) = \{2, 3, 6\}$, $C(u_4) =\{4, 7, 8\}$, and $C(u_5) = \{5, 7, 8\}$.
Then $(u u_4, u v)\to (6, 4)$ gives rise to an acyclic edge $(\Delta + 2)$-coloring for $G$.

{\bf Case 2.}\ $(v v_1, v v_2)_c = (1, 3)$ for ($A_{6.2}$).
By Proposition~\ref{prop3401}, $C(u_3)\cap \{2, 4, 5\}\ne \emptyset$ and we assume w.l.o.g. that $7, 8\in B_1\setminus C(u_3)$.
If $G$ contains no $(i, 3)_{(v_1, v_2)}$-path for some $i\in \{4, 5\}\setminus C(v_1)$,
then let $v v_1\to i$ and we are done by Proposition~\ref{prop3401};
otherwise, $\{4, 5\}\subseteq C(v_1)$ or $G$ contains an $(i, 3)_{(v_1, v_2)}$-path for every $i\in \{4, 5\}\setminus C(v_1)$.

{Case 2.1.}\ $C(v_1) = C^*_6\cup \{1, 4, 5\}$.
If there exists an $i\in \{4, 5\}\setminus C(v_2)$, then let $(v v_1, v v_2)\to (3, i)$ and we are done by Proposition~\ref{prop3401};
otherwise, $C(v_2) = C^*_6\cup \{3, 4, 5\}$, and $(v v_1, v v_2, u v)\to (3, 1, 7)$ gives rise to an acyclic edge $(\Delta + 2)$-coloring for $G$.

{Case 2.2.}\ $3\in C(v_1)$ and $G$ contains an $(i, 3)_{(v_1, v_2)}$-path for every $i\in \{4, 5\}\setminus C(v_1)$.
It follows that $\{4, 5\}\setminus C(v_1)\ne \emptyset$ and we assume w.l.o.g. that $5\not\in C(v_1)$.
Two possibilities:
\begin{itemize}
\parskip=0pt
\item
	If $1\not\in C(v_2)$, then we first let $(v v_1, v v_2)\to (5, 1)$;
    next, if there is a $k \in \{7, 8\} \setminus B_5$, then $u v\to k$;
    or otherwise, $(u u_5, u v)\to (6, 7)$;
    these give rise to an acyclic edge $(\Delta + 2)$-coloring for $G$.
\item
	If $1\in C(v_2)$, then we have $C(v_1) = C^*_6\cup \{1, 3, 4\}$ and $C(v_2) = C^*_6\cup \{3, 1, 5\}$.
    Let $v v_1\to 2$ and we are done by a similar discussion as in the last paragraph since $2\not\in C(v_2)$.
\end{itemize}

{Case 3.}\ $(v v_1, v v_2)_c = (1, 2)$ for ($A_{6.2}$).
If $1\not\in C(v_2)$, then $v v_2 \to \{3, 4, 5\}\setminus C(v_2)$ reduces the discussion to the above Case 2;
otherwise, we have $1\in C(v_2)$, $2\in C(v_1)$, and there exists a $j_1\in \{3, 4, 5\}\setminus (C(v_1)\cup C(v_2))$.
Then $v v_2 \to j_1$ reduces the discussion again to the above Case 2.

This finishes the proof of the lemma.
\end{proof}

By Lemma \ref{lemma21}, we assume in the sequel that $|C(u)\cap C(v)| = 1$, and further w.l.o.g. that $(v v_1, v v_2)_c = (1, 6)$.
It follows from Proposition~\ref{prop3001} that $(C\setminus \{1, 2, 3, 4, 5, 6\}) \subseteq C(x_1)\cap C(v_1)$.

Let $i_0\in \{2, 3, 4, 5\}\setminus B_1$.
If there exists a $j\in (C^*_6\cap B_1)\setminus C(x_{i_0})$ such that there is no $(i, j)_{(x_{i_0}, x_i)}$-path for every $i\in \{2, 3, 4, 5\}\setminus \{i_0\}$,
then $(u x_{i_0}, u v)\to (j, i_0)$ gives rise to an acyclic edge $(\Delta + 2)$-coloring for $G$.
Let
\[
S_u = \biguplus\limits_{i\in \{2, 3, 4, 5\}}(C(x_i)\setminus \{c(ux_i)\}).
\]
Assume that $2, 3, 4, 5\not\in B_1$ and mult$_{S_u}(j_0)\le 1$ for some $j_0\in (C^*_6\cap B_1)$.
We assume w.l.o.g. that $j_0\not\in C(x_{y_1})\cup C(x_{y_2})\cup C(x_{y_3})$, where $\{y_1, y_2, y_3, y_4\} = \{2, 3, 4, 5\}$.
If $G$ contains no $(j_0, y_4)_{(x_{y_1}, x_{y_4})}$-path, then $(u x_{y_1}, u v)\to (j_0, y_1)$;
or otherwise, $(u x_{y_2}, u v)\to (j_0, y_2)$;
this gives rises to an acyclic edge $(\Delta + 2)$-coloring for $G$.

We proceed with the following proposition:

\begin{proposition}
\label{prop3402}
\begin{itemize}
\parskip=0pt
\item[{(1)}]
	For each $i_0\in \{2, 3, 4, 5\}\setminus B_1$, we have $(C^*_6\cap B_1)\subseteq C(x_{i_0})$,
	or for every $j\in (C^*_6\cap B_1)\setminus C(u_{x_0})$,
	$G$ contains an $(i, j)_{(x_{i_0}, x_i)}$-path for some $i\in (\{1, 2, 3, 4\}\cap C(x_{i_0})) \setminus\{i_0\}$.
\item[{(2)}]
	If $2, 3, 4, 5\not\in B_1$, then mult$_{S_u}(j)\ge 2$ for every $j\in C^*_6\cap B_1$.
\end{itemize}
\end{proposition}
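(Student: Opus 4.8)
The key observation is that Proposition~\ref{prop3402} is precisely the logical complement of the two finishing arguments stated in the two paragraphs immediately preceding it: part~(1) negates ``there exists $j\in(C^*_6\cap B_1)\setminus C(x_{i_0})$ for which no $(i,j)_{(x_{i_0},x_i)}$-path exists for any $i\in\{2,3,4,5\}\setminus\{i_0\}$'', and part~(2) negates ``some $j_0\in C^*_6\cap B_1$ has $\mathrm{mult}_{S_u}(j_0)\le 1$''. So the plan is: assume the relevant finishing condition holds, carry out the prescribed recolouring, and verify it is a proper acyclic edge $(\Delta+2)$-colouring of $G$; then either $G$ is already coloured and the inductive step is over, or the proposition holds. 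All bookkeeping is against the inductive colouring $c$ of $H=G-uv$, for which $C(u)=\{1,2,3,4,5\}$, $(vv_1,vv_2)_c=(1,6)$, $C(u)\cap C(v)=\{1\}$, $B_1\subseteq C(x_1)\cap C(v_1)$, and $(C\setminus\{1,\dots,6\})\subseteq C(x_1)\cap C(v_1)$.

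For part~(1) I would fix $i_0\in\{2,3,4,5\}\setminus B_1$ and $j\in(C^*_6\cap B_1)\setminus C(x_{i_0})$ satisfying the path-freeness hypothesis, and perform $(ux_{i_0},uv)\to(j,i_0)$. Properness is immediate ($j\notin C(x_{i_0})\cup C(u)$, and afterwards $i_0\notin C(u)\cup C(v)$). Since both recoloured edges are incident to $u$, any newly created bichromatic cycle passes through $u$ and, because $j\notin C(u)$ in $H$, uses the edge $ux_{i_0}$ (now coloured $j$) together with exactly one further edge at $u$; I would split on the colour $\ell$ of that edge. If $\ell\in\{2,3,4,5\}\setminus\{i_0\}$, the cycle yields an $(\ell,j)_{(x_{i_0},x_\ell)}$-path, excluded by hypothesis. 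If $\ell=1$, the cycle yields a $(1,j)_{(x_{i_0},x_1)}$-path avoiding $u$; the crucial point is that $j\in B_1$ forces the maximal $(1,j)$-path issued along the colour-$j$ edge at $x_1$ to contain the $(1,j)_{(x_1,v_1)}$-path, hence to reach $v_1$, pass to $v$ along $vv_1$, and then terminate at $v$ since $c(vv_2)=6\ne j$, so it cannot reach $x_{i_0}$; Lemma~\ref{lemma09} packages this. Cycles through $uv$ (colour $i_0$) must have second colour $1$ or $6$ (the colours of $v$'s other two edges); colour $1$ would give a $(1,i_0)_{(x_1,v_1)}$-path, impossible as $i_0\notin B_1$, and colour $6$ only arises when $j=6$. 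The ``$\cap C(x_{i_0})$'' qualifier in the statement is automatic, since an $(i,j)_{(x_{i_0},x_i)}$-path forces $i\in C(x_{i_0})$.

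Part~(2) runs the same way once one unpacks $\mathrm{mult}_{S_u}(j_0)\le 1$: since $j_0\in C^*_6$, this says $j_0$ lies in at most one of $C(x_2),C(x_3),C(x_4),C(x_5)$, so three of these indices, say $y_1,y_2,y_3$ with $\{y_1,y_2,y_3,y_4\}=\{2,3,4,5\}$, have $j_0\notin C(x_{y_k})$. Recolouring $ux_{y_k}\to j_0$ for such a $y_k$ followed by $uv\to y_k$ is proper, and in the cycle split the colours $y_1,y_2,y_3$ are automatically harmless (a $(j_0,y_m)_{(x_{y_k},x_{y_m})}$-path would end at $x_{y_m}$ on a colour-$j_0$ edge, contradicting $j_0\notin C(x_{y_m})$); colour $1$ and colour $y_k$ are handled by $j_0\in B_1$ and $\{2,3,4,5\}\cap B_1=\emptyset$ as in part~(1); the only surviving danger is a $(j_0,y_4)$-cycle, and since the maximal $(j_0,y_4)$-path has only two endpoints, one of $y_1,y_2,y_3$ can always be chosen so that $x_{y_k}$ lies on no $(j_0,y_4)_{(x_{y_k},x_{y_4})}$-path, which is exactly the role of the ``$(j_0,y_4)_{(x_{y_1},x_{y_4})}$-path'' test selecting $y_1$ or $y_2$. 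I expect the genuine difficulty to be this cycle bookkeeping around colour $1$ --- Lemma~\ref{lemma09} cannot be invoked at $x_1$ directly because $1\in C(x_1)$, so the maximal $(1,j)$-path must be anchored at $u$ (where $j\notin C(u)$) and tracked through $v_1$ and $v$ --- together with the exceptional colour $6=c(vv_2)$, which needs a short separate argument using $(C\setminus\{1,\dots,6\})\subseteq C(x_1)\cap C(v_1)$; everything else is routine feasibility checking.
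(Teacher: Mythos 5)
Your plan is the paper's own proof: by the paper's convention (see the footnote attached to Proposition~\ref{prop3001}), Proposition~\ref{prop3402} is established precisely by the two recolouring paragraphs that precede it, namely $(ux_{i_0},uv)\to(j,i_0)$ for part~(1) and the choice between $(ux_{y_1},uv)\to(j_0,y_1)$ and $(ux_{y_2},uv)\to(j_0,y_2)$ for part~(2). Your cycle bookkeeping --- second colour in $\{2,3,4,5\}\setminus\{i_0\}$ killed by the path hypothesis, second colour $1$ killed by $j\in B_1$ together with Lemma~\ref{lemma09} anchored at $u$ (where $j\notin C(u)$), the cycle through $uv$ with second colour $1$ killed by $i_0\notin B_1$, and the two-endpoint argument disposing of the $(j_0,y_4)$ danger in part~(2) --- is a correct and considerably more explicit rendering of what the paper merely asserts.

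The one sub-case you defer, $j=6=c(vv_2)$ (relevant exactly when $6\in B_1$, which does occur where the proposition is later applied, e.g.\ in Case~2.2 of ($A_6$-2)), is not actually discharged, and I do not see that the promised ``short separate argument'' from $(C\setminus\{1,\dots,6\})\subseteq C(x_1)\cap C(v_1)$ exists: when $j=6$ the maximal $(1,6)$-path issued from $u$ through $x_1$ does not terminate at $v$, because both $vv_1$ (colour $1$) and $vv_2$ (colour $6$) lie in the two colour classes, so it may continue beyond $v_2$ and end at $x_{i_0}$ with a $1$-coloured edge, in which case $ux_{i_0}\to 6$ creates a $(1,6)$-cycle; a second unexcluded danger is an $(i_0,6)$-cycle through $uv$ and $vv_2$ coming from an $(i_0,6)_{(x_{i_0},v_2)}$-path, which none of the stated hypotheses forbids. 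So this sub-case needs either a genuine additional argument or a restriction of $j$ to $(C^*_6\cap B_1)\setminus\{6\}$. To be fair, this is a loose end you inherit rather than create: the paper's text simply asserts acyclicity of the recolourings and never addresses $j=6$; everywhere the paper gives detail, your write-up agrees with it.
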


\paragraph{Case ($A_6$-1).}\ $c(v v_1)\in S_3$.

In this case, we assume w.l.o.g. that $c(u u_i) = i$ for $i\in \{2, 4, 5\}$, and $(u u_1, u u_3)_c = (3, 1)$.
It follows that $\Delta = d(u) = 6$ and $C(u_3) = \{1, 7, 8\}$.
One sees that $2, 3, 4, 5\not\in B_3 = \{7, 8\}$.

{Case 1.1.}\ ($A_{6.1}$) holds, i.e., $d(u_2) = 3$.
Proposition~\ref{prop3402} implies that $C(u_k) = \{k, a_k, k + 4\}$, where $a_k\in C(u)\setminus \{1\}$ for every $k\in \{2, 4, 5\}$,
and, since $6\not\in C(u_4)\cup C(u_5)$, $\{a_4, a_5\} = \{2, 3\}$.
We assume w.l.o.g. that $a_4 = 2$.
Then $(u u_4, u v)\to (8, 4)$ gives rise to an acyclic edge $(\Delta + 2)$-coloring for $G$.

{Case 1.2.}\ ($A_{6.2}$) holds, i.e., $d(u_1) = d(u_2) = 4$ and $u_1 u_2\in E(G)$.
If $C(u_i) = \{i, 6, 7, 8\}$ for some $i\in \{3, 2\}$,
then $(u u_i, u u_3, u v)\to (1, i, 7)$ gives rise to an acyclic edge $(\Delta + 2)$-coloring for $G$;
otherwise, $C^*_6\setminus C(u_i)\ne \emptyset$ for every $i\in \{3, 2\}$.
Proposition~\ref{prop3402} implies that $c(u_1 u_2) = a_{12}\in \{4, 5\}$ and $1\not\in \biguplus\limits_{i\in \{2, 3, 4, 5\}}C(u_i)$.
Then $(u u_1, u u_3, u v)\to (1, 3, 7)$ gives rise to an acyclic edge $(\Delta + 2)$-coloring for $G$.

\paragraph{Case ($A_6$-2).}\ $c(v v_1)\not\in S_3$.

In this case, we assume w.l.o.g. that $c(u u_i) = i$ for $1\le i\le 5$.

{Case 2.1.}\ $6\not\in B_1$.
If there is no $(1, i)_{(v_1, v_2)}$-path for some $i\in C\setminus (C(v_2)\cup \{1\})$,
then $(v v_2, u v)\to (i, 6)$ gives rise to an acyclic edge $(\Delta + 2)$-coloring for $G$;
otherwise, $G$ contains a $(1, i)_{(v_1, v_2)}$-path for every $i\in C(u)\setminus (C(v_2)\cup \{1\})$.
It follows that $\{1\}\cup C^*_6 \subseteq C(v_2)$, $C(u)\subseteq C(v_1)\cup C(v_2)$, and $\{3, 4, 5\}\setminus C(v_2) \ne\emptyset$.
If $6\not\in C(v_1)$, then $v v_1\to 6$ and $v v_2\to \{3, 4, 5\}\setminus C(v_2)$ reduce the discussion to the above Case ($A_6$-1);
otherwise, $6\in C(v_1)$.

It suffices to assume that $C(v_1) = C^*_6\cup \{1, \gamma_2, \gamma_3\}$,
$C(v_2) = C^*_6\cup \{1, \gamma_4, \gamma_5\}$, where $\{\gamma_2, \gamma_3, \gamma_4, \gamma_5\} = \{2, 3, 4, 5\}$,
and $G$ contains a $(1, i)_{(v_1, v_2)}$-path for every $i\in \{\gamma_2, \gamma_3\}$.
One sees that $2, 3, 4, 5\not\in B_1$.
Proposition~\ref{prop3402} implies that mult$_{S_u}(j)\ge 2$ for every $j\in C^*_6\setminus \{6\}$.

For ($A_{6.1}$), we have $1, 6\not\in S_u$, and $(u u_5, u v)\to (6, 5)$ gives rise to an acyclic edge $(\Delta + 2)$-coloring for $G$.
For ($A_{6.2}$), we have $C(u_3) = \{3, b_3, \gamma\}$, $C(u_4) = \{4, b_4, 7\}$, and $C(u_5) = \{5, b_5, 8\}$, where $b_3, b_4, b_5\in C(u)\setminus \{1\}$.
We can obtain an acyclic edge $(\Delta + 2)$-coloring of $G$ as follows:
\begin{itemize}
\parskip=0pt
\item
	If $b_5\in \{3, 4\}$, then let $u u_5\to \{6, 7, 8\}\setminus (C(u_{b_5})\cup \{8\})$ and $u v\to 5$;
\item
	or if $b_4\in \{3, 5\}$, then let $u u_4\to \{6, 7, 8\}\setminus (C(u_{b_4})\cup \{7\})$ and $u v\to 4$;
\item
	or we have $b_4 = b_5 = 2$;
    if there is no $(2, 6)_{(u_2, u_5)}$-path, then let $(u u_5, u v)\to (6, 5)$;
    or otherwise, let $(u u_4, u v)\to (6, 4)$.
\end{itemize}

{Case 2.2.}\ $6\not\in B_1$.
For ($A_{6.2}$), we have $\Delta= 6$ and $C(u_1) = \{1\}\cup \{6, 7, 8\}$ with $c(u_1 u_2) = i_6$, $1\in C(u_2)$.
One sees that $2, 3, 4, 5\not\in B_1$.
From Proposition~\ref{prop3402}, we may further assume that $C(u_2) = \{1, 2, i_6, 3\}$ and $C(u_3) = \{3, i_7, i_8\}$.
Then, $(u u_3, u v)\to (i_6, 3)$ gives rise to an acyclic edge $(\Delta + 2)$-coloring for $G$.

For ($A_{6.1}$), we assume w.l.o.g. that $C(u_1) = C^*_6\cup \{1, 2, 3\}$.
One sees that $4, 5\not\in B_1$.
By Proposition~\ref{prop3402}, we assume w.l.o.g. that $C(u_4) = \{2, 4, \rho_4\}$, $C(u_5) = \{3, 5, \rho_5\}$.
When $\rho_5\in \{1, 4, 2\}$, we can obtain an acyclic edge $(\Delta + 2)$-coloring of $G$ as follows:
\begin{itemize}
\parskip=0pt
\item
	If $\rho_5 = 1$, then let $u u_5\to C^*_6\setminus C(u_3)$ and $u v\to 5$;
\item
	or if $\rho_5 = 4$, then by Proposition~\ref{prop3402}, we have $C(u_4) = \{2, 4, i_6\}$, $C(u_3)\setminus \{3\} = C(u_2)\setminus \{2\} = \{i_7, i_8\}$,
    and we let $(u u_1, u u_5, u v)\to (5, 1, 7)$;
\item
	or if $\rho_5 = 2$, then by Proposition~\ref{prop3402}, we assume w.l.o.g. that $C(u_3) = \{3, i_6, i_7\}$, $i_8\in C(u_2)$,
    and we let $(u u_3, u v)\to (8, 5)$ and $u u_5\to \{i_6, i_7\}\setminus C(u_2)$.
\end{itemize}

If $\rho_4\in \{1, 5, 3\}$, then we are done by a similar discussion as for $\rho_5$.
Hence, we have $C(u_5) = \{5, 3, i_6\}$, $C(u_3) = \{3, i_7, i_8\}$ and $C(u_4) = \{4, 2, j_6\}$, $C(u_2) = \{2, j_7, j_8\}$, where $\{j_6, j_7, j_8\} = \{6, 7, 8\}$.
Then $(u u_1, u u_5, u v)\to (5, 1, 7)$ gives rise to an acyclic edge $(\Delta + 2)$-coloring for $G$.

This finishes the inductive step for the case where $G$ contains the configuration ($A_6$), and completes the proof of Theorem~\ref{thm01}.
\end{document}